\newtheorem{theorem}{Theorem}[section]
\newtheorem*{theorem*}{Theorem}
\newtheorem{proposition}[theorem]{Proposition}
\newtheorem*{proposition*}{Proposition}
\newtheorem{lemma}[theorem]{Lemma}
\newtheorem*{lemma*}{Lemma}
\newtheorem{corollary}[theorem]{Corollary}
\newtheorem*{conjecture*}{Conjecture}
\newtheorem{fact}[theorem]{Fact}
\newtheorem*{fact*}{Fact}
\newtheorem*{hypothesis*}{Hypothesis}
\theoremstyle{definition}
\newtheorem{definition}[theorem]{Definition}
\newtheorem*{definition*}{Definition}
\newtheorem{question}[theorem]{Question}
\newtheorem{algorithm}[theorem]{Algorithm}
\theoremstyle{remark}
\newtheorem{claim}[theorem]{Claim}
\newtheorem*{claim*}{Claim}
\newtheorem*{remark*}{Remark}
\newtheorem*{observation*}{Observation}
\let\mathbb\varmathbb
\crefname{lemma}{Lemma}{Lemmas}
\crefname{fact}{Fact}{Facts}
\crefname{theorem}{Theorem}{Theorems}
\crefname{corollary}{Corollary}{Corollaries}
\crefname{claim}{Claim}{Claims}
\crefname{example}{Example}{Examples}
\crefname{algorithm}{Algorithm}{Algorithms}
\crefname{problem}{Problem}{Problems}
\crefname{definition}{Definition}{Definitions}
\newcommand{\Authornote}[2]{}
\newcommand{\Snote}{\Authornote{S}}
\newcommand{\SKnote}{\Authornote{SK}}
\newcommand{\paren}[1]{(#1)}
\newcommand{\Paren}[1]{\left(#1\right)}
\newcommand{\Brac}[1]{\left[#1\right]}
\newcommand{\Abs}[1]{\left\lvert#1\right\rvert}
\newcommand{\Set}[1]{\left\{#1\right\}}
\newcommand{\norm}[1]{\lVert#1\rVert}
\newcommand{\Norm}[1]{\left\lVert#1\right\rVert}
\newcommand{\iprod}[1]{\langle#1\rangle}
\newcommand{\Iprod}[1]{\left\langle#1\right\rangle}
\newcommand{\Esymb}{\mathbb{E}}
\newcommand{\Psymb}{\mathbb{P}}
\newcommand{\Vsymb}{\mathbb{V}}
\DeclareMathOperator*{\E}{\Esymb}
\DeclareMathOperator*{\Var}{\Vsymb}
\DeclareMathOperator*{\ProbOp}{\Psymb}
\DeclareMathOperator*{\V}{\Vsymb}
\renewcommand{\Pr}{\ProbOp}
\newcommand{\tensor}{\otimes}
\def\nnewcolor{0}
\newcommand{\new}[1]{{\color{red} #1}}
\newcommand{\new}[1]{#1}
\newcommand{\mper}{\,.}
\newcommand{\mcom}{\,,}
\newcommand\bdot\bullet
\DeclareMathOperator{\Ind}{\mathbf 1}
\DeclareMathOperator{\Tr}{Tr}
\DeclareMathOperator{\poly}{poly}
\DeclareMathOperator{\polylog}{polylog}
\newcommand{\N}{\mathbb N}
\newcommand{\R}{\mathbb R}
\newcommand{\cA}{\mathcal A}
\newcommand{\cB}{\mathcal B}
\newcommand{\cC}{\mathcal C}
\newcommand{\cN}{\mathcal N}
\renewcommand{\leq}{\leqslant}
\renewcommand{\geq}{\geqslant}
\let\epsilon=\varepsilon
\numberwithin{equation}{section}
\newcommand\MYcurrentlabel{xxx}
\newcommand{\MYstore}[2]{%
  \global\expandafter \def \csname MYMEMORY #1 \endcsname{#2}%
}
\newcommand{\MYload}[1]{%
  \csname MYMEMORY #1 \endcsname%
}
\newcommand{\MYnewlabel}[1]{%
  \renewcommand\MYcurrentlabel{#1}%
  \MYoldlabel{#1}%
}
\newcommand{\MYdummylabel}[1]{}
\newcommand{\torestate}[1]{%
  \let\MYoldlabel\label%
  \let\label\MYnewlabel%
  #1%
  \MYstore{\MYcurrentlabel}{#1}%
  \let\label\MYoldlabel%
}
\newcommand{\restatetheorem}[1]{%
  \let\MYoldlabel\label
  \let\label\MYdummylabel
  \begin{theorem*}[Restatement of \cref{#1}]
    \MYload{#1}
  \end{theorem*}
  \let\label\MYoldlabel
}
\newcommand{\restatelemma}[1]{%
  \let\MYoldlabel\label
  \let\label\MYdummylabel
  \begin{lemma*}[Restatement of \cref{#1}]
    \MYload{#1}
  \end{lemma*}
  \let\label\MYoldlabel
}
\newcommand{\restateprop}[1]{%
  \let\MYoldlabel\label
  \let\label\MYdummylabel
  \begin{proposition*}[Restatement of \cref{#1}]
    \MYload{#1}
  \end{proposition*}
  \let\label\MYoldlabel
}
\newcommand{\restatefact}[1]{%
  \let\MYoldlabel\label
  \let\label\MYdummylabel
  \begin{fact*}[Restatement of \prettyref{#1}]
    \MYload{#1}
  \end{fact*}
  \let\label\MYoldlabel
}
\newcommand{\restate}[1]{%
  \let\MYoldlabel\label
  \let\label\MYdummylabel
  \MYload{#1}
  \let\label\MYoldlabel
}
\newcommand{\e}{\epsilon}
\newcommand{\eps}{\epsilon}
\newcommand*{\Id}{\mathrm{Id}}
\newcommand*{\dtv}{\text{TV}}
\newcommand*{\tmu}{\tilde{\mu}}
\newcommand*{\loweredwidetildehelper}[2]{\hbox{\csname dimen@\endcsname\accentfontxheight#1%
  \accentfontxheight#11.25\csname dimen@\endcsname
  $\csname m@th\endcsname#1\widetilde{#2}$%
  \accentfontxheight#1\csname dimen@\endcsname
  }%
}
\newcommand*{\accentfontxheight}[1]{\fontdimen5\ifx#1\displaystyle \textfont \else\ifx#1\textstyle \textfont \else\ifx#1\scriptstyle \scriptfont \else \scriptscriptfont \fi\fi\fi3
}
\DeclareMathOperator*{\pE}{\tilde{\mathbb{E}}}
\newcommand{\proves}[1]{\vdash_{#1}}
\title{Robustly Learning any Clusterable Mixture of Gaussians}
\author{Ilias Diakonikolas\thanks{UW Madison. \url{ilias@cs.wisc.edu}. Supported by NSF Award CCF-1652862 (CAREER) and a Sloan Research Fellowship. Some of this research was performed while visiting the Simons Institute on the Theory of Computing.} 
\and 
Samuel B. Hopkins\thanks{UC Berkeley. \url{hopkins@berkeley.edu}. Supported by a Miller Postdoctoral Fellowship.} 
\and 
Daniel Kane\thanks{UC San Diego. \url{dakane@ucsd.edu}. 
Supported by NSF Award CCF-1553288 (CAREER) and a Sloan Research Fellowship.} 
\and 
Sushrut Karmalkar\thanks{UT Austin. \url{sushrutk@cs.utexas.edu}. 
Supported by NSF CNS 1414082. Some of this research was performed 
while visiting USC and the Simons Institute on the Theory of Computing.}
}
\begin{document}

\maketitle
\thispagestyle{empty}

\begin{abstract}
We study the efficient learnability of high-dimensional Gaussian mixtures in the outlier-robust setting, 
where a small constant fraction of the data is adversarially corrupted. We resolve the polynomial learnability of this problem 
when the components are pairwise separated in total variation distance. Specifically, we provide an algorithm that, 
for any constant number of components $k$, runs in polynomial time and learns the components of an $\eps$-corrupted $k$-mixture 
within information theoretically near-optimal error of $\tilde{O}(\eps)$, under the assumption that the overlap between any pair of 
components $P_i, P_j$ (i.e., the quantity $1-TV(P_i, P_j)$) is bounded by $\mathrm{poly}(\eps)$. 

Our separation condition is the qualitatively weakest assumption under which accurate clustering of the samples is possible. 
In particular, it allows for components with arbitrary covariances and for components with identical means, 
as long as their covariances differ sufficiently.
Ours is the first polynomial time algorithm for this problem, even for $k=2$.

Our algorithm follows the Sum-of-Squares based \emph{proofs to algorithms} approach. 
Our main technical contribution is a new robust identifiability proof of clusters from a Gaussian mixture, 
which can be captured by the constant-degree Sum of Squares proof system. The key ingredients of this proof 
are a novel use of \emph{SoS-certifiable anti-concentration} and a new characterization of pairs of Gaussians 
with small (dimension-independent) overlap in terms of their parameter distance.
\end{abstract}

\newpage

\tableofcontents
\thispagestyle{empty}

\newpage
\pagenumbering{arabic}


\section{Introduction} \label{sec:intro}

\subsection{Background} \label{ssec:background}
A mixture of Gaussians or {\em Gaussian mixture model (GMM)} is
a convex combination of Gaussian distributions, i.e.,
a distribution on $\R^d$ of the form $F = \sum_{i=1}^k w_i \cN(\mu_i, \Sigma_i)$, where
the weights $w_i$, mean vectors $\mu_i$, and covariance matrices $\Sigma_i$ are unknown.
GMMs are arguably {\em the} most extensively studied latent variable model in the statistics 
and computer science literatures, starting with the pioneering work of Karl Pearson in 1894~\cite{Pearson:94}.

The problem of efficiently learning a high-dimensional GMM from samples has received tremendous attention in theoretical
computer science and machine learning. A long line of work initiated by Dasgupta~\cite{Dasgupta:99, AroraKannan:01, VempalaWang:02, AchlioptasMcSherry:05, KSV08, BV:08} gave efficient clustering algorithms for GMMs under various separation assumptions. 
Subsequently, efficient parameter estimation algorithms were obtained~\cite{KMV:10, MoitraValiant:10, BelkinSinha:10, HardtP15} 
under minimal information-theoretic conditions. The related problems of density estimation and proper learning
have also been well-studied~\cite{FOS:06, SOAJ14, DK14, MoitraValiant:10, HardtP15, LiS17}. 

The correctness of the aforementioned learning algorithms crucially relies on the assumption 
that the generative model is correctly specified, i.e., that the input dataset is a multi-set
of independent samples drawn from an unknown mixture of Gaussians. 
However, this simplifying assumption is at best only approximately valid, as 
real datasets are typically exposed to some source of contamination.
The more realistic setting that the model can be slightly misspecified (or, equivalently, that there exists 
a small fraction of outliers in the data) turns out to be significantly more challenging. 
In fact, the existence of even a {\em single} arbitrary outlier in the dataset completely compromises 
the behavior of the aforementioned algorithms. 

Robust statistics~\cite{HampelEtalBook86, Huber09} is a classical sub-field of statistics
focusing on the design of estimators that perform well in the presence of outliers (or under model misspecification).
While the information-theoretic limits of robust estimation have been well-understood in 
several settings of interest,  even the most basic algorithmic questions had remained wide open until recently.

Recent work in theoretical computer science, starting with~\cite{DKKLMS16, LaiRV16}, 
gave the first efficient and outlier-robust learning algorithms for a variety
of high-dimensional distributions, including high-dimensional Gaussians. 
Specifically,~\cite{DKKLMS16} gave the first robust learning algorithms 
for a single high-dimensional Gaussian that can tolerate a {\em constant} fraction of outliers, independent of the dimension. 
Since the dissemination of~\cite{DKKLMS16, LaiRV16}, there has been an explosion 
of research activity on algorithmic robust high-dimensional estimation (see, e.g.,~\cite{DK20-survey} for a  recent survey on the topic). 

Most relevant to this paper are the prior works~\cite{DKKLMS16, HopkinsL18, KothariSS18, DiakonikolasKS18-mixtures}.
\cite{DKKLMS16} gave a robust density estimation algorithm for a mixture of spherical Gaussians. 
On the other hand, \cite{HopkinsL18, KothariSS18, DiakonikolasKS18-mixtures} gave robust parameter estimation
algorithms for mixtures of (essentially) spherical Gaussians under near-optimal separation assumptions. 
(See Section~\ref{ssec:related} for a detailed summary of prior work.)
Despite this recent progress, the algorithmic problem of robustly learning a mixture of a constant number
(or even two) arbitrary Gaussians has remained open and is recognized
as a central open problem in this field~\cite{DVW19-vignette}. 

In this work, we focus on the important special case of this problem 
where the mixture components are ``separated''. Various notions of separation have been used in the literature.
Here we focus on the following definition: 
We say that a $k$-mixture of Gaussians is {\em separated} if 
the {\em overlap} between any pair of  components $P, Q$ (i.e., $1-\dtv(P, Q)$, 
where $\dtv(P, Q)$ is the total variation distance 
between $P$ and $Q$) is a small constant --- independent of the dimension.
\new{We note that this is qualitatively} the weakest possible separation assumption under which accurate clustering of the components 
is information-theoretically possible --- even without outliers.

The preceding discussion motivates the following question, whose resolution is the main
result of this work:

\begin{question} \label{q:sep-gmm}
Is there a $\poly(d, 1/\eps)$-time robust learning algorithm for a mixture of any constant number of 
(or even two) arbitrary {\em separated} Gaussians on $\R^d$?
\end{question}

\subsection{Our Results and Techniques} \label{ssec:results}

In this paper, we provide an affirmative answer to Question~\ref{q:sep-gmm}.
Our main result is the following theorem:

\begin{theorem}[Main Result]\label{thm:main-informal}
For every $w_{\min} > 0$ there are $F(w_{\min}),f(w_{\min}) > 0$ such that there is an algorithm which takes $n = d^{F(w_{\min})}/\poly(\e)$ $\e$-corrupted samples from any $d$-dimensional mixture of Gaussians $\sum_{i \leq k} w_i G_i$ with $w_i \geq w_{\min}$, runs in time $n^{F(w_{\min})}$, and returns $k$ hypothesis Gaussians $H_1, \ldots, H_k$ and weights $u_1,\ldots,u_k$ such that with high probability
there exists a permutation $\pi: [k] \to [k]$ for which $\max_i \dtv(G_i,H_{\pi(i)}) \leq \tilde{O}(\epsilon)$ and $\sum_{i \leq k} |w_i - u_{\pi(i)}| \leq O(\epsilon)$,
so long as $\e \leq f(w_{\min})$ and $\min_{i \neq j} \dtv(G_i,G_j) \geq 1 - f(w_{\min})$.
\end{theorem}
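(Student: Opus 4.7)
The plan is to follow the Sum-of-Squares proofs-to-algorithms paradigm. We set up a constant-degree polynomial system whose variables encode (a) an indicator vector $w \in \{0,1\}^n$ of a candidate cluster with $\sum_i w_i \geq (w_{\min} - O(\eps))\cdot n$, and (b) proposed Gaussian parameters $(\mu, \Sigma)$. The constraints assert that the $w$-weighted empirical low-degree moments match those of $N(\mu,\Sigma)$ up to $O(\eps)$, and that the weighted sample set satisfies SoS-certifiable anti-concentration for linear and quadratic forms (capturing that samples from a Gaussian rarely fall in thin slabs or thin ellipsoidal shells). Feasibility comes from the true clusters $C_1,\dots,C_k$, each of which satisfies the constraints up to the removal of $O(\eps n)$ outliers. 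We then solve the constant-degree SoS relaxation to obtain a pseudo-distribution $\tmu$ over such $(w,\mu,\Sigma)$.

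\textbf{Robust identifiability (the core step).} The technical heart is a robust identifiability lemma with a constant-degree SoS proof: if $(w^{(1)}, \mu^{(1)}, \Sigma^{(1)})$ and $(w^{(2)}, \mu^{(2)}, \Sigma^{(2)})$ both satisfy the constraints and the two parameter tuples are far in a suitable metric, then $w^{(1)}$ and $w^{(2)}$ mostly select from distinct true clusters $C_i, C_j$. To SoS-prove this we would combine two ingredients. First, the new characterization of small overlap in terms of parameter distance: $\dtv(N(\mu_1,\Sigma_1), N(\mu_2,\Sigma_2)) \geq 1 - \eta$ forces the existence of an explicit witness direction $v$ along which either the means separate as $\iprod{v, \mu_1 - \mu_2}^2 \gg v^\top(\Sigma_1 + \Sigma_2)v$, or the covariances separate multiplicatively, i.e.\ $v^\top \Sigma_1 v$ and $v^\top \Sigma_2 v$ differ by a large factor. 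Second, SoS-certifiable anti-concentration for linear and quadratic polynomials of Gaussians, which forbids too many sample points of $N(\mu_1,\Sigma_1)$ from lying in the thin slab or thin ellipsoidal shell picked out by $v$ relative to $N(\mu_2,\Sigma_2)$. Combining these inside SoS: if two valid pseudo-solutions had significant overlap with the same pair of true clusters, their purported Gaussian parameters would need to simultaneously fit samples from two parameter-far Gaussians, contradicting the anti-concentration constraint on the witness direction.

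\textbf{From identifiability to parameter recovery.} Once identifiability is in SoS, standard ``voting/clustering'' rounding over $\tmu$ (in the style of \cite{KothariSS18, DiakonikolasKS18-mixtures}) produces, for each true $C_i$, a subset $\hat{C}_i$ with $|\hat C_i \symdiff C_i| \leq O(\eps)\cdot |C_i|$. Running the one-Gaussian robust estimator of \cite{DKKLMS16} on each $\hat C_i$ (in the basis suggested by the identifiability metric, so that the component is effectively whitened) yields estimates $(\hat\mu_i, \hat\Sigma_i)$ with parameter error $\tilde O(\eps)$. Converting this parameter error back to TV distance via the same overlap-vs-parameter characterization --- but applied in the reverse direction (close parameters imply small TV) --- gives $\dtv(G_i, N(\hat\mu_i,\hat\Sigma_i)) = \tilde O(\eps)$, and setting $u_i = |\hat C_i|/n$ gives $\sum_i |w_i - u_{\pi(i)}| = O(\eps)$.

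\textbf{Main obstacle.} The hardest step is establishing robust identifiability as a constant-degree SoS proof in full generality. Previous SoS-based clustering for GMMs \cite{HopkinsL18, KothariSS18, DiakonikolasKS18-mixtures} handled only (near-)spherical components, where the analysis reduces to separating means via anti-concentration of linear functionals. Here the components may share a mean but differ significantly in covariance, so we need SoS anti-concentration for \emph{quadratic} polynomials of Gaussians, together with a matching quadratic parameter-distance witness for small TV overlap, and both pieces must be engineered so that only constantly many SoS degrees suffice (since our runtime is $n^{F(w_{\min})}$ with $F$ depending only on $w_{\min}$). Concentration over the random samples --- ensuring that the constraints are satisfied by a large subset of the uncorrupted data and that empirical quantities used by the identifiability argument concentrate around their population values --- is also nontrivial but follows from standard matrix-concentration and polynomial-concentration estimates using $n = d^{F(w_{\min})}/\poly(\eps)$ samples.
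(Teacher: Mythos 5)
Your proposal follows the same high-level SoS proofs-to-algorithms strategy as the paper, but it contains two genuine gaps. First, your structural characterization of total-variation separation is incomplete: you claim that $\dtv(\cN(\mu_1,\Sigma_1),\cN(\mu_2,\Sigma_2)) \geq 1-\eta$ forces a single witness direction $v$ along which either the means or the variances separate. This is false --- take $\Sigma_1 = I$ and $\Sigma_2 = (1+\eta')I$ in dimension $d \gg 1/\eta'^2$: the TV distance tends to $1$ while no direction exhibits more than a $(1+\eta')$ multiplicative variance gap and the means coincide. The correct trichotomy (the paper's \cref{lem:TV_param}) has a third case in which the separation is witnessed only by the relative Frobenius norm $\|I - \Sigma_1^{-1/2}\Sigma_2\Sigma_1^{-1/2}\|_F$ being large, and this case must be handled with a quadratic witnessing polynomial $p(x) = \iprod{x,Ax}$ whose variance is controlled not by anti-concentration but by the Gaussian identity $\V_{\cN(0,\Sigma)}\iprod{x,Mx} = 2\|\Sigma^{1/2}M\Sigma^{1/2}\|_F^2$ together with hypercontractivity. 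You gesture at "thin ellipsoidal shells," but the argument you sketch does not cover components with identical means and covariances that are close in every one-dimensional projection yet far in Frobenius norm.

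Second, the step "combining these inside SoS \ldots contradicting the anti-concentration constraint" hides a real obstruction. The natural way to combine the variance upper bound (from anti-concentration) with the variance lower bound (from having mass on two separated clusters) requires multiplying an inequality by a quantity like $|S \cap S_i|/|S| - \delta$, which is not certifiably nonnegative, so the multiplication is not a valid SoS derivation; an earlier version of this paper contained exactly this gap. The fix requires an a priori SoS-certifiable upper bound on $\V_{X \sim S}\iprod{X,v}$ in terms of the variance of the whole mixture in direction $v$, which in turn forces a weaker, "global" version of the structural lemma (a nontrivial partition of the whole mixture into two submixtures separated by a hyperplane or by variance scale) and a recursive splitting algorithm that only achieves a rough, constant-accuracy clustering --- and in fact a \emph{list} of candidate clusterings. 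The final $\tilde O(\eps)$ error then comes not from the SoS rounding directly (as you assert) but from a separate boosting stage: run the single-Gaussian robust estimator of \cite{DKKLMS16} on each rough cluster, reassign all points by likelihood under the rough hypotheses, re-estimate, and select among the candidate clusterings by a hypothesis-testing tournament. Without these modifications the identifiability argument as you state it does not compile into a low-degree SoS proof, and the claimed $O(\eps)$-accurate rounding is unsupported.
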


\noindent Before we describe our techniques, some comments are in order. For now we restrict attention to the case that the weights $w_i$ are uniform, so $1/w_{\min}$ in \cref{thm:main-informal} can be taken equal to $k$.
Then the sample complexity becomes $d^{F(k)}/\poly(\eps)$, and the running time is $n^{F(k)}$.
If $k$ is constant, this yields an algorithm with polynomial running time and sample complexity.
We note that prior to this work, even the $k=2$ special case of Theorem~\ref{thm:main-informal} remained open -- recent work of Jia and Vempala addresses a special case of the $k=2$ setting \cite{JV19}.


\paragraph{Our Techniques.}
Here we provide a detailed intuitive overview of our approach in tandem 
with a brief comparison to prior work.

We will describe an algorithm to robustly learn the parameters of an unknown $k$-mixture $\sum_{i=1}^k w_i G_i$
on $\R^d$, where the individual components $G_i$, $i \in [k]$, can be arbitrary Gaussians 
satisfying the constraint that the pairwise overlap $1-\dtv(G_i , G_j)$ is a sufficiently small, 
but (importantly) {\em dimension-independent} quantity. \new{This condition is 
qualitatively} the weakest separation assumption under which accurate clustering is 
information-theoretically possible --- even without corruptions. 

It should be noted that the only prior algorithm in the literature that can (non-robustly) 
learn Gaussian $k$-mixtures under the separation we consider is the algorithm of~\cite{MoitraValiant:10}, 
which in fact solves the (more challenging) parameter estimation task even in the regime that 
$\min_{i \neq j} \dtv(G_i,  G_j)$ is close to zero. 
However, as we explain
in Section~\ref{ssec:related}, the approach of~\cite{MoitraValiant:10} is inherently non-robust. That is,
to develop a {\em robust} learning algorithm for our setting, it is in some sense necessary to develop 
a genuinely new {\em non-robust} algorithm that is \new{amenable to ``robustification''.}

For simplicity of the proceeding exposition, we will focus on uniform weights, 
i.e., the special case when $w_i = 1/k$ for all $i \in [k]$. We note that our techniques straightforwardly 
generalize to arbitrary weights (see \cref{sec:arbit-weights} \new{for a sketch}).

We now provide a high-level outline of our approach 
followed by a somewhat more technical description of the various steps.
Our general strategy is to learn an {\em accurate clustering} of the samples.
We show (Proposition~\ref{prop:const-acc}) that it in fact suffices 
to find a ``rough'' clustering of our samples, i.e., one that correctly classifies a $1-1/\poly(k)$ fraction of the points. 
It should be noted that our rough clustering algorithm actually produces a list of $F(k)$ candidate clusterings, 
for some function $F(k)$, with the guarantee that at least one of them achieves the desired accuracy. 
By leveraging Proposition~\ref{prop:const-acc} and a standard hypothesis testing routine, 
we show that even this weaker guarantee suffices for our purposes.
Intuitively, such an implication holds because of the assumed separation between the components.
Given this rough clustering, we can use existing robust estimation algorithms~\cite{DKKLMS16} 
to learn rough approximations to the components. We can then obtain a better clustering of the points (based on which hypothesis component assigns a given point higher probability mass). 
This new clustering will have error $O(\eps)$, and so feeding these new samples into our robust
estimation algorithms will give us the final error of $\tilde{O}(\eps)$, as desired.

To find our desired clustering, we use the Sums-of-Squares (SoS) ``proofs to algorithms'' method.
In the subsequent discussion, we will first focus on the non-robust version of our method for the
following reasons. First, the non-robust version is already quite challenging. Second,
as a byproduct of our approach, we can ``robustify'' our non-robust learning algorithm by leveraging
a key insight from the recent high-dimensional robust statistics literature~\cite{DKKLMS16}. 

Our goal will be to efficiently find a partition of the samples into ``good'' clusters of points -- a good cluster is one with the property that its 
low-degree moments are close to those of a Gaussian in a certain technical sense. It should be noted 
that our definition of what it means for the moments to be {\em close} is necessarily stronger than the notion which 
was used in prior work for the spherical setting~\cite{HopkinsL18, KothariSS18}. 
Roughly speaking, these prior works only imposed
(Gaussian) upper bounds on the low-degree moments of a cluster. While this condition 
was sufficient for the spherical setting, it inherently fails
in our more challenging regime (see, e.g., Section~\ref{sec:mom-upperbounds-not-enough} 
for an intuitive explanation).

Given our definition of a ``good'' cluster, a necessary condition for our approach to be viable
is that any good cluster is in fact close to (i.e., has large overlap with) 
a true cluster (i.e., a cluster of points from the same Gaussian component).
In the present context, a mathematical \emph{proof} that good clusters are close to true clusters is called a {\em (cluster) identifiability proof}.
The gist of the Sums-of-Squares method is that if we have a ``sufficiently simple'' 
identifiability proof --- specifically one that can be captured by the constant-degree SoS proof system ---
we can mechanically translate it to a polynomial-time SDP-based algorithm for the clustering problem.
In more detail, we formulate a system of polynomial inequalities 
in cluster-assignment variables whose solutions are good clusters, 
and give a constant-degree SoS identifiability proof that any solution $w$ to our polynomial
system has high overlap with a true cluster.

\paragraph{Cluster Identifiability}
Our key technical contribution is an SoS proof of cluster identifiability, 
which we now describe in more detail, still in the non-robust setting.
The main challenge is to find a proof of identifiability which is sufficiently simple 
that one may even hope to accomplish it within the SoS proof system.
In this overview, we describe this simple proof, and defer to the body 
of the paper the details of how it can be encoded into a formal proof system.
The mathematical statements in the following overview are informal, 
with many constants and lower-order terms left out for simplicity of exposition. 
The reader is referred to the lemma statements in the body of the paper for formal statements 
and proofs.

Let $X_1,\ldots,X_n$ be i.i.d. samples from the mixture $\tfrac 1 k \sum_{i=1}^k G_i$, 
and let $S_1,\ldots,S_k$ be the partition of $X_1,\ldots,X_n$ into {\em true clusters}, i.e., 
the set $S_i$ corresponds to the samples drawn from $G_i$.
Informally, the main identifiability statement is:

\begin{samepage}
\begin{lemma}[Cluster identifiability, informal]
\label[lemma]{lem:ident-intro}
For every pair of true clusters $S_i, S_j$, if $TV(G_i,G_j) \geq 1-\delta$ then we have:
\begin{quote}
(*) For every subset $S \subseteq \{X_1,\ldots,X_n\}$, if the $t$-th empirical moments of $S$ are approximately Gaussian, 
i.e., if for all  $t \leq K(k) = \poly(k)$ it holds
\[
\E_{X \sim S} X^{\tensor t} \approx \E_{X \sim \cN(\mu_S,\Sigma_S)} X^{\tensor t} \text{, where } \mu_S, \Sigma_S \text{ are the empirical mean/covariance of } S  \;,
\]
then either $\frac{|S \cap S_i|}{|S|} \leq \delta'(\delta) \cdot K(k) + \frac 1 {k^{10}}$ or 
$\frac{|S \cap S_j|}{|S|} \leq \delta'(\delta) \cdot K(k) + \frac 1 {k^{10}}$ \;,
\end{quote}
where $\delta'(\delta) = 1/\polylog(1/\delta) \rightarrow 0$ as $\delta \rightarrow 0$. 
\end{lemma}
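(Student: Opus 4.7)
The plan is to derive the claimed identifiability from two ingredients that are also the main technical contributions of the paper: a \emph{parameter-distance characterization} of pairs of Gaussians with dimension-independent overlap, and an \emph{SoS-certifiable anti-concentration} statement for low-degree polynomials under moment-matching. I will first sketch the mathematical proof in plain language, then indicate the main obstacle to carrying it out inside constant-degree SoS.

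\textbf{Step 1 (From TV to a distinguishing polynomial).} I would first prove a quantitative structural lemma: if $\dtv(G_i, G_j) \geq 1-\delta$, then there exists a polynomial $p_{ij}$ of degree at most $2$, depending only on $(\mu_i,\Sigma_i)$ and $(\mu_j,\Sigma_j)$, such that
\[
|\E_{G_i} p_{ij} - \E_{G_j} p_{ij}| \;\geq\; \frac{1}{\polylog(1/\delta)} \cdot \sqrt{\Var_{G_i}(p_{ij}) + \Var_{G_j}(p_{ij})}\mper
\]
The large TV distance must come either from a large Mahalanobis mean separation (handled by a linear $p_{ij}$ in the direction $\Sigma_i^{-1/2}(\mu_i-\mu_j)$) or from a large relative-spectral gap between $\Sigma_i$ and $\Sigma_j$ (handled by a quadratic $p_{ij}$ built from the eigendirections of $\Sigma_i^{-1/2}\Sigma_j\Sigma_i^{-1/2} - I$). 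The proof is by contrapositive: if no such $p_{ij}$ existed, the first and second Hermite coefficients of $G_i$ and $G_j$ would match closely in a relative sense, whence a Pinsker-type bound forces $\dtv(G_i,G_j)$ strictly below $1-\delta$.

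\textbf{Step 2 (Anti-concentration under moment matching).} Next I would prove that if $S$ has empirical moments approximately matching those of $\cN(\mu_S,\Sigma_S)$ up to degree $K(k)$, then for every polynomial $p$ of degree at most $2$ with $\Var_{\cN(\mu_S,\Sigma_S)}(p) = 1$ and every $\tau \in \R$,
\[
\Pr_{X \sim S}\Bigbrac{|p(X) - \tau| \leq \eta} \;\leq\; \eta \cdot \polylog(1/\eta) \mper
\]
This is the ``SoS-certifiable anti-concentration'' advertised in the abstract. It is proved by sandwiching the indicator $\Ind[|p(X)-\tau|\leq \eta]$ between low-degree polynomial surrogates of degree $O(K(k))$; the moment-matching hypothesis then transports a Gaussian-style anti-concentration bound (in the spirit of Carbery-Wright for quadratic forms) from $\cN(\mu_S,\Sigma_S)$ to the empirical measure on $S$.

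\textbf{Step 3 (Contradiction).} Combining these, suppose toward contradiction that both $|S \cap S_i|/|S|$ and $|S \cap S_j|/|S|$ exceed $\delta'(\delta) K(k) + k^{-10}$ with $\delta'(\delta) = 1/\polylog(1/\delta)$. Take $p = p_{ij}$ from Step 1, renormalized so that $\Var_{\cN(\mu_S,\Sigma_S)}(p) = 1$. Since $\mu_S$ and $\Sigma_S$ are close to convex combinations of the component parameters, this renormalization changes scales by at most a $\poly(k)$ factor, so a $1/\polylog(1/\delta)$ gap $|\alpha - \beta|$ between $\alpha = \E_{G_i} p$ and $\beta = \E_{G_j} p$ is preserved. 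Standard Gaussian concentration then shows that empirically $p(X)$ sits near $\alpha$ on $S \cap S_i$ and near $\beta$ on $S \cap S_j$, producing noticeable empirical mass of $p(X)$ near two well-separated points and contradicting Step 2 once $\delta'(\delta)$ is taken small enough relative to the anti-concentration radius $|\alpha - \beta|$.

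\textbf{The main obstacle.} The hard point is that this entire argument must be captured by a constant-degree SoS proof in the cluster-assignment variables $w$. Step 1 is purely information-theoretic, but Step 2 is the genuinely novel ingredient: classical proofs of polynomial anti-concentration (interpolation, tensorization) do not fit into SoS, so one must design a polynomial surrogate for $\Ind[|p(X)-\tau|\leq \eta]$ whose moment bounds can be certified by constant-degree SoS from the moment-matching axioms on $S$. A secondary difficulty is the variance bookkeeping in Step 3: the comparison of $\Var_{\cN(\mu_S,\Sigma_S)}(p)$ with $\Var_{G_i}(p)$ and $\Var_{G_j}(p)$ must itself be expressible as polynomial inequalities in the cluster-assignment indicators, and this is precisely where one sees that moment \emph{upper bounds} alone (as used in the spherical setting of prior work) are insufficient --- matching lower bounds on the low-degree moments of $S$ are essential for the quadratic-direction case.
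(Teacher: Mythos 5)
There are two genuine gaps, both concentrated in the cases where the components are separated by their covariances rather than their means.

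First, your Step 1 is false as stated. Take $G_i = \cN(0,1)$ and $G_j = \cN(0,\delta)$ in one dimension (or any pair with equal means and $\iprod{v,\Sigma_j v} = \delta\,\iprod{v,\Sigma_i v}$ in some direction $v$): here $\dtv(G_i,G_j) \to 1$ as $\delta \to 0$, yet for \emph{every} quadratic $p(x) = ax^2+bx+c$ one has $|\E_{G_i}p - \E_{G_j}p| = |a|(1-\delta)$ while $\sqrt{\Var_{G_i}(p)} \geq \sqrt{2}\,|a|$, so the ratio you require to be $\polylog(1/\delta)$ is bounded by an absolute constant. No degree-$2$ distinguishing polynomial with a mean gap large relative to \emph{both} standard deviations exists in this regime. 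This is exactly why the paper's structural lemma (\cref{lem:TV_param}) has a separate ``variance separation'' case, which is then handled by an asymmetric mechanism: the variance \emph{lower} bound on the pseudocluster comes from the large-variance component alone (\cref{lem:fake_var_lowerbound}), while the \emph{upper} bound comes from anti-concentration applied to the short interval on which the small-variance component concentrates (\cref{lem:fake_var_upperbound}). Your symmetric ``mean gap $\gg$ standard deviation'' formulation cannot see this case.

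Second, your Step 2 claims anti-concentration at rate $\eta \cdot \polylog(1/\eta)$ for \emph{all} degree-$\le 2$ polynomials. For a rank-one quadratic form $p(x) = \iprod{x,v}^2$ the correct Gaussian rate near $\tau = 0$ is $\Theta(\sqrt{\eta})$ (Carbery--Wright gives $\eta^{1/d}$ for degree $d$), so the stated bound is quantitatively wrong; and, more seriously, SoS-certifiable anti-concentration is only available for \emph{linear} forms --- the univariate surrogate construction (Jackson's theorem plus amplification, \cref{lem:q-main}) reduces $\iprod{X,v}$ to a one-dimensional moment-matched variable, and no analogous certificate is known for quadratic forms of a moment-matched distribution. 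The paper deliberately never invokes anti-concentration for quadratics: the covariance-separated case is handled by combining the exact identity $\Var_{\cN(0,\Sigma)}\iprod{x,Ax} = 2\|\Sigma^{1/2}A\Sigma^{1/2}\|_F^2$ (a degree-$4$ moment fact) with the a priori comparison $\Sigma_S \preceq \Sigma_i/\delta'$, which is available only after cases (i) and (ii) have been excluded. Relatedly, your Step 3 silently multiplies a variance lower bound by the quantity $|S\cap S_i|/|S| - \delta$, which may be negative; the paper identifies this as the obstruction to a direct SoS translation and resolves it with an a priori upper bound on $\Var_{X\sim S}\iprod{X,v}$ in terms of the mixture covariance together with a recursive splitting of the mixture.
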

\end{samepage}

Using the SoS clustering approach from prior works \cite{HopkinsL18,KothariSS18}, an SoS proof of (*) from \cref{lem:ident-intro} would suffice to obtain an algorithm for clustering samples from $\tfrac 1 k \sum_{i=1}^k G_i$ which correctly clusters at least a $1-1/\poly(k)$ fraction of the samples (taking $\delta$ sufficiently small with respect to $k$).
\cite{HopkinsL18,KothariSS18} develop this approach and apply it in the much simpler 
setting that $G_i,G_j$ have variance at most $1$ \emph{in every direction} 
and that there is a direction $v$ in which their means are separated by $k^{\Omega(1)}$.
Our separation assumption $\dtv(G_i,G_j) \geq 1 -\delta(k)$ means we must handle a significantly more challenging set of possible relationships among clusters $S_i, S_j$.

For technical reasons, we are not aware of an SoS proof of  the statement (*) without some additional assumptions.
(This point was overlooked in a previous version of the present paper -- see \cref{sec:concurrent}.)
For the sake of this intuitive exposition, we will first sketch a simple proof of \cref{lem:ident-intro}, 
and then we will describe a technical modification which does have an SoS proof and 
suffices for our clustering purposes.

The starting point of our proof of \cref{lem:ident-intro} is to consider three cases, 
using the following new structural result (see Lemma~\ref{lem:gaussian-cases} for the formal statement)
characterizing pairs of Gaussians $P,Q$ with small overlap in terms of their parameter distance.
Our structural lemma says that if $\dtv(P,Q) \geq 1-\delta$,
then we are in one of the following three cases, for some $\delta'(\delta)$:
\begin{itemize}
\item[(i)] There exists a hyperplane approximately separating the components,
i.e., a direction $v$ such that $\iprod{v,\mu_P - \mu_Q}^2 \geq \iprod{v,\Sigma_P v}/\delta'$ 
and $\iprod{v, \mu_P - \mu_Q}^2 \geq \iprod{v, \Sigma_Q v}/\delta'$.
\item[(ii)] There exists a direction $v$ such that the variances of the $v$-projections of $P$
and $Q$ differ significantly, i.e., $\iprod{v,\Sigma_P v} \leq \delta' \cdot \iprod{v,\Sigma_Q v}$.

\item[(iii)] The covariance matrices of $P$ and $Q$ differ significantly in relative Frobenius norm, 
i.e., $\|I - \Sigma_P^{-1/2} \Sigma_Q \Sigma_P^{-1/2}\|_F \geq 1/\delta'$. 
\end{itemize}
See \cref{fig:separation-types} for an illustration of these three distinct cases.
In \cref{lem:mean_sep}, \cref{lem:var_sep}, and \cref{lem:covar_sep}, 
we prove (the formal version of an SoS-provable modification of) \cref{lem:ident-intro} in cases (i), (ii), and (iii), respectively.

In the following paragraphs, we give an overview of these proofs.

  \begin{figure}[!htb]
	\center{\includegraphics[width=\textwidth]
		{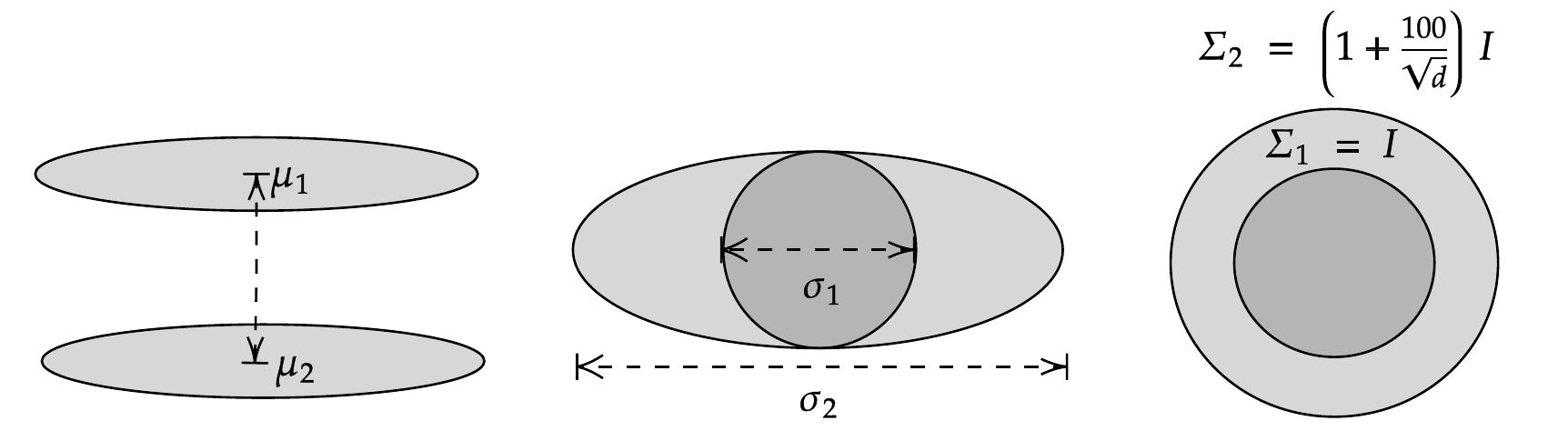}}
\caption{\label{fig:separation-types} The only three ways in which a pair of Gaussians can be separated in total variation distance. 
Either there is a direction in which the Gaussians differ significantly in their means (left), 
or there is such a direction for the variances (middle), 
or the covariance matrices differ significantly in the relative Frobenius norm (right).}
\end{figure}

\paragraph{Witnessing Polynomials and Variance Bounds}
For $G_i, G_j$, $i \neq j$, in each of cases (i),(ii),(iii), we first identify a constant-degree 
\emph{witnessing polynomial} $p_{ij} \, : \, \R^d \rightarrow \R$, which can certify \emph{non-Gaussianity} 
of any set of samples $S$ such that $|S_i \cap S| \geq \delta' K(k)|S|$ and $|S_j \cap S| \geq \delta' K(k)|S|$.
In particular, we will find a polynomial $p_{ij}$ such that the variance of \new{$p_{ij}$} under the uniform distribution on $S$ 
is much larger than allowed for any $S$ which has both low-degree Gaussian empirical moments \new{(up to order $K(k)$)}
and a $\delta' K(k)$-fraction of its mass on each of $S_i, S_j$.
Specifically, in cases (i) and (ii), we have $p_{ij}(x) = \iprod{x,v}$, \new{where $v$ is the direction of separation,}
and in case (iii), we have $p_{ij}(x) = \iprod{x, Ax}$, 
where $A \approx \Sigma_i - \Sigma_j$.\footnote{For technical reasons, we in fact take $A = \Sigma_i^{-1/2} (I - \Sigma_i^{-1/2} \Sigma_j \Sigma_i^{-1/2}) \Sigma_i^{-1/2}$.}
We will henceforth drop the index $ij$ and use $p$ for the polynomial in question.

\subparagraph{Cases (i) and (ii): hyperplane and variance separation}
Let us start with cases (i) and (ii), which capture most of the technical insight, see \cref{lem:mean_sep,lem:var_sep}.
(In case (iii), we will be able to re-use much of the work from (i) and (ii).)
The key to cases (i) and (ii) are two technical lemmas (\cref{lem:fake_var_upperbound,lem:fake_var_lowerbound}), which establish both upper and lower bounds on the variance of $p(x)$ under the uniform distribution on any subset of samples $S$ with both Gaussian $t$-th moments (for $t \leq K(k)$) and significant overlap with $S_i$ and $S_j$.
If $G_i,G_j$ are in either of cases (i),(ii), these upper and lower bounds cannot hold simultaneously, and so any $S$ with Gaussian $t$-th moments cannot have significant overlap with both $S_i$ and $S_j$.

The first lemma, which is the more technically challenging of the two, 
says that the variance of $\iprod{x,v}$ under \new{the uniform distribution on} $S$ 
cannot be much larger than the variance under either of $G_i, G_j$.

\begin{lemma}[Informal version of \cref{lem:fake_var_upperbound}, variance upper bound on $\iprod{x,v}$]
\label[lemma]{lem:fake_var_upperbound_intro}
For every $\tau > 0$,
\begin{equation} \label{eqn:fake-var-ub-intro}
 \V_{X \sim S} [\iprod{X,v}] \cdot \Paren{\frac{|S_i \cap S|}{|S|} - \tau} \cdot \tau'(\tau) \lesssim \V_{X \sim G_i} [\iprod{X,v}]\mper
\end{equation}
Furthermore, \eqref{eqn:fake-var-ub-intro} has an SoS proof of degree $(\tau'(\tau))^{-1}$, where the indeterminates are assignment variables for the subset $S$.
\end{lemma}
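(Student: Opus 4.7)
The bound says roughly that if a large $\alpha$-fraction of $S$ is actually drawn from the true cluster $S_i$, then $S$'s variance in direction $v$ cannot substantially exceed that of $G_i$. My plan is to prove this by combining two complementary facts: (a) concentration of the \emph{true} cluster $S_i$ in direction $v$ (from the Gaussian moments of $G_i$), which forces most of $S_i \cap S$ into a narrow interval on the real line, and (b) \emph{anti-concentration} of $S$ in direction $v$ (from the Gaussian-moment constraints defining a good cluster), which rules out too many points of $S$ lying in any narrow interval relative to the scale $\sqrt{\V_S}$. This is exactly the ``SoS-certifiable anti-concentration'' strategy advertised as the paper's main technical tool.

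Concretely, I plan to proceed in three steps. \emph{Step 1 (Markov tail for the true cluster).} Since $S_i$ inherits approximately Gaussian moments up to degree $K(k)$ from $G_i$, for any $M\geq 1$ the fraction of $X\in S_i$ with $(\iprod{X,v}-\mu_i)^2\geq M^2\V_{G_i}$ is at most $O(1/M^{2t})$ for $t\le K(k)/2$ (degree-$2t$ Markov). Choose $M=M(\tau)$ so that this tail is at most $\tau$; then at least a $(\alpha-\tau)$-fraction of $S$ is contained in the interval $I_M = [\mu_i - M\sqrt{\V_{G_i}},\, \mu_i+M\sqrt{\V_{G_i}}]$. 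Because $S_i$-quantities are data constants, this step lifts cleanly into SoS. \emph{Step 2 (SoS anti-concentration of $S$).} Using only the low-degree Gaussian-moment axioms on the cluster-indicator variables $w_X$ for $S$, I will exhibit a polynomial inequality, of degree $O(1/\tau'(\tau))$, of the schematic form
\[
  \frac{1}{|S|}\sum_{X}w_X \cdot \mathbf 1[\iprod{X,v}\in I_M] \cdot \V_S^{\Omega(1)} \;\lesssim\; (M\sqrt{\V_{G_i}})^{\Omega(1)} \;+\; \text{(small slack)} \mper
\]
Here the indicator is replaced by an explicit SoS-friendly polynomial approximant (a smoothed interval bump) whose pseudo-expectation under $S$ is controlled by the Gaussian moment bounds on $S$ and whose value at an actual point in $I_M$ is at least $1$. \emph{Step 3 (combine).} Matching the lower bound from Step 1 with the upper bound from Step 2 and clearing denominators gives the target inequality $\V_S\cdot(\alpha-\tau)\cdot\tau'(\tau)\lesssim \V_{G_i}$. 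All manipulations are polynomial identities in $w_X$ of degree $O(1/\tau')$, producing the SoS-proof-degree bound asserted in the lemma.

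The hard part will be Step 2: constructing an anti-concentration polynomial whose SoS proof of the interval bound requires \emph{only} the low-degree Gaussian-moment constraints available on $S$, and whose degree gives the claimed $\tau'(\tau)$ quantitative behavior. Standard Gaussian anti-concentration (Carbery--Wright style) is not itself SoS, so I would have to work with a truncated/low-degree polynomial surrogate and carefully track how the moment constraints propagate through it. Secondary bookkeeping issues — avoiding division by $|S|$ via the fixed-cardinality constraint $\sum_X w_X = |S|$, handling the empirical mean $\mu_S$ as a linear functional of the $w_X$, and absorbing lower-order error terms into the slack $\tau$ — are routine once the anti-concentration polynomial is in hand. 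Step 1 and Step 3 are essentially routine Markov plus polynomial manipulation; all the genuine technical weight sits in the SoS-certifiable interval bound for $S$.
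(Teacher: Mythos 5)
Your high-level strategy --- play an SoS-certifiable anti-concentration statement for $S$ (a smoothed low-degree interval polynomial with small Gaussian expectation) against concentration of the genuine cluster $S_i$ --- is exactly the strategy of the paper's proof of \cref{lem:fake_var_upperbound}; the surrogate you want in Step 2 is precisely the polynomial $q_\tau$ of \cref{lem:q-main}. The gap is in how you propose to deploy it. What the moment axioms on $S$ can certify is anti-concentration pinned to $S$'s \emph{own} (indeterminate) center and scale: the $w$-weighted fraction of points of $S$ within distance $\tau\sqrt{\V_{X\sim S}[\iprod{X,v}]}$ of each other is $O(\tau)$. Your Step 2 instead asks for a certified upper bound on the $S$-mass of the \emph{externally fixed} interval $I_M$, whose width is calibrated to $\sqrt{\V_{X \sim G_i}[\iprod{X,v}]}$ and whose center is $\mu_i$. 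Any fixed-degree polynomial $B \geq \mathbbm{1}_{I_M}$ grows polynomially outside $I_M$, so the quantity that \cref{lem:emp-pop} actually relates the empirical $w$-average to, namely $\E_{g \sim \cN(\mu(w),\Sigma(w))} B(\iprod{g,v})$, is \emph{not} bounded by anything like $M\sqrt{\V_{G_i}}/\sqrt{\V_S}$ plus a dimensionless slack once $\iprod{\mu(w)-\mu_i,v}$ or $\sqrt{\iprod{v,\Sigma(w)v}}$ is large compared with $|I_M|$; the schematic inequality of Step 2 is false for the surrogate, not merely hard to certify. If you instead scale and center the bump by $S$ itself (width $\tau\sqrt{\iprod{v,\Sigma(w)v}}$), its $S$-average is certifiably $O(\tau)$, but then ``value at least $1$ on $I_M$'' fails unless the bump happens to cover $I_M$ --- and that dichotomy is a case analysis SoS cannot perform.

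The paper bridges this mismatch with two devices you would need to adopt or reinvent. First, it works throughout with pairwise differences $X_i'-X_j'$ (and $Y_i-Y_j$ within $S_a$), which removes the centering problem entirely since differences are centered at $0$ by construction; your plan to treat $\mu_S$ ``as a linear functional of the $w_X$'' runs straight into the off-center issue above. Second, instead of counting membership in $I_M$, it uses the pointwise SoS inequality $q_\tau(x) + (x/\tau)^{2s} \geq 1$ at $x = \iprod{X_i'-X_j',v}/\sqrt{2\iprod{v,\Sigma(w)v}}$, summed over pairs in $S_a$: the ``bump'' term is controlled by the certified anti-concentration of the pseudocluster, while the ``far'' term is charged not to interval membership but to the $2s$-th moments of within-cluster differences, which are $s^{O(s)}\iprod{v,\Sigma_a v}^s$ by the deterministic conditions, \cref{lem:emp-pop-2}, and the decorruption step \cref{lem:mom_decorruption_diff}. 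With those replacements your Step 1 becomes the raw moment bound rather than a Markov-tail/indicator statement, and your Step 3 combination then goes through essentially as you describe, yielding the formal version of the claimed inequality (with the additive $\tau$ loss appearing exactly where the $O(\tau)$ bump average enters).
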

Let us briefly discuss the proof of \cref{lem:fake_var_upperbound_intro}, since it is the most technically subtle in this paper.  
To prove \cref{lem:fake_var_upperbound_intro}, we rely on Gaussian anti-concentration, 
which implies the following basic fact about Gaussian distributions: 
If $P = \cN(0,\sigma^2)$ has $\Pr_{X \sim P}(|X| \leq \alpha) \geq \beta$, then $\sigma \leq O(\alpha/\beta)$.
In the context of \cref{lem:fake_var_upperbound_intro}, if we imagine temporarily that the distribution of $X \sim S$ is Gaussian 
with covariance $\Sigma_S$ (rather than just having its first $K(k)$ Gaussian-like moments), 
this means that the variance of $\iprod{X,v}$ cannot be too large under $S$. This follows from the fact that 
the uniform distribution on $S$ contains $|S \cap S_i|$ samples from $G_i$, and most of these 
will lie in an interval of width $O(\Var_{X \sim G_i} [\iprod{X,v}]^{1/2})$.

The above sketch suggests that in \cref{lem:fake_var_upperbound_intro} 
we should obtain the bound $\V_{X \sim S} [\iprod{X,v}] \Paren{\frac{|S_i \cap S|}{|S|} }^2 \leq O(\V_{X \sim G_i} [\iprod{X,v}])$.
Our actual result, however, loses an additive $\tau$ and a multiplicative factor of $\tau'(\tau)$.
This is because the distribution $X \sim S$ is not actually Gaussian, but instead has its first $K(k)$ moments close to those of a Gaussian.
To prove our lemma, we need to use a version of anti-concentration which applies to any such distribution (and which has an SoS proof) \cite{karmalkar2019list,raghavendra2020list,bakshi2020list,raghavendra2020listB}.
The SoS-provable anti-concentration inequality we use, which relies only on the first $1/\tau'(\tau)$ moments,
loses these additive and multiplicative factors.
However, since we only wish to conclude that $|S_i \cap S| |S_j \cap S|/ |S|^2 \leq 1/\poly(k)$, 
this ultimately suffices for our arguments.

Moving on to the second lemma, the following lower bound on the variance of $\iprod{X,v}$ contradicts the upper bound of \cref{lem:fake_var_upperbound_intro} when $G_i,G_j$ are in cases (i),(ii), unless one of $|S_i \cap S|, |S_j \cap S|$ is small.

\begin{lemma}[Informal version of \cref{lem:fake_var_lowerbound}, variance lower bound on $\iprod{x,v}$]
\label[lemma]{lem:fake_var_lowerbound_intro}
\begin{equation}\label{eqn:fake-var-lb-intro}
  \Var_{X \sim S}[ \iprod{X,v} ] \gtrsim \frac 1 {K(k)} \cdot \Paren{\frac{|S_i \cap S| \cdot |S_j \cap S|}{|S|^2}} \cdot \Var_{X \sim 0.5 G_i + 0.5 G_j} [\iprod{X,v}]\mper
\end{equation}
Furthermore, \eqref{eqn:fake-var-lb-intro} has a constant-degree SoS proof, 
where the indeterminates are assignment variables for the subset $S$.
\end{lemma}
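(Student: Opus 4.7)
The plan is to start from the classical identity
\[
|S|^2 \Var_S[Y] = \tfrac 1 2 \sum_{x, y \in S}(Y(x) - Y(y))^2
\]
with $Y(X) = \iprod{X, v}$, which admits a degree-$2$ SoS certificate in the indicator variables $w \in \{0,1\}^n$ for membership in $S$. Keeping only pairs with one endpoint in $S \cap S_i$ and the other in $S \cap S_j$ (all dropped terms are non-negative squares) yields
\begin{align*}
|S|^2 \Var_S[Y] &\geq \sum_{x \in S_i,\, y \in S_j} w_x w_y (Y(x) - Y(y))^2 \\
&= |S \cap S_i| \cdot |S \cap S_j| \cdot \bigl[\Var_{S \cap S_i}[Y] + \Var_{S \cap S_j}[Y] + (\mu_{S \cap S_i, Y} - \mu_{S \cap S_j, Y})^2\bigr],
\end{align*}
which is a constant-degree polynomial inequality in $w$.

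Since $S_i, S_j$ are each $\approx n/k$ i.i.d.\ samples from $G_i, G_j$, standard concentration makes their empirical first and second moments of $Y$ approximate those of $G_i, G_j$. The target decomposes as $\Var_{0.5 G_i + 0.5 G_j}[Y] = \tfrac 1 2(\Var_{G_i}[Y] + \Var_{G_j}[Y]) + \tfrac 1 4(\iprod{v, \mu_i - \mu_j})^2$, so it suffices to certify that the bracketed expression above carries a $1/K(k)$-fraction of each piece. For the mean-separation part, I would invoke the degree-$2$ identity $(a - b)^2 \geq \tfrac 1 3 (c - d)^2 - (a-c)^2 - (b-d)^2$ with $a = \mu_{S \cap S_i, Y}$, $b = \mu_{S \cap S_j, Y}$, $c = \iprod{v, \mu_i}$, $d = \iprod{v, \mu_j}$, and control each drift by the SoS Cauchy--Schwarz bound
\[
|S \cap S_i|^2 \, (\mu_{S \cap S_i, Y} - \iprod{v, \mu_i})^2 \;\leq\; |S \cap S_i| \sum_{x \in S_i} w_x (Y(x) - \iprod{v, \mu_i})^2 \;\leq\; O(|S \cap S_i| \cdot |S_i| \cdot \Var_{G_i}[Y]),
\]
and analogously for $j$, using the sample concentration of $S_i, S_j$ for the final step. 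This gives the desired lower bound whenever the mixture variance is dominated by mean separation (case (i) of the structural lemma).

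The hard part will be the variance-separation regime (case (ii) of the structural lemma, where $\Var_{G_i}[Y] \ll \Var_{G_j}[Y]$), in which the mean-separation contribution is too small and the bound has to come from $\Var_{S \cap S_j}[Y] \gtrsim \Var_{G_j}[Y]/K(k)$. This requires precluding $w$ from collapsing $S \cap S_j$ onto a narrow interval of $Y$-values, which the degree-$2$ pair identity and Cauchy--Schwarz alone cannot guarantee. My plan is to invoke the Gaussian-moment constraints on $S$ up to degree $K(k)$ together with SoS-certifiable anti-concentration in the style of \cite{karmalkar2019list,raghavendra2020list,bakshi2020list,raghavendra2020listB}: the high-order moment constraints force $Y$ to remain anti-concentrated under the uniform distribution on $S$, and combining this with the lower bound $|S \cap S_j| \geq p_j |S|$ yields the needed lower bound on $\Var_{S \cap S_j}[Y]$. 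The factor $1/K(k)$ in the statement absorbs precisely the tail strength of SoS anti-concentration at moment order $K(k)$.
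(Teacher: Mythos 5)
Your first display (the pairwise identity, restriction to cross pairs, and the decomposition into $\Var_{S\cap S_i}+\Var_{S\cap S_j}+(\mu_{S\cap S_i}-\mu_{S\cap S_j})^2$) is sound and degree-$2$ in $w$, and your Cauchy--Schwarz control of the drift of the $w$-weighted empirical means handles the mean-separation contribution in essentially the same spirit as the paper, which instead argues pointwise that most pairs $(Y_i,Y_j)\in S_a\times S_b$ satisfy $|\iprod{Y_i-Y_j,v}|\geq |\iprod{\mu_a-\mu_b,v}|-O(\sqrt{\log 1/\xi})(\sqrt{\iprod{v,\Sigma_a v}}+\sqrt{\iprod{v,\Sigma_b v}})$ (a deterministic condition on the data). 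Both routes produce the same additive error of order $\iprod{v,\Sigma_a v}+\iprod{v,\Sigma_b v}$ that appears in the formal \cref{lem:fake_var_lowerbound}. (You do omit the decorruption step via \cref{lem:mom_decorruption_diff}, but that is bookkeeping.)

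The genuine gap is your plan for the variance-separation regime. You propose to lower-bound $\Var_{S\cap S_j}[Y]$ by invoking the Gaussian moment constraints on $S$ together with SoS-certifiable anti-concentration. This is the wrong tool pointed in the wrong direction. Certified anti-concentration of the pseudocluster says that not too much of $S$'s mass lies in a window of width $\tau\sqrt{\Var_S[Y]}$; combined with the fact that a $p_j$-fraction of $S$ lies in $S_j$, the most it can yield is a relation between $\Var_{S\cap S_j}[Y]$ and $\Var_S[Y]$ --- but $\Var_S[Y]$ is exactly the quantity you are trying to lower-bound, so the argument is circular. Moreover, the moment constraints on $S$ say nothing about how the mass of the sub-population $S\cap S_j$ is spread relative to $\Sigma_j$: a pseudocluster with perfectly Gaussian moments could in principle select a sub-fraction of $S_j$ collapsed onto a narrow interval, unless you use the fact that the points of $S_j$ are \emph{actual Gaussian samples}. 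That is precisely what the paper does: condition 2(b) of \cref{def:deterministic} (the sets $F_a(v)$) records, as a whp deterministic property of the data, that at least a $(1-\xi)$-fraction of pairs $(Y_i,Y_j)\in S_a^2$ satisfy $\iprod{Y_i-Y_j,v}^2\geq\Omega(\xi)\iprod{v,\Sigma_a v}$; restricting the pairwise sum to $S_a^2$ and plugging in this pointwise fact immediately gives the lower bound at \emph{constant} SoS degree, with no moment constraints needed. Your route would also inflate the degree to $K(k)$, contradicting the constant-degree claim in the lemma. The certified anti-concentration machinery is reserved for the variance \emph{upper} bound (\cref{lem:fake_var_upperbound}), where the logic runs the other way: Gaussianity of $S$'s moments forces pairs in $S$ to be spread at scale $\sqrt{\Var_S[Y]}$, while the real samples from $G_a$ have bounded $2s$-th moments, so $\Var_S[Y]$ cannot be much larger than $\iprod{v,\Sigma_a v}$.
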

The proof of \cref{lem:fake_var_lowerbound_intro} follows by arguing that the samples from $S_i, S_j$ in $S$ 
contribute $\Var_{X \sim 0.5 G_i + 0.5 G_j} [\iprod{X,v}]$ to the variance.

To complete this proof sketch of \cref{lem:ident-intro} in cases (i) and (ii), 
consider first what happens when $G_i,G_j$ are in case (i).
By direct calculation, $\Var_{X \sim 0.5 G_i + 0.5 G_j} [\iprod{X,v}] \geq \Omega(\iprod{\mu_{G_i} - \mu_{G_j},v}^2)$, so by \cref{lem:fake_var_lowerbound_intro},
\[
\Var_{X \sim S}[ \iprod{X,v} ] \gtrsim \frac 1 {K(k)} \cdot \Paren{\frac{|S_i \cap S| \cdot |S_j \cap S|}{|S|^2}} \cdot \Omega(\iprod{\mu_{G_i} - \mu_{G_j},v}^2) \;,
\]
and putting this together with \cref{lem:fake_var_upperbound_intro}, we get
\[
\tau + K(k) \cdot \frac{\iprod{v, \Sigma_{G_i} v} + \iprod{v,\Sigma_{G_j} v}}{\iprod{\mu_{G_i} - \mu_{G_j},v}^2} \gtrsim \Paren{\frac{|S_i \cap S| \cdot |S_j \cap S|}{|S|^2}}^2\mper
\]
When $G_i,G_j$ are in case $(i)$, with sufficiently large $TV(G_i,G_j) \geq 1-\delta(k)$ and sufficiently-small $\tau$, 
the left-hand side above is at most $\delta'(k) K(k)$, which is what we wanted to show.
For case (ii), the same argument applies, using the bound $\Var_{X \sim 0.5 G_i + 0.5 G_j} [\iprod{X,v}] \geq \iprod{v, \Sigma_i v}$.

Let us pause to note that the aforementioned discussion 
gives a simple \emph{but not yet SoS} proof of \cref{lem:ident-intro} in cases (i) and (ii).
Interestingly enough, we are not aware of a way to put together \cref{lem:fake_var_upperbound_intro,lem:fake_var_lowerbound_intro} to obtain \cref{lem:ident-intro} within the SoS proof system.
After our description of case (iii) below, we describe a modification of \cref{lem:ident-intro} 
for which we can obtain a low-degree SoS proof and which suffices for our purposes.

\subparagraph{Case (iii): separation by degree-2 polynomial}
Let us turn to case (iii) (\cref{lem:covar_sep}), where the covariance matrices $\Sigma_i, \Sigma_j$ of $G_i,G_j$ 
have large difference in relative Frobenius norm --- i.e., $\|I - \Sigma_i^{-1/2} \Sigma_j \Sigma_i^{-1/2} \|_F \geq \delta'(\delta)$, 
if $\dtv(G_i,G_j) \geq 1-\delta$.
Because we have handled cases (i) and (ii), we may now assume that for each $\ell \in [k]$ such that $G_i,G_\ell$ are in cases (i) or (ii), if $|S \cap S_i|/|S| \geq \delta'(\delta) K(k) + k^{-10}$ then $|S \cap S_\ell|/|S| \leq \delta'(\delta) K(k) + k^{-10}$.

For this overview, let us imagine that in fact for every $\ell$ such that $G_i,G_\ell$ are in cases (i),(ii), 
we actually have $|S \cap S_\ell| = 0$, as this does not change the argument significantly.
This implies, roughly speaking, that
\begin{align}
\label{eq:intro-1}
\Sigma_S \precsim \frac{\Sigma_i}{\delta'(\delta)} \;,
\end{align}
since $S \subseteq [n]$ is supported only on true clusters $S_\ell \subseteq [n]$ 
such that $\iprod{\mu_\ell - \mu_i,v}^2 \leq \iprod{v,\Sigma_i + \Sigma_\ell v}/\delta'(\delta)$ 
and $\Sigma_\ell \cdot \delta'(\delta) \preceq \Sigma_i \preceq \Sigma_\ell / \delta'(\delta)$.

Similarly to cases (i) and (ii), we now establish upper and lower bounds 
on the variance of a witnessing polynomial $p$ under the (centered) uniform distribution on $S$ 
which cannot hold simultaneously if $|S \cap S_i|,|S \cap S_j| \geq \delta'(\delta) K(k) + k^{-10}$.
We take our witnessing polynomial to be $p(x) = \iprod{x,Ax}$, 
where $A = \Sigma_i^{-1/2} (I - \Sigma_i^{-1/2} \Sigma_j \Sigma_i^{-1/2}) \Sigma_i^{-1/2}$.
(We encourage the reader to think of the case that $\Sigma_i = I$, in which case $A = \Sigma_i - \Sigma_j$.)

Our lower bound on the variance of $p(x)$ comes from our assumption that $S$ has Gaussian-like low-degree 
moments and the following formula for the variance of degree-$2$ polynomials under mean-zero Gaussian distributions: 
For any $\Sigma$ and any matrix $M$
\[
\V_{\cN(0,\Sigma)}[\iprod{x,Mx}] = \|\Sigma^{1/2} M \Sigma^{1/2}\|_F^2\mper
\]
It follows that (see \cref{lem:fake_var_upperbound_quadratic})
\begin{align}
\label{eq:intro-2}
  \V_{x \sim \cN(0,\Sigma_S)}[p(x)] = \|\Sigma_S^{1/2} \Sigma_i^{-1/2} (I - \Sigma_i^{-1/2} \Sigma_j \Sigma_i^{-1/2}) \Sigma_i^{-1/2} \Sigma_S \|_F^2 \leq \frac 1 {\delta'(\delta)} \|I - \Sigma_i^{-1/2} \Sigma_j \Sigma_i^{-1/2}\|_F^2\mcom
\end{align}
where for the inequality we have used \cref{eq:intro-1}.
On the other hand, by similar logic to \cref{lem:fake_var_lowerbound_intro}, we 
show (see \cref{lem:fake_var_lowerbound_quadratic})
\begin{align}
\label{eq:intro-3}
  \V_{x \sim \cN(0,\Sigma_S)} [p(x)] = \V_{X \sim S}[p(X - \mu_S)] \gtrsim \frac{|S_i \cap S||S_j \cap S|}{|S|^2} \cdot \iprod{\Sigma_i - \Sigma_j, A}^2 = \|I - \Sigma_i^{-1/2} \Sigma_j \Sigma_i^{-1/2}\|_F^4 \mper
\end{align}
It follows from \cref{eq:intro-2,eq:intro-3} that
\[
\frac{|S_i \cap S||S_j \cap S|}{|S|^2} \lesssim \delta'(\delta)\mcom
\]
which proves \cref{lem:ident-intro} in case (iii).
In \cref{lem:covar_sep}, we carry out this reasoning using SoS-provable 
versions of \cref{eq:intro-1,eq:intro-2,eq:intro-3}.

\paragraph{Modifying Lemma~\ref{lem:ident-intro} for SoS-Friendliness}
The preceding discussion has sketched a simple cluster identifiability proof.
To obtain our main clustering algorithm, we translate this proof into the SoS proof system.
Generally speaking, using powerful tools like SoS-provable anti-concentration and \cite{karmalkar2019list,raghavendra2020list,bakshi2020list,raghavendra2020listB} 
and the SoS proof of hypercontractivity of low-degree polynomials under the Gaussian measure \cite{barak2012hypercontractivity}, this task is largely mechanical.
However, as we have already alluded to, a technical challenge arises when we put together our upper and lower bounds (\cref{lem:fake_var_upperbound_intro,lem:fake_var_lowerbound_intro}) on the variance of linear functions $\iprod{x,v}$ to prove \cref{lem:ident-intro} in cases (i) and (ii).
We now describe this challenge and our resolution to it; see also \cref{sec:concurrent}.

Let us re-examine the proof sketch of \cref{lem:ident-intro} in cases (i) and (ii), 
which we gave using \cref{lem:fake_var_upperbound_intro,lem:fake_var_lowerbound_intro}.
Formally, this proof required multiplying both sides of~\cref{eqn:fake-var-lb-intro} by the quantity 
$(|S_i \cap S|/|S| - \delta)$ 
and then using the upper bound on $\V_{X \sim S} [\iprod{X,v}] (|S_i \cap S|/|S| - \delta)$ 
afforded by \cref{eqn:fake-var-ub-intro}.
This multiplication is valid without changing the sign of the inequality only if $|S_i \cap S|/|S| - \delta \geq 0$.\footnote{This subtlety was overlooked in a previous version of this paper -- see \cref{sec:concurrent}.}
This step did not present an obstacle in our non-SoS proof, since if $|S_i \cap S|/|S| \leq \delta$ we are already done, but it turns out to be a technical obstacle in our SoS proof.

To avoid this multiplication, we instead employ a modification of \cref{lem:ident-intro} for cases (i) and (ii).
We first replace the conclusion of \cref{lem:fake_var_upperbound_intro} with 
\[
\V_{X \sim S}[\iprod{X,v}] \cdot \frac{|S_i \cap S|}{|S|} \leq \tau'(\tau) \V_{X \sim G_i}[\iprod{X,v}] + \tau \V_{X \sim S}[\iprod{X,v}]\mper
\]
This carries the advantage that the $|S_i \cap S|/|S| \geq 0$ (unlike $|S_i \cap S|/|S| - \delta$), so we can easily combine this inequality with \cref{lem:fake_var_lowerbound_intro} to obtain a modified \cref{lem:ident-intro}, which concludes (roughly speaking)
\[
\frac{|S \cap S_i| |S \cap S_j|}{|S|^2} \lesssim \delta'(\delta)K(k) + \tau \cdot \frac{\V_{X \sim S}[\iprod{X,v}]}{\V_{X \sim 0.5 G_i + 0.5 G_j}[\iprod{X,v}]}\mcom
\]
where $v$ is the direction of hyperplane or variance separation between components $G_i,G_j$.
(As usual, see \cref{lem:mean_sep,lem:var_sep} for the formally correct version of this statement.)

Of course, with the above modification, we have the disadvantage that $\V_{X \sim S}[\iprod{X,v}]$ appears in the inequality -- this strategy will work only if we can obtain an \emph{a priori} upper bound on $\V_{X \sim S}[\iprod{X,v}]$.\footnote{Bakshi and Kothari \cite{BK20a} also note that an \emph{a priori} upper bound on $\V_{X \sim S}[\iprod{x,v}]$ is needed to give an SoS proof of an inequality like \cref{lem:ident-intro}. We emphasize that 
our techniques to obtain such an upper bound are significantly different from theirs. See \cref{sec:concurrent}.}
Since we may choose $\tau$ to be quite small, a relatively weak bound will suffice.
For this we take the following strategy.

We show (\cref{cor:TV_param}) that every mixture of $k$ Gaussians where every pair of components $G_i,G_j$ has small overlap and some pair is either in case (i) or case (ii) can in fact be partitioned into two nontrivial subsets $A,B$ of components such that one of the following holds.
\begin{itemize}
\item[(a)] There exists a hyperplane $v$ approximately separating the \emph{mixture} -- every pair of components $G_a \in A$ and $G_b \in B$ has $\iprod{\mu_a - \mu_b,v} \gg \sqrt{\iprod{v, \Sigma_a v}} + \sqrt{\iprod{v, \Sigma_b v}}$, and furthermore $\iprod{\mu_a - \mu_b,v} \geq f(k) \cdot \iprod{v, \Sigma_{\text{mix}} v}$, where the last quantity is the variance of the entire mixture in the direction $v$, for some function $f(k) > 0$.
That is, the mean separation between pairs $G_a \in A, G_b \in B$ is comparable to the overall variance of the mixture.

\item[(b)] The mixture can be separated into high-variance and low-variance components -- every pair of components $G_a \in A$ and $G_b \in B$ has $\iprod{v,\Sigma_a v} \gg \iprod{v, \Sigma_b v}$, and furthermore $\iprod{v, \Sigma_a v} \gg f(k) \iprod{v, \Sigma_{\text{mix}} v}$.
That is, the variance of the large-variance components is comparable to the variance of the entire mixture.

\end{itemize}
The above statements turn out to be a relatively straightforward corollary of our argument 
that any pair of Gaussians with small overlap are in one of cases (i),(ii),(iii) (\cref{lem:TV_param}).
Moreover, since it is not hard to argue that $\V_{X \sim S} [\iprod{X,v}] \leq F(k) \iprod{v, \Sigma_{\text{mix}} v}$, this modified statement gives our desired upper bound on $\V_{X \sim S}[\iprod{X,v}]$, 
allowing us to prove a modified version of \cref{lem:ident-intro}, 
which applies only to pairs $i, j$ on opposing sides of some nontrivial partition of the mixture.
We show that, while such a statement may not be enough to recover all the clusters at once, it does suffice (together with the SoS clustering approach) to split the mixture into two non-trivial sub-mixtures, which can then be clustered recursively.

This concludes our discussion of our SoS identifiability proof.

\paragraph{Robustification}
We now briefly discuss how the foregoing cluster identifiability proof (and its SoS counterpart) 
can be made {\em robust} to the setting that a small constant fraction of samples 
are replaced with adversarially-chosen outliers.
The resulting robust identifiability proof will lead immediately to a robust learning algorithm.

Our robustification of the above SoS proof uses a key insight leveraged in the early work 
on robust statistics~\cite{DKKLMS16} and first brought into the SoS framework by \cite{HopkinsL18,KothariSS18}.
Informally, this insight is that if a set of samples $S \subseteq [n]$ has bounded $2t$-th moments, 
then the influence of any $\e$-fraction of samples on its $t$-th moments is bounded by a dimension-independent function of $\e$.

Slightly more formally, we repeatedly use the following reasoning to robustify our arguments.
The following lemma, versions of which appear in several previous works, can be proved straightforwardly 
by the Cauchy-Schwarz inequality:

\begin{lemma}[Informal, see \cref{lem:mom_decorruption_diff}]
\label[lemma]{lem:decorrupt-intro}
Let $S,T \subseteq \R^d$, $|S| = |T| = m$, be such that $|S \cap T| \geq (1-\e)m$.
For every function $f \, : \, \R^d \rightarrow \R$, we have that
  \[
  \Abs{\E_{X \sim S} f(X) - \E_{X \sim T} f(X) } \leq O(\sqrt{\epsilon}) \cdot \Brac{ \Paren{\E_{X \sim S} f(X)^2}^{1/2} + \Paren{\E_{X \sim T} f(T)^2 }^{1/2} }\mper
  \]
\end{lemma}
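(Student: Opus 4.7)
The plan is to exploit the fact that $S$ and $T$ differ only on a small symmetric difference, and to bound the contribution of the symmetric difference via Cauchy--Schwarz.

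Write $U = S \cap T$, $A = S \setminus T$, and $B = T \setminus S$. By hypothesis $|U| \geq (1-\epsilon)m$, and since $|S| = |T| = m$, we have $|A| = |B| \leq \epsilon m$. The first step is to notice that the contribution of $U$ cancels in the difference:
\[
\E_{X \sim S} f(X) - \E_{X \sim T} f(X) \;=\; \frac{1}{m}\Paren{\sum_{X \in A} f(X) \;-\; \sum_{X \in B} f(X)}.
\]
So the whole problem reduces to bounding $\tfrac{1}{m}\sum_{X \in A} f(X)$ in terms of $(\E_{X \sim S} f(X)^2)^{1/2}$, and symmetrically for $B$ in terms of $T$.

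For that, I would apply Cauchy--Schwarz to each sum separately, using $A \subseteq S$ and $B \subseteq T$:
\[
\Abs{\frac{1}{m}\sum_{X \in A} f(X)} \;\leq\; \frac{\sqrt{|A|}}{m}\Paren{\sum_{X \in A} f(X)^2}^{1/2} \;\leq\; \frac{\sqrt{\epsilon m}}{m}\Paren{\sum_{X \in S} f(X)^2}^{1/2} \;=\; \sqrt{\epsilon}\Paren{\E_{X \sim S} f(X)^2}^{1/2},
\]
and analogously with $B$ and $T$ in place of $A$ and $S$. Combining the two bounds via the triangle inequality yields the stated inequality (with constant $1$, which is already $O(\sqrt{\epsilon})$).

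I do not anticipate any real obstacle: the argument is a one-line application of Cauchy--Schwarz after the telescoping of $U$. The only point worth noting is that the bound is manifestly an SoS-friendly inequality in the $0/1$ indicators of membership in $S, T$ (since it only uses Cauchy--Schwarz), which is presumably why the lemma is stated in this form --- later uses in the paper will instantiate $f$ with low-degree polynomials and apply the inequality inside the SoS proof system to transport moment bounds from the uncorrupted sample set to the ``found'' subset.
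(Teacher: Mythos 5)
Your proof is correct and is essentially the argument the paper has in mind: cancel the contribution of $S \cap T$, then bound the symmetric-difference terms by Cauchy--Schwarz using $|S \setminus T| = |T \setminus S| \leq \epsilon m$. The formal version (\cref{lem:mom_decorruption_diff}, via \cref{clm:mom_decorruption_diff_1}) does exactly this inside the SoS proof system, with indicator variables $z_i$ playing the role of your decomposition and the bound $\sum_i (1 - z_i \mathbbm{1}_{X_i = Y_i}) \leq 2\epsilon n$ playing the role of $|A|, |B| \leq \epsilon m$.
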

Some comments are in order.
In a typical application of Lemma~\ref{lem:decorrupt-intro}, the set $S$ will be a subset of a set of \emph{corrupted} samples 
and the set $T$ will be a subset of uncorrupted samples.
If $f$ is taken as a low-degree polynomial and the moments of the uniform distribution on $S$ and $T$ are bounded, 
then the right-hand side of the conclusion of \cref{lem:decorrupt-intro} is small.

\new{Importantly, when the sets $S$ and $T$ have Gaussian-like low-degree moments (as in our clustering identifiability proof), 
\cref{lem:decorrupt-intro} gives multiplicative error guarantees.}
In particular, for polynomials $f$ of constant degree, we will have the hypercontractivity property that 
$\E_{X \sim S} f(X)^2 \leq O(\E_{X \sim S} f(X)^4)^{1/2}$ and similarly for $T$.
Hence, we will have
\[
\Abs{ \E_{X \sim S} f(X)^2 - \E_{X \sim T} f(X)^2 } \leq O(\sqrt{\epsilon}) \cdot \Brac{ \E_{X \sim S} f(X)^2 + \E_{X \sim T} f(T)^2 } \;,
\]
and therefore
\begin{align}
\label{eq:intro-4}
\E_{X \sim S} f(X)^2 = (1 \pm O(\sqrt{\epsilon})) \E_{X \sim T} f(X)^2
\end{align}
for low-degree polynomials $f$.
Thus, making our cluster identifiability proof robust just requires that (a) it can be accomplished using squares of low-degree polynomials and (b) it is robust to small multiplicative errors as in \cref{eq:intro-4}.
We have already seen most of (a) in the foregoing overview, and (b) is not difficult to attain in light of the 
simplicity of our identifiability proof.

%
%
%
%
%
\subsection{Related and Prior Work} \label{ssec:related}
In this subsection, we provide a detailed summary of the most relevant prior work.

The algorithmic question of designing robust estimators in high-dimensions has been extensively studied in recent years. After the initial papers~\cite{DKKLMS16, LaiRV16}, 
a number of works~\cite{DKK+17, DiakonikolasKKLMS18, BDLS17, SteinhardtCV18, ChengDKS18, 
ChengDR18, KlivansKM18, DiakonikolasKS18-nasty, DKS19-lr, DHL19, DepLec19, CDGW19, PrasadSBR2018, DiakonikolasKKLSS2018sever, DKKPS19-sparse} have obtained efficient algorithms 
for a range of unsupervised and supervised probabilistic models.
Alas, the question of robustly learning even (two) separated Gaussians has remained open.

The known non-robust parameter estimation algorithms for $k$-mixtures of arbitrary
Gaussians~\cite{KMV:10, MoitraValiant:10} proceed by reducing the problem to a collection of univariate problems. This is achieved by taking many random one-dimensional projections, solving each univariate problem, and piecing together the information obtained from all these projections. 
The univariate problem is solved using the method of moments. 
Unfortunately, the accuracy required for each univariate problem for this approach to work
is inverse polynomial in the dimension $d$, which is information-theoretically impossible
to achieve in the presence of even a sub-constant fraction of outliers. In summary,
this approach is highly non-robust.

In the robust setting, significant progress has been made for mixtures of {\em spherical} Gaussians.
The work of~\cite{DKKLMS16} gave a robust density estimation algorithm for 
for a mixture of (any constant number of) spherical Gaussians. 
More recently,~\cite{HopkinsL18, KothariSS18, DiakonikolasKS18-mixtures}
obtained efficient robust parameter estimation algorithm for mixtures of spherical Gaussians 
under near-optimal separation assumptions. Our SoS-based clustering framework is identical
to that of~\cite{HopkinsL18}. Our contribution lies in our construction of a low-degree identifiability 
proof for the clusters that can handle arbitrary separated Gaussians. 

It should be emphasized that the identifiability proofs in~\cite{HopkinsL18, KothariSS18} 
are (essentially) constrained to the spherical setting and in particular cannot even handle 
(non-robust) parameter estimation of two hyperplane separated Gaussians. 
This is due to their definition of a good cluster that only imposes 
upper bounds on the low-degree moments of the clusters. 

\new{
Very recent work of~\cite{JV19} obtains robust learning algorithms for mixtures of {\em two} 
separated Gaussians, going beyond the spherical case of~\cite{HopkinsL18, KothariSS18}. 
We note that the algorithm of~\cite{JV19} only recovers a special case of our $k=2$ result.
In particular, it does not capture the second separation scenario in Figure~\ref{fig:separation-types}.}


Finally, we note that~\cite{DKS17-sq} gave an SQ lower bound, which provides evidence that 
an exponential dependence on $k$ is required for the sample complexity and runtime
of our problem, even for the hyperplane separated case without outliers.

\subsection{Concurrent Work}
\label[section]{sec:concurrent}

In independent work, Bakshi and Kothari obtained an algorithm with similar guarantees to the one we present here, also using the ``SoS proofs to algorithms'' method, and similarly devising an identifiability proof using upper and lower bounds on the variance of witnessing polynomials~\cite{BK20a}.
The algorithm of Bakshi and Kothari runs in time $d^{\log(\kappa) \poly(k/\eta)}$, 
where $\kappa = \max_v (\max_{i \in [k]} \iprod{v, \Sigma_i v} / \iprod{v, \Sigma_j v})$ measures the ``spread'' of the mixture of Gaussians $G_1,\ldots,G_k$, assuming that the components in the mixture $\dtv(G_i,G_j) \geq 1-2^{-\poly(k/\eta)}$, and outputs a clustering that misclassifies at most $O(k \eps)+\eta$ fraction of the points.
(We note that the parameter $\kappa$ need not be bounded by any function of $k$ or $d$.)
The running time of our algorithm avoids the dependence on $\kappa$, at the cost of worse 
dependence on $k$: the running time of our algorithm scales as $d^{F(k)}$ for some large function $F$.
We can na\"ively bound $F(k) \leq k^{k^{\ldots^k}}$, an exponential tower of height $\poly(k)$, 
but we have not attempted to optimize this bound.
Additionally, our analysis requires $\dtv(G_i,G_j) \geq 1-F(k)$, while \cite{BK20a} require the weaker $\dtv(G_i,G_j) \geq 1-2^{-\poly(k)}$.

An earlier version of our paper claimed an algorithm to robustly learn mixtures of $k$ separated 
Gaussians with running time $(d/\eps)^{\poly(k)}$.
However, our analysis of this algorithm contained a gap in the translation of our simple identifiability proof into an SoS identifiability proof, which we became aware of after seeing the manuscript~\cite{BK20a}.
Bakshi and Kothari point out that their SoS identifiability proof employing (in the language of this paper) the witnessing polynomial $p(x) = \iprod{v,x}$ for some $v \in \R^d$ works only in the presence of an \emph{a priori} upper bound on the maximum variance of any component of the mixture in the direction $v$. \cite{BK20a} obtain such a bound via the spread parameter $\kappa$ (see \cite{BK20a}, Lemmas 4.12 and 4.13).
The same observation applies to our SoS proof, although this was overlooked in an earlier version of this paper.

We resolve this technical issue differently from \cite{BK20a}, avoiding the dependence on the parameter $\kappa$ but incurring larger running time and sample complexity dependence on $k$.
See the above discussion on making \cref{lem:ident-intro} SoS-friendly.

Independently of our work, Bakshi and Kothari were able to extend their techniques to obtain an algorithm with running time that does not depend on the spread parameter $\kappa$, 
at the expense of a slightly worse clustering guarantee.
In particular, they give an algorithm with running time and sample complexity $d^{\poly(k/\eta)}$ that outputs a clustering misclassifying at most $O(k^{O(k)}( \eps+\eta))$ fraction of the points, assuming pairwise separation 
$1-2^{-k^{\Omega(k)}}$\cite{BK20b}.

\subsection{Organization}

The structure of the paper is as follows:
In Section~\ref{sec:prelims}, we provide the necessary definitions and technical facts.
In Section~\ref{sec:rough-clustering}, we give our main clustering algorithm and state our main technical lemmas.
The proofs of these lemmas are given in Sections~\ref{sec:sep-lemmas},~\ref{sec:tools}, and~\ref{sec:variance-bounds-proofs}.
Section~\ref{sec:tv-param} proves our structural result on separation in total variation distance.
Finally, Section~\ref{sec:constant-accuracy-and-main-theorem} shows that a rough clustering suffices and puts everything
together to prove our main result.
Some omitted proofs have been deferred to an Appendix.


\section{Preliminaries} \label{sec:prelims}

\paragraph{Notation} 
For $n \in \N$, we will use $[n]$ to denote $\{ i \in \N \mid i \leq n \}$. 
We use $\norm{ \cdot }$ to denote the $\ell_2$ (Euclidean) norm and $\norm{ \cdot}_F$ for the Frobenius norm. 
We will use $\iprod{\cdot, \cdot}$ to denote the Euclidean inner product for vectors and the standard trace product for matrices. 
For two polynomials $p, q$, the notation $p \preceq q$ means that $q-p$ is a sum of square polynomials. 
We will use $\cdot ^{\otimes s}$ to denote the standard Kronecker product. 
For $a, b \in \R$, we will write $a \gg b$ (or $b \ll a$) to mean that there exists a sufficiently large
constant $C>0$ such that $a \geq C b$.

\medskip

Throughout this work, we focus on the following strong contamination model that has been
extensively studied in prior work (see, e.g.,~\cite{DKKLMS16}).

\begin{definition}[Strong Contamination Model, $\epsilon$-corrupted]
We say that a set of vectors $Y_1,\ldots,Y_n$ is an $\e$-corrupted set of samples 
from a distribution $D$ on $\R^d$ if it is generated as follows: 
First, $X_1,\ldots,X_n$ are sampled i.i.d. from $D$. 
Then a (malicious, computationally unbounded) adversary 
observes $X_1,\ldots,X_n$, replaces any $\e n$ of them with any vectors she likes, 
then reorders the vectors arbitrarily to yield $Y_1,\ldots,Y_n$.
\end{definition}

We will use the Sum of Squares (SoS) proof system extensively.
We refer the reader to \cite{BarakSteurerNotes} for a complete treatment of basic definitions 
about the Sum of Squares hierarchy and Sum of Squares proofs.
Here we review the basics briefly.

\begin{definition}[Symbolic polynomial]
A degree-$t$ symbolic polynomial $p$ is a collection of indeterminates $\widehat{p}(\alpha)$, 
one for each multiset $\alpha \subseteq [n]$ of size at most $t$.
We think of it as representing a polynomial $p \, : \, \R^n \rightarrow \R$ whose coefficients are themselves 
indeterminates via $p(x) = \sum_{\alpha \subseteq [n], |\alpha| \leq t} \widehat{p}(\alpha) x^\alpha$.
\end{definition}

\begin{definition}[SoS Proof]
Let $x_1,\ldots,x_n$ be indeterminates and let $\cA$ be a set of polynomial inequalities $\{ p_1(x) \geq 0,\ldots,p_m(x) \geq 0 \}$.
  An SoS proof of the inequality $r(x) \geq 0$ from axioms $\cA$ is a set of polynomials $\{r_S(x)\}_{S \subseteq [m]}$ such that each $r_S$ is a sum of square polynomials and
  \[
  r(x) = \sum_{S \subseteq [m]} r_S(x) \prod_{i \in S} p_i(x)\mper
  \]
  If the polynomials $r_S(x) \cdot \prod_{i \in S} p_i(x)$ have degree at most $d$, we say that this proof is degree $d$.
  We write $\cA \proves{d} r(x) \geq 0$.
  When we need to emphasize what indeterminates are involved in a particular SoS proof, we sometimes write $\cA \proves{d}^x r(x) \geq 0$.
  We also often refer to $\cA$ containing polynomial equations $q(x) = 0$, by which we mean that $\cA$ contains both $q(x) \geq 0$ and $q(x) \leq 0$.
\end{definition}
We frequently compose SoS proofs without comment -- see \cite{BarakSteurerNotes} for basic facts about composition of SoS proofs and bounds on the degree of the resulting proofs.
Our algorithm also uses the dual objects to SoS proofs, \emph{pseudoexpectations}.
\begin{definition}
  Let $x_1,\ldots,x_n$ be indeterminates.
  A degree-$d$ pseudoexpectation $\pE$ is a linear map $\pE \, : \, \R[x_1,\ldots,x_n]_{\leq d} \rightarrow \R$ from degree-$d$ polynomials to $\R$ such that $\pE p(x)^2 \geq 0$ for any $p$ of degree at most $d/2$ and $\pE 1 = 1$.
  If $\cA = \{p_1(x) \geq 0,\ldots,p_m(x) \geq 0\}$ is a set of polynomial inequalities, we say that $\pE$ satisifies $\cA$ if $\pE s(x)^2 p_i(x) \geq 0$ for all squares $s(x)^2$ such that $s(x)^2 \cdot p_i(x)$ has degree at most $d$.
\end{definition}

Finally, we will rely on the following algorithmic fact.

\begin{theorem}[The SoS Algorithm, see \cite{BarakSteurerNotes}]
There is an algorithm which takes a natural number $d$ and a satisfiable system of polynomial inequalities $\cA$ in variables 
$x_1,\ldots,x_n$ with coefficients at most $2^n$ containing an inequality of the form $\|x\|^2 \leq M$ for some real number $M$ 
and returns in time $n^{O(d)}$ a degree-$d$ pseudoexpectation $\pE$ which satisfies $\cA$ 
up to error $2^{-n}$.\footnote{We did not define what it means for $\pE$ to satisfy $\cA$ up to error $2^{-n}$. 
The idea is that $2^{-n}$ slack is added to each constraint. Since the coefficients in all the SoS proofs in this paper 
have magnitude at most $n^{O(1)}$, these $2^{-n}$ errors are negligible and we will not treat them explicitly. 
See \cite{BarakSteurerNotes} for further discussion.}
\end{theorem}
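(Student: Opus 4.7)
The plan is to reduce the computation of a degree-$d$ pseudoexpectation to a semidefinite feasibility program of size $n^{O(d)}$ and then invoke the ellipsoid method (or equivalently an interior-point method) for solving SDPs with bounded-coefficient constraints.

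First I would encode a degree-$d$ pseudoexpectation as an SDP. A linear map $\pE$ on polynomials of degree $\le d$ is determined by its $n^{O(d)}$ values on monomials $x^{\alpha}$ with $|\alpha| \le d$. The positivity condition $\pE p(x)^2 \ge 0$ for $\deg p \le d/2$ is equivalent to PSD-ness of the moment matrix $M$ indexed by multisets $\alpha,\beta$ of size $\le d/2$ with $M[\alpha,\beta] = \pE x^{\alpha+\beta}$. The requirement that $\pE$ satisfies $\cA$ corresponds to PSD-ness of localizing matrices $M_i[\alpha,\beta] = \pE p_i(x)\, x^{\alpha+\beta}$, where $\alpha,\beta$ range over multisets of size $\le (d-\deg p_i)/2$. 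These are all linear functions of the entries of $M$, so together with $\pE 1 = 1$ they define an SDP of dimension $n^{O(d)}$ whose feasible points are exactly the valid pseudoexpectations satisfying $\cA$.

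Next I would certify an a priori bound on the radius of the feasible region so that the ellipsoid method applies. From the axiom $\|x\|^2 \le M$ together with $\pE 1 = 1$, an easy induction using pseudoexpectation Cauchy--Schwarz gives $|\pE x^{\alpha}| \le M^{|\alpha|/2}$ for every $\alpha$ with $|\alpha| \le d$. Hence every feasible moment matrix lies in a Euclidean ball of radius $M^{O(d)}$ in $\R^{n^{O(d)}}$, and in particular all coefficients of the SDP, together with the outer bounding radius, are at most $2^{n^{O(1)}}$ in bit-size. Since the original system $\cA$ has a true solution by hypothesis, the SDP is feasible.

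Finally I would invoke the standard guarantee of the ellipsoid method for SDP feasibility (Gr\"otschel--Lov\'asz--Schrijver; see also the SDP preliminaries in \cite{BarakSteurerNotes}): given a feasible SDP whose constraint coefficients and outer radius have bit complexity $B$, the ellipsoid method returns, in time $\poly(B, n^{O(d)}, \log(1/\eta))$, a point that satisfies each PSD and linear constraint up to additive error $\eta$. Setting $\eta = 2^{-n}$ and using $B = \poly(n,d)$ yields the claimed $n^{O(d)}$ runtime and the stated $2^{-n}$ approximate-feasibility. The main obstacle, and the place where one has to be careful, is the standard subtlety that the feasible set of an SDP need not contain a ball of non-negligible radius, so one cannot use the exact-feasibility ellipsoid; the resolution is to solve an approximate feasibility version and to observe, as in the footnote of the theorem statement, that the resulting $2^{-n}$ slack propagates harmlessly through all SoS proofs used in the paper because the proof coefficients are bounded by $n^{O(1)}$.
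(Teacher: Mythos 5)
The paper does not prove this theorem itself; it is quoted as a black box from \cite{BarakSteurerNotes}, and your argument is exactly the standard one given there: encode the degree-$d$ pseudoexpectation as a moment-matrix/localizing-matrix SDP of size $n^{O(d)}$, use the explicit ball constraint $\|x\|^2 \leq M$ to bound the feasible moments and hence the bit complexity, and run the ellipsoid method for approximate feasibility, accepting $2^{-n}$ slack since exact SDP feasibility cannot be certified. Your proposal is correct and matches the intended proof, including the role of the norm constraint in circumventing the usual bit-complexity issue for SoS relaxations.
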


We will also use the following Cauchy-Schwarz inequality for pseudoexpectations:

\begin{fact}[Cauchy-Schwarz for Pseudo-distributions]
	Let $f,g$ be polynomials of degree at most $d$ in indeterminate $x \in \R^d$. Then, for any degree d pseudoexpectation $\tmu$,
	$\pE_{\tmu}[fg] \leq \sqrt{\pE_{\tmu}[f^2]} \sqrt{\pE_{\tmu}[g^2]}$.
	\label{fact:pseudo-expectation-cauchy-schwarz}
\end{fact}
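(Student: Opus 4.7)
The plan is to mimic the classical proof of Cauchy-Schwarz, reducing the inequality to the nonnegativity property of pseudoexpectations on squares. The only subtle point is keeping track of degrees: as stated, the fact is meaningful when $\tmu$ has enough degree to assign values to $(f-\lambda g)^2$ and to regard this quantity as a square in the pseudoexpectation sense; this is the standard convention when one invokes Cauchy--Schwarz for pseudo-distributions.

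Concretely, first I would introduce a real parameter $\lambda$ and consider the polynomial $h_\lambda(x) = f(x) - \lambda g(x)$, which has degree at most $d$. Since $h_\lambda^2$ is manifestly the square of a polynomial of the appropriate degree, the definition of a pseudoexpectation yields $\pE_{\tmu}[h_\lambda^2] \geq 0$. Expanding $h_\lambda^2 = f^2 - 2\lambda fg + \lambda^2 g^2$ and applying linearity of $\pE_{\tmu}$ gives the scalar quadratic inequality $\pE_{\tmu}[f^2] - 2\lambda\,\pE_{\tmu}[fg] + \lambda^2\,\pE_{\tmu}[g^2] \geq 0$, valid for every $\lambda \in \R$.

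Second, I would deduce the target inequality from this quadratic. If $\pE_{\tmu}[g^2] > 0$, substituting the minimizing choice $\lambda^\star = \pE_{\tmu}[fg]/\pE_{\tmu}[g^2]$ rearranges to $\pE_{\tmu}[fg]^2 \leq \pE_{\tmu}[f^2]\cdot \pE_{\tmu}[g^2]$; since $\pE_{\tmu}[f^2]$ and $\pE_{\tmu}[g^2]$ are both nonnegative (again by the pseudoexpectation axiom applied to $f^2$ and $g^2$), taking square roots gives the claimed bound. The boundary case $\pE_{\tmu}[g^2]=0$ requires a short separate observation: in order for the quadratic in $\lambda$ to remain nonnegative as $|\lambda|\to\infty$, its linear coefficient must vanish, forcing $\pE_{\tmu}[fg]=0$, which trivially satisfies the inequality.

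I do not anticipate any real obstacle here; the argument uses only linearity of $\pE_{\tmu}$ and nonnegativity on squares, so it transfers from genuine expectations to pseudoexpectations without modification once the implicit degree assumption on $\tmu$ is in force. If one wanted to be pedantic, one could avoid the optimization-over-$\lambda$ step by giving a direct sum-of-squares certificate: the identity $\pE_{\tmu}[g^2]\cdot\pE_{\tmu}[(f-\lambda^\star g)^2] = \pE_{\tmu}[f^2]\pE_{\tmu}[g^2] - \pE_{\tmu}[fg]^2$ exhibits the discriminant as a nonnegative scalar combination of pseudoexpectations of squares, which is arguably the most transparent way to present the conclusion.
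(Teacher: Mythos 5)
Your proof is correct, and it is the standard argument: the paper states this fact without proof, and the quadratic-in-$\lambda$ reduction to $\pE_{\tmu}[(f-\lambda g)^2]\ge 0$, together with the separate treatment of the degenerate case $\pE_{\tmu}[g^2]=0$, is exactly the intended justification. Your remark about the degree bookkeeping is also apt — as literally stated the fact needs $\tmu$ to have degree at least $2d$ (or $f,g$ to have degree at most $d/2$) so that $\pE_{\tmu}[fg]$ and $\pE_{\tmu}[(f-\lambda g)^2]\ge 0$ make sense, which is the convention under which the paper uses it.
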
 

The following fact is a simple corollary of the fundamental theorem of algebra:
\begin{fact} \label{fact:univariate}
For any univariate degree $d$ polynomial $p(x) \geq 0$ for all $x \in \R$, 
$\sststile{d}{x} \Set{p(x) \geq 0}$.
\end{fact}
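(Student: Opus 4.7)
The plan is to show that any univariate polynomial $p(x)$ of degree $d$ which is nonnegative on $\R$ can be written as a sum of at most two squares of real polynomials, each of degree at most $d/2$. Such an identity is itself a degree-$d$ SoS certificate of nonnegativity from the empty axiom set, which is exactly $\sststile{d}{x} \set{p(x) \geq 0}$.

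First, I would factor $p$ over $\mathbb{C}$ using the fundamental theorem of algebra. Since $p$ has real coefficients, its non-real roots come in complex conjugate pairs $a_j \pm i b_j$ with $b_j \neq 0$. The hypothesis $p(x) \geq 0$ on $\R$ forces every real root to have even multiplicity (otherwise $p$ would change sign across it) and forces $p$ to have even degree with positive leading coefficient $c > 0$ (the odd-degree case is vacuous unless $p \equiv 0$, which is trivial). Collecting factors yields
\[
p(x) = c \cdot \prod_i (x - r_i)^{2 m_i} \cdot \prod_j \bigl((x - a_j)^2 + b_j^2\bigr)^{n_j}.
\]

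Next, I would observe that each factor $(x-a_j)^2 + b_j^2$ is already a sum of two squares of linear polynomials, each factor $(x-r_i)^{2 m_i} = ((x-r_i)^{m_i})^2$ is a single square, and $c = (\sqrt c)^2$. The Brahmagupta--Fibonacci identity
\[
(A^2 + B^2)(C^2 + D^2) = (AC - BD)^2 + (AD + BC)^2
\]
shows that the class of polynomials expressible as a sum of two squares of real polynomials is closed under multiplication. Iterating this over all factors yields $p(x) = q_1(x)^2 + q_2(x)^2$ for some $q_1, q_2 \in \R[x]$, and a short induction on the number of factors, using the fact that $\deg(AC-BD), \deg(AD+BC) \leq \deg(A^2+B^2)/2 + \deg(C^2+D^2)/2$, shows $\deg q_1, \deg q_2 \leq d/2$. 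The identity $p = q_1^2 + q_2^2$ then exhibits the required degree-$d$ SoS proof.

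There is no substantive obstacle: the main content is the classical fact that a nonnegative univariate real polynomial is a sum of two squares, and the only care needed is the degree bookkeeping through the iterated Brahmagupta--Fibonacci step and the handling of the trivial degenerate cases ($p \equiv 0$ and odd $d$).
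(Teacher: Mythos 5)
Your proof is correct and follows exactly the route the paper intends: the paper states this fact with only the remark that it is ``a simple corollary of the fundamental theorem of algebra,'' and your argument -- factoring over $\mathbb{C}$, noting real roots have even multiplicity, and combining conjugate-pair factors via the two-squares identity to write $p = q_1^2 + q_2^2$ with $\deg q_i \leq d/2$ -- is precisely that classical proof, with the degree bookkeeping done properly.
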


We also use the following SoS proof; it is straightforward to prove by induction.
\begin{fact}\label[fact]{fact:sos-1}
  For all $s \in \N$ and $B \in [0,1]$,
  $\{0 \leq y \leq 1, a \geq 0, xy \leq a + Bx\} \proves{O(s)}^{x,y,a} xy^s \leq sa + B^s x$.
\end{fact}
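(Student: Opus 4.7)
The plan is to induct on $s$. The base case $s = 1$ is exactly the axiom $xy \leq a + Bx$ (and $s = 0$ is the trivial $x \leq 0 \cdot a + 1 \cdot x$). For the inductive step, I take the degree-$O(s-1)$ SoS derivation of $xy^{s-1} \leq (s-1)a + B^{s-1} x$ and multiply both sides by $y$. Because $y \geq 0$ is among the axioms, this is a legal SoS step that adds one to the degree, and yields $xy^s \leq (s-1)\, ay + B^{s-1}\, xy$.

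I then bound the two summands on the right separately. For the first, $a(1-y) \geq 0$ is a standard degree-$2$ product of the two axioms $a \geq 0$ and $1 - y \geq 0$, so $ay \leq a$, and therefore $(s-1) ay \leq (s-1) a$. For the second, multiplying the axiom $xy \leq a + Bx$ by the nonnegative real constant $B^{s-1}$ gives $B^{s-1} xy \leq B^{s-1} a + B^s x$ with no increase in degree. Combining,
\[
xy^s \leq (s-1)a + B^{s-1} a + B^s x = (s - 1 + B^{s-1})\, a + B^s x \leq sa + B^s x,
\]
where the last step uses the true numerical inequality $B^{s-1} \leq 1$ (valid since $B \in [0,1]$) together with one more application of $a \geq 0$.

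Each round of induction contributes $O(1)$ to the SoS degree: one from multiplying through by $y$, and a bounded constant from the degree-$2$ product of axioms used to certify $ay \leq a$. After $s$ rounds the total degree is $O(s)$, matching the claimed $\proves{O(s)}$ bound. There is no real obstacle; the only point worth flagging is that $x$ is never assumed nonnegative, so every inequality one multiplies through must be scaled by a quantity whose nonnegativity is immediate from the axioms (namely $y$, $a$, $1-y$, or the real constant $B^{s-1}$), and this is the case at every step above.
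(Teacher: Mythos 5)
Your induction is exactly the argument the paper has in mind (it states only that the fact ``is straightforward to prove by induction'' and omits the details), and every step is a legitimate SoS manipulation: multiplying the inductive inequality by the axiom $y\geq 0$, using the axiom product $a(1-y)\geq 0$, scaling the axiom $xy\leq a+Bx$ by the nonnegative constant $B^{s-1}$, and adding $(1-B^{s-1})a\geq 0$, with each round adding $O(1)$ to the degree so the total is $O(s)$. Correct as written; nothing to add.
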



\section{Robust Clustering to Constant Accuracy}\label[section]{sec:rough-clustering}

In this section, we describe and analyze our main robust clustering algorithm, as captured by \cref{lem:const_acc}.
For simplicity we present the case of uniform mixtures -- the modifications necessary in the case of mixtures with general weights are described in \cref{sec:arbit-weights}.
We also assume that our set of samples $X_1,\ldots,X_n \sim \frac 1 k \sum_{i=1}^k G_i$ consists of exactly $n/k$ samples from each Gaussian $G_i$ -- since our algorithm is robust to a small constant fraction of adversarial errors, guarantees in this sampling model imply the same guarantees in the usual model of i.i.d. samples.

\begin{lemma}[Constant-Accuracy Robust Clustering]\label[lemma]{lem:const_acc}
For every $\rho >0$ and $k \in \N$ there are $F(\rho,k), f(\rho,k)$ and an algorithm (\cref{alg:cluster}) which,
given an $\eps$-corrupted set of samples $X_1, \dots, X_n$ drawn from a mixture of $d$-dimensional Gaussians $\frac{1}{k}\sum_{i=1}^k G_i$ such that $\min_{i \neq j} \dtv(G_i, G_j) \geq 1-f(\rho,k)$ and such that $\eps < f(\rho,k)$ and $n \geq d^{F(\rho,k)}$,
the algorithm runs in time $n^{F(\rho,k)}$, and with probability $1-1/\poly(d)$ returns a list of at most $F(\rho,k)$ partitions $T_1,\ldots,T_k$ of $[n]$ into sets of size $n/k$ such that for at least one partition $T_1,\ldots,T_k$ in the list there is a permutation $\pi \, : \, [k] \rightarrow [k]$ with $|T_{\pi(i)} \cap S_i| \geq \tfrac n k (1 - \rho)$, where $S_1, \dots, S_k$ are subsets of $[n]$ that index samples drawn from $G_1, \dots G_k$.
\end{lemma}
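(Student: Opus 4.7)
The plan is to follow the SoS \emph{proofs to algorithms} paradigm. I would set up a system $\cA$ of polynomial constraints in indeterminates $w_1,\ldots,w_n \in \bits$ (``is sample $i$ in my candidate cluster?'') and $z_1,\ldots,z_n \in \bits$ (``is sample $i$ uncorrupted?''), with $\sum_i w_i = n/k$ and $\sum_i z_i \geq (1-\eps)n$, on top of which I impose the Gaussianity-of-moments constraints: for every $t \leq K(k) = \poly(k)$, the $t$-th empirical moment tensor of the reweighted set $\{X_i : w_i = 1\}$ matches the $t$-th moment tensor of the Gaussian with the corresponding empirical mean and covariance, up to slack that is a sum of squares times a scalar tending to $0$ with $\eps$. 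This system is satisfiable (with high probability over the sample) by the boolean indicator of any true cluster $S_j$, with the error contributed by corrupted samples controlled via \cref{lem:decorrupt-intro} and SoS hypercontractivity of \cite{barak2012hypercontractivity}.

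Rather than attempting the full identifiability statement \cref{lem:ident-intro} in SoS — which, as the overview explains, fails because combining the variance upper and lower bounds would require multiplying through by the sign-indefinite quantity $|S \cap S_i|/|S| - \delta$ — I would apply the structural corollary \cref{cor:TV_param}. This partitions $[k]$ into two nontrivial subsets $A, B$ together with a witness (a direction $v$ for cases (i), (ii), or a matrix for case (iii)) such that either every pair $(G_a, G_b)$ with $a \in A,\, b \in B$ is hyperplane-separated in direction $v$ with $\iprod{\mu_a-\mu_b, v}^2 \gtrsim f(k) \iprod{v, \Sigma_{\mathrm{mix}} v}$, or the $v$-variances of $A$-components dominate those of $B$-components and are comparable to $\iprod{v, \Sigma_{\mathrm{mix}} v}$, or an analogous dichotomy via a degree-$2$ witness $p(x) = \iprod{x, Ax}$. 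Crucially, this provides the \emph{a priori} upper bound $\V_{X \sim S}[\iprod{X, v}] \leq F(k)\iprod{v, \Sigma_{\mathrm{mix}} v}$ needed to circumvent the SoS obstruction.

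With this decomposition in hand I would assemble a constant-degree SoS identifiability proof for the bipartition: applying the modified \cref{lem:fake_var_upperbound_intro} (or the quadratic counterpart \cref{lem:fake_var_upperbound_quadratic}) against the lower bound \cref{lem:fake_var_lowerbound_intro} (resp.\ \cref{lem:fake_var_lowerbound_quadratic}), and using SoS-certifiable anti-concentration of \cite{karmalkar2019list,raghavendra2020list,bakshi2020list,raghavendra2020listB} together with the above a priori bound, one derives
\[
\cA \proves{F(\rho,k)} \frac{1}{n^2} \sum_{i \in \bigcup_{\ell \in A} S_\ell} \sum_{j \in \bigcup_{\ell \in B} S_\ell} w_i w_j \leq \rho\mper
\]
Running the SoS algorithm at degree $F(\rho,k)$ then returns a pseudoexpectation $\pE$ satisfying $\cA$, and a thresholding round on $\pE[w_i]$ (in the style of \cite{HopkinsL18, KothariSS18}) yields a bipartition of $[n]$ that respects the $A$-vs-$B$ split up to at most $\rho n$ misclassifications. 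I would then recurse on each side: each is a mixture of strictly fewer than $k$ components, still with pairwise TV separation at least $1 - f(\rho,k)$ and corruption rate $O(\eps + \rho)$. Since the algorithm does not know a priori which direction $v$ (equivalently which $(A,B)$-split) \cref{cor:TV_param} supplies, at each level it branches over the constantly many candidate directions (e.g.\ top eigenvectors of pseudoexpected moment matrices). The recursion has depth at most $k$ and constant branching factor, producing a list of $F(\rho,k)$ partitions at the leaves, one of which is correct up to the claimed $\rho$-error.

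The main obstacle is the SoS identifiability step itself: producing low-degree SoS derivations of both the variance upper and lower bounds on the witnessing polynomials, using SoS-certifiable anti-concentration in the linear case and SoS hypercontractivity in the quadratic case, and doing so \emph{without} any multiplication by a sign-indefinite quantity. The structural corollary \cref{cor:TV_param} is what makes this possible by supplying the required a priori upper bound on $\V_{X \sim S}[\iprod{X, v}]$; everything else (robustification via \cref{lem:decorrupt-intro}, rounding, and recursion) is then largely routine given the SoS clustering framework of \cite{HopkinsL18, KothariSS18}.
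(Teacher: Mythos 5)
Your identifiability analysis is essentially the paper's: the same axiom system, \cref{cor:TV_param} to supply the a priori bound on $\iprod{v,\Sigma(w)v}$ in terms of the mixture covariance, and the variance upper/lower bounds combined into a certificate that every $\pE$ satisfying $\cA$ has $\pE\,\alpha_{A'}(w)\,\alpha_{B'}(w)\leq \rho n^2$, where $\alpha_{A'}(w)=\sum_{i\in\cup_{a\in A}S_a}w_i$ (this is exactly \cref{prop:split}). The genuine gap is in how you extract a bipartition of $[n]$ from such a pseudoexpectation. Thresholding on the first moments $\pE[w_i]$ cannot work: the uniform distribution over the $k$ true cluster indicators satisfies $\cA$ (that is how feasibility is established in the first place), has $\pE[w_i]=1/k$ for \emph{every} $i$, and trivially satisfies your cross-term bound since each $S_a$ lies entirely on one side of the split --- so the bound you derive constrains only the second moments $\pE[w_iw_j]$ across $A'\times B'$, and any rounding must use them. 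The paper's \textsc{split} (\cref{lem:rounding}) does exactly this: it samples an index $i$ with probability proportional to $\pE w_i$ and forms a set by including each $j$ with probability $\pE w_iw_j/\pE w_i$. Moreover, a single such round only yields a set of expected size $n/k$ (one pseudo-cluster), not the whole side $A'$; the paper therefore iterates $O(k\log 1/\eta)$ rounds against re-optimized pseudoexpectations to cover almost all samples, and then brute-forces over which rounded pieces $R_t$ to union --- and it is this brute force, combined with recursing on both halves, that produces the list of candidate partitions in \cref{alg:cluster}. Your proposal, as written, has no mechanism that actually recovers the $A$-vs-$B$ split from the SDP solution.

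Relatedly, your plan to ``branch over the constantly many candidate directions (e.g.\ top eigenvectors of pseudoexpected moment matrices)'' misplaces where the direction $v$ lives. It is purely a device of the SoS certificate, determined by the unknown parameters via \cref{cor:TV_param}; the algorithm never needs to know it, and there is no argument that it appears among the top eigenvectors of any pseudo-moment matrix (nor are there finitely many candidates in $\R^d$ to enumerate). Once the rounding is done via second moments as in \cref{lem:rounding}, no direction enumeration is required: the list arises from guessing which rounded pieces lie on which side and from the recursion, exactly as in the paper. The remaining ingredients of your write-up (feasibility under the deterministic conditions, robustification via the moment-decorruption lemma, and recursing on submixtures padded to a multiple of $n/k$) do match the paper's proof.
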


To describe our algorithm, we need a system of polynomial inequalities which captures the robust clustering problem.
This is given in the following definition. 
  \begin{definition}[Clustering axioms $\cA$]
  \label[definition]{def:axioms}
    Let $X_1,\ldots,X_n \in \R^d$.
    Let $w_1,\ldots,w_n$, $z_1,\ldots,z_n$, $X_1',\ldots,X_n'$ and $\Sigma_{i,j}$, $\Sigma^{1/2}_{i,j}$, $\Sigma^{-1/2}_{i,j}$ for $i, j \in [d]$ be indeterminates ($\Sigma^{1/2}, \Sigma^{-1/2}$ will be thought of as $d \times d$ matrices of indeterminates).
    We define a system of polynomial inequalities in the variables $w,z,X'$ whose solutions correspond to subsets $S$ of $[n]$ and vectors $X_1',\ldots,X_n'$ such that (1) $X_i' = X_i$ for all but $\e n$ indices and (2) the low-degree moments of the uniform distribution on $\{ X_i' \, : \, i\in S \}$ are approximately Gaussian. More specifically, we have:
    \begin{enumerate}
      \item Let $t \in \N$ be even and let $\delta, \epsilon > 0$.
      \item Let $\cA_{\text{corruptions}} := \{ z_i^2 = z_i \}_{i \in [n]} \cup \{ z_i (X_i - X_i') = 0 \}_{i \in [n]} \cup  \{ \sum_{i \in [n]} z_i = (1-\e)(n/k)\}$.
      \item Let $\cA_{\text{subset}} := \{w_i^2 = w_i\}_{i \in [n]} \cup \left \{ \sum_{i=1}^n w_i = n/k \right \}$.
      \item Let $\mu(w) = \tfrac k n \sum_{i=1}^n w_i X_i'$.
      \item Let $\Sigma(w) = \tfrac k n \sum_{i=1}^n w_i (X_i' - \mu(w))(X_i' - \mu(w))^\top$.
      \item Let $\cA_{\text{matrices}} = \{ (\Sigma^{1/2})^2 = \Sigma(w), (\Sigma^{-1/2} \Sigma^{1/2})^2 = \Sigma^{-1/2} \Sigma^{1/2} \} \cup \{ \Sigma^{-1/2} \Sigma^{1/2} w_i \Paren{X_i' - \mu(w)}  = w_i \Paren{X_i' - \mu(w)} \}_{i \in [n]}$.
      \item Let $\cA_{\text{moments}}$ be the following collection of polynomial inequalities, for all  $s \leq t$:
      \[
      \Norm{\frac k n \sum_{i \in [n]} w_i [\Sigma^{-1/2} (X_i' - \mu(w))]^{\tensor s} - M_s}^2 \leq \delta \cdot d^{-2t}\mcom
      \]
      where $M_s = \E_{g \sim \cN(0,\Id)} g^{\tensor s}$ is the $s$-th moment tensor of the standard Gaussian.
    \end{enumerate}
    Finally, let $\cA(X_1,\ldots,X_n,t,\delta,\epsilon, k) = \cA_{\text{corruptions}} \cup \cA_{\text{subset}} \cup \cA_{\text{moments}}  \cup \cA_{\text{matrices}}$.
    Note that $\cA(X_1,\ldots,X_n,t,\delta,\epsilon, k)$ has degree at most $10t$.
  \end{definition}

\begin{algorithm}[htb]{}
	\begin{algorithmic}[1]
		\Function{Cluster}{$(X_1,\ldots,X_n,t,\delta,\epsilon,k,N)$}
                \State If $n \leq 1.1 N/k$, output $\{\{ \{ X_1,\ldots,X_n\} \}\}$.
                \State Else if $\textsc{Split}(\cA) = \textsc{Fail}$, return $\emptyset$.
                \State Else, let $R_1,\ldots,R_m = \textsc{Split}(\cA)$.
                \State For $S \subseteq [m]$, let $R_S = \cup_{a \in S} R_a$.
                \State Let $R_S'$ be the result of adding or removing samples from $R_S$ arbitrarily so that its size is an integer multiple of $n/k$.
                \State Return $\{ C_1 \times C_2 \, : \, S \subseteq [m], C_1 \in \textsc{Cluster}(R_S', t, \delta, \epsilon, k, N), C_2 \in \textsc{Cluster}(\overline{R_S'}, t, \delta, \epsilon, k, N) \}$
		\EndFunction
	\end{algorithmic}
	\caption{Clustering Algorithm: Returns a set of candidate clusterings of $X_1,\ldots,X_n$}
	\label{alg:cluster}
\end{algorithm}
\noindent

Our algorithm \textsc{cluster} uses a key subroutine, \textsc{split}, to find $\approx 2^k$ candidate partitions of the samples into two subsets.
Then it recursively finds a set of candidate clusterings of each half of each candidate partition and puts together the candidate clusterings.
The key subroutine, \textsc{split}, described in the following lemma.
\textsc{split} solves a semidefinite programming (SDP) relaxation of the polynomial system $\cA(X_1,\ldots,X_n)$, then uses a simple randomized rounding algorithm to find subsets $R_1,\ldots,R_m$ of $X_1,\ldots,X_n$.
The rounding scheme is standard so we postpone the analysis to \cref{sec:omitted}.

These subsets $R_i$ will have the following guarantee: if there is a partition of the clusters $[k]$ into two nontrivial subsets $S,T$ such that every pseudoexpectation satifying $\cA$ cannot be simultaneously supported on $S$ and $T$, then each of the subsets $R_1,\ldots,R_m$ is (approximately) supported on only one of $S,T$.
Additionally, $\cup R_i$ contains almost all of the samples $X_1,\ldots,X_n$.
Thus, there is a partition of $R_1,\ldots,R_m$ which approximates the partition $S,T$; \textsc{cluster} finds this partition by brute-force search.
This yields corrupted samples from the submixtures represented by $S,T$ respectively, which \textsc{cluster} then handles recursively.

\begin{lemma}[See proof in \cref{sec:omitted}]
\label[lemma]{lem:rounding}
  Let $n,k \in \N$ with $k \leq n$ and let $\eta > 0$.
  There is an algorithm \textsc{split} with the following guarantees.
  Suppose that $S_1,\ldots,S_k \subseteq [n]$ partition $[n]$, with $|S_i| = n/k$ for all $i$.
  Suppose further that $S,T \subseteq [k]$ partition $[k]$.
  Let $\cC$ be a set of degree-$2$ pseudoexpectations in variables $w_1,\ldots,w_n$ (and perhaps auxiliary variables) such that for all $\pE \in \cC$,
  \begin{compactenum}
  \item $\pE$ satisfies $w_i^2 = w_i$ and $\sum_{i \leq n} w_i = n/k$
  \item $\pE (\sum_{i \in \cup_{a \in S} S_a} w_i)(\sum_{i \in \cup_{a \in T} S_a} w_i) \leq \delta n^2$.
  \end{compactenum}
  Finally, suppose that the uniform distribution over indicator vectors for $S_1,\ldots,S_k$ is contained in $\cC$.
  Let $S' = \cup_{a \in S} S_a$ and $T' = \cup_{a \in T} S_a$.

  Then \textsc{split} makes $m \leq O(k \log 1/\eta)$ queries to an oracle which optimizes linear functions over $\cC$, runs in additional $\poly(n,k,\log 1/\eta)$ time, and returns a list of subsets $R_1,\ldots,R_m \subseteq [n]$ such that
  $\E |\bigcup_{t \leq m} R_t| \geq n (1-\eta)$,
  and for all $t \leq m$,
  \[
  \E |R_t| = n/k \text{ and } \E \min(|R_t \cap S'|,|R_t \cap T'|) \leq \delta n\mcom
  \]
 where the expectation is taken over randomness used by the algorithm.
\end{lemma}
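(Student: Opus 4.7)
The plan is to implement \textsc{split} via a seeded randomized rounding scheme in the spirit of prior SoS-based clustering algorithms (\cite{HopkinsL18,KothariSS18}). For each of $m = O(k \log(1/\eta))$ rounds $t$, sample a seed $i_t^* \in [n]$ uniformly at random, invoke the oracle to obtain a pseudoexpectation $\pE_t \in \cC$ that maximizes the linear functional $\pE[w_{i_t^*}]$, and independently include each $j \in [n]$ in $R_t$ with probability $\pE_t[w_j]$, trimming if necessary so that $|R_t| = n/k$. Because the uniform distribution over the indicators of $S_1, \ldots, S_k$ lies in $\cC$ and each such indicator sets $w_{i_t^*} = 1$ for the cluster containing $i_t^*$, the oracle certifies $\pE_t[w_{i_t^*}] = 1$.

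The size conclusion $\E|R_t| = n/k$ is immediate from the axiom $\pE_t[\sum_j w_j] = n/k$. For the purity bound, note that the axiom $w_{i_t^*}^2 = w_{i_t^*}$ combined with $\pE_t[w_{i_t^*}] = \pE_t[1] = 1$ gives $\pE_t[(1 - w_{i_t^*})^2] = 0$, and hence by the Cauchy--Schwarz inequality for pseudoexpectations (\cref{fact:pseudo-expectation-cauchy-schwarz}) we have $\pE_t[w_j] = \pE_t[w_{i_t^*} w_j]$ for every $j$. Now suppose $i_t^* \in S'$. Since each $w_i = w_i^2$ is a square (hence nonnegative in SoS),
\[
w_{i_t^*} \cdot \sum_{j \in T'} w_j \;\preceq\; \Bigl(\sum_{i \in S'} w_i\Bigr)\Bigl(\sum_{j \in T'} w_j\Bigr),
\]
and taking pseudoexpectations and invoking the cross-cluster hypothesis yields $\sum_{j \in T'} \pE_t[w_j] \leq \delta n^2$. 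The symmetric bound holds when $i_t^* \in T'$. Averaging over the choice of $i_t^*$, together with an appropriate accounting for the size-trim step and normalization by $|R_t|$, yields the claimed $\E \min(|R_t \cap S'|, |R_t \cap T'|) \leq \delta n$.

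For coverage, the goal is to show $\Pr[j \in R_t] = \Omega(1/k)$ for every fixed $j$, which by a standard coupon-collector union bound over the $m$ rounds gives $\E|\bigcup_t R_t| \geq (1-\eta)n$. The trivial observation $\pE_t[w_j] = 1$ whenever $i_t^* = j$ is too weak (yielding only $\Pr[j \in R_t] \geq 1/n$); the stronger bound exploits the fact that the truthful pseudoexpectation (the uniform mixture over cluster indicators) sets $\pE[w_{i^*} w_j] = 1/k$ for every pair $i^*, j$ in the same true cluster $S_a$. By the optimality of $\pE_t$ for maximizing $\pE[w_{i_t^*}]$, whenever $i_t^*$ lies in the true cluster containing $j$, the oracle must return a $\pE_t$ whose mass on that cluster is comparable to the truthful one. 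Combined with the $1/k$ probability that $i_t^*$ falls in $j$'s true cluster, this yields $\Pr[j \in R_t] = \Omega(1/k)$.

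The main obstacle is exactly this coverage step: the purity argument uses the cross-cluster constraint in a direct, essentially algebraic way, but the lower bound on $\E_{i_t^*}[\pE_t[w_j]]$ does not follow immediately from the stated axioms of $\cC$. Its proof requires reasoning about the convex structure of $\cC$, arguing that any pseudoexpectation with $\pE[w_{i_t^*}] = 1$ that places little mass on indices in $j$'s true cluster would have to spread its mass across the remaining indices in a way inconsistent with the axioms (in particular $\sum_i w_i = n/k$ together with the $w_i^2 = w_i$ axioms). Making this intuition rigorous under the oracle-access model, in which only linear functionals over $\cC$ can be queried and $\cC$ itself cannot be inspected directly, is the chief technical challenge.
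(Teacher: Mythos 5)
Your purity analysis is essentially sound and matches the spirit of the paper's: once the seed is pinned down ($\pE_t[w_{i_t^*}]=1$ together with $w_{i_t^*}^2=w_{i_t^*}$ and pseudoexpectation Cauchy--Schwarz gives $\pE_t[w_j]=\pE_t[w_{i_t^*}w_j]$, and hypothesis (2) then kills the cross terms). But the coverage step is a genuine gap, and the difficulty you flag at the end is not a technicality that can be patched within your scheme --- it is a real failure of the approach. Maximizing the single linear functional $\pE[w_{i_t^*}]$ only constrains one coordinate of the returned pseudoexpectation; among all $\pE \in \cC$ with $\pE[w_{i_t^*}]=1$, the oracle is adversarial and may return one that places zero mass on any particular index $j$ in $i_t^*$'s true cluster (nothing in axioms (1)--(2) forbids, say, a pseudoexpectation supported on a strict subset of $S_a$ padded out elsewhere, or more simply a different mixture of feasible points that systematically omits $j$). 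So the claim ``the oracle must return a $\pE_t$ whose mass on that cluster is comparable to the truthful one'' does not follow from optimality, and $\Pr[j\in R_t]=\Omega(1/k)$ cannot be established this way. No amount of reasoning about the convex structure of $\cC$ rescues this, because the statement you need is simply false for some $\cC$ satisfying the hypotheses.

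The paper avoids this by making the objective \emph{adaptive}: at round $t+1$ it asks the oracle to maximize the pseudoexpected mass on the indices \emph{not yet covered} by $R_1\cup\cdots\cup R_t$ (still a linear functional of $\pE$), then draws the seed $i$ with probability proportional to $\pE[w_i]$ and includes $j$ with probability $\pE[w_iw_j]/\pE[w_i]$. The only role of the assumption that the uniform distribution over the indicators of $S_1,\ldots,S_k$ lies in $\cC$ is as a feasibility witness: it achieves value $\tfrac{n}{k^2}(n-|R'_t|)$ for the (quadratic-moment) objective $\pE\sum_{i\in[n],\,j\notin R'_t}w_iw_j$, so the maximizer does at least as well, which forces $\E|R_{t+1}\setminus R'_t|\geq (n-|R'_t|)/k$ and hence geometric decay of the uncovered set over $O(k\log 1/\eta)$ rounds. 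The proportional seed choice also makes $\E|R_t|=\tfrac{(n/k)^2}{n/k}=n/k$ exact without any trimming. If you want to keep a uniformly random seed, you must at minimum change what the oracle optimizes so that the truthful solution certifies progress on the uncovered indices; as written, your rounds need not make any progress at all.
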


In light of \cref{lem:rounding}, the bulk of our efforts will be to show that for a set of corrupted samples $X_1,\ldots,X_n$ from a GMM $\tfrac 1 k \sum_{i \leq k} G_i$ where $\dtv(G_i,G_j) \geq 1 - \delta$ for some small $\delta$, there is a nontrivial partition $S,T \subseteq [k]$ of the mixture components such that $\pE (\sum_{i \in \cup_{a \in S} S_a} w_i ) (\sum_{i \in \cup_{a \in T} S_a} w_i)$ is close to $0$, for any $\pE$ satisfying the clustering axioms.

This will occur under the following deterministic conditions on the un-corrupted samples $X_1,\ldots,X_n$, which we will show hold with high probability.

\begin{definition}[Deterministic conditions for \textsc{cluster}]
\label[definition]{def:deterministic}
  Fix Gaussians $G_1, \dots, G_k$ on $\R^d$.
  For $\delta,\xi > 0$ and $t \in \N$, the $(\delta, \xi, t)$-deterministic conditions 
  (with respect to $G_1,\ldots,G_k$) on $X_1,\ldots,X_n \in \R^d$ are the following:
  \begin{enumerate}
  \item For each $i \in [k]$, there exists a partition $S_i$ of $\{X_1,\ldots,X_n\}$ into $k$ pieces each of size $n/k$ such that for all $s \leq t$,
  \[
  \Norm{\frac kn \sum_{j \in S_i} [\overline{\Sigma}_j^{-1/2} (X_j - \overline{\mu}_j)]^{\tensor s} - M_s}_F^2 \leq d^{-2t} \delta \;, \] 
  where $M_s = \E_{g \sim \cN(0,\Id)} g^{\tensor s}$ is the $s$-th moment tensor of the standard Gaussian and
  \[
  \overline{\mu}_i = \frac 1 {|S_i|} \sum_{j \in S_i} X_j \, , \text{ and } \overline{\Sigma}_i = \frac 1 {|S_i|} \sum_{j \in S_i} (X_j - \overline{\mu}_j)(X_j - \overline{\mu}_j)^\top \mper
  \]
  \item For $a \in [k]$, $v \in \R^d$, and $A \in \R^{d \times d}$, we define
  \begin{enumerate}
  	\item $E_a(v) := \{  X_i \in S_a \mid~\iprod{X_i - \mu_a, v}^2 \leq O(1) \log(1/\xi) \iprod{v, \Sigma_a v} \}$,
  	\item $F_{a}(v) := \{ (X_i, X_j) \in S_a^2 \mid~\iprod{X_i-X_j,v}^2 \geq \Omega(1) \cdot \xi \iprod{ v, \Sigma_a v}   \}$,
  	\item $G_a(A) := \{ (X_i, X_j) \in S_a^2 \mid \iprod{X_i - X_j, A (X_i - X_j)} = 2\iprod{\Sigma_a, A} \pm O(1) \cdot \log(1/\xi) \cdot \| \Sigma_a A\|_F \}$.
  \end{enumerate} 
  Then for every $v \in \R^d, A \in \R^{d \times d}$, we have $|E_a(v) \geq (1-\xi)(n/k)$ and $|G_a(A)|,|F_a(V)| \geq (1-\xi)(n/k)^2$.
\end{enumerate}
\end{definition}
We provide a proof of the following in \cref{sec:omitted} for completeness using only standard concentration arguments.
\begin{lemma}\label{lem:concentration}
  For all even $t \in \N$, if $n \geq \log(1/\gamma)^{Ct} d^{10kt}/\delta^2$ for some sufficiently large constant $C$ and $\xi \geq \delta$, then $X_1,\ldots,X_n \sim \tfrac 1 k (\sum_{i=1}^k G_i)$ sampled iid satisfy the deterministic conditions \cref{def:deterministic} with probability at least $1-\gamma$. 
\end{lemma}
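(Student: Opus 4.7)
The two halves of \cref{def:deterministic} are handled separately. For the moment condition (Item~1), I would first apply a Chernoff bound on multinomial counts to show that with probability $1-\gamma/3$ each component contributes at least $n/k$ samples; by discarding the excess one may take $|S_a|=n/k$ exactly, with $S_a$ indexing samples drawn from $G_a$. The moment concentration is then proved in two stages. First, replace the empirical whitening $\overline{\Sigma}_a^{-1/2}(X_j - \overline{\mu}_a)$ by the ``true'' whitening $Y_j := \Sigma_a^{-1/2}(X_j - \mu_a)$: standard Gaussian mean and covariance concentration give $\|\Sigma_a^{-1/2}(\overline{\mu}_a - \mu_a)\| \leq \widetilde O(\sqrt{d/(n/k)})$ and $\|\Sigma_a^{-1/2}\overline{\Sigma}_a \Sigma_a^{-1/2} - I\|_{\mathrm{op}} \leq \widetilde O(\sqrt{d/(n/k)})$, which an $s$-fold multilinear perturbation propagates into a Frobenius-norm error in the empirical tensor of order $d^{O(t)}/\sqrt{n/k}$. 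Second, for the clean quantity $\tfrac{k}{n}\sum_{j \in S_a} Y_j^{\otimes s}$ with i.i.d.\ $N(0,I)$ summands, each of the $\leq d^s$ tensor entries is an average of a degree-$s$ polynomial in independent Gaussian coordinates and concentrates by Gaussian hypercontractivity / polynomial Bernstein within $\log(1/\gamma)^{O(t)} d^{-C't}$ of its expectation $M_s[\alpha]$ once $n/k \geq d^{O(t)}\log(1/\gamma)^{O(t)}/\delta^2$. A union bound over the $d^{O(t)}$ entries and over $s\leq t$ finishes the Frobenius-norm bound.

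\paragraph{Spread conditions.}
For Item~2, each of the conditions on $E_a$, $F_a$, $G_a$ is first established for a fixed $v$ (or $A$) and then lifted to uniformity by an $\eta$-net. For fixed unit $v$, the variable $Z_j := \iprod{X_j - \mu_a,v}/\sqrt{\iprod{v,\Sigma_a v}}$ is a standard Gaussian when $j\in S_a$, so $\Pr[Z_j^2 > C\log(1/\xi)] \leq \xi/2$; a Chernoff bound over $j \in S_a$ gives at most a $\xi$-fraction of violations of $E_a(v)$ with failure probability $\exp(-\Omega(\xi n/k))$. Gaussian anti-concentration $\Pr[|N(0,\sigma^2)| \leq t] = O(t/\sigma)$ applied to $\iprod{X_i-X_j,v}$, combined with a $U$-statistic Chernoff bound over pairs, yields $F_a(v)$. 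For fixed $A$, the Hanson--Wright inequality bounds the deviation of $\iprod{X_i - X_j, A(X_i-X_j)}$ from its mean $2\iprod{\Sigma_a, A}$ by $O(\log(1/\xi))\|\Sigma_a A\|_F$, giving $G_a(A)$. To pass from fixed $v$ to all $v \in \R^d$, use homogeneity to restrict to the unit sphere, take an $\eta$-net of size $\exp(O(d\log(1/\eta)))$, union-bound over the net, and handle off-net points via the Lipschitz estimate $|\iprod{X_i - \mu_a, v-v'}| \leq \|X_i - \mu_a\|\cdot\|v-v'\|$ together with the high-probability tail $\max_j \|X_j - \mu_a\| \leq \widetilde O(\sqrt{d})$. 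For $G_a(A)$ the same recipe applies with a matrix net of size $\exp(O(d^2 \log(1/\eta)))$ on $A$ of bounded norm, using a quadratic Lipschitz estimate in place of the linear one.

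\paragraph{Main obstacle.}
The delicate point is the uniform control for $G_a(A)$, whose parameter space is $d^2$-dimensional, so the matrix net has size $\exp(\Theta(d^2))$; one must take $\eta$ small enough that the off-net continuity error (quadratic in the sample norm) is absorbed and then verify that the resulting union bound is dominated by the hypothesized sample size $n \geq d^{10kt}\log(1/\gamma)^{Ct}/\delta^2$. With this sample size, the Frobenius moment bound, the $E_a,F_a$ vector nets, and the $G_a$ matrix net all hold simultaneously with total failure probability at most $\gamma$; all remaining estimates are off-the-shelf Gaussian tail and polynomial concentration inequalities.
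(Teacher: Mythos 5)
Your proposal is correct in substance and, for Item~1 of \cref{def:deterministic}, follows essentially the same route as the paper: Hoeffding/Chernoff for the cluster sizes, then a two-stage argument that (a) controls the perturbation incurred by replacing the empirical whitening $\overline{\Sigma}_a^{-1/2}(\cdot-\overline{\mu}_a)$ with the population whitening, using mean and covariance concentration, and (b) applies entrywise polynomial (hypercontractive) concentration to the clean tensor $\tfrac{k}{n}\sum_j Y_j^{\otimes s}$ followed by a union bound over the $d^{O(t)}$ entries. This matches the paper's use of \cref{lem:hypercontractive-concentration} and the bound \eqref{eqn:covar_concentration} almost step for step.

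Where you diverge is Item~2. The paper dispatches the uniformity over all $v\in\R^d$ and $A\in\R^{d\times d}$ by uniform convergence of empirical measures over halfspaces (VC dimension of halfspaces in $\R^m$ is $\poly(m)$, viewing the quadratic events as halfspaces in $\R^{d^2}$), whereas you use $\eta$-nets plus Lipschitz continuity. Both are standard and both fit within the sample budget $n\ge d^{10kt}/\delta^2$; the VC route is cleaner because it sidesteps the one technicality your sketch glosses over: Lipschitz control of the \emph{value} $\iprod{X_i-\mu_a,v}$ does not by itself control the \emph{count} $|E_a(v)|$, since indicators can flip under arbitrarily small perturbations of $v$. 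To make the net argument airtight you need the usual two-threshold trick --- prove the condition at net points with a strictly stronger constant (and a smaller violation fraction $\xi/2$), then use the continuity estimate, together with the a priori bound $\max_j\|\Sigma_a^{-1/2}(X_j-\mu_a)\|\lesssim\sqrt{d+\log n}$, to show no additional sample can cross the weaker threshold off the net. The same remark applies to $F_a(v)$ (anti-concentration at a relaxed threshold on the net) and to $G_a(A)$ (where, after whitening, normalizing $\|\Sigma_a^{1/2}A\Sigma_a^{1/2}\|_F=1$ makes the quadratic continuity estimate scale correctly). With that standard fix made explicit, your argument is complete; I would also note that the paper's own stated anti-concentration threshold for $F_a(v)$ is loose by a square (its proof works with $\xi^2\iprod{v,\Sigma_a v}$ rather than $\Omega(\xi)\iprod{v,\Sigma_a v}$), so be careful to state the threshold you actually prove.
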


Now we arrive at the key lemmas.
The first is a structural result which translates our small overlap assumption on each pair of Gaussians $G_i,G_j$ into information about the parameters $\mu_i,\Sigma_i,\mu_j,\Sigma_j$.

\begin{lemma}[See proof in~\cref{sec:tv-param}]
\label[lemma]{lem:gaussian-cases}\label{lem:TV_param}
Suppose that $P,Q$ are $d$-dimensional Gaussian distributions with $\dtv(P,Q) \geq 1 - \eps$.
Then one of the following holds:
\begin{enumerate}
        \item ($\eps$-Mean Separation) There is a unit vector $v \in \R^d$ such that
        \[
        |\iprod{v,\mu_P} - \iprod{v,\mu_Q}| \geq \frac{1}{10} (\log 1/\epsilon)^{1/6} \cdot \sqrt{ \iprod{v,(\Sigma_P + \Sigma_Q) v} }\mper
        \]
        \item ($\eps$-Variance Separation) There is a unit vector $v \in \R^d$ such that
        \[
        \max \{ \iprod{v,\Sigma_Q v}, \iprod{v, \Sigma_P v} \} \geq (\log 1/\epsilon)^{1/6} \cdot \min \{ \iprod{v, \Sigma_Q v}, \iprod{v, \Sigma_P v} \}
\mper        \]
        \item ($\eps$-Covariance Separation) We have that 
        \[
        \| I  - \Sigma_P^{-1/2}\Sigma_Q \Sigma_P^{-1/2}\|_F^2 \geq \Omega\Paren{\frac{\log 1/\epsilon}{\log \log 1/\epsilon}}\mper
        \]
\end{enumerate}
\end{lemma}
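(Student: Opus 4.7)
The plan is to prove the contrapositive: assume none of the three conditions hold and deduce that the Bhattacharyya coefficient $BC(P,Q) = \int \sqrt{pq}$ satisfies $BC(P,Q) \geq \sqrt{2\eps}$. Combined with the standard bound $\dtv(P,Q) \leq \sqrt{1 - BC(P,Q)^2}$, this gives $\dtv(P,Q) \leq 1-\eps$.

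First I would normalize via the affine change of variables $x \mapsto \Sigma_P^{-1/2}(x - \mu_P)$. This preserves $\dtv$ and each of the three separation conditions (the relative Frobenius norm $\|I - \Sigma_P^{-1/2}\Sigma_Q\Sigma_P^{-1/2}\|_F$, the pairwise ratio $\iprod{v,\Sigma_P v}/\iprod{v,\Sigma_Q v}$, and the mean separation ratio are all unchanged). After rotating to diagonalize, I reduce to $P = \cN(0,I)$ and $Q = \cN(\mu, \Sigma)$ with $\Sigma = \diag(\lambda_1,\ldots,\lambda_d)$. The negations of the three conditions then read:
\begin{enumerate}
\item[(i')] $\lambda_i \in [L^{-1}, L]$ for all $i$, where $L = (\log 1/\eps)^{1/6}$.
\item[(ii')] $\sum_i(\lambda_i - 1)^2 = \|I - \Sigma\|_F^2 \leq c_0 \log(1/\eps)/\log\log(1/\eps)$ for a small constant $c_0$ (fixed by adjusting the $\Omega$ in the hypothesis).
\item[(iii')] Optimizing over unit $v$ gives $\mu^\top(I+\Sigma)^{-1}\mu = \sum_i \mu_i^2/(1+\lambda_i) \leq (\log 1/\eps)^{1/3}/100$.
\end{enumerate}

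Next I would invoke the closed form for Gaussian Bhattacharyya coefficient:
\[
\log BC(P,Q) \;=\; \sum_i g(\lambda_i) \;-\; \sum_i \frac{\mu_i^2}{4(1+\lambda_i)}, \qquad g(\lambda) := \tfrac12 \log\frac{2\sqrt{\lambda}}{1+\lambda}.
\]
The second sum is at least $-(\log 1/\eps)^{1/3}/400$ by (iii'), which is $o(\log(1/\eps))$. For the first sum, I note $g(1) = g'(1) = 0$, $g''(1) = -1/8$, so split indices by whether $|\lambda_i - 1| \leq 1/2$ or not. On the ``near'' set, a Taylor estimate yields $g(\lambda) \geq -C(\lambda-1)^2$, so the contribution is $\gtrsim -\|I-\Sigma\|_F^2 = -o(\log(1/\eps))$ by (ii'). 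On the ``far'' set there are at most $4\|I-\Sigma\|_F^2$ indices (by Markov), and for each, (i') gives $g(\lambda_i) \geq -\tfrac14 \log L - O(1) = -O(\log\log(1/\eps))$. Summing produces $-O\bigl(\|I-\Sigma\|_F^2 \cdot \log L\bigr) = -O(c_0 \log(1/\eps))$.

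Putting these together, $\log BC(P,Q) \geq -O(c_0) \cdot \log(1/\eps) - o(\log(1/\eps))$, which is bigger than $-\tfrac12 \log(1/\eps) + O(1)$ once $c_0$ is chosen small, yielding $BC(P,Q) \geq \sqrt{2\eps}$ as desired. The main obstacle is the bookkeeping of constants: the delicate point is that the threshold $\Omega(\log(1/\eps)/\log\log(1/\eps))$ in case (iii) of the lemma is exactly calibrated so that $\|I-\Sigma\|_F^2 \cdot \log L$ stays below a prescribed fraction of $\log(1/\eps)$, given the $1/6$-exponent in cases (i)--(ii). This balance is what forces the particular quantitative form of the three separation thresholds in the statement, and tracking it through the ``near'' and ``far'' case split (together with confirming that the optimization in (iii') really does reduce the multi-directional mean-separation hypothesis to a single inequality in $\mu^\top(I+\Sigma)^{-1}\mu$) is where the bulk of the careful calculation lies.
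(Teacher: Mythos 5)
Your proposal is correct and is essentially the paper's argument run in the contrapositive direction: both rest on the closed-form Gaussian Hellinger/Bhattacharyya expression, reduce the mean term to the quadratic form $(\mu_P-\mu_Q)^\top(\Sigma_P+\Sigma_Q)^{-1}(\mu_P-\mu_Q)$ being $o(\log 1/\eps)$, and then compare $\sum_i \log\frac{2\sqrt{\lambda_i}}{1+\lambda_i}$ with $\sum_i(1-\lambda_i)^2$ under the bounded eigenvalue range forced by the failure of variance separation. Your near/far split is exactly the pointwise inequality $\min(|\log x|,(\log x)^2)\leq O(\log a)(1-x)^2$ for $x\in[1/a,a]$ that the paper isolates as its Claim~\ref{clm:tv-params-1}, so the two proofs are mirror images of one another.
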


\cref{lem:TV_param} applies to two Gaussians at a time, but we can establish as a corollary that the entire mixture of $k$ Gaussians (where each pair has overlap at most $\e$) is also in one of three cases. 
Note that in the following, the distances between parameters of the components $G_i$ are related both to the overlap of pairs of components and to the covariance of the overall mixture.
The proof is a straightforward case analysis using \cref{lem:TV_param}, so we defer it to \cref{sec:omitted}.

\begin{corollary}\label[corollary]{cor:TV_param}
  Let $k \in \N$, $k \geq 2$.
  Suppose $G_1 = \cN(\mu_1,\Sigma_1),\ldots,G_k = \cN(\mu_k,\Sigma_k)$ Gaussians distributions.
  Let $\Sigma$ be the covariance of the uniform mixture of $G_1,\ldots,G_k$.
  For any $C, C' > 0$, if $\min_{i \neq j} \dtv(G_i, G_j) \geq 1 - 2^{-k^{10(C')^k + 10C + 10}}$, then we have at least one of:
  \begin{enumerate}
  \item The mixture is partitioned by a hyperplane.
  That is, there is a direction $v \in \R^d$ and a nontrivial partition of $[k]$ into $S,T$ such that for all $a \in S$ and $b \in T$,
  \[
 \iprod{\mu_a  - \mu_b, v}^2 \geq \max \left \{ k^C ( \iprod{v, \Sigma_a v} + \iprod{v, \Sigma_b v}), \frac{\iprod{v, \Sigma v}}{ k^2 } \right \}\mper
 \]
  \item The mixture can be partitioned into high-variance and low-variance components.
  That is, there is a direction $v \in \R^d$ and a nontrivial partition of $[k]$ into $S,T$ such that for all $a \in S$ and $b \in T$,
 \[
 \iprod{v, \Sigma_a v} \geq k^{C'} \iprod{v, \Sigma_b v} \text{ \quad and \quad }
 \frac{\iprod{v, \Sigma_b v}}{\iprod{v, \Sigma v}} \leq \Paren{\frac{\iprod{v, \Sigma_a v}}{\iprod{v, \Sigma v}}}^{C'}\mper
 \]

  \item The covariance of each component is comparable to $\Sigma$, and all pairs are covariance separated.
  That is, if $A_{ab} = \Sigma_a^{-1/2} (I - \Sigma_a^{-1/2} \Sigma_b \Sigma_a^{-1/2}) \Sigma_a^{-1/2}$, then every $a,b \in [k]$ has
  \[
  \|I - \Sigma_a^{-1/2} \Sigma_b \Sigma_a^{-1/2}\|_F^2 \geq \Omega(k^{(C')^k + C + 1}) \cdot \max \left \{ \|\Sigma_a^{1/2} A_{ab} \Sigma_a^{1/2} \|_F, \|\Sigma_b^{1/2} A_{ab} \Sigma_b^{1/2} \|_F, \|\Sigma^{1/2} A_{ab} \Sigma^{1/2}\| \right \} \mper
  \]
\end{enumerate}
\end{corollary}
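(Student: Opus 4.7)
The plan is to apply \cref{lem:TV_param} to every pair of components $(i,j)$, classifying each such pair as mean-, variance-, or covariance-separated, and then to derive the three cases of the corollary according to whether some pair is mean-separated, some pair is variance-separated with no mean-separated pair, or every pair is only covariance-separated. The large factor $(\log 1/\epsilon)^{1/6}$ from \cref{lem:TV_param}, which we make superpolynomial in $k$ by our choice of $\epsilon$, will provide the slack needed to absorb the losses in the pigeonhole steps that follow.

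For the mean-separated case, suppose some pair $(i,j)$ yields a unit direction $v$ with $\iprod{v,\mu_i-\mu_j}^2 \geq \Omega((\log 1/\epsilon)^{1/3}) (\iprod{v,\Sigma_i v}+\iprod{v,\Sigma_j v})$. Sorting the scalars $\iprod{v,\mu_a}$ for $a \in [k]$, the outermost gap is at least $|\iprod{v,\mu_i-\mu_j}|$, so some consecutive gap has size at least $|\iprod{v,\mu_i-\mu_j}|/k$. Splitting $[k]$ at that gap into $S,T$, the first required inequality $\iprod{v,\mu_a-\mu_b}^2 \geq k^C(\iprod{v,\Sigma_a v}+\iprod{v,\Sigma_b v})$ for $a \in S, b \in T$ follows since the factor-$k^2$ pigeonhole loss is absorbed by the slack from $\epsilon$. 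For the mixture-variance inequality, I would decompose $\iprod{v,\Sigma v} = \tfrac{1}{k}\sum_a \iprod{v,\Sigma_a v} + \tfrac{1}{k}\sum_a \iprod{v,\mu_a-\bar\mu}^2$ and bound both terms by $O(k \cdot \text{gap}^2)$ using the first inequality and the fact that $|\iprod{v,\mu_a-\mu_b}| \leq k \cdot \text{gap}$ across all $a,b$.

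For the variance-separated case, no pair is mean-separated but some pair $(i,j)$ is variance-separated in direction $v$. Sorting $\iprod{v,\Sigma_1 v},\ldots,\iprod{v,\Sigma_k v}$ and using a geometric pigeonhole, some consecutive pair has ratio at least $((\log 1/\epsilon)^{1/6})^{1/(k-1)}$, which will exceed $k^{C'}$ by our choice of $\epsilon$, giving the partition $S,T$. For the inequality involving $\iprod{v,\Sigma v}$, I plan to use that the absence of mean-separation for every pair implies $\iprod{v,\mu_a-\mu_b}^2 \leq O((\log 1/\epsilon)^{1/3}) \cdot \max_c \iprod{v,\Sigma_c v}$ for all $a,b$, hence $\iprod{v,\Sigma v}$ is within a factor $O((\log 1/\epsilon)^{1/3})$ of $\max_c \iprod{v,\Sigma_c v}$. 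Direct algebra then yields $\iprod{v,\Sigma_b v}/\iprod{v,\Sigma v} \leq (\iprod{v,\Sigma_a v}/\iprod{v,\Sigma v})^{C'}$ once the separation ratio is sufficiently large relative to $k$. For the final case, every pair is only covariance-separated, and the absence of mean- and variance-separation in any direction implies $\Sigma_a \preceq O((\log 1/\epsilon)^{1/6}) \cdot \Sigma_b$ for every $a,b$ and uniform control of $\iprod{v,\mu_a-\mu_b}^2$ by $\iprod{v,\Sigma_a v}+\iprod{v,\Sigma_b v}$; consequently $\Sigma$ is spectrally comparable to each $\Sigma_a$ up to a $\poly(k,(\log 1/\epsilon)^{1/6})$ factor, making the three norms $\|\Sigma_a^{1/2}A_{ab}\Sigma_a^{1/2}\|_F$, $\|\Sigma_b^{1/2}A_{ab}\Sigma_b^{1/2}\|_F$, and $\|\Sigma^{1/2}A_{ab}\Sigma^{1/2}\|$ comparable up to this factor, and the Frobenius lower bound $\|I-\Sigma_a^{-1/2}\Sigma_b\Sigma_a^{-1/2}\|_F^2 \geq \Omega(\log(1/\epsilon)/\log\log(1/\epsilon))$ from \cref{lem:TV_param} will dominate the desired $k^{(C')^k+C+1}$ factor.

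The main difficulty will be balancing $\epsilon$ against the three distinct pigeonhole losses (linear for means, geometric-root for variances, and the compounded comparability factor for covariances), and in particular tracking the $(C')^k$ term in case 3 where each of the three norms must be simultaneously bounded while the Frobenius lower bound still overpowers all of them. The argument is more bookkeeping than deep, and the stated choice $\epsilon = 2^{-k^{10(C')^k+10C+10}}$ should be calibrated precisely to absorb all these losses with room to spare.
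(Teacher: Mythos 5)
Your high-level plan (classify pairs via \cref{lem:TV_param}, then pigeonhole to extract a partition) is the right starting point, but the way you instantiate it in the variance case contains a genuine gap, and your case decomposition differs from the paper's in a way that matters. For case (2), splitting at the consecutive pair with the largest ratio gives the first inequality $\iprod{v,\Sigma_a v}\geq k^{C'}\iprod{v,\Sigma_b v}$ but provably \emph{not} the second one. Concretely, take $k=3$ with equal means and variances $R^2, R, 1$ in direction $v$, so $\iprod{v,\Sigma v}=\Theta(R^2)$. Splitting between the components of variance $R$ and $1$ (a consecutive ratio of $R$, as large as any), the pair $a$ with variance $R$ and $b$ with variance $1$ has $\iprod{v,\Sigma_b v}/\iprod{v,\Sigma v}=\Theta(1/R^2)$ while $(\iprod{v,\Sigma_a v}/\iprod{v,\Sigma v})^{C'}=\Theta(1/R^{C'})$, so the required inequality fails for any $C'>2$ once $R$ is large --- no matter how large the separation ratio is. The point is that when the normalized variance $x=\iprod{v,\Sigma_a v}/\iprod{v,\Sigma v}$ is below $1$, the condition $y\leq x^{C'}$ is a \emph{power} relation that a single large multiplicative gap cannot certify. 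The paper instead defines the split by iterated thresholds relative to the mixture variance: it looks for the first index $i$ with $\iprod{v,\Sigma_i v}\leq k^{-(C')^i}\iprod{v,\Sigma v}$, so that crossing from above threshold $i-1$ to below threshold $i$ automatically yields $y\leq k^{-(C')^i}=(k^{-(C')^{i-1}})^{C'}\leq x^{C'}$. This iterated-threshold pigeonhole is precisely where the $(C')^k$ in the exponent of the separation requirement (and in case 3's constant) comes from; it is a structural device, not bookkeeping slack you can absorb by shrinking $\epsilon$.

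A related, smaller issue is in your case (1): mean-separation of a single pair $(i,j)$ via \cref{lem:TV_param} only controls $\iprod{v,\Sigma_i v}+\iprod{v,\Sigma_j v}$, but conclusion (1) requires $\iprod{\mu_a-\mu_b,v}^2\geq k^C(\iprod{v,\Sigma_a v}+\iprod{v,\Sigma_b v})$ for \emph{all} $a\in S$, $b\in T$, including components whose variance in direction $v$ you have not bounded. The paper handles both problems at once by structuring the trichotomy around mixture-level conditions (``the maximal mean gap dominates $\max_p\iprod{v,\Sigma_p v}/\eta$'' for case 1; the iterated-threshold condition for case 2) and defining case 3 as the simultaneous failure of both, at which point \cref{lem:TV_param} forces every pair to be covariance-separated and yields the comparability of $\Sigma$ with each $\Sigma_a$. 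You should restructure your argument along those lines rather than classifying individual pairs.
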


Finally, the following lemmas, capturing the bulk of our technical work, give upper bounds on $\pE (\sum_{i \in \cup_{a \in S} S_a} w_i)(\sum_{i \in \cup_{a \in T} S_a} w_i)$ for $S,T$ being the partition of clusters from \cref{cor:TV_param} and any pseudoexpectation $\pE$ satisfying the clustering axioms $\cA$.

\paragraph{Notation}
  In the following \cref{lem:mean_sep,lem:var_sep,lem:covar_sep}, let $X_1,\ldots,X_n \in \R^d$ satisfy the $(\delta,\xi,t)$-deterministic conditions (\cref{def:deterministic}) with respect to Gaussians $G_1,\ldots,G_k$, where $G_i = \cN(\mu_i,\Sigma_i)$.
We assume that $\xi \leq \xi_0$ and $\tau \leq \tau_0$ for sufficiently small universal constants $\xi_0,\tau_0 > 0$.
  Let $S_1,\ldots,S_k$ be the partition of $X_1,\ldots,X_n$ guaranteed by \cref{def:deterministic}.
  Let $Y_1,\ldots,Y_n$ be an $\e$-corruption of $X_1,\ldots,X_n$ and let $\cA$ be the clustering axioms (\cref{def:axioms}) for $Y_1,\ldots,Y_n$.
  For indeterminates $w_1,\ldots,w_n$, let $\alpha_i(w) = \sum_{j \in S_i} w_j = \iprod{a_i,w}$, where $a_i \in \{0,1\}^n$ is the indicator for $S_i$.

The first lemma gives an upper bound on $\pE \alpha_a(w) \alpha_b(w)$ when components $G_a, G_b$ are mean separated.
\begin{lemma}[Mean Separated Components, see proof in \cref{sec:sep-lemmas}]
\label[lemma]{lem:mean_sep}
For every $\tau > 0$ there is $s = \tilde{O}(1/\tau^2)$ such if $\epsilon, \delta \leq s^{-O(s)} k^{-20}$, then for all $a,b \in [k]$ and all $v \in \R^d$ and all sufficiently-small $\rho > 0$, if $\iprod{\mu_a - \mu_b,v}^2 \geq \rho \E_{X,X' \sim \tfrac 1 k \sum_{i \leq k} G_i} \iprod{X - X',v}^2$,
\[
  \cA \proves{O(s)} \Paren{\frac{\alpha_a(w) \alpha_b(w)}{n^2}}^{s}  \leq s^{O(s)} \cdot (\log 1/\xi)^{O(s)} \cdot \Paren{ \frac{\iprod{v,\Sigma_a v} + \iprod{v, \Sigma_b v}}{ \iprod{\mu_a - \mu_b,v}^2 }}^{\Omega(s)}+ \rho^{-O(s)} (\tau^{\Omega(s)} + \epsilon^{\Omega(s)} k^{O(s)} s^{O(s^2)} + \xi^{\Omega(s)} ) \mper
\]
\end{lemma}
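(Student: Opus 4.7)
The plan is to combine the variance upper bound (\cref{lem:fake_var_upperbound}) and the variance lower bound (\cref{lem:fake_var_lowerbound}), both applied to the linear polynomial $p(x) = \iprod{x, v}$, and then exploit the hypothesis $M^2 := \iprod{\mu_a - \mu_b, v}^2 \geq \rho \cdot \E_{X,X' \sim \text{mix}} \iprod{X-X',v}^2 = 2\rho \iprod{v, \Sigma_{\text{mix}} v}$ to turn an \emph{a priori} upper bound on $\V_{X \sim S}[\iprod{X,v}]$ in terms of the mixture variance into one in terms of $M^2/\rho$. Throughout, let $V = \iprod{v, \Sigma(w) v}$, $A = \alpha_a(w)/n$, $B = \alpha_b(w)/n$, and note $\cA \proves{O(1)} 0 \leq A,B \leq 1/k$ using $w_i^2 = w_i$ and $\sum_i w_i = n/k$.

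First I would instantiate \cref{lem:fake_var_lowerbound} with components $G_a, G_b$. Using the direct computation $\Var_{0.5 G_a + 0.5 G_b}[\iprod{X,v}] = \tfrac12(\V_a + \V_b) + \tfrac14 M^2 \geq M^2/4$, this gives an SoS inequality of the form
\[
 V \;\geq\; c_1 \cdot AB \cdot M^2 \;-\; \mathrm{err}_{\mathrm{low}},
\]
where $\mathrm{err}_{\mathrm{low}}$ collects $\eps^{\Omega(1)}$, $\xi^{\Omega(1)}$, and moment-closeness errors. Next, I would invoke \cref{lem:fake_var_upperbound} for $G_a$ (and symmetrically for $G_b$) to obtain $\cA \proves{} V \cdot A \leq \tau'(\tau) \V_a + \tau V + \mathrm{err}_{\mathrm{up}}$. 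Since $\cA \proves{} B \leq 1$, multiplication yields $V \cdot (AB) \leq \tau'(\tau) \V_a + \tau V + \mathrm{err}_{\mathrm{up}}$, so applying \cref{fact:sos-1} with $y = AB$, $x = V$ boosts this to
\[
 V \cdot (AB)^s \;\leq\; s \cdot \tau'(\tau) \V_a \;+\; \tau^s V \;+\; s \cdot \mathrm{err}_{\mathrm{up}},
\]
and averaging the analogous bound for $G_b$ replaces $\V_a$ by $\tfrac12(\V_a + \V_b)$.

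I would then multiply the lower bound on $V$ by $(AB)^{s}$ (which is $\geq 0$ in SoS since each $A^iB^j$ is a sum of products of squares $w_\ell^2 = w_\ell$), yielding $V \cdot (AB)^s \geq c_1 M^2 (AB)^{s+1} - \mathrm{err}_{\mathrm{low}}(AB)^s$, and combine with the upper bound to conclude
\[
 c_1 M^2 (AB)^{s+1} \;\leq\; \tfrac{s\tau'(\tau)}{2}(\V_a + \V_b) \;+\; \tau^s V \;+\; (\text{errors}).
\]
The last step is to handle the $\tau^s V$ term via an \emph{a priori} upper bound on $V$. Here I would argue, using that $\mu(w)$ minimizes $\sum w_i \iprod{X'_i - \mu, v}^2$ together with the moment-matching and corruption axioms plus the deterministic conditions of \cref{def:deterministic}, that $\cA \proves{} V \leq O(k) \iprod{v, \Sigma_{\mathrm{mix}} v} + \mathrm{err}_{\mathrm{apriori}}$. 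The hypothesis on $\rho$ then upgrades this to $V \leq O(k/\rho) M^2 + \mathrm{err}_{\mathrm{apriori}}$, so dividing through by $c_1 M^2$ gives (after reindexing $s \mapsto s+1$ and using $\tau'(\tau) = 1/\polylog(1/\tau)$ together with $s = \tilde{O}(1/\tau^2)$) exactly the bound claimed in the lemma, with the $\rho^{-O(s)} \tau^{\Omega(s)}$ coming from the $\tau^s V$ term and $\rho^{-O(s)}(\eps^{\Omega(s)}k^{O(s)}s^{O(s^2)} + \xi^{\Omega(s)})$ aggregating the errors from $\mathrm{err}_{\mathrm{low}}$, $\mathrm{err}_{\mathrm{up}}$, and $\mathrm{err}_{\mathrm{apriori}}$.

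The main obstacle I expect is the \emph{a priori} SoS upper bound $V \leq O(k/\rho)M^2 + \mathrm{err}_{\mathrm{apriori}}$: it is this bound that lets us discharge the $\tau^s V$ term without circularity and that cost us the $\rho^{-O(s)}$ factor in the final inequality. Concretely, one must show inside SoS that the empirical $v$-variance of the corrupted subset picked out by $w$ is controlled by $\iprod{v, \Sigma_{\mathrm{mix}} v}$; this requires combining the definition of $\Sigma(w)$, the decorruption estimate of \cref{lem:decorrupt-intro}, and Gaussian concentration across the $k$ components via the deterministic condition $|E_a(v)| \geq (1-\xi)(n/k)$. Once that a priori bound is in hand, the rest of the argument is a careful bookkeeping of SoS multiplications using \cref{fact:sos-1} and the non-negativity $A, B \succeq 0$ inherited from $w_i^2 = w_i$.
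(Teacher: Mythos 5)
Your plan is essentially the paper's own proof: combine the linear-function variance upper bound (\cref{lem:fake_var_upperbound}) and lower bound (\cref{lem:fake_var_lowerbound}), boost via \cref{fact:sos-1}, and discharge the residual $\tau^{\Omega(s)}\iprod{v,\Sigma(w)v}^s$ term through an \emph{a priori} bound on $\iprod{v,\Sigma(w)v}$ by the mixture variance together with the $\rho$-hypothesis, then divide by $\iprod{\mu_a-\mu_b,v}^{2s}$. The a priori bound you flag as the main obstacle is exactly \cref{lem:mixture_var}, proved in the paper by the same decorruption-plus-moment-axioms argument you sketch, so the proposal is correct and matches the paper's route.
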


The second lemma gives an upper bound on $\pE \alpha_a(w) \alpha_b(w)$ in terms of the variance separation.
\begin{lemma}[Variance Separated Components, see proof in \cref{sec:sep-lemmas}]
  \label[lemma]{lem:var_sep}
For every $\tau > 0$ there is $s = \tilde{O}(1/\tau^2)$ such that if $\epsilon, \delta \leq s^{-O(s)} k^{-20}$,  then for all $a,b \in [k]$ and all $v \in \R^d$, if $\iprod{v,\Sigma_b v}^2 \geq \rho \E_{X,X' \sim \tfrac 1 k \sum_{i \leq k} G_i} \iprod{X - X',v}^2$,
\begin{align*}
\cA \proves{O(s)} \Paren{\frac{\alpha_a(w) \alpha_b(w)}{n^2}}^s \leq & \xi^{-O(s)} \cdot \Brac{ s^{O(s)} \Paren{\frac{\iprod{v, \Sigma_a v}}{\iprod{v, \Sigma_b v}}}^{\Omega(s)} + \rho^{-O(s)} (\tau^{\Omega(s)} + \e^{\Omega(s)} k^{O(s)} s^{O(s^2)} + \xi^{\Omega(s)} )}\mper
\end{align*}
\end{lemma}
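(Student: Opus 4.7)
The plan is to follow the same template as the proof of \cref{lem:mean_sep}, using the same witnessing polynomial $p(y)=\iprod{y,v}$; the only structural change is that in the variance-separated case the lower bound on $\iprod{v,\Sigma(w)v}$ is driven by $\iprod{v,\Sigma_b v}$ (the larger of the two component variances) rather than by a mean gap $\iprod{\mu_a-\mu_b,v}^2$. By case (2) of \cref{cor:TV_param} we may assume without loss of generality that $\iprod{v,\Sigma_b v}\geq \iprod{v,\Sigma_a v}$.

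The first step is to apply the SoS-provable variance lower bound \cref{lem:fake_var_lowerbound} to $p(y)=\iprod{y,v}$, producing (after absorbing the $\xi$-factor coming from the concentration event $F_b(v)$ and the $\eps,\delta$ errors from corruption and moment-matching) an inequality of the shape
\[
\cA\proves{O(1)} \iprod{v,\Sigma(w)v}\;\gtrsim\; \xi\cdot\frac{\alpha_a(w)\alpha_b(w)}{n^2}\cdot\iprod{v,\Sigma_b v}\;-\;(\text{err}).
\]
The second step is to invoke the SoS-provable variance upper bound \cref{lem:fake_var_upperbound} for cluster $a$, which yields
\[
\cA\proves{O(s)} \frac{\alpha_a(w)}{n/k}\,\iprod{v,\Sigma(w)v}\;\leq\; s^{O(s)}(\log 1/\xi)^{O(s)}\iprod{v,\Sigma_a v}+\tau\,\iprod{v,\Sigma(w)v}+(\text{err}),
\]
and then to boost the $\alpha_a$-power via \cref{fact:sos-1} with $x=\iprod{v,\Sigma(w)v}$, $y=\alpha_a(w)/(n/k)\in[0,1]$, $B=\tau$, obtaining
\[
\cA\proves{O(s)}\Paren{\frac{\alpha_a(w)}{n/k}}^{s}\iprod{v,\Sigma(w)v}\;\leq\; s^{O(s)}(\log 1/\xi)^{O(s)}\iprod{v,\Sigma_a v}+\tau^{s}\,\iprod{v,\Sigma(w)v}+(\text{err}).
\]
Multiplying both sides by $(\alpha_b(w)/(n/k))^{s}$ (SoS-valid because $\alpha_b$ is a sum of $\{0,1\}$ variables), substituting the lower bound into the LHS, and dividing by the fixed positive scalar $\iprod{v,\Sigma_b v}$ yields, after re-indexing $s+1\mapsto s$, an inequality of the form
\[
\xi^{O(s)}\Paren{\frac{\alpha_a\alpha_b}{n^2}}^{s}\;\leq\; s^{O(s)}(\log 1/\xi)^{O(s)}\Paren{\frac{\iprod{v,\Sigma_a v}}{\iprod{v,\Sigma_b v}}}^{\Omega(s)}+\tau^{\Omega(s)}\,\frac{\iprod{v,\Sigma(w)v}}{\iprod{v,\Sigma_b v}}+(\text{err}),
\]
which is already of the shape of the stated bound.

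The main obstacle is the residual term $\tau^{\Omega(s)}\,\iprod{v,\Sigma(w)v}/\iprod{v,\Sigma_b v}$, which requires an \emph{a priori} upper bound on $\iprod{v,\Sigma(w)v}$ in units of $\iprod{v,\Sigma_b v}$. This is precisely where the hypothesis relating $\iprod{v,\Sigma_b v}$ to the overall mixture variance $\E_{X,X'\sim\frac{1}{k}\sum_i G_i}\iprod{X-X',v}^2$ is used: since any $w$ satisfying $\cA_{\text{subset}}$ selects exactly $n/k$ of the samples, $\iprod{v,\Sigma(w)v}$ can be bounded SoS-provably by $O(k)$ times the full empirical mixture variance (which concentrates within $(1\pm o(1))$ of $\iprod{v,\Sigma v}$ under the \cref{def:deterministic} events), and the hypothesis then yields $\iprod{v,\Sigma(w)v}/\iprod{v,\Sigma_b v}\leq O(k/\rho)$, contributing the $\rho^{-O(s)}$ factor. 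The remaining bookkeeping---propagating the corruption error $\eps$ through \cref{lem:decorrupt-intro} (producing the $\eps^{\Omega(s)}k^{O(s)}s^{O(s^2)}$ term), absorbing the moment-matching error $\delta$ from $\cA_{\text{moments}}$, and collecting the $\xi^{\Omega(s)}$ losses from the concentration events $E_b(v),F_b(v)$---is essentially identical to the corresponding steps in the proof of \cref{lem:mean_sep}, so I would appeal to that template rather than redo the computation.
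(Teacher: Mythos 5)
Your proposal is correct and follows essentially the same route as the paper, whose proof of \cref{lem:var_sep} is simply to repeat the argument of \cref{lem:mean_sep} with the witnessing direction $v$, using the variance-driven half of \cref{lem:fake_var_lowerbound} (the bound coming from the event $F_b(v)$) in place of the mean-gap half, boosting via \cref{fact:sos-1}, and controlling the residual $\tau^{\Omega(s)}\iprod{v,\Sigma(w)v}$ term through \cref{lem:mixture_var} together with the hypothesis relating $\iprod{v,\Sigma_b v}$ to the mixture variance. The minor liberties you take (writing schematic degree-one inequalities, the $n/k$ normalization, and weakening $(\alpha_b/n)^2$ to $\alpha_a\alpha_b/n^2$ using $\alpha_a/n\leq 1$) only affect constants in the exponents and are absorbed by the $O(\cdot)/\Omega(\cdot)$ bookkeeping, exactly as in the paper's template.
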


The third lemma gives an upper bound on  $\pE \iprod{w,a_i}\iprod{w,a_j}$  in terms of the covariance separation.
\begin{lemma}[Covariance Separated Components, see proof in \cref{sec:sep-lemmas}]
\label[lemma]{lem:covar_sep}
Let $\Sigma_{\text{mix}}$ be the covariance of the mixture $\tfrac 1 k \sum_{i \leq k} G_i$.
If $\epsilon, \delta \leq k^{-O(1)}$, then for all $a,b \in [k]$ and $A \in \R^{d \times d}$,
\begin{align*}
\cA \proves{O(1)} & \Paren{\frac{\alpha_a(w) \alpha_b(w)}{n^2}}^{16} \leq O(\log 1/\xi)^8 \cdot \frac{\|\Sigma_{\text{mix}}^{1/2} A \Sigma_{\text{mix}}^{1/2}\|_F^8 + \|\Sigma_a^{1/2} A \Sigma_a^{1/2}\|_F^8 + \|\Sigma_b^{1/2} A \Sigma_b^{1/2}\|_F^8}{\iprod{\Sigma_a - \Sigma_b, A}^8} + O(\xi^4) + O(\e^2 k^{20})
\end{align*}
%
%
\end{lemma}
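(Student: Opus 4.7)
Following the introduction's sketch, I would build the SoS proof around the witnessing quadratic polynomial $p_w(x) = \iprod{x - \mu(w),\, A(x - \mu(w))}$ and derive, under the clustering axioms $\cA$, incompatible upper and lower bounds on the pseudo-variance $\E_{i\sim w} p_w(X_i')^2 - (\E_{i\sim w} p_w(X_i'))^2$. Concretely, the lower bound will involve $\alpha_a(w)\alpha_b(w) \cdot \iprod{\Sigma_a-\Sigma_b, A}^2$, while the upper bound will involve $\|\Sigma(w)^{1/2} A \Sigma(w)^{1/2}\|_F^2$; bounding $\Sigma(w)$ in Löwner order by the parameter covariances then produces the three Frobenius-norm terms in the statement.

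For the \textbf{lower bound}, I would expand the symmetric pair sum
\[
\frac{k^2}{n^2}\sum_{i,j\in[n]} w_i w_j (p_w(X_i') - p_w(X_j'))^2 \;=\; \frac{2k}{n}\sum_i w_i p_w(X_i')^2 - 2\Big(\frac{k}{n}\sum_i w_i p_w(X_i')\Big)^2,
\]
using $\sum_i w_i = n/k$. Restricting to pairs with $i\in S_a, j\in S_b$ (after matching $X_i'$ to uncorrupted $X_i$ via $z_i(X_i - X_i') = 0$), the deterministic conditions $G_a(A)$ and $G_b(A)$ give, for a $(1-O(\xi))$-fraction of such pairs, $p_w(X_i') - p_w(X_j') = \iprod{\Sigma_a - \Sigma_b, A} \pm O(\log 1/\xi)\,(\|\Sigma_a^{1/2} A \Sigma_a^{1/2}\|_F + \|\Sigma_b^{1/2} A \Sigma_b^{1/2}\|_F)$, with the $\mu_a - \mu_b$ and $\mu_{\cdot} - \mu(w)$ cross terms cancelling in the $i,j$ symmetric pair. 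Summing yields an SoS inequality of the schematic form
\[
\cA \proves{O(1)} \frac{\alpha_a(w)\alpha_b(w)}{n^2}\iprod{\Sigma_a-\Sigma_b,A}^2 \leq O(1)\cdot\frac{k}{n}\sum_i w_i p_w(X_i')^2 + O(\xi)\cdot(\text{Frob}) + O(\e k^{10}).
\]

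For the \textbf{upper bound}, I would combine the moment axioms $\cA_{\text{moments}}$ with the matrix axioms $\cA_{\text{matrices}}$. Substituting $y_i = \Sigma^{-1/2}(X_i' - \mu(w))$, the axioms force the degree-$4$ moments of $y$ under $w$ to be within $\delta$ of $\cN(0,\Id)$, and the Gaussian identity $\V_{z\sim\cN(0,\Id)}[\iprod{M, zz^\top}] = 2\|M\|_F^2$ (which has a constant-degree SoS proof for symmetric $M$) with $M = \Sigma^{1/2} A \Sigma^{1/2}$ yields
\[
\cA \proves{O(1)} \tfrac{k}{n}\sum_i w_i p_w(X_i')^2 \leq O(1)\,\|\Sigma(w)^{1/2} A \Sigma(w)^{1/2}\|_F^2 + \delta.
\]
Then I would bound $\Sigma(w)$ by decomposing $w = w|_{S_a} + w|_{S_b} + w|_{\text{rest}}$; using the deterministic concentration of each empirical $\overline{\Sigma}_c$ to $\Sigma_c$, together with the trivial $\Sigma_c \preceq k\Sigma_{\text{mix}}$, gives (in a sense SoS can use via $\cA_{\text{matrices}}$)
\[
\Sigma(w) \preceq O(1)\Paren{\tfrac{k\alpha_a(w)}{n}\Sigma_a + \tfrac{k\alpha_b(w)}{n}\Sigma_b + k\Sigma_{\text{mix}}} + \text{err},
\]
which, after a Cauchy--Schwarz split inside the Frobenius norm, produces the three summands $\|\Sigma_a^{1/2} A \Sigma_a^{1/2}\|_F^2$, $\|\Sigma_b^{1/2} A \Sigma_b^{1/2}\|_F^2$, $\|\Sigma_{\text{mix}}^{1/2} A \Sigma_{\text{mix}}^{1/2}\|_F^2$ (up to $k$-dependent constants absorbed into the $O(\log 1/\xi)^8$ factor after exponentiation). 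Chaining the three steps and treating $\iprod{\Sigma_a - \Sigma_b, A}$ as a \emph{constant} (it is a fixed parameter, not an indeterminate) so we can divide through by it, one obtains a base inequality $\alpha_a\alpha_b/n^2 \leq (\text{Frob numerator})/\iprod{\Sigma_a-\Sigma_b,A}^2 + \text{err}$, which I would then raise to the appropriate even power by iterating $(x+y)^2 \leq 2(x^2+y^2)$ inside SoS, adjusting constants so that the error terms become $\xi^4$ and $\e^2 k^{20}$.

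\textbf{Main obstacle.} The hard part is the third step: encoding the Löwner-order bound on $\Sigma(w)$ as an SoS-usable polynomial inequality that controls $\|\Sigma(w)^{1/2} A \Sigma(w)^{1/2}\|_F^2$ by the three-term sum without incurring a dependence on the mixture's spread parameter $\kappa$. SoS does not reason about matrix inequalities natively, so I would need to either test the inequality against the symbolic square-roots $\Sigma^{\pm 1/2}$ provided by $\cA_{\text{matrices}}$, or else unfold the Frobenius norm as $\iprod{A, \Sigma(w) A \Sigma(w)}$ and bound each resulting cross-term by classifying the pair of samples $(X_i', X_j')$ contributing to it according to which clusters $S_c$ they come from, invoking the deterministic conditions on each component separately. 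This bookkeeping is delicate; it is the same technical point \cite{BK20a} resolve via the spread parameter and that we avoid here at the cost of larger dependence on $k$.
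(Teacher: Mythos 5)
Your overall architecture matches the paper's: an upper bound on the variance of a quadratic witness via the Gaussian moment axioms together with a L\"owner-type comparison of $\Sigma(w)$ against $\Sigma_a,\Sigma_b,\Sigma_{\text{mix}}$ (this is exactly what the paper does in \cref{lem:fake_var_upperbound_quadratic} and \cref{claim:w-to-good}, which split $\sum_{i,j}w_iw_j\Tr(B^\top(X_i'-X_j')(X_i'-X_j')^\top B)$ by cluster membership and invoke hypercontractivity and decorruption), and a lower bound driven by the deterministic conditions. The gap is in your lower bound.

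You take the witnessing statistic to be $p_w(X_i') - p_w(X_j')$ for $i \in S_a$, $j \in S_b$, with $p_w(x) = \iprod{x-\mu(w), A(x-\mu(w))}$, and assert that the mean cross-terms cancel in the symmetric pair. They do not. Writing $X_i - \mu(w) = (X_i-\mu_a) + (\mu_a - \mu(w))$, the difference $p_w(X_i)-p_w(X_j)$ contains the deterministic shift $\iprod{\mu_a-\mu(w),A(\mu_a-\mu(w))} - \iprod{\mu_b-\mu(w),A(\mu_b-\mu(w))}$ as well as the fluctuating cross-terms $2\iprod{X_i-\mu_a, A(\mu_a-\mu(w))} - 2\iprod{X_j - \mu_b, A(\mu_b-\mu(w))}$. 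The lemma is stated for arbitrary $a,b,A$ with no control on $\mu_a-\mu_b$ or on $\mu_a-\mu(w)$ (which moreover depends on the indeterminates $w$), so these terms can dominate $\iprod{\Sigma_a-\Sigma_b,A}$ and are not bounded by the Frobenius norms in the conclusion. Relatedly, the deterministic condition $G_a(A)$ you invoke only constrains $\iprod{X_i-X_j, A(X_i-X_j)}$ for \emph{within-cluster} pairs $(X_i,X_j) \in S_a^2$; it says nothing about single-sample forms $p_w(X_i')$ or about cross-cluster pairs in $S_a\times S_b$. This is precisely why the paper's \cref{lem:fake_var_lowerbound_quadratic} uses a $U$-statistic over quadruples, comparing $p(X_i'-X_j')$ for $(i,j)\in S_a^2$ against $p(X_r'-X_l')$ for $(r,l)\in S_b^2$: within-cluster differences annihilate all mean dependence identically, each concentrates at $2\iprod{\Sigma_a,A}$ resp.\ $2\iprod{\Sigma_b,A}$ by $G_a(A)$ and $G_b(A)$, and the separation $\iprod{\Sigma_a-\Sigma_b,A}$ emerges with only an $O(\log 1/\xi)\cdot(\|\Sigma_a^{1/2}A\Sigma_a^{1/2}\|_F+\|\Sigma_b^{1/2}A\Sigma_b^{1/2}\|_F)$ error. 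Replacing your pair statistic with this quadruple statistic repairs the lower bound and is otherwise compatible with the rest of your plan.
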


Now we can put everything together to prove \cref{lem:const_acc}.
To prove \cref{lem:const_acc}, we are going to rely on the following key proposition, which puts together the lemmas above to analyze the output of \textsc{split}.

\begin{proposition}\label[proposition]{prop:split}
  For every $\gamma > 0$ and $k \in \N$ there is $f(k,\gamma) > 0$ satisfing the following.
  If $k \geq 2$ and $G_1,\ldots,G_k$ are $d$-dimensional Gaussians such that $\min_{i \neq j} \dtv(G_i,G_j) \geq 1 - f(k,\gamma)$, then exists a nontrivial partition of $[k]$ into $A,B \subseteq [k]$ such that for all $\e,\delta,\xi \leq f(k,\gamma)$ and $t \geq 1/f(k,\gamma)$, if $Y_1,\ldots,Y_n \in \R^d$ satisfy the $(\delta,\xi,t)$ deterministic conditions with respect to $G_1,\ldots,G_k$, and hence are partitioned into ground-truth clusters $S_1,\ldots,S_k$, and $X_1,\ldots,X_n \in \R^d$ are any $\e$-corruption of $Y_1,\ldots,Y_n$, then given the set of degree-$t$ pseudoexpectations satisfying $\cA(X_1,\ldots,X_n,t,\delta,\e,k)$ and sufficiently-small error parameter $\eta \leq f(k,\gamma)$, with probability at least $1-1/k^{100}$ the algorithm \textsc{split} returns a list of subsets $R_1,\ldots,R_m \subseteq [n]$ such that there is a subset $S \subseteq [m]$ such that $R'$, the result of adding or removing elements arbitrarily from $\cup_{a \in S} R_a$ to make its size the nearest multiple of $n/k$, is a $\gamma$-corruption of $\cup_{a \in A} S_a$ -- that is, the samples $Y_i$ from the submixture defined by $A \subseteq [k]$.
\end{proposition}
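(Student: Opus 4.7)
The strategy is: (1) use \cref{cor:TV_param} to identify a nontrivial partition $A,B \subseteq [k]$ that is ``witnessed'' by one of the three separation modes; (2) combine the corresponding separation lemma (\cref{lem:mean_sep}, \cref{lem:var_sep}, or \cref{lem:covar_sep}) with linearity of $\pE$ to bound the cross-term $\pE (\sum_{a \in A} \alpha_a(w))(\sum_{b \in B} \alpha_b(w))$ by $\delta_1 n^2$ for a small $\delta_1=\delta_1(k,\gamma)$; (3) feed this into \cref{lem:rounding}; (4) recover $R'$ from $R_1,\ldots,R_m$ by a brute-force search over $S \subseteq [m]$. Throughout, I choose $f(k,\gamma)>0$ sufficiently small at the end so that every parameter inequality below holds; the constants $C, C'$ in \cref{cor:TV_param} are taken to be sufficiently large polynomials in $k$ and $\log(1/\gamma)$.

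The main work is Step (2). Apply \cref{cor:TV_param} with $C, C'$ large. In Case 1 there is a hyperplane direction $v$ such that for every $a \in A,\, b \in B$,
\[
  \iprod{\mu_a-\mu_b,v}^2 \;\geq\; \max\!\left\{ k^{C}(\iprod{v,\Sigma_a v}+\iprod{v,\Sigma_b v}),\ \tfrac{1}{k^2}\iprod{v,\Sigma v}\right\}.
\]
Since $\E_{X,X'\sim\tfrac1k\sum G_i}\iprod{X-X',v}^2 = 2\iprod{v,\Sigma v}$, the second bound provides a valid value $\rho = \Theta(1/k^2)$ for \cref{lem:mean_sep}, while the first makes the leading ratio $(\iprod{v,\Sigma_a v}+\iprod{v,\Sigma_b v})/\iprod{\mu_a-\mu_b,v}^2 \le k^{-C}$; choosing $\tau, \xi, \e, \delta$ polynomially small in $k^{-C}$ (and hence $f(k,\gamma)$ sufficiently small) yields $\cA \proves{O(s)} (\alpha_a\alpha_b/n^2)^s \le \delta_1' / k^4$ for some tiny $\delta_1'$. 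Case 2 is handled identically via \cref{lem:var_sep}: the two conditions from \cref{cor:TV_param}, namely $\iprod{v,\Sigma_a v}\ge k^{C'}\iprod{v,\Sigma_b v}$ and $\iprod{v,\Sigma_b v}/\iprod{v,\Sigma v}\le(\iprod{v,\Sigma_a v}/\iprod{v,\Sigma v})^{C'}$, respectively drive the leading ratio $(\iprod{v,\Sigma_a v}/\iprod{v,\Sigma_b v})^{\Omega(s)}$ to zero and provide the required $\rho$ (after swapping the roles of $a$ and $b$ so that the smaller-variance component plays the role of ``$a$'' in the lemma). Case 3 uses \cref{lem:covar_sep} directly: by the conclusion of \cref{cor:TV_param}, the denominator $\iprod{\Sigma_a-\Sigma_b,A_{ab}}^8 = \|I-\Sigma_a^{-1/2}\Sigma_b\Sigma_a^{-1/2}\|_F^{16}$ dominates each Frobenius norm in the numerator by a factor of $k^{\Omega(k)}$.

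Step (3) aggregates these pairwise estimates. Taking the $s$-th SoS-root of each inequality is not needed: since each $(\alpha_a\alpha_b/n^2)^s \le \delta_1'/k^4$ and $\alpha_a,\alpha_b \le n$ under $\cA$ (because $w_i^2=w_i$ and $\sum w_i=n/k$), I can use \cref{fact:sos-1} to conclude $\alpha_a(w)\alpha_b(w)/n^2 \le (\delta_1')^{1/s}/k^{4/s} + O(1/k^2)$ as an SoS inequality, and hence (using $\alpha_a,\alpha_b \ge 0$ under $\cA$) summing over $(a,b)\in A\times B$ gives $\pE (\sum_{a\in A}\alpha_a(w))(\sum_{b\in B}\alpha_b(w)) \le \delta_1 n^2$ for any degree-$t$ pseudoexpectation $\pE$ satisfying $\cA$, where $\delta_1\to 0$ as $f(k,\gamma)\to 0$. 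Restricting these to degree-$2$ pseudoexpectations, and adding the degree-$2$ pseudoexpectation supported on the uniform distribution over $\mathbbm{1}_{S_1},\ldots,\mathbbm{1}_{S_k}$ (which satisfies $\cA$ by \cref{def:deterministic} and the fact that the zero corruption $Y_i=X_i$ is feasible), gives the family $\cC$ required by \cref{lem:rounding}.

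Step (4) then applies \cref{lem:rounding} with this $\cC$, this $\delta_1$, and the given $\eta$, producing $R_1,\ldots,R_m$ with $m=O(k\log 1/\eta)$ in expectation satisfying $\E|\cup_t R_t|\ge (1-\eta)n$ and $\E\min(|R_t\cap S'|,|R_t\cap T'|)\le \delta_1 n$ for $S'=\cup_{a\in A}S_a,\ T'=\cup_{b\in B}S_b$. By a union bound over $t$ plus Markov, with probability at least $1-1/k^{100}$ we simultaneously have $|\cup_t R_t| \ge (1-\eta')n$ and $\sum_t \min(|R_t\cap S'|,|R_t\cap T'|) \le \delta_2 n$, for $\eta',\delta_2 \le \gamma/(4k)$ after adjusting $f(k,\gamma)$. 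On this event, the subset $S^\star=\{t: |R_t\cap S'|\ge |R_t\cap T'|\}$ satisfies $|\,(\cup_{t\in S^\star}R_t)\,\triangle\, S'| \le \delta_2 n + \eta' n \le \gamma|A|n/k$, and rounding the size to the nearest multiple of $n/k$ costs at most $n/k$ further mismatches, which is absorbed into the $\gamma$-corruption budget by taking $\gamma$ at least $2/k$ (handled by the appropriate choice of $f$); the brute-force enumeration over all $2^m$ subsets ensures that this $S^\star$ is among the candidates considered.

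\textbf{The hardest part} is Step (2) in Cases 1 and 2: it requires knowing that the hypothesis on $\rho$ in \cref{lem:mean_sep}, \cref{lem:var_sep} can in fact be met with $\rho$ only inverse-polynomial in $k$, which is exactly the additional content of \cref{cor:TV_param} beyond the pairwise \cref{lem:TV_param}. Without the ``variance comparability to the whole mixture'' condition in \cref{cor:TV_param}, the $\rho^{-O(s)}$ factors in \cref{lem:mean_sep,lem:var_sep} could blow up arbitrarily and no uniform choice of $f(k,\gamma)$ would work. Everything else is bookkeeping: routine application of \cref{lem:rounding}, Markov's inequality, and brute-force search over subsets.
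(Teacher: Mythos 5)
Your proposal is correct and follows essentially the same route as the paper's proof: invoke \cref{cor:TV_param} to obtain the partition and the case analysis, apply \cref{lem:mean_sep}, \cref{lem:var_sep}, or \cref{lem:covar_sep} (with $\rho$ supplied by the mixture-variance comparability in \cref{cor:TV_param}) to bound $\pE\,\alpha_a(w)\alpha_b(w)$ across the cut, and then feed this into \cref{lem:rounding} with a Markov/union-bound argument and a majority-vote choice of $S^\star$. The one small quibble is that \cref{fact:sos-1} is not the right tool for passing from $\pE\big[(\alpha_a\alpha_b/n^2)^s\big]\le(\gamma')^s$ to $\pE\big[\alpha_a\alpha_b/n^2\big]\le\gamma'$ — that step is done via pseudoexpectation Cauchy--Schwarz (\cref{fact:pseudo-expectation-cauchy-schwarz}) for $s$ a power of two — but this is routine and the paper itself elides it.
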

\begin{proof}
  By \cref{cor:TV_param}, with $C = C(k,\gamma)$ and $C' = C'(k,\gamma)$ to be chosen later, the mixture $G_1,\ldots,G_k$ can be in one of three cases; we address each in turn.
  We will show that for any $\gamma$ of our choosing, by choosing sufficiently-large $C(k,\gamma), C'(k,\gamma)$ we can ensure that there is a partition $A,B \subseteq [k]$ such that any $\pE$ which satisfies $\cA$ has
\[
\pE \Paren{\sum_{i \in \cup_{a \in A} S_a} w_i}\Paren{\sum_{i \in \cup_{a \in B} S_a} w_i} \leq \gamma' n^2\mper
\]

Suppose first that case (1) of \cref{cor:TV_param} applies, and let $A,B$ be the given partition of $[k]$.
Then for every $a \in A, b \in B$, by \cref{cor:TV_param}, we have $\iprod{\mu_a - \mu_b,v}^2 \geq k^{C} (\iprod{v, \Sigma_a v} + \iprod{v, \Sigma_b v}), \iprod{v, \Sigma_{\text{mix}} v}/k^2$, where $\Sigma_{\text{mix}}$ is the covariance of the mixture $\tfrac 1k \sum_{i \leq k} G_i$.
So, choosing $\tau,\xi < \poly(\gamma,1/k)$, $\e \leq \poly(\gamma,1/k)^{\poly(1/\gamma,k)}$, and $\xi \leq \poly(\gamma,1/k)$, we have by \cref{lem:mean_sep} that
\begin{align}\label{eq:main-1}
\cA \proves{O(s)} \Paren{\frac{\alpha_a(w)\alpha_b(w)}{n^2}}^s \leq \gamma^{-O(s)} k^{O(s)} \cdot k^{-Cs} \mcom
\end{align}
where $s \leq \poly(\gamma,1/k)$.
For large-enough $C(k,\gamma)$, this is at most $(\gamma')^s$, for any $\gamma'(k,\gamma)$ we choose.

Next, suppose that case (2) of \cref{cor:TV_param} applies, for $C'(k,\gamma)$ to be chosen later, and let $A,B$ be the given partition of $[k]$.
The second part of case (2) ensures that for all $a \in A, b \in B$, we have
$\iprod{v, \Sigma_a v} / \iprod{v, \Sigma_b v} \leq (\iprod{v, \Sigma_a v} / \iprod{v, \Sigma_{\text{mix}} v})^{C' - 1}$
Taking $\rho = \min_{a \in A} \iprod{v, \Sigma_a v} / \iprod{v, \Sigma v}$, we can choose $C'(c')$ sufficiently large that if we choose $\tau = (\iprod{v, \Sigma_b v} / \iprod{v, \Sigma_a v})^{c'}$ for any $c' > 0$, then for small-enough $\e, \xi$, using $\iprod{v, \Sigma_a v} \geq k^{C'} \iprod{v, \Sigma b}$ and $C'$ sufficiently large,
\[
\cA \proves{O(s)} \Paren{\frac{\alpha_a(w)\alpha_b(w)}{n^2}}^s \leq \gamma^{-O(s)} k^{O(s)} \cdot k^{-C's}
\]
for any $\gamma'$ of our choosing.
Finally, case (3) follows the same argument, using \cref{lem:covar_sep}.

We have concluded that for any $\gamma'(k,\gamma)$, we can choose $C,C'$ sufficiently small to obtain \cref{eq:main-1} for any $\pE$ satisfying $\cA(X_1,\ldots,X_n,t,\delta,\e,k)$.
Now let us see what happens when the set of such $\pE$ is given to \textsc{split}.

Let $R_1,\ldots,R_m$ be the subsets output by \cref{lem:rounding}, with $\eta$ in \cref{lem:rounding} set to $\gamma'$ and hence $m \leq O(k \log 1/\gamma')$.
By Markov's inequality and a union bound and the guarantee of \cref{lem:rounding}, with probability at least $1 - 1/k^{99}$ over the randomness in \cref{lem:rounding}, every $R_t$ has $\min(|R_t \cap A'|, |R_t \cap B'|) \leq m k^{100} \gamma'n$, and $|\bigcup_{t \leq m} R_t| \geq (1- k^{100} \gamma') n$.

Let $A' = \cup_{a \in A} S_a$ and similarly for $B'$.
Let $R = \{ i \in R_t \, : \, |R_t \cap A'| \leq |R_t \cap B'| \}$.
We analyze $|R \cap B'|$ and $|\overline{R} \cap A'|$.
On one hand, $\bigcup_{t \leq m} R_t$ contains all but $k^{100} \gamma' n$ of $A'$.
On the other hand, for every $t$ such that $|R_t \cap B'| \leq |R_t \cap A'|$, we have $|R_t \cap B'| \leq m k^{100} \gamma' n$.
So $|R \cap A'| \geq |A'| - k^{100}\gamma'n - m^2 k^{100} \gamma' n$.
Now, $|A'| \geq n/k$, so $|R \cap A'| \geq (1 - \poly(m,k) \gamma') |A|$.
By similar analysis, $|R| \leq (1 + \poly(m,k)\gamma') |A|$.
So, adding or removing samples as necessary to arrive at a set size which is a multiple of $n/k$ finishes yields the conclusion of the proposition, for $\gamma'$ sufficiently small.
\end{proof}

\begin{proof}[Proof of \cref{lem:const_acc}]
By Lemma~\ref{lem:concentration}, by taking $d^{F(\rho,k)}$ samples we can assume that the $(\delta,\xi,t)$ deterministic conditions hold for any $\delta(k,\rho),\xi(k,\rho), t(k,\rho)$ of our choosing.
Note that this also implies that the deterministic conditions hold for the subset of samples from any submixture.

Now \cref{lem:const_acc} follows by induction on $k$ using \cref{prop:split}, which ensures that at each level of recursion in \textsc{cluster}, there is always a set $R_S$ which gives an $\e$-corrupted sample from a proper submixture.
\end{proof}

\section{Bounds on the Variance of Pseudoclusters}
\label[section]{sec:sep-lemmas}

Our goal in this section is to prove \cref{lem:mean_sep,lem:var_sep,lem:covar_sep}.
Each one of these follows from appropriate lower and upper bounds on the variance of linear and quadratic polynomials under the empirical distribution of samples in the ``pseudocluster'' captured by $\cA$.
We state those variance bounds and prove them in \cref{sec:variance-bounds-proofs}, once we have accumulated some additional tools in \cref{sec:tools}.

\paragraph{Notation}
  In the following \cref{lem:fake_var_upperbound,lem:fake_var_lowerbound,lem:fake_var_upperbound_quadratic,lem:fake_var_lowerbound_quadratic,lem:mixture_var}, let $Y_1,\ldots,Y_n \in \R^d$ satisfy the $(\delta,\xi,t)$-deterministic conditions (\cref{def:deterministic}) with respect to Gaussians $G_1,\ldots,G_k$, where $G_i = \cN(\mu_i,\Sigma_i)$.
  Let $S_1,\ldots,S_k$ be the partition of $Y_1,\ldots,Y_n$ guaranteed by \cref{def:deterministic}.
  Let $X_1,\ldots,X_n$ be an $\e$-corruption of $Y_1,\ldots,Y_n$ and let $\cA$ be the clustering axioms (\cref{def:axioms}) for $X_1,\ldots,X_n$.

The first two lemmas capture the variance of linear functions under the uniform distribution over the pseudocluster.

\begin{lemma}[Variance upper bound, linear functions]
\label[lemma]{lem:fake_var_upperbound}
For all $\tau > 0$ there exists an even $s = \tilde{O} \Paren{ 1/\tau^2}$, such that for all $a \in [k]$,  $\eps,\delta \leq s^{-O(s)} k^{-20}$ and $v \in \R^d$,
\begin{align*}
\cA &\proves{O(s)}  \iprod{v, \Sigma_a v}^{s}  \geq s^{-O(s)} \cdot \Brac{ \Paren{\frac{\alpha_a(w)}{n}}^8 - O(\tau^4) - \epsilon^2 k^{10} s^{O(s)} } \cdot \iprod{v, \Sigma(w)v}^{s} \mper
\end{align*}
\end{lemma}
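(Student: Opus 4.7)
The plan is to combine SoS-certifiable Gaussian anti-concentration with the per-cluster concentration guarantee $E_a(v)$ from \cref{def:deterministic}. Write $\sigma_w^2 := \iprod{v, \Sigma(w) v}$ and $\sigma_a^2 := \iprod{v, \Sigma_a v}$; the inequality is equivalent to upper-bounding $(\alpha_a(w)/n)^8 \sigma_w^{2s}$ by $s^{O(s)} \sigma_a^{2s}$ plus lower-order slack, provably in degree-$O(s)$ SoS.

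First, I would invoke the SoS anti-concentration tool of \cite{karmalkar2019list, raghavendra2020list, bakshi2020list, raghavendra2020listB}: for the chosen $\tau$ and $s = \tilde{O}(1/\tau^2)$, there is a univariate polynomial $q_\tau$ of degree $O(s)$ with $q_\tau(y)^2 \geq \Ind[y^2 \leq \tau^2]$ pointwise in SoS and with $\E_{g \sim \cN(0,1)} q_\tau(g)^2 \leq \tilde{O}(\tau)$. Applying $q_\tau$ to $\iprod{v, X_i' - \mu(w)}$ --- homogenized by multiplication with suitable even powers of $\sigma_w^2$ (which by $\cA_{\text{matrices}}$ is a polynomial in the indeterminates) so that no division appears --- the moment axiom $\cA_{\text{moments}}$ matches the normalized pseudocluster's low-degree moments to those of the standard Gaussian up to additive $\delta$. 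This yields an SoS upper bound of roughly $\tilde{O}(\tau)\, \sigma_w^{2s} + \delta \cdot (\text{poly in } \sigma_w^2)$ on the $w$-weighted sum $\tfrac{k}{n} \sum_i w_i \, q_\tau(\iprod{v, X_i' - \mu(w)})^2 \, \sigma_w^{2(s-1)}$.

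Next, I would lower-bound the same sum by the contribution of samples in $S_a \cap \{i : w_i = 1\}$. By the triangle inequality $\iprod{v, X_i - \mu(w)}^2 \leq 2\iprod{v, X_i - \mu_a}^2 + 2 \iprod{v, \mu_a - \mu(w)}^2$ and the deterministic condition $E_a(v)$, at least $(1-\xi)(n/k)$ of the samples in $S_a$ satisfy $\iprod{v, X_i - \mu_a}^2 \leq O(\log 1/\xi)\, \sigma_a^2$. The mean-shift term $\iprod{v, \mu_a - \mu(w)}^2$ is dominated by $\sigma_w^2/(\alpha_a(w)/n)$ via the variance-decomposition fact that the total variance of the $w$-weighted distribution dominates the squared shift of any subset's mean from the overall mean. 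Hence whenever $\sigma_w^2$ greatly exceeds $\sigma_a^2$, the pointwise inequality $q_\tau(\cdot)^2 \geq \Ind[\iprod{v, X_i - \mu(w)}^2 \leq \tau^2 \sigma_w^2]$ fires for most $i \in S_a \cap \{w_i = 1\}$, yielding a lower bound of order $(\alpha_a(w)/n)^c \, \sigma_w^{2s}$ on the same sum. Combined with the upper bound from Step 1 and after absorbing the $\eps n$ corrupted samples via the decorruption lemma (\cref{lem:decorrupt-intro}) together with SoS Cauchy--Schwarz (\cref{fact:pseudo-expectation-cauchy-schwarz}), one recovers the stated inequality, with the $O(\tau^4)$ slack coming from $q_\tau$'s Gaussian expectation and the $\eps^2 k^{10} s^{O(s)}$ slack from decorruption.

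The main obstacle is the SoS bookkeeping required to avoid division: anti-concentration is naturally stated in terms of the ratio $y^2/\sigma_w^2$, but SoS does not natively support division, so one must homogenize throughout by multiplying by $\sigma_w^{2s}$ and compose the anti-concentration pointwise bound with the Gaussian moment-matching bound carefully, using $\cA_{\text{matrices}}$ to justify each manipulation as a polynomial identity. A secondary subtlety is tracking where the specific exponent $8$ on $\alpha_a(w)/n$ originates: it is tuned so that two applications of Cauchy--Schwarz (one inside the pseudocluster-moment sum, one for decorruption) compose cleanly into a total degree-$O(s)$ proof whose slack terms appear additively ($O(\tau^4)$ and $\eps^2 k^{10} s^{O(s)}$) rather than multiplicatively, as required by the downstream use of this lemma in \cref{lem:mean_sep,lem:var_sep}.
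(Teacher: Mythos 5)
Your proposal has the right ingredients (certifiable anti-concentration, the moment axioms, decorruption), but the central step does not go through as an SoS proof. You propose to lower-bound the $q_\tau$ sum by arguing that ``whenever $\sigma_w^2$ greatly exceeds $\sigma_a^2$, the pointwise inequality fires for most $i \in S_a$.'' That is a case split on the value of the polynomial $\iprod{v,\Sigma(w)v}$ in the indeterminates $w$, and such conditioning is not expressible inside the proof system; this is exactly the kind of dichotomy the paper flags as the obstruction to SoS-ifying the naive argument. Moreover, even in the ``good'' case the indicator need not fire: you center $q_\tau$ at $\mu(w)$, and your own bound $\iprod{v,\mu_a-\mu(w)}^2 \lesssim \sigma_w^2/(\alpha_a(w)/n)$ permits the cluster $S_a$ to sit entirely outside the window $[\mu(w)-\tau\sigma_w,\,\mu(w)+\tau\sigma_w]$ (the shift can exceed $\tau\sigma_w$ by a factor $n/(\tau^2\alpha_a(w))$), so the claimed lower bound of order $(\alpha_a(w)/n)^c\,\sigma_w^{2s}$ fails.

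The paper's proof avoids both problems with two devices you should adopt. First, it applies $q_\tau$ to the \emph{pairwise differences} $\iprod{X_i'-X_j',v}$ for $i,j\in S_a$ (normalized by $\sqrt{2\iprod{v,\Sigma(w)v}}$, which is legitimate since $q_\tau$ is even), so the mean-shift term never appears. Second, instead of a case split it uses the everywhere-valid univariate SoS inequality $q_\tau(x) + (x/\tau)^{2s} \geq 1$: averaging over $S_a\times S_a$ gives
\[
\tau^{2s}\,\frac{\alpha_a(w)^2 k^2}{n^2}\,\iprod{v,\Sigma(w)v}^{s} \;\leq\; \tau^{2s}\sum_{i,j\in S_a} w_iw_j\,\iprod{v,\Sigma(w)v}^{s} q_\tau(\cdot) \;+\; \sum_{i,j\in S_a} w_iw_j \iprod{X_i'-X_j',v}^{2s},
\]
where the first term is $O(\tau)\cdot\tau^{2s}\iprod{v,\Sigma(w)v}^s$ by anti-concentration plus \cref{lem:emp-pop}, and the second term --- which is where the ``bad case'' of your dichotomy is absorbed --- is bounded after decorruption (\cref{lem:mom_decorruption_diff}) and \cref{lem:emp-pop-2} by $(2s)^{O(s)}\iprod{v,\Sigma_a v}^{s}$ using the Gaussian $2s$-th moment of the true cluster. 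Note this means the lemma's left-hand side $\iprod{v,\Sigma_a v}^s$ arises from the cluster's high moments (condition~1 of \cref{def:deterministic}), not from the condition $E_a(v)$ you invoke; $E_a(v)$ is not used in this lemma at all.
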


\begin{lemma}[Variance lower bound, linear functions]
\label[lemma]{lem:fake_var_lowerbound}
For all $v \in \R^d$ and $\eps, \delta \ll 1/k^{10}$ and $t$ a sufficiently large constant, we have that
\begin{align}
	\cA &\proves{O(1)} \iprod{v, \Sigma(w)v}^4 \geq \Omega\Paren{\iprod{v, \Sigma_a v} }^4 \cdot \xi^4 \cdot \Brac { \Paren{\frac{ \alpha_a(w)}{n}}^8 - O(\xi^4) - O\Paren{\frac{\epsilon k^{10}}{\xi^4}}} \label{eq:fake_var_lb_1}
        \end{align}
        for every $a \in [k]$, and
        \begin{align}
        \cA \proves{O(1)} \iprod{v, \Sigma(w)v}^4 & \geq \Omega\Paren{ \iprod{\mu_a - \mu_b, v}^8} \cdot \Brac{ \Paren{\frac{\alpha_a(w)\alpha_b(w)}{n^2}}^4 - O(\xi^4) - O(\epsilon^2 k^{10}) } \nonumber \\
        & \qquad -  O(\epsilon^2 k^{10} + (\log 1/\xi)^2) \cdot \Paren {\iprod{v, \Sigma_a v}^4 + \iprod{v,\Sigma_b v}^4} \label{eq:fake_var_lb_2}
        \end{align}
        for every $a,b \in [k]$.
\end{lemma}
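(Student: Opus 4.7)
Both inequalities will be derived from an SoS-provable scalar lower bound on $\iprod{v,\Sigma(w)v}$ which is then raised to the fourth power. The starting point is the variance-of-pairs identity
$$\iprod{v,\Sigma(w)v} \;=\; \frac{k^2}{2n^2} \sum_{i,j \in [n]} w_i w_j \iprod{X_i' - X_j',\,v}^2,$$
which has a constant-degree SoS proof from $\sum_i w_i = n/k$ and $w_i^2 = w_i$ (expand $\Sigma(w)$, symmetrize). To replace the decorrupted indeterminates $X_i'$ by the observed samples $X_i$, the corruption axioms $z_i(X_i - X_i') = 0$, $z_i^2 = z_i$ give the pointwise SoS inequality $\iprod{X_i' - X_j',v}^2 \geq z_i z_j\iprod{X_i - X_j,v}^2$ by writing the left-hand side as $z_i z_j\iprod{X_i - X_j,v}^2 + (1 - z_i z_j)\iprod{X_i' - X_j',v}^2$ and noting that $(1 - z_i z_j)$ is SoS in the booleans $z_i, z_j$. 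This reduces the problem to lower-bounding $\sum_{i,j} w_i w_j z_i z_j \iprod{X_i - X_j,v}^2$ on well-chosen subsets of pairs.

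For \eqref{eq:fake_var_lb_1} I drop all terms outside the constant indicator $(i,j) \in S_a^2 \cap F_a(v)$; on retained pairs, the deterministic condition yields $\iprod{X_i - X_j,v}^2 \geq \Omega(\xi)\iprod{v,\Sigma_a v}$, reducing everything to an SoS lower bound on $\sum_{(i,j) \in S_a^2 \cap F_a(v)} w_i w_j z_i z_j$. Since the deterministic condition ensures $|S_a^2 \setminus F_a(v)| \leq \xi (n/k)^2$ and corruption affects at most $O(\eps n)$ indices, an SoS expansion of $\bigl(\sum_{i \in S_a} w_i z_i\bigr)^2$ together with the axiom $\sum_i(1-z_i) \leq \eps n + (n - n/k)$ yields a lower bound of $\alpha_a(w)^2 - O(\xi (n/k)^2) - O(\eps n^2 k^{O(1)})$. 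Normalizing and raising to the fourth power produces the bound in the form stated.

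For \eqref{eq:fake_var_lb_2} I instead restrict the double sum to pairs $i \in S_a \cap E_a(v)$, $j \in S_b \cap E_b(v)$. Decomposing $\iprod{X_i - X_j,v} = \iprod{\mu_a - \mu_b,v} + C_{ij}$ with $C_{ij} = \iprod{X_i - \mu_a,v} - \iprod{X_j - \mu_b,v}$, the SoS AM--GM inequality $(A + C)^2 \geq \tfrac{1}{2}A^2 - C^2$ (an immediate consequence of $(A + 2C)^2 \geq 0$) yields a per-pair lower bound of $\tfrac12 \iprod{\mu_a - \mu_b,v}^2 - C_{ij}^2$. Multiplying by $w_i w_j z_i z_j$ and summing, the deterministic condition $\iprod{X_i - \mu_a,v}^2 \leq O(\log 1/\xi)\iprod{v,\Sigma_a v}$ on $E_a(v)$ (and analogously for $b$) bounds the $\sum C_{ij}^2$ contribution by $O(\log 1/\xi)\bigl(\iprod{v,\Sigma_a v} + \iprod{v,\Sigma_b v}\bigr) \cdot \alpha_a(w)\alpha_b(w)/n^2$, with $O(\xi) + O(\eps k^{O(1)})$ errors from pairs outside the indicator set and from corruption. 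After the fourth-power step these errors collapse into the stated $O(\xi^4) + O(\eps^2 k^{10}) + (\log 1/\xi)^2$ terms.

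The main obstacle is to coordinate three distinct SoS-compatible error sources -- corrupted indices ($z_i = 0$), ``exceptional'' pairs outside $F_a(v), E_a(v)$, and cross-terms introduced by raising the scalar bound to the fourth power -- so that each collapses cleanly into the precise pre-factor stated in the lemma. The fourth-power step relies on an SoS inequality of the form $(X - Y)^4 \geq X^4/O(1) - O(Y^4)$, combined with a rough a priori upper bound on $\iprod{v,\Sigma(w)v}$ derivable from the moment axioms (the same bound used in \cref{lem:fake_var_upperbound}), to absorb cross-terms like $X^3 Y$. Careful counting via $\sum_{i \in S_a} w_i z_i$ is also required since the axioms enforce $\sum_i z_i = (1-\eps)(n/k)$ without a direct pointwise relation $z_i \leq w_i$.
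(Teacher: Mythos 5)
Your overall skeleton (expand $2\iprod{v,\Sigma(w)v}=\tfrac{k^2}{n^2}\sum_{i,j}w_iw_j\iprod{X_i'-X_j',v}^2$, restrict to the good pairs supplied by the deterministic conditions, then count) matches the paper, but your decorruption mechanism is different and is where the argument breaks. You pass from $X_i'$ to the observed samples via the pointwise inequality $\iprod{X_i'-X_j',v}^2\ge z_iz_j\iprod{X_i-X_j,v}^2$ (which is fine), but you then need $\sum_{(i,j)\in S_a^2}w_iw_jz_iz_j\gtrsim \alpha_a(w)^2 - O(\xi (n/k)^2) - O(\eps n^2 k^{O(1)})$. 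Under the corruption axiom as you quote it, $\sum_i z_i=(1-\eps)(n/k)$, nothing ties $z$ to $w$: there are exact feasible points where $w$ is the indicator of $S_a$ and $X_i'=Y_i$ on $S_a$, while $z$ is supported entirely outside $S_a$ (with $X_i'=X_i$ there), so the left-hand side is $0$ while $\alpha_a(w)=n/k$. The inequality you need is therefore false at feasible points and admits no SoS proof, and your own accounting signals this: with $\sum_i(1-z_i)\approx n(1-1/k)$ the slack is $\Theta(n^2)$, not $O(\eps n^2k^{O(1)})$. (If one instead reads the axiom as $\sum_i z_i=(1-\eps)n$ --- the form actually used in the appendix proof of \cref{lem:mom_decorruption_diff} --- then $\sum_{i\in S_a}w_i(1-z_i)\le\eps n$ and a difference-of-squares argument does give your count; but as written your bound does not follow from the axiom you invoke.) A secondary point: $F_a(v)$ and $E_a(v)$ are defined with respect to the uncorrupted $Y_i$, so after your $z$-truncation to the observed $X_i$ you must further restrict to pairs of uncorrupted indices (a fixed index set, costing only $O(\eps n^2)$ pairs) before the per-pair numeric bounds apply; you gesture at this but it needs to be explicit.

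The paper avoids the $z$-counting issue entirely: it restricts the nonnegative pairwise sum to $S_a^2$ keeping only $w_iw_j$, and then invokes \cref{lem:mom_decorruption_diff} (a Cauchy--Schwarz/higher-moment argument, run directly at the fourth power) to replace $X_i'-X_j'$ by $Y_i-Y_j$ with additive error $O(\eps^2k^{10})\bigl(\iprod{v,\Sigma_a v}^4+\iprod{v,\Sigma(w)v}^4\bigr)$, absorbed using $\eps\ll k^{-10}$; the deterministic conditions are then applied to the $Y_i$, and the pair count is exactly $\alpha_a(w)^2$ minus the $\xi(n/k)^2$ exceptional pairs, with no $z$ variables in sight. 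Working with fourth powers of the nonnegative pairwise sums from the start also removes your need for the $(X-Y)^4$ trick and for any a priori upper bound on $\iprod{v,\Sigma(w)v}$ (which in any case would come from \cref{lem:mixture_var}, not \cref{lem:fake_var_upperbound}). Your treatment of the mean-separated case, including the $(A+C)^2\ge\tfrac12A^2-C^2$ step and the use of $E_a(v),E_b(v)$, mirrors the paper's sketch and is fine, but it inherits the same counting gap. To repair the proposal, either justify and use the global coverage constraint $\sum_i z_i=(1-\eps)n$ explicitly, or follow the paper and decorrupt via \cref{lem:mom_decorruption_diff} so that $z$ never enters the count.
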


The next two lemmas give upper and lower bounds on the variance of \emph{quadratic} polynomials under the uniform distribution on the pseudocluster.
\cref{lem:covar_sep} follows by combining the two bounds and simplifying.

\begin{lemma}[Variance upper bound, quadratic functions]
\label[lemma]{lem:fake_var_upperbound_quadratic}
\Snote{TODO: revise lemma statement and proof}
	Let $p(x) := \iprod{x,Ax}$ where $A \in \R^{ d \times d}$ and $S \subseteq [k]$.
        Suppose $\Sigma_S = \E_{a,b \sim S} \E_{X \sim G_a, Y \sim G_b} (X-Y)(X-Y)^\top$, and $\alpha_S(w) := \sum_{i \in S} \Paren{ \sum_{j \in S_i} w_j}$ Then for $\eps, \delta < 0.1 k^{-10}$ and  $t$ a sufficiently large constant, we have
	\begin{align*}
\cA \proves{O(1)} &\Paren{ \Var_{X,Y \sim \cN(\mu(w), \Sigma(w))}[p(X-Y)]}^2 \leq O(\| \Sigma_S^{1/2} A \Sigma_S^{1/2} \|_F^{4}) + O\Paren{\frac{\alpha_{\overline{S}}} n } \cdot (\|\Sigma_S^{1/2} A \Sigma^{1/2} \|_F^4 + \|\Sigma^{1/2} A \Sigma^{1/2}\|_F^4 )\mper
\end{align*}

\end{lemma}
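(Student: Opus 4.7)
The plan is to convert the formal Gaussian variance on the left-hand side into an empirical sum over pairs of samples, then decompose by true cluster membership and regroup according to the partition $S$ versus $\overline{S}$.

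First, since $X - Y \sim \cN(0, 2\Sigma(w))$ whenever $X, Y \sim \cN(\mu(w), \Sigma(w))$ are independent, the standard formula for the variance of a quadratic form in a Gaussian yields $\Var[p(X-Y)] = 8\,\mathrm{tr}(A\Sigma(w)A\Sigma(w)) = 8\|\Sigma(w)^{1/2} A \Sigma(w)^{1/2}\|_F^2$, so the squared variance on the LHS equals $64 (\mathrm{tr}(A\Sigma(w)A\Sigma(w)))^2$. Expanding $\Sigma(w)$ via its definition and using $w_i^2 = w_i$, I rewrite this (as a constant-degree polynomial in $w, X'$) as
\[
\frac{k^4}{n^4} \Paren{\sum_{i, j \in [n]} w_i w_j \iprod{X_i' - \mu(w), \, A(X_j' - \mu(w))}^2}^{\!2}.
\]

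Next, I strip off the effect of the $\epsilon$-fraction of corrupted samples. Using the axioms $z_i^2 = z_i$ and $z_i(X_i - X_i') = 0$ together with a pseudo-Cauchy--Schwarz step in the spirit of \cref{lem:mom_decorruption_diff}, the inner sum is controlled by its counterpart with each $X_i'$ replaced by the uncorrupted value $Y_i$ up to a multiplicative factor of $1 + O(\sqrt{\epsilon})$, which is absorbed into the $O(\cdot)$ constants since $\epsilon \leq 0.1 k^{-10}$. Partitioning the remaining uncorrupted indices by true cluster membership and splitting $[k] = S \sqcup \overline{S}$, the inner sum becomes
\[
(k/n)^2 \sum_{a, b \in [k]} \sum_{(i, j) \in S_a \times S_b} w_i w_j \iprod{Y_i - \mu(w), \, A(Y_j - \mu(w))}^2.
\]

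For each cluster pair $(a, b)$, I bound the inner sum by $O(\alpha_a(w) \alpha_b(w)) \cdot \|(\Sigma_a + \Sigma_b + (\mu_a - \mu_b)(\mu_a - \mu_b)^\top)^{1/2} A (\cdot)^{1/2}\|_F^2$ up to lower-order errors, by expanding $\iprod{Y_i - \mu(w), A(Y_j - \mu(w))}$ around the fixed-means inner product $\iprod{Y_i - \mu_a, A(Y_j - \mu_b)}$ and applying deterministic conditions (2b), (2c). Summing over $(a, b) \in S \times S$ produces the main term $O(\|\Sigma_S^{1/2} A \Sigma_S^{1/2}\|_F^2)$, since the Jensen-type average of these quadratic forms over $a, b \in S$ is dominated by $\Sigma_S$. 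The three remaining groups $S \times \overline{S}$, $\overline{S} \times S$, $\overline{S} \times \overline{S}$ each carry an explicit factor of $\alpha_{\overline{S}}(w)/n$ from the presence of an $\overline{S}$-side weight, and the corresponding Frobenius norms are dominated by $\|\Sigma_S^{1/2} A \Sigma^{1/2}\|_F^2$ or $\|\Sigma^{1/2} A \Sigma^{1/2}\|_F^2$ after using $\Sigma_b \preceq k \Sigma$ and $(\mu_a - \mu_b)(\mu_a - \mu_b)^\top \preceq O(k) \Sigma$. Squaring the resulting inequality and applying $(a + b + c)^2 \leq 3(a^2 + b^2 + c^2)$ gives the claimed three-term bound.

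The main obstacle is the per-cluster bound in Step 3: producing a clean constant-degree SoS proof of it. Expanding $\iprod{Y_i - \mu(w), A(Y_j - \mu(w))}^2$ around the fixed means $\mu_a, \mu_b$ yields cross-terms of the form $\iprod{Y_i - \mu_a, A v}^2$ where $v$ is itself a $w$-dependent vector, and these must be controlled uniformly in $w$ by combining (2b) with SoS-provable hypercontractivity, as in the proof of \cref{lem:fake_var_upperbound}. The cross-cluster case $a \neq b$ requires extra averaging, because (2c) only concentrates $\iprod{Y_i - Y_j, A(Y_i - Y_j)}$ around $2\iprod{\Sigma_a, A}$ for pairs within a single $S_a$. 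The final convexity step replacing cluster-specific covariances by $\Sigma$ is routine but must also be captured in a constant-degree SoS proof via PSD dominations of the form $\sum_{a \in [k]} \Sigma_a + (\mu_a - \bar\mu)(\mu_a - \bar\mu)^\top \preceq k \Sigma$.
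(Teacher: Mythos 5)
Your first step agrees with the paper: both compute $\Var[p(X-Y)] = 8\Tr(\Sigma(w)A\Sigma(w)A)$ and then must relate $\Sigma(w)$ to $\Sigma_S$. After that the routes diverge, and your route has a genuine unresolved gap at exactly the point you yourself flag as ``the main obstacle.'' Your plan expands $\Tr(A\Sigma(w)A\Sigma(w))$ as $(k/n)^2\sum_{i,j}w_iw_j\iprod{X_i'-\mu(w),\,A(X_j'-\mu(w))}^2$, decomposes by cluster pair $(a,b)$, and recenters each term at the population means $\mu_a,\mu_b$. This runs into two problems not resolved by the tools available. First, $\mu(w)$ is itself a polynomial in $w$, so recentering produces cross-terms like $\iprod{\mu_a-\mu(w),A(Y_j-\mu_b)}$ whose control requires uniform-in-$w$ bounds that neither the deterministic conditions nor the clustering axioms supply directly. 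Second, deterministic condition (2c) concentrates only the diagonal quadratic form $\iprod{Y_i-Y_j,A(Y_i-Y_j)}$ for pairs inside a single $S_a$; it says nothing about the off-diagonal bilinear quantity $\iprod{Y_i-\mu_a,A(Y_j-\mu_b)}^2$ for $i\in S_a$, $j\in S_b$ with $a\neq b$, which is what your per-cluster-pair bound needs. You would have to formulate and prove new deterministic conditions (with SoS-friendly concentration) to fill this in; moreover, your final Jensen step $\Sigma_a+\Sigma_b+(\mu_a-\mu_b)(\mu_a-\mu_b)^\top\preceq O(|S|^2)\,\Sigma_S$ introduces $\poly(k)$ losses in the main term that the stated bound does not have.

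The paper avoids all of this by never decomposing by cluster pair and never touching $\mu(w)$. It writes $2\Sigma(w)$ in the pair-difference form $(k/n)^2\sum_{i,j}w_iw_j(X_i'-X_j')(X_i'-X_j')^\top$, so the relevant scalar is $\Tr(B^\top(X_i'-X_j')(X_i'-X_j')^\top B)=\|B^\top(X_i'-X_j')\|^2$, a sum of squares in sample differences with no mean term. \cref{claim:w-to-good} then splits the double sum only into pairs inside $R^2$ (where $R$ indexes the samples from the true clusters in $S$) versus the complement, bounds the complement by Cauchy--Schwarz plus hypercontractivity with the explicit factor $\alpha_{\overline S}(w)/n$, and relates the $R^2$ part to $\Sigma_S$ in one shot via decorruption (\cref{lem:mom_decorruption_diff}) and the empirical-to-population moment lemmas. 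Crucially, $\Tr(B^\top\Sigma(w)B)$ is \emph{linear} in the empirical second-moment matrix, so a single Cauchy--Schwarz suffices; the quadratic dependence on $\Sigma(w)$ is handled by applying \cref{claim:w-to-good} twice, first with $B=A\Sigma^{1/2}$ and then with $B=A\Sigma_S^{1/2}$. If you want to salvage your outline, the cleanest fix is to adopt the pair-difference representation and the $R^2$-versus-complement split in place of the per-cluster-pair decomposition.
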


\begin{lemma}[Variance lower bound, quadratic functions]
\label[lemma]{lem:fake_var_lowerbound_quadratic}
\Snote{TODO: revise lemma statement and proof}
Let $p(x) := \iprod{x,Ax}$ where $A \in \R^{d \times d}$. Then for $\eps, \delta, \xi < 0.1 k^{-10}$, for every $a, b \in [k]$ and  $t$ a sufficiently large constant, we have
	
\begin{align*}
\cA \proves{O(1)} \Paren{ \Var_{x, y \sim \cN(\mu(w), \Sigma(w))}[p(x-y)] }^4 &\geq \Omega\Paren{ \frac{ \alpha_a(w)^{16} \alpha_b(w)^{16}}{n^{32}} - O(\xi^4) - O(\eps^2 \cdot k^{20})  }\iprod{\Sigma_a - \Sigma_b, A}^8 \\
&\qquad - O(\log(1/\xi)^8  + \eps^2 k^{20})  \cdot  \Paren{ \|\Sigma_a^{1/2} A \Sigma_a^{1/2}\|_F^8 +  \|\Sigma_b^{1/2} A \Sigma_b^{1/2} \|_F^8} \mper
\end{align*}
\end{lemma}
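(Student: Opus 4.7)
The plan is to prove the lower bound in three stages: relate the Gaussian variance $V(w)$ to the empirical variance of the centered quadratic $g_w(X) := \iprod{X-\mu(w), A(X-\mu(w))}$ under the pseudo-distribution $D_w$ that places weight $(k/n)w_i$ on $X'_i$; lower bound this empirical variance by restricting a two-sample identity to $S_a \times S_b$; and then take the fourth power. For stage one, I expand $V(w) = 8\,\|\Sigma(w)^{1/2} A\, \Sigma(w)^{1/2}\|_F^2$. Applying Isserlis' formula to $B := \Sigma^{1/2} A \Sigma^{1/2}$ together with the degree-$4$ axiom in $\cA_{\text{moments}}$ (which controls the fourth-order tensor moments of $\Sigma^{-1/2}(X'_i - \mu(w))$ under $D_w$), I obtain the SoS inequality
\[
V(w) \;\geq\; 4\,\Var_{X \sim D_w}\bigl[g_w(X)\bigr] \;-\; \delta',
\]
where $\delta'$ is absorbed into the lemma's error budget.

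For stage two, I use the two-sample identity $\Var_{D_w}[g_w] = \tfrac12(k/n)^2 \sum_{i,j} w_i w_j (g_w(X'_i) - g_w(X'_j))^2$, drop all terms outside $(i,j) \in S_a\times S_b$ (each surviving term is manifestly a square times a product of nonnegative $w_i$'s), and apply Cauchy--Schwarz in its SoS form $(\sum w_i w_j)(\sum w_i w_j Z_{ij}^2) \geq (\sum w_i w_j Z_{ij})^2$ to avoid any division, producing
\[
\alpha_a(w)\alpha_b(w)\cdot \Var_{D_w}[g_w] \;\geq\; \tfrac12 (k/n)^2\bigl(\alpha_b(w)\, h_a - \alpha_a(w)\, h_b\bigr)^2,
\]
with $h_i := \sum_{j\in S_i} w_j\, g_w(X'_j)$. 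Multiplying this through by $(\alpha_a\alpha_b)^2$ rewrites the right-hand side as $(\alpha_b^2\,(\alpha_a h_a) - \alpha_a^2(\alpha_b h_b))^2$, for which I have a clean formula for the atomic factors $\alpha_i h_i$.

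For stage three, I start from the algebraic identity $\sum_{j,j'\in S_i} w_j w_{j'}\iprod{X_j-X_{j'}, A(X_j-X_{j'})} = 2\alpha_i(w)\iprod{A, M_i^{(2)}} - 2\iprod{M_i^{(1)}, A M_i^{(1)}}$ with $M_i^{(s)} := \sum_{j\in S_i} w_j X_j^{\otimes s}$, and bound the left-hand side by $2\alpha_i(w)^2 \iprod{\Sigma_i, A}$ using the pairwise deterministic condition~(2c) of \cref{def:deterministic} together with the decorruption lemma \cref{lem:mom_decorruption_diff} (to replace the possibly corrupted $X'_j$ by uncorrupted $Y_j$). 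This rearranges to
\[
\alpha_i(w)\,h_i \;=\; \alpha_i(w)^2\,\iprod{\Sigma_i, A} \;+\; \iprod{N_i, A N_i} \;+\; E_i,
\]
where $N_i := M_i^{(1)} - \alpha_i(w)\mu(w)$ captures the mean shift and $E_i$ is an error bounded in SoS by $O(\log 1/\xi)\,\alpha_i^2\,\|\Sigma_i^{1/2} A \Sigma_i^{1/2}\|_F + O(\sqrt{\e k^{10}})\,\alpha_i^2\,\|\Sigma_i^{1/2} A \Sigma_i^{1/2}\|_F$. Forming $\alpha_b^2(\alpha_a h_a) - \alpha_a^2(\alpha_b h_b)$ gives a leading term $\alpha_a^2\alpha_b^2\,\iprod{\Sigma_a - \Sigma_b, A}$ plus a mean-shift difference and an error combination, and applying the SoS inequality $(x+y)^2 \geq \tfrac12 x^2 - y^2$ extracts $(\alpha_a\alpha_b)^3\,\Var_{D_w}[g_w] \gtrsim (k/n)^2(\alpha_a\alpha_b)^4\,\iprod{\Sigma_a-\Sigma_b, A}^2 - (\text{errors})$. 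Squaring twice in SoS and then using the inequality $n^{24} \geq (\alpha_a\alpha_b)^{12}$ (an SoS consequence of $\alpha_i(w)\leq n/k$, derivable from $w_i^2=w_i$ and $\sum w_i = n/k$) strips the $(\alpha_a\alpha_b)^{12}$ prefactor and yields $V(w)^4 \gtrsim (\alpha_a\alpha_b/n^2)^{16}\,\iprod{\Sigma_a-\Sigma_b, A}^8 - (\text{errors})$, matching the target.

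The main obstacle will be controlling the mean-shift difference $\alpha_b^2\iprod{N_a, A N_a} - \alpha_a^2 \iprod{N_b, A N_b}$ and the contribution of the at most $\xi(n/k)^2$ ``bad'' pairs in deterministic condition (2c) entirely within SoS, so that both fit inside the Frobenius error budget $\|\Sigma_i^{1/2} A \Sigma_i^{1/2}\|_F^8$ of the lemma. I expect the mean-shift control to follow from an a priori SoS upper bound on $\|N_i\|^2$ obtained by applying the degree-$2$ moment axioms to appropriate linear functionals, and the bad-pair control from an a priori quadratic-moment upper bound on $g_w$ coming from the degree-$4$ moment axioms applied to $B = \Sigma^{1/2} A \Sigma^{1/2}$.
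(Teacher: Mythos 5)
There is a genuine gap, and it sits exactly where you flag ``the main obstacle'': the mean-shift term cannot be controlled within the lemma's error budget, and in fact it can destroy the signal entirely. After your Cauchy--Schwarz in Stage 2 you have discarded all cross-cluster information except the difference of cluster-averaged values of the $\mu(w)$-centered quadratic, $\alpha_b(w)h_a - \alpha_a(w)h_b$, and your Stage 3 identity shows this equals $\tfrac{\alpha_a\alpha_b}{\alpha_a\alpha_b}$-normalized combinations of $\alpha_i^2\iprod{\Sigma_i,A}$ \emph{plus} $\iprod{N_i,AN_i}$ with $N_i = \sum_{j\in S_i}w_j(X_j'-\mu(w))$. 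But $N_i$ carries the displacement of the true component means from $\mu(w)$, and the lemma makes no assumption whatsoever on the means: its error budget contains only $\|\Sigma_a^{1/2}A\Sigma_a^{1/2}\|_F^8$ and $\|\Sigma_b^{1/2}A\Sigma_b^{1/2}\|_F^8$. Concretely, take $w$ to be (the indicator of) a set consisting of a $\theta$-fraction from $S_a$ and a $(1-\theta)$-fraction from $S_b$ with $\theta\neq 1/2$, and write $\Delta=\mu_a-\mu_b$. Then $N_a\approx\alpha_a(1-\theta)\Delta$, $N_b\approx-\alpha_b\theta\Delta$, so $\alpha_b^2\iprod{N_a,AN_a}-\alpha_a^2\iprod{N_b,AN_b}\approx\alpha_a^2\alpha_b^2(1-2\theta)\,\iprod{\Delta,A\Delta}$, which is of exactly the same order as your main term $\alpha_a^2\alpha_b^2\iprod{\Sigma_a-\Sigma_b,A}$ and, for a suitable choice of $\Delta$ relative to $A$, cancels it. Your proposed rescues do not work: $\|N_i\|^2$ is not controlled by the moment axioms (they constrain only the \emph{total} weighted sum $\sum_{j\in[n]}w_j(X_j'-\mu(w))$, not its restriction to $S_i$), and any bound you can extract, e.g.\ via $\iprod{N_i,v}^2\leq\alpha_i\cdot\tfrac nk\iprod{v,\Sigma(w)v}$, re-expresses the mean-shift in terms of $\|\Sigma(w)^{1/2}A\Sigma(w)^{1/2}\|_F$, i.e.\ in terms of the left-hand side $V(w)$ itself with a coefficient of the same order as the main term's prefactor, so it cannot be absorbed and the resulting inequality is vacuous in precisely the cancellation scenario above (where the lemma's conclusion is nonetheless true).

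The paper avoids this by never centering at $\mu(w)$: it works with \emph{within-cluster pair differences}. It writes the Gaussian variance (via the moment axioms) as the weighted average of $\bigl(p(X_i'-X_j')-p(X_r'-X_l')\bigr)^2$ over all quadruples, keeps only $(i,j)\in S_a^2$, $(r,l)\in S_b^2$, decorrupts to $Y$'s via item (2) of \cref{lem:mom_decorruption_diff}, and then uses deterministic condition (2c) of \cref{def:deterministic}, which pins $p(Y_i-Y_j)$ near $2\iprod{\Sigma_a,A}$ for most pairs in $S_a^2$ (and near $2\iprod{\Sigma_b,A}$ for $S_b^2$) regardless of the means, since $Y_i-Y_j$ is mean-free. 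The cross comparison then gives $\bigl(p(Y_i-Y_j)-p(Y_r-Y_l)\bigr)^2\gtrsim\iprod{\Sigma_a-\Sigma_b,A}^2-O(\log(1/\xi)^2)\bigl(\|\Sigma_a^{1/2}A\Sigma_a^{1/2}\|_F^2+\|\Sigma_b^{1/2}A\Sigma_b^{1/2}\|_F^2\bigr)$, which is exactly why the error budget involves only the Frobenius terms and $\log(1/\xi)$ factors. Your Stages 1 and 2 (relating $V(w)$ to an empirical variance, restricting to the relevant index set, and using squares/Cauchy--Schwarz to stay in SoS) are sound in themselves, but to reach the stated bound you should apply them to the four-index, pair-difference statistic rather than to single samples centered at $\mu(w)$.
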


Before we can prove \cref{lem:mean_sep,lem:var_sep,lem:covar_sep}, we need one more (simpler) lemma relating $\iprod{v, \Sigma(w) v}$ to the variance of the mixture $G_1,\ldots,G_k$.

\begin{lemma}[See proof in \cref{sec:proofs-tools-section}]\label[lemma]{lem:mixture_var}
If $\epsilon \leq c k^{-10}$ for a sufficiently-small constant $c$, then for all $v \in \R^d$,
  \[
  \cA \proves{O(1)} \iprod{v, \Sigma(w) v}^4 \leq O \Paren{\E_{X,X' \sim \tfrac 1 k \sum_{i \leq k} G_i} \iprod{X - X', v}^2}^4\mper
  \]
\end{lemma}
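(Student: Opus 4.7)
The strategy is to dominate the empirical variance of the size-$n/k$ pseudocluster $w$ by the empirical variance of the full uncorrupted sample $Y_1,\ldots,Y_n$, which the deterministic conditions identify with $\iprod{v,\Sigma_{\mathrm{mix}}v}$. The main technical work is bridging the indeterminate $X_i'$ appearing in $\Sigma(w)$ to the concrete $Y_i$.

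Start from the pairwise identity
\[
\iprod{v,\Sigma(w)v} = \frac{k^2}{2n^2}\sum_{i,j}w_iw_j\iprod{v,X_i'-X_j'}^2
\]
and decompose $X_i' - X_j' = (Y_i - Y_j) + (X_i' - Y_i) - (X_j' - Y_j)$. Expanding the square and using the SoS-provable fact $w_iw_j \le 1$ (immediate from $w_i^2 = w_i$ and $w_i \ge 0$), the ``signal'' term $\frac{k^2}{n^2}\sum_{i,j}w_iw_j\iprod{v,Y_i-Y_j}^2$ is bounded by $\frac{k^2}{n^2}\sum_{i,j}\iprod{v,Y_i-Y_j}^2 = 2k^2\iprod{v,\Sigma_Y v}$, where $\Sigma_Y$ is the full-sample empirical covariance. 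Summing the per-cluster second-moment match from item~1 of \cref{def:deterministic} across the $k$ clusters gives $\iprod{v,\Sigma_Y v} = (1+O(\sqrt\delta))\iprod{v,\Sigma_{\mathrm{mix}} v}$, since $\Sigma_Y = \frac1k\sum_i \bar\Sigma_i + \frac1k\sum_i \bar\mu_i\bar\mu_i^\top - \bar Y \bar Y^\top$ and each $\bar\Sigma_i,\bar\mu_i$ approximates the true parameters.

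The ``corruption'' contribution $\frac{k^2}{n^2}\cdot\frac{n}{k}\sum_i w_i\iprod{v,X_i'-Y_i}^2$ is handled via the decorruption inequality \cref{lem:mom_decorruption_diff} applied to $f(x) = \iprod{v,x}^2$. Using $z_i(X_i - X_i') = 0$ together with $\sum z_i = (1-\epsilon)(n/k)$ and the corruption-model guarantee $|\{i : X_i \ne Y_i\}|\le\epsilon n$, the $X_i'$ agree with $Y_i$ on all but at most $O(\epsilon n) + O(\epsilon n)$ indices in the $w$-support. The moments axiom at $s=2,4$ certifies Gaussian-like behavior of $\{X_i' : w_i = 1\}$, giving $\frac{k}{n}\sum w_i\iprod{v,X_i'-\mu(w)}^4 = O(\iprod{v,\Sigma(w)v}^2)$; combined with the deterministic fourth-moment bounds on the $Y_i$, the decorruption lemma produces an error of order $\sqrt\epsilon\cdot\poly(k)\cdot(\iprod{v,\Sigma(w)v} + \iprod{v,\Sigma_{\mathrm{mix}} v})$. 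For $\epsilon \le c\,k^{-10}$ with $c$ small, the term involving $\iprod{v,\Sigma(w)v}$ is absorbed into the left-hand side, yielding $\iprod{v,\Sigma(w)v} \le O(k^2)\iprod{v,\Sigma_{\mathrm{mix}} v}$. The lemma follows by raising to the fourth power and using $\E_{X,X'\sim\frac1k\sum G_i}\iprod{v,X-X'}^2 = 2\iprod{v,\Sigma_{\mathrm{mix}} v}$. All manipulations are of constant SoS degree.

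\textbf{Main obstacle.} The delicate step is controlling the corruption contribution in the SoS proof system. The supports of $w$ and $z$ need not align—$z$ pins $X_i'$ to $X_i$ only on $(1-\epsilon)(n/k)$ indices, while $w$ can select any $n/k$-subset—so one cannot directly substitute $X_i'$ by $X_i$ (or by $Y_i$) on the $w$-support. The crux is leveraging the $s=2$ and $s=4$ moment axioms to bound the $X_i'$ contribution to the decorruption inequality in terms of $\iprod{v,\Sigma(w)v}$ itself, and then arranging the resulting inequality so that the $\sqrt\epsilon$-scaled self-referential term can be absorbed into the left-hand side rather than overwhelming the target $O(k^2)$ constant.
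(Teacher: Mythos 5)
Your proposal is correct and follows essentially the same route as the paper: the paper's proof consists of expanding $2\iprod{v,\Sigma(w)v}$ as the pairwise sum $\tfrac{k^2}{n^2}\sum_{i,j}w_iw_j\iprod{X_i'-X_j',v}^2$ and running the case-(1) reasoning of \cref{lem:mom_decorruption_diff}, i.e., exactly the decomposition into a $Y$-part plus a corruption part, the $z$-variable Cauchy--Schwarz decorruption, the moment-axiom/hypercontractivity control of the self-referential $\iprod{v,\Sigma(w)v}$ error, and its absorption using $\eps\le c k^{-10}$ that you describe. The only cosmetic differences are that you bound the $Y$-part globally via $w_iw_j\le 1$ and the full empirical variance rather than block-by-block via \cref{lem:emp-pop-2}, and that your final bound carries a $\poly(k)$ factor instead of a universal constant --- which is also what the paper's own argument actually delivers and is all that the downstream applications require.
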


Now we turn to the proofs of \cref{lem:mean_sep,lem:var_sep,lem:covar_sep}.

\begin{proof}[Proof of \cref{lem:mean_sep}]
  We start by rearranging \cref{lem:fake_var_upperbound} to have
  \[
  \cA \proves{O(s)} \Paren{\frac{\alpha_a(w)} n}^8 \iprod{v, \Sigma(w) v}^s \leq s^{O(s)} \iprod{v, \Sigma_a v}^s + (O(\tau^4) + \e^2 k^{10} s^{O(s)}) \iprod{v, \Sigma(w) v}^s\mper
  \]
  By \cref{fact:sos-1} (observing that the SoS proof in \cref{fact:sos-1} is constant-degree in the variable $x$), this implies
  \[
  \cA \proves{O(s)} \Paren{\frac{\alpha_a(w)} n}^s \iprod{v, \Sigma(w) v}^s \leq s^{O(s)} \iprod{v, \Sigma_a v}^s + (O(\tau)^{s/2} + \e^{\Omega(s)} k^{O(s)} s^{O(s^2)}) \iprod{v, \Sigma(w) v}^s\mper
  \]

  At the same time, by \cref{lem:fake_var_upperbound},
  \begin{align*}
  \cA \proves{O(1)} & \Paren{\frac{\alpha_a(w)\alpha_b(w)}{n^2}}^4 \iprod{\mu_a - \mu_b, v}^8
  \leq O(\iprod{v, \Sigma(w) v})^4 + O(\xi^4 + \e^2k^{10})\iprod{\mu_a - \mu_b,v}^8\\
   & + O(\e^2 k^{10} + (\log 1/\xi)^2) (\iprod{v, \Sigma_a v}^4 + \iprod{v, \Sigma_b v}^4)\mper
   \end{align*}
   Both sides of this inequality are sums of squares, so we can raise both sides to $s/4$ and obtain
   \begin{align*}
     \cA \proves{O(1)} & \Paren{\frac{\alpha_a(w)\alpha_b(w)}{n^2}}^s \iprod{\mu_a - \mu_b, v}^{2s} \leq O(\iprod{v, \Sigma(w) v})^s + O(\xi^4 + \e^{2} k^{10})^{s/4} \iprod{\mu_a - \mu_b,v}^{2s} \\
     & + O(\e^2 k^{10} + (\log 1/\xi)^2)^{s/4} (\iprod{v, \Sigma_a v}^s + \iprod{v, \Sigma_b v}^s)\mper
   \end{align*}
   We can multiply both sides by $(\alpha_a(w)/n)^s$ and then use $\alpha_a(w) / n \leq 1$ to obtain
  \begin{align*}
     \cA \proves{O(1)} & \Paren{\frac{\alpha_a(w)\alpha_b(w)}{n^2}}^{2s} \iprod{\mu_a - \mu_b, v}^{2s} \leq \Paren{\frac{\alpha_a(w)}{n}}^s O(\iprod{v, \Sigma(w) v})^s + O(\xi^4 + \e^{2} k^{10})^{s/4} \iprod{\mu_a - \mu_b,v}^{2s} \\
     & + O(\e^2 k^{10} + (\log 1/\xi)^2)^{s/4} (\iprod{v, \Sigma_a v}^s + \iprod{v, \Sigma_b v}^s) \mper
   \end{align*}
   We have arrived now at upper and lower bounds on $(\alpha_a(w)/n)^s \iprod{v, \Sigma(w) v}^s$.
   Putting them together,
  \begin{align*}
     \cA \proves{O(1)} & \Paren{\frac{\alpha_a(w)\alpha_b(w)}{n^2}}^{2s} \iprod{\mu_a - \mu_b, v}^{2s}
     \leq s^{O(s)} \iprod{v, \Sigma_a v}^s + (O(\tau)^{s/2} + \e^{\Omega(s)} k^{O(s)} s^{O(s)}) \iprod{v, \Sigma(w) v}^s\\
     & + O(\xi^4 + \e^{2} k^{10})^{s/4} \iprod{\mu_a - \mu_b,v}^{2s} \\
     & + O(\e^2 k^{10} + (\log 1/\xi)^2)^{s/4} (\iprod{v, \Sigma_a v}^s + \iprod{v, \Sigma_b v}^s)\mper
   \end{align*}
   By \cref{lem:mixture_var} and our hypothesis $\iprod{\mu_a - \mu_b, v}^2 \geq \rho \E_{X,X' \sim \tfrac 1 k \sum_{i=1}^k G_i} \iprod{X - X', v}^2$, we can bound the $\iprod{v, \Sigma(w) v}^s$ term to obtain
  \begin{align*}
     \cA \proves{O(1)} & \Paren{\frac{\alpha_a(w)\alpha_b(w)}{n^2}}^{2s} \iprod{\mu_a - \mu_b, v}^{2s}
     \leq s^{O(s)} \iprod{v, \Sigma_a v}^s + (O(\tau)^{s/2} + \e^{\Omega(s)} k^{O(s)} s^{O(s^2)}) \cdot \rho^{-s} \iprod{\mu_a - \mu_b, v}^{2s} \\
     & + O(\xi^4 + \e^{2} k^{10})^{s/4} \iprod{\mu_a - \mu_b,v}^{2s} \\
     & + O(\e^2 k^{10} + (\log 1/\xi)^2)^{s/4} (\iprod{v, \Sigma_a v}^s + \iprod{v, \Sigma_b v}^s)\mper
   \end{align*}
   Dividing by $\iprod{\mu_a - \mu_b, v}^{2s}$ finishes the proof.
\end{proof}

\begin{proof}[Proof of \cref{lem:var_sep}]
  The proof follows the same approach as the proof of \cref{lem:mean_sep}, using the other half of \cref{lem:fake_var_lowerbound}.
\end{proof}

\begin{proof}[Proof of \cref{lem:covar_sep}]
\cref{lem:covar_sep} follows immediately from \cref{lem:fake_var_upperbound_quadratic,lem:fake_var_lowerbound_quadratic}.
\end{proof}


\section{Toolkit: Decorruption, Hypercontractivity, and Anti-concentration}
\label[section]{sec:tools}

We assemble here a collection of SoS proofs using the clustering axioms $\cA$ which will be 
the building blocks of our proofs of the lemmas from \cref{sec:sep-lemmas}. 

\new{
The proof of all the lemmas in this section have been deferred to Appendix~\ref{sec:proofs-tools-section}.}

\paragraph{Notation}
In this section, we adopt the following notation.
$Y_1,\ldots,Y_n \in \R^d$ satisfy the $(\delta,\xi,t)$ deterministic conditions (\cref{def:deterministic}) with respect to Gaussians $G_1 = \cN(\mu_1,\Sigma_1),\ldots,G_k = \cN(\mu_k,\Sigma_k)$, yielding a partition of $[n]$ into $\{S_a\}_{a \in [k]}$.
$X_1,\ldots,X_n$ are an $\e$-corruption of $Y_1,\ldots,Y_n$, and $\cA=\cA(X_1,\ldots,X_n,t,\delta,\epsilon, k)$ are the clustering axioms (\cref{def:axioms}).
Throughout, we assume that $\epsilon,\delta \leq 0.1$.

\subsection{From Empirical to Population Moments}

The following \cref{lem:emp-pop,lem:emp-pop-2} relate the empirical distribution of samples $Y_1,\ldots,Y_n$ and what we call \emph{ghost samples} -- indeterminates $w_i X_i'$ in our polynomial system $\cA$ -- to appropriate population averages.

\begin{lemma}[Empirical to Population Moments, Ghost Samples]
  \label[lemma]{lem:emp-pop}
  Let $p$ be a degree-$(t/2)$ symbolic polynomial.
  Then
  \begin{equation}\label{eq:emp-to-pop-1}
  \cA \proves{O(t)} \Paren{\frac k n \sum_{i=1}^n w_i p(X_i')
  - \E_{g \sim \cN(\mu(w),\Sigma(w))} p(g)}^2 \leq  (2t)^{O(t)} \cdot \delta  \cdot \Paren{ \E_{g \sim \cN(\mu(w),\Sigma(w))} p(g)^2 }
  \end{equation}
  and
	\begin{equation}\label{eq:emp-to-pop-diff-1}
	\cA \proves{O(t)} \Paren{\Paren{\frac k n}^2 \sum_{i, j=1}^n w_iw_j p(X_i' - X_j')
		- \E_{g \sim \cN(0,2\Sigma(w))} p(g)}^2 \leq  (2t)^{O(t)} \cdot \delta \cdot \E_{g \sim \cN(0,2\Sigma(w))} p(g)^2 \mper
	\end{equation}
Additionally, if $p$ is itself a square, then
	\begin{equation}\label{eq:emp-to-pop-diff-2}
	\cA \proves{O(t)} \Paren{\Paren{\frac k n}^2 \sum_{i, j=1}^n w_iw_j p(X_i' - X_j')
		- \E_{g \sim \cN(0,2\Sigma(w))} p(g)}^2 \leq  (2t)^{O(t)} \cdot \delta \cdot \Paren{\E_{g \sim \cN(0,2\Sigma(w))} p(g) }^2 \mper
	\end{equation}
\end{lemma}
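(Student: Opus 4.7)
The plan for \eqref{eq:emp-to-pop-1} is to perform a symbolic change of variables to reduce to comparing empirical and population moments of a standard Gaussian. Set $Z_i := \Sigma^{-1/2}(X_i' - \mu(w))$ and $q(z) := p(\mu(w) + \Sigma^{1/2} z)$, a degree-$(t/2)$ polynomial in $z$ whose coefficients are polynomials in $\hat p(\alpha),\, \mu(w),\, \Sigma^{1/2}$. I would first establish the SoS identity
\[
w_i\, p(X_i') \;=\; w_i\, q(Z_i) \pmod{\cA}
\]
by Taylor-expanding $p$ about $\mu(w)$, applying the matrix axiom $\Sigma^{-1/2}\Sigma^{1/2}\, w_i(X_i'-\mu(w))=w_i(X_i'-\mu(w))$ to each factor of $(X_i'-\mu(w))$ that appears, and using $w_i^2=w_i$ to collapse the powers of $w_i$ that accumulate. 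By construction, $\E_{g\sim\cN(\mu(w),\Sigma(w))} p(g) = \E_{g\sim\cN(0,I)} q(g)$.

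Next, I would expand $q$ in the multivariate (probabilists') Hermite basis $q(z) = \sum_{\alpha}\tilde q(\alpha) H_\alpha(z)$. Since the $H_\alpha$ are orthonormal under $\cN(0,I)$, we get the two identities $\E_{g\sim\cN(0,I)} q(g) = \tilde q(0)$ and $\E_{g\sim\cN(0,I)} q(g)^2 = \sum_\alpha \tilde q(\alpha)^2$, both of which are polynomial identities in the coefficients of $q$ (hence SoS-provable at constant degree from $\cA_{\text{matrices}}$). Using $\sum w_i = n/k$ to kill the $\alpha=0$ term, the quantity we must control is
\[
\frac{k}{n}\sum_i w_i p(X_i') \;-\; \E_{\cN(\mu,\Sigma)} p \;=\; \sum_{\alpha\neq 0}\tilde q(\alpha)\, F_\alpha,\qquad F_\alpha := \tfrac{k}{n}\sum_i w_i H_\alpha(Z_i).
\]
Writing $H_\alpha = \sum_\beta H_\alpha[\beta]\, z^\beta$ and subtracting $\E H_\alpha = 0$, we obtain $F_\alpha = \sum_\beta H_\alpha[\beta]\, E_\beta$, where $E_\beta = \tfrac{k}{n}\sum_i w_i Z_i^\beta - \E_{g\sim\cN(0,I)} g^\beta$. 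I would then apply SoS Cauchy--Schwarz twice: once on the outer sum to pull out $\sum_\alpha \tilde q(\alpha)^2 = \E p^2$ multiplied by $\sum_\alpha F_\alpha^2$, and once on the inner representation of each $F_\alpha$ to turn $\sum_\alpha F_\alpha^2$ into $\bigl(\sum_{\alpha,\beta}H_\alpha[\beta]^2\bigr)\cdot\bigl(\sum_\beta E_\beta^2\bigr)$. The Hermite coefficients satisfy $\sum_{|\alpha|,|\beta|\le t/2} H_\alpha[\beta]^2 \le d^{O(t)} t^{O(t)}$, while $\cA_{\text{moments}}$ gives $\sum_\beta E_\beta^2 \le (t{+}1)\,\delta\, d^{-2t}$. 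The $d$-factors cancel and we obtain $\sum_\alpha F_\alpha^2 \le (2t)^{O(t)}\delta$, which combined with the Cauchy--Schwarz step yields \eqref{eq:emp-to-pop-1}.

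For \eqref{eq:emp-to-pop-diff-1} I would run the same strategy with two ``ghost'' indices. Define $\widetilde q(z_1,z_2) := p(\Sigma^{1/2}(z_1-z_2))$; the change-of-variables identity becomes $w_i w_j\, p(X_i'-X_j') = w_i w_j\, \widetilde q(Z_i,Z_j) \pmod\cA$. Expand in the product Hermite basis $\widetilde q = \sum_{\alpha,\beta}\widetilde q(\alpha,\beta) H_\alpha(z_1)H_\beta(z_2)$, so that $\E \widetilde q = \widetilde q(0,0)$ and $\E\widetilde q^2 = \sum \widetilde q(\alpha,\beta)^2 = \E_{\cN(0,2\Sigma)} p^2$. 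The difference factors as $\sum_{\beta\neq 0}\widetilde q(0,\beta)F_\beta + \sum_{\alpha\neq 0}\widetilde q(\alpha,0)F_\alpha + \sum_{\alpha,\beta\neq 0}\widetilde q(\alpha,\beta)F_\alpha F_\beta$; three applications of SoS Cauchy--Schwarz bound each piece by $\E\widetilde q^2 \cdot \bigl(\sum F_\alpha^2 + (\sum F_\alpha^2)^2\bigr)$, and the bound from the single-sample case concludes. Finally, \eqref{eq:emp-to-pop-diff-2} follows from \eqref{eq:emp-to-pop-diff-1} by invoking SoS-provable Gaussian hypercontractivity \cite{barak2012hypercontractivity}: when $p=r^2$, we have $\E_{\cN(0,2\Sigma)} p^2 = \E r^4 \le (2t)^{O(t)}(\E r^2)^2 = (2t)^{O(t)}(\E p)^2$, all at constant SoS degree in $\hat r$.

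\paragraph{Main obstacle.} The main technical subtlety is executing the change of variables $w_i p(X_i') \equiv w_i q(Z_i)$ within the SoS proof system at degree $O(t)$. This requires iteratively applying the projection identity $\Sigma^{-1/2}\Sigma^{1/2} w_i(X_i' - \mu(w)) = w_i(X_i'-\mu(w))$ once for each of the $|\alpha|\le t/2$ factors in a Taylor-expansion monomial, while using $w_i^2=w_i$ repeatedly to collapse the resulting powers of $w_i$. Everything else (Hermite expansion, SoS Cauchy--Schwarz, and the bound on Hermite coefficient sums) is essentially bookkeeping once the change of variables is in place.
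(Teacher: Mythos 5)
Your proposal is correct and follows essentially the same route as the paper's proof: the change of variables $w_i p(X_i') = w_i q(\Sigma^{-1/2}(X_i'-\mu(w)))$ via the matrix axioms, the moment axioms plus SoS Cauchy--Schwarz, the monomial/Hermite coefficient comparison absorbing the $d^{\pm 2t}$ factors, and hypercontractivity for the square case. The only (immaterial) organizational differences are that you work in the Hermite basis throughout rather than splitting into homogeneous monomial parts, and you prove \eqref{eq:emp-to-pop-diff-1} by a direct two-variable expansion where the paper applies \eqref{eq:emp-to-pop-1} twice with a triangle inequality.
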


\begin{lemma}[Empirical to Population Moments, Real Samples]
  \label[lemma]{lem:emp-pop-2}
  Let $p$ be a degree-$(t/2)$ symbolic polynomial.
  For every $j \in [k]$,
  \begin{equation}
  \proves{O(t)} \Paren{\frac k n \sum_{i \in S_j} p(Y_i)
  - \E_{g \sim G_j} p(g)}^2 \leq \delta \cdot (2t)^{O(t)} \cdot \E_{g \sim G_j} p(g)^2\mper
  \end{equation}
  Furthermore,
  \begin{equation}
    \proves{O(t)} \Paren{\frac k n \sum_{i,l \in S_j} p(Y_i - Y_l)
  - \E_{g \sim \cN(0,2\Sigma_j)} p(g)}^2 \leq \delta \cdot (2t)^{O(t)} \cdot \E_{g \sim \cN(0,2\Sigma_j)} p(g)^2\mper
  \end{equation}
  And if $p$ is itself a square, then
  	\begin{equation}
  \cA \proves{O(t)} \Paren{\Paren{\frac k n}^2 \sum_{i, j=1}^n p(Y_i - Y_j)
  	- \E_{g \sim \cN(0,2\Sigma_j)} p(g)}^2 \leq  \delta \cdot (2t)^{O(t)} \cdot \Paren{\E_{g \sim \cN(0,2\Sigma_j)} p(g) }^2 \mper
  \end{equation}
  
\end{lemma}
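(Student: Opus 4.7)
The plan is to prove \cref{lem:emp-pop-2} by direct use of the $(\delta,\xi,t)$ deterministic moment conditions (\cref{def:deterministic}), Cauchy--Schwarz, and Gaussian hypercontractivity. Unlike \cref{lem:emp-pop}, the samples $Y_i$ here are fixed real numbers and not constrained by any axioms, so no $\cA$ is needed: the SoS proof is in the indeterminates $\widehat{p}(\alpha)$ that make up $p$. The key observation is that the target inequality is a polynomial inequality in $\widehat{p}$, and every step will be either Cauchy--Schwarz (degree-$2$ SoS) or a direct substitution of numerical quantities bounded by the deterministic conditions.

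First I would handle the single-sample bound. Define the symbolic polynomial $q(x) = p(\overline{\mu}_j + \overline{\Sigma}_j^{1/2} x)$ so that $p(Y_i) = q(\overline{\Sigma}_j^{-1/2}(Y_i - \overline{\mu}_j))$; its coefficients $\widehat{q}(\alpha)$ are linear in $\widehat{p}(\beta)$. Expanding in monomials and applying Cauchy--Schwarz (which is itself a degree-$2$ SoS inequality in the $\widehat{q}(\alpha)$) gives
\[
\Bigl(\frac{k}{n}\sum_{i\in S_j} p(Y_i) - \E_{g\sim\cN(\overline{\mu}_j,\overline{\Sigma}_j)}p(g)\Bigr)^{\!2}
\;\preceq\; \Bigl(\sum_{|\alpha|\leq t/2}\widehat{q}(\alpha)^2\Bigr)\cdot \Bignorm{T - M}_F^2,
\]
where $T$ is the empirical moment tensor of $\overline{\Sigma}_j^{-1/2}(Y_i - \overline{\mu}_j)$ over $i\in S_j$ and $M = \bigoplus_{s\leq t} M_s$. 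The deterministic conditions bound $\|T-M\|_F^2 \leq d^{-2t}\delta$ (summed over $s\leq t$, with a mild $t^{O(1)}$ loss). A standard Hermite-expansion computation relates $\sum_\alpha \widehat{q}(\alpha)^2$ to $\E_{g\sim\cN(0,I)} q(g)^2 = \E_{g\sim\cN(\overline{\mu}_j,\overline{\Sigma}_j)} p(g)^2$ up to a factor $(2t)^{O(t)} d^{O(t)}$, and the $d^{O(t)}$ cancels against $d^{-2t}$ from the Frobenius bound. Finally, I would replace $\cN(\overline{\mu}_j,\overline{\Sigma}_j)$ by $G_j$ on the right-hand side using the fact that for sufficiently many samples the empirical moments $\overline{\mu}_j,\overline{\Sigma}_j$ are close enough to $\mu_j,\Sigma_j$ that $\E_{\cN(\overline{\mu}_j,\overline{\Sigma}_j)}p(g)^2 = (1\pm o(1))\E_{G_j}p(g)^2$ for every polynomial of degree at most $t$ (absorbed into the $(2t)^{O(t)}\delta$ slack).

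The difference-of-samples bound proceeds analogously. Here the relevant reparametrization uses $q(x) = p((2\overline{\Sigma}_j)^{1/2} x)$ since $Y_i - Y_l$ has mean zero; the empirical tensor becomes $\tfrac{k^2}{n^2}\sum_{i,l\in S_j}[(2\overline{\Sigma}_j)^{-1/2}(Y_i-Y_l)]^{\otimes s}$, which one compares to $M_s$ by splitting off the diagonal $i=l$ (a $1/|S_j|$ fraction, negligible) and factoring the off-diagonal terms as a product of two empirical moment tensors, each close to $M_{\leq s}$ by the deterministic conditions. Finally, for the ``$p$ is a square'' refinement, if $p = r^2$ with $\deg r \leq t/2$ then Gaussian hypercontractivity (in its SoS form, see \cite{barak2012hypercontractivity}) gives $\E_{G_j} r(g)^4 \leq (Ct)^{O(t)}(\E_{G_j} r(g)^2)^2$, i.e., $\E_{G_j} p(g)^2 \leq (Ct)^{O(t)}(\E_{G_j} p(g))^2$, which lets us promote the bound from $\E p(g)^2$ to $(\E p(g))^2$ and absorb the extra factor into $(2t)^{O(t)}\delta$.

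The main obstacle is the bookkeeping that relates the monomial-coefficient norm $\sum_\alpha \widehat{q}(\alpha)^2$ to $\E_{\cN(0,I)}q(g)^2$ within the SoS system while preserving the $d^{-2t}$ normalization. Concretely, one must Hermite-expand $q$ and check that the change of basis introduces only a $(2t)^{O(t)} d^{O(t)}$ multiplicative loss (not $d^{\omega(t)}$), which is why the deterministic conditions were stated with the rather aggressive normalization $d^{-2t}\delta$. Everything else --- Cauchy--Schwarz, the off-diagonal factorization for the difference case, and the hypercontractive step --- is a standard manipulation with a known constant-degree SoS proof.
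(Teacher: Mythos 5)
Your proposal is correct and follows essentially the same route as the paper, whose proof of this lemma is explicitly "almost identical" to that of Lemma~\ref{lem:emp-pop}: Cauchy--Schwarz against the Frobenius bound on the moment-tensor difference supplied by the deterministic conditions, the affine reparametrization $q(x)=p(\overline{\mu}_j+\overline{\Sigma}_j^{1/2}x)$, the monomial-to-Hermite comparison $\sum_\alpha \widehat{q}(\alpha)^2 \preceq d^{2t}t^{O(t)}\E_{\cN(0,I)}q(g)^2$ cancelling the $d^{-2t}$ normalization, and hypercontractivity for the square case. The only (immaterial) deviation is in the difference-of-samples bound, where the paper applies the one-sample estimate twice with a triangle inequality while you factor the off-diagonal sum into a product of two empirical moment tensors; both work, and your explicit handling of the $\overline{\mu}_j,\overline{\Sigma}_j$ versus $\mu_j,\Sigma_j$ mismatch (via the concentration facts of Lemma~\ref{lem:concentration}) is if anything more careful than the paper's.
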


\subsection{Decorruption}

\cref{lem:mom_decorruption_diff} captures the use of higher moments to remove the effect of adversarial corruptions.

\begin{lemma}[Moment Decorruption]\label[lemma]{lem:mom_decorruption_diff}
For all $a,b \in [k]$,
\begin{enumerate}
\item For all $s \leq t/10$ and all $v \in \R^d$, we have that
\begin{align}
\cA &\proves{t} \Paren{\frac{k^2}{n^2} \sum_{i \in S_a,j \in S_b} w_i w_j \Paren{ \iprod{X_i' - X_j', v}^{s} - \iprod{Y_i - Y_j, v}^{s}}} ^4 \\
&\leq \eps^2 \cdot k^{10} \cdot (2s)^{O(s)} \cdot \Paren{\iprod{v, \Sigma_a v}^{2s} + \iprod{v, \Sigma_b v}^{2s} + \iprod{ \mu_a - \mu_b, v}^{4s} + \iprod{v,\Sigma(w) v}^{2s} }\mper
\end{align}

\item For all $A \in \R^{d \times d}$, we have that 
\begin{align}
\cA &\proves{t} \Paren{\frac{k^4}{n^4} \sum_{(i,j) \in S_a,(r,l) \in S_b} w_i w_j w_r w_l \Paren{ \iprod{(Y_i - Y_j)^{\otimes 2}- (Y_r - Y_l)^{\otimes 2}, A}^2 - \iprod{(X'_i - X'_j)^{\otimes 2} - (X'_r - X'_l)^{\otimes 2}, A}^2 } }^4 \\
&\leq  O(\eps^2 \cdot k^{20}) \cdot  \Paren{  \|\Sigma_a^{1/2} A \Sigma_a^{1/2}\|_F^8  +  \|\Sigma_b^{1/2} A \Sigma_b^{1/2} \|_F^8 + \|\Sigma^{1/2} A \Sigma^{1/2}\|_F^8 + \iprod{\Sigma_a - \Sigma_b, A}^8} \mper
\end{align}
\end{enumerate}
For any $S \subseteq [k]$ and $T(S) = \bigcup_{a \in S} S_a$, let $\Sigma_S = \tfrac 1 {n^2} \sum_{i,j \in T(S)} (Y_i - Y_j)(Y_i - Y_j)^\top$ and $\tfrac 1 {n^2} \Sigma'_S = \sum_{i,j \in T(S)} w_i w_j (X_i' - X_j')(X_i' - X_j')^\top$.
Then,
\begin{enumerate}
\setcounter{enumi}{2}
\item
$\Paren{ \Tr(B^T \Sigma'_S B) - \Tr(B^T \Sigma_S B) }^2  \leq O(\eps^2 k^2) \cdot ( \Tr(B^T \Sigma_S B)^2 + \Tr(B^T \Sigma'_S B)^2 )$.
\end{enumerate}
\end{lemma}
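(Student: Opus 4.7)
My plan for all three parts is to reduce the difference between the $X_i'$-based and $Y_i$-based sums to a sum concentrated on the ``disagreement set'' $\{i : X_i' \neq Y_i\}$, and then invoke a low-degree SoS Cauchy--Schwarz. The starting observation is the polynomial identity
\[
X_i' - Y_i = (1 - z_i)(X_i' - Y_i) + z_i b_i (X_i - Y_i) \mcom
\]
where $b_i := \mathbf{1}[X_i \neq Y_i] \in \{0,1\}$ is the \emph{deterministic} indicator of true corruption with $\sum_i b_i \leq \e n$. Combining this with the axiom $z_i(X_i - X_i') = 0$, the bound $\sum_i(1-z_i) \leq \e n$ from $\cA_{\text{corruptions}}$, and $\sum_i w_i = n/k$, yields an SoS inequality of the form $\sum_i w_i(b_i + 1 - z_i) \leq O(\e n)$ that upper-bounds the weighted size of the disagreement set.

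For part 1, I would set $b_{ij} := \iprod{X_i' - X_j', v}^s - \iprod{Y_i - Y_j, v}^s$ and observe that $b_{ij}$ vanishes as a polynomial whenever both $X_i' - Y_i = 0$ and $X_j' - Y_j = 0$. SoS Cauchy--Schwarz then gives
\[
\Paren{\frac{k^2}{n^2}\sum_{i \in S_a, j \in S_b} w_iw_j b_{ij}}^2 \leq \Paren{\frac{k^2}{n^2}\sum w_iw_j(b_i + b_j + (1-z_i) + (1-z_j))} \cdot \Paren{\frac{k^2}{n^2}\sum w_iw_j b_{ij}^2}\mper
\]
The first factor is $O(\e k)$ by the disagreement-set bookkeeping. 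For the second, $b_{ij}^2 \leq 2(\iprod{X_i' - X_j',v}^{2s} + \iprod{Y_i - Y_j,v}^{2s})$ together with \cref{lem:emp-pop,lem:emp-pop-2} transfers the empirical sums into population $(2s)$-th moments of $\cN(0,2\Sigma(w))$ and of differences of samples from $G_a, G_b$. The Gaussian moment formula $\E[\iprod{g,v}^{2s}] = (2s-1)!!\,\sigma_v^{2s} \leq (2s)^{O(s)}\sigma_v^{2s}$, along with the split $Y_i - Y_j = (Y_i - \mu_a) + (\mu_a - \mu_b) - (Y_j - \mu_b)$, produces exactly the four terms $\iprod{v,\Sigma_a v}^s$, $\iprod{v,\Sigma_b v}^s$, $\iprod{v,\Sigma(w) v}^s$, $\iprod{\mu_a - \mu_b, v}^{2s}$. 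Squaring the resulting $D^2 \lesssim \e k \cdot M$ bound in SoS (valid since both sides are sums of squares) gives the claimed $\e^2 k^{O(1)} (2s)^{O(s)}$ bound on $D^4$.

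Part 2 applies the same Cauchy--Schwarz template to quadratic polynomials, using a telescoping identity of the form $\alpha^2 - \beta^2 = (\alpha - \beta)(\alpha + \beta)$ with $\alpha = \iprod{(X_i' - X_j')^{\otimes 2} - (X_r' - X_l')^{\otimes 2}, A}$ and $\beta$ its $Y$-analogue. The ``disagreement'' factor $\alpha - \beta$ is supported on corrupted indices, while the ``sum'' factor $\alpha + \beta$ has bounded fourth moment via the Gaussian variance identity $\V_{\cN(0,\Sigma)}[\iprod{x,Ax}] = 2\|\Sigma^{1/2} A \Sigma^{1/2}\|_F^2$; the extra $\iprod{\Sigma_a - \Sigma_b, A}^8$ term arises because $\E_{Y \sim G_a^{\otimes 2}}\iprod{(Y_i - Y_j)^{\otimes 2}, A} = 2\iprod{\Sigma_a, A}$ differs from the cross-cluster analogue by $\iprod{\Sigma_a - \Sigma_b, A}$. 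Part 3 is the specialization to a single quadratic polynomial with $BB^\top$ in place of $A$: both $\Tr(B^\top \Sigma_S B)$ and $\Tr(B^\top \Sigma'_S B)$ are already sums of squares of $B^\top(Y_i - Y_j)$ and $B^\top(X_i' - X_j')$ respectively, and a direct Cauchy--Schwarz together with the disagreement-set bound delivers the stated $O(\e^2 k^2)$ factor times the squares.

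The main obstacle I anticipate is the SoS encoding of the ``disagreement indicator'' step at low degree: formally, I need a low-degree SoS proof that $b_{ij}^2$ is dominated by $(b_i + b_j + (1-z_i) + (1-z_j))$ times a moment-like polynomial. The cleanest route is to substitute $X_i' - Y_i = (1-z_i)(X_i' - Y_i) + z_i b_i(X_i - Y_i)$ directly into the expansion of $b_{ij}$, use $z_i^2 = z_i$ and $(1-z_i)^2 = 1-z_i$ to collapse cross terms, and apply Cauchy--Schwarz separately to the $(1-z_i)$- and $b_i$-pieces. Tracking the $s^{O(s)}$ combinatorial overhead through the $s$-fold binomial expansion in part 1, while keeping the overall SoS degree at $O(t)$, will constitute the bulk of the bookkeeping.
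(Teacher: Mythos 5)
Your proposal is correct and follows essentially the same route as the paper: the paper's proof localizes the difference to the disagreement set via the product indicator $z'(T)=\prod_{i\in T}z_i\mathbbm{1}_{Y_i=X_i}$ (your $b_i+(1-z_i)$ bookkeeping is the same object), applies the identical SoS Cauchy--Schwarz, and then controls the second factor through \cref{lem:emp-pop,lem:emp-pop-2}, hypercontractivity, and the Gaussian moment/variance identities exactly as you describe. The only cosmetic difference is in part 2, where the paper applies the generic decorruption claim directly to the squared polynomial rather than factoring $\alpha^2-\beta^2=(\alpha-\beta)(\alpha+\beta)$; both yield the same bound.
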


\subsection{Hypercontractivity}
We also need the following corollaries of the SoS proof of hypercontractivity from \cite{KOTZ12}.

\begin{theorem} [\cite{KOTZ12}]
  For every $t \in\N$ and even integer $q$, there is a degree-$O(tq)$ SoS proof 
  \[
  \proves{O(kq)}^f \E_{g \sim \cN(0,I)} f(g)^q \leq O(q-1)^{qt/2} \cdot \Paren{\E_{g \sim \cN(0,I)} f(g)^2}^{q/2} \mcom
  \]
  where $f$ is a degree-$t$ symbolic polynomial.
\end{theorem}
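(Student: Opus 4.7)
The plan is to follow the strategy of \cite{KOTZ12}, adapting the classical proof of Gaussian hypercontractivity into the SoS proof system. At the top level, the proof proceeds by induction on the degree $t$ of the polynomial $f$, and uses a tensorization step to reduce to the univariate case. The base case $t=0$ is trivial since constants satisfy the inequality with equality. For the inductive step, it is convenient to work relative to the Hermite basis, where the squared $2$-norm of a polynomial is just the sum of squared Hermite coefficients.

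First I would establish the univariate version: for a degree-$t$ polynomial $p(x) = \sum_{i \leq t} c_i H_i(x)$ with symbolic coefficients $c_i$, give a degree-$O(tq)$ SoS proof of
\[
\E_{g \sim \cN(0,1)}\Bigl(\sum_i c_i H_i(g)\Bigr)^q \leq (q-1)^{qt/2} \Bigl(\sum_i c_i^2 \cdot i!\Bigr)^{q/2}.
\]
After expanding the $q$-th power, the left hand side is a fixed explicit polynomial in the $c_i$, and the right hand side is a fixed power of a quadratic form. The scalar inequality is true by Nelson--Gross hypercontractivity, and since both sides are homogeneous polynomials of degree $qt$ and $qt$ respectively in the finitely many indeterminates $c_0, \ldots, c_t$, an SoS proof of the required degree exists; the content of \cite{KOTZ12} is that such a certificate can be written down with the claimed degree bound, via a careful induction on $t$ using a two-function Bonami-type inequality.

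Second, I would tensorize to the multivariate case. Writing $f(g_1, \ldots, g_n)$ as a polynomial in the variable $g_n$ with coefficients $h_j(g_{-n})$ of degree at most $t-j$, one applies the univariate bound in $g_n$ with indeterminate coefficients $c_j = h_j(g_{-n})$, then integrates over $g_{-n}$ using the inductive hypothesis for polynomials in $n-1$ Gaussians. Each application contributes a constant factor of $(q-1)^{q \cdot (\deg)/2}$, so iterating $n$ times yields the claimed $(q-1)^{qt/2}$ overall. Crucially, each step is carried out via polynomial multiplication of SoS certificates, and the degree of the certificate in the coefficients of $f$ grows additively rather than multiplicatively in $n$.

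The main obstacle is controlling the SoS degree at $O(tq)$ independently of the ambient dimension $n$. A naive iteration of univariate hypercontractivity would blow up the certificate degree with each tensorization step. The fix, which is the technical heart of \cite{KOTZ12}, is to work with a "two-function" symmetric form of hypercontractivity (an SoS analogue of $\iprod{T_\rho f, g} \leq \|f\|_2 \|g\|_2$ with $\rho = 1/\sqrt{q-1}$), which composes under tensorization without any degree increase, and then to specialize $g = f^{q-1}$ in the final step. Since the full argument is somewhat intricate but carried out in detail in \cite{KOTZ12}, our use of the theorem is as a black box; we only need to verify that the stated degree $O(tq)$ matches the proof degrees appearing in our clustering identifiability argument.
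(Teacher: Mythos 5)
The paper does not prove this statement at all---it is imported verbatim as a black box from \cite{KOTZ12}---so there is no internal proof to compare against; your sketch of the induction-plus-tensorization argument via a two-function hypercontractive inequality is a faithful account of how that reference establishes the result, and you correctly conclude by treating the theorem as a black box, exactly as the paper does. One small caution: your remark that a true (homogeneous) polynomial inequality in finitely many indeterminates automatically admits an SoS certificate of the required degree is not a valid inference in general (truth of a polynomial inequality does not imply SoS-provability, let alone with a degree bound), but this does not affect anything here since the actual certificate is supplied by \cite{KOTZ12}.
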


The next corollary follows by the identities $\E_{g \sim \cN(0,I)} (\Sigma^{1/2} g + \mu)^{\otimes s} = \E_{g \sim \cN(\mu,\Sigma)} g^{\otimes s}$ as polynomials in $\mu,\Sigma$.

\begin{corollary}
\label[corollary]{fact:hypercontractivity}
  For every $t \in \N$ and even integer $q$, there is a degree-$O(tq)$ SoS proof
  \[
  \{(\Sigma^{1/2})^2 = \Sigma\} \proves{O(tq)}^{f,\mu,\Sigma,\Sigma^{1/2}} \E_{g \sim \cN(\mu,\Sigma)} f(g)^q \leq O(q-1)^{qt/2} \cdot \Paren{\E_{g \sim \cN(\mu,\Sigma)} f(g)^2}^{q/2} \mcom
  \]
  where $f$ is a degree-$t$ symbolic polynomial in $d$ variables, $\mu$ is a vector of $d$ indeterminates, and $\Sigma, \Sigma^{1/2}$ are matrices of indeterminates.
\end{corollary}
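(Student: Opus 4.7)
The plan is to reduce the general case to the standard-Gaussian case by an SoS-friendly affine substitution $g \mapsto \Sigma^{1/2} g + \mu$. Concretely, given a degree-$t$ symbolic polynomial $f(x) = \sum_{|\alpha| \leq t} \widehat f(\alpha) x^\alpha$ in $d$ variables $x$, I would introduce the symbolic polynomial $h(g) \defeq f(\Sigma^{1/2} g + \mu)$ whose coefficients (as a polynomial in the new variables $g_1,\ldots,g_d$) are fixed polynomials of degree at most $t$ in the indeterminates $\widehat f(\alpha)$, $\mu_i$, and $\Sigma^{1/2}_{ij}$. Since $h$ is a degree-$t$ symbolic polynomial in $g$, the theorem of \cite{KOTZ12} cited above gives
\[
\proves{O(tq)}^{h} \, \E_{g \sim \cN(0,I)} h(g)^q \leq O(q-1)^{qt/2} \Paren{\E_{g \sim \cN(0,I)} h(g)^2}^{q/2}.
\]
Because that SoS proof treats the coefficients of $h$ as formal indeterminates, substituting in the specific expressions for those coefficients (which are polynomials in $\widehat f(\alpha), \mu, \Sigma^{1/2}$) yields an SoS proof of the same degree in the variables $\widehat f(\alpha),\mu,\Sigma^{1/2}$.

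The second step is to rewrite the two Gaussian expectations using the change of variables. For every monomial $g^\beta$ of degree $s \leq tq$ we have the polynomial identity
\[
\E_{g \sim \cN(0,I)} (\Sigma^{1/2} g + \mu)^{\otimes s} \;=\; \E_{g' \sim \cN(\mu,\Sigma)} (g')^{\otimes s},
\]
viewed as an identity of polynomials in $\mu$ and $\Sigma$ after replacing $(\Sigma^{1/2})^2$ by $\Sigma$ wherever it appears. Both sides are polynomials in $\mu$ and in the entries of $\Sigma^{1/2}$, and they agree as functions on the PSD cone; by density this gives an actual polynomial identity modulo $(\Sigma^{1/2})^2 = \Sigma$. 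Hence
\[
\{(\Sigma^{1/2})^2 = \Sigma\} \proves{O(tq)} \E_{g \sim \cN(0,I)} h(g)^r = \E_{g' \sim \cN(\mu,\Sigma)} f(g')^r
\]
for $r \in \{2,q\}$, where the right-hand side is interpreted as the obvious polynomial in the coefficients $\widehat f(\alpha), \mu, \Sigma$ obtained by expanding moments of a Gaussian in closed form. Chaining this identity with the hypercontractive SoS inequality for $h$ yields the claimed bound.

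The main thing to be careful about is the bookkeeping that ties together the three types of indeterminates in the final proof: the coefficients of $f$, the entries of $\mu$, and the entries of $\Sigma^{1/2}$, while only using $\Sigma$ on the final statement. Each Gaussian moment of degree $s$ expands into a polynomial of degree $s$ in $\mu$ and $\Sigma$, so substituting $(\Sigma^{1/2})^2 \to \Sigma$ using the axiom is a degree-$O(tq)$ SoS manipulation, which is absorbed into the stated proof degree. Beyond this, no step introduces any new ideas: the hypercontractive inequality for $h$ supplies the SoS certificate, and the affine change-of-variables supplies the translation back to $\cN(\mu,\Sigma)$. I do not expect any genuine obstacle, only a careful accounting of degrees and of the single axiom $(\Sigma^{1/2})^2 = \Sigma$.
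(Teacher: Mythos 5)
Your proposal is correct and follows exactly the route the paper takes: the paper derives the corollary in one line from the substitution $h(g) = f(\Sigma^{1/2}g+\mu)$ together with the identity $\E_{g \sim \cN(0,I)}(\Sigma^{1/2}g+\mu)^{\otimes s} = \E_{g \sim \cN(\mu,\Sigma)} g^{\otimes s}$ as polynomials in $\mu,\Sigma$. Your write-up simply fills in the same bookkeeping the paper leaves implicit.
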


We will require an additional version of SoS hypercontractivity.

\begin{lemma}
\label[lemma]{lemma:hypercontractivity-sos-polys}
   For every $t \in \N$, there is a degree-$O(t)$ SoS proof
  \[
  \{(\Sigma^{1/2})^2 = \Sigma\} \proves{O(t)}^{f_1,\ldots,f_m} \E_{g \sim \cN(\mu,\Sigma)} \Paren{\sum_{i\leq m} f_i(g)^2}^2 \leq \exp(O(t)) \cdot \Paren{\E_{g \sim \cN(\mu,\Sigma)} \sum_{i \leq m} f_i(g)^2 }^2\mcom
  \]
  where $f_1,\ldots,f_m$ are degree-$t$ symbolic polynomials in $d$ variables, $\mu$ is a vector of $d$ indeterminates, and $\Sigma, \Sigma^{1/2}$ are matrices of indeterminates.
\end{lemma}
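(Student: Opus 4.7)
My plan is to reduce this lemma to the scalar SoS Gaussian hypercontractivity of \cref{fact:hypercontractivity} via a tensorization trick that packages the $m$ polynomials $f_1, \ldots, f_m$ into a single polynomial. The naive SoS approaches will not work: pointwise AM-GM yields $\E f_i^2 f_j^2 \leq \frac{1}{2}(\E f_i^4 + \E f_j^4)$, which when summed over $i, j$ and combined with scalar hypercontractivity loses a multiplicative factor of $m$ that the lemma statement forbids; and measure-theoretic Cauchy--Schwarz bounds $\E f_i^2 f_j^2$ by $\sqrt{\E f_i^4 \, \E f_j^4}$, whose square root blocks a direct SoS proof. The main obstacle is therefore to produce an $m$-free SoS certificate for the cross terms $\E f_i^2 f_j^2$ with $i \neq j$.

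I will introduce $m$ fresh formal variables $u_1, \ldots, u_m$, thought of as an independent standard Gaussian $u \sim \cN(0, I_m)$, and define the single degree-$(t+1)$ symbolic polynomial $P(g, u) := \sum_{i=1}^m u_i \, f_i(g)$. Using the Isserlis--Wick identities $\E_u u_i u_j = \delta_{ij}$ and $\E_u u_i u_j u_k u_l = \delta_{ij}\delta_{kl} + \delta_{ik}\delta_{jl} + \delta_{il}\delta_{jk}$, a direct polynomial expansion of $P^2$ and $P^4$ yields the exact identities
\[
\E_u P(g,u)^2 = \sum_i f_i(g)^2 =: F(g) \quad \text{and} \quad \E_u P(g,u)^4 = 3\, F(g)^2.
\]
Taking $\E_{g \sim \cN(\mu, \Sigma)}$ on both sides then yields $\E_{g,u} P^2 = \E_g F$ and $\E_{g,u} P^4 = 3\, \E_g F^2$ as exact polynomial identities in the indeterminates (not inequalities).

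To finish, I will view $(g, u)$ as a single Gaussian vector in $\R^{d+m}$ with mean $(\mu, 0)$, covariance $\Sigma' := \diag(\Sigma, I_m)$, and square root $(\Sigma')^{1/2} := \diag(\Sigma^{1/2}, I_m)$. The axiom $(\Sigma^{1/2})^2 = \Sigma$ directly implies $((\Sigma')^{1/2})^2 = \Sigma'$, so \cref{fact:hypercontractivity} applied to the degree-$(t+1)$ symbolic polynomial $P$ with $q = 4$ gives, in SoS degree $O(t)$,
\[
\E_{g,u} P^4 \leq O(3)^{2(t+1)} \cdot (\E_{g,u} P^2)^2 = \exp(O(t)) \cdot (\E_{g,u} P^2)^2.
\]
Substituting the two polynomial identities of the previous paragraph and dividing by $3$ yields $\E_g F^2 \leq \exp(O(t)) \cdot (\E_g F)^2$, as required. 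Since the coefficients of $P$ viewed as a polynomial in $(g, u)$ are precisely the coefficients of the $f_i$'s (each tagged by the corresponding $u_i$), the resulting SoS certificate lives in the original indeterminates (coefficients of $f_1, \ldots, f_m$ and entries of $\mu, \Sigma, \Sigma^{1/2}$), matching the statement.
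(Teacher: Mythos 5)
Your proof is correct, but it takes a genuinely different route from the paper's. The paper first reduces to $\mu = 0$, $\Sigma = I$ and then invokes the \emph{two-function} SoS hypercontractivity statement from Lemma~5.1 of \cite{barak2012hypercontractivity}, namely $\proves{O(t)} \E_{\cN(0,I)} f_i^2 f_j^2 \leq \exp(O(t)) \cdot \bigl(\E_{\cN(0,I)} f_i^2\bigr)\bigl(\E_{\cN(0,I)} f_j^2\bigr)$, and concludes by expanding $\bigl(\sum_i f_i^2\bigr)^2 = \sum_{i,j} f_i^2 f_j^2$ and summing the product-form bounds; this is exactly the $m$-free cross-term certificate you identified as the crux, obtained by citation rather than by construction. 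You instead circumvent the cross terms entirely with the auxiliary-Gaussian tensorization $P(g,u) = \sum_i u_i f_i(g)$: the Wick identities give the exact polynomial identities $\E_u P^2 = \sum_i f_i^2$ and $\E_u P^4 = 3\bigl(\sum_i f_i^2\bigr)^2$, and a single application of the one-function statement \cref{fact:hypercontractivity} (with the block mean $(\mu,0)$, covariance $\diag(\Sigma, I_m)$, and square root $\diag(\Sigma^{1/2}, I_m)$, whose axiom follows from $(\Sigma^{1/2})^2 = \Sigma$) yields the claim in degree $O(t)$ after substituting the coefficients of the $f_i$ into the certificate — a linear substitution, so SoS-ness and degree are preserved. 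What each approach buys: the paper's argument is shorter given access to the cited two-function lemma, while yours is self-contained relative to what is already stated in the paper (only the KOTZ12-based \cref{fact:hypercontractivity} is needed), handles the indeterminate $\mu, \Sigma$ directly without the reduction to $0, I$, and makes transparent why no factor of $m$ appears. Both routes give the same $\exp(O(t))$ constant and degree bound, so either proof would serve.
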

\begin{proof}
  It suffices to prove the lemma with $0,I$ substituted for $\mu,\Sigma$ respectively.
  The proof of Lemma 5.1 of \cite{barak2012hypercontractivity} shows that for two degree-$t$ symbolic polynomials $f_i,f_j$, we have $\proves{O(t)} \E_{\cN(0,I)} f_i^2 f_j^2 \leq \exp(O(t)) \cdot (\E_{\cN(0,I)} f_i^2) \cdot (\E_{\cN(0,I)} f_j^2)$.
  The lemma follows by expanding $\Paren{\sum_{i\leq m} f_i(g)^2}^2 = \sum_{i,j \leq m} f_i(g)^2 f_j(g)^2$.
\end{proof}

%

\subsection{Certifiable Anti-concentration}
The following lemma shows that anti-concentration properties of probability distributions on the real line can be captured by low-degree polynomials -- roughly, anti-concentration for an interval of width $\e$ is captured by a polynomial of degree $\tilde{O}(1/\epsilon^2)$.
Similar results are used in \cite{raghavendra2020list,karmalkar2019list}.
We provide a proof in \cref{sec:anticoncentration} for completeness.

\begin{lemma}
\label[lemma]{lem:q-main}
  There is a universal constant $C > 0$ such that for every $\e \in (0,1]$ there is a univariate polynomial $q_\e(x)$ of degree at most $(\log 1/\e)^C \cdot 1/\e^2$ such that
  \begin{enumerate}
  \item $q_\e$ is even.
  \item $q_\e(x) \geq 1-\e$ for $x \in [-\e,\e]$.
  \item $\E_{x \sim \cN(0,1)} q_\e(x) \leq C\e$.
  \item $q_\e$ is a square.
  \end{enumerate}
\end{lemma}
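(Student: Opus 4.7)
The plan is to construct $q_\epsilon(x) = p_\epsilon(x)^2$ for a carefully chosen even polynomial $p_\epsilon$, which immediately guarantees properties (1) (evenness) and (4) (being a square) of the lemma, and reduces the problem to constructing a single polynomial $p_\epsilon$ of degree at most $d := (\log 1/\epsilon)^{C'}/(2\epsilon^2)$ satisfying $|p_\epsilon(x)| \geq \sqrt{1-\epsilon}$ for $|x| \leq \epsilon$ and $\E_{X \sim \cN(0,1)} p_\epsilon(X)^2 \leq C\epsilon/2$. Note that morally, $q_\epsilon$ must behave like the indicator $\mathbf{1}_{[-\epsilon,\epsilon]}$ under the Gaussian measure: property (2) forces $\E q_\epsilon(X) \geq (1-\epsilon) \Pr(|X| \leq \epsilon) = \Omega(\epsilon)$, so the polynomial is forced to concentrate its mass on $[-\epsilon, \epsilon]$ as far as the Gaussian density is concerned.

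I would first fix a cutoff $R = \Theta(\sqrt{\log(1/\epsilon)})$ such that the Gaussian tail $\Pr(|X| > R) \leq \epsilon^2$, and then build $p_\epsilon$ via classical polynomial approximation so that: (a) $p_\epsilon(x) \in [1-\epsilon/2, 1]$ for $|x| \leq \epsilon$; (b) $|p_\epsilon(x)| \leq \sqrt{\epsilon}/R$ for $x \in [2\epsilon, R]$; and (c) $|p_\epsilon(x)|$ grows no faster than $(2|x|/R)^d$ outside $[-R, R]$. A concrete route is to rescale the Chebyshev polynomial of the first kind: set $r(x) = T_n(\alpha(x^2))/T_n(\alpha(0))$ where $\alpha$ is a linear map sending $[2\epsilon, R]$ into $[-1,1]$ while placing the origin just outside $[-1,1]$. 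Classical Chebyshev estimates give $|r(x)| \leq 1/T_n(1+\Theta(\epsilon/R))$ on $[2\epsilon, R]$, which is exponentially small in $n\sqrt{\epsilon/R}$, yielding (b) with degree $n = \tilde O(\sqrt{R/\epsilon})$. To upgrade ``close to 1 at the origin'' to ``close to 1 on $[-\epsilon,\epsilon]$'' as required by (a), I would raise $r$ to a suitable power or multiply by a low-degree flattening factor; by Markov brothers' inequality this costs an additional factor $\tilde O(1/\epsilon)$ in the degree, which is precisely what produces the $1/\epsilon^2$ in the final bound.

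Given such a $p_\epsilon$, I bound $\E p_\epsilon(X)^2$ by splitting the Gaussian integral into three regions. On $[-\epsilon,\epsilon]$, the integrand is at most $1$ and the Gaussian measure of the region is $O(\epsilon)$. On $\epsilon \leq |x| \leq R$, the pointwise bound $p_\epsilon(x)^2 \leq \epsilon/R^2$ times the length $O(R)$ of the region yields contribution $O(\epsilon/R)$. The tail $|x| > R$ contributes at most a Gaussian moment of $(2X/R)^{2d}$, which by the formula $\E X^{2d} = (2d-1)!!$ and Stirling is at most $R^{-2d}(2d)^d \ll \epsilon$ once $R \gtrsim \sqrt{d}$; our choice $R = \Theta(\sqrt{\log(1/\epsilon)})$ is sufficient after a routine tail-calculation because the Gaussian factor $e^{-x^2/2}$ dominates the polynomial growth at scale $R$ for our $d$.

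The main obstacle is calibrating the construction so that all three requirements—flatness on $[-\epsilon,\epsilon]$, exponential smallness on $[2\epsilon,R]$, and controlled growth beyond $R$—simultaneously hold within the degree budget $\tilde O(1/\epsilon^2)$. Chebyshev polynomials handle the intermediate regime essentially optimally at degree $\tilde O(1/\sqrt{\epsilon})$; the binding constraint is the pointwise flatness near zero (as opposed to merely attaining value $1$ at the origin), which is exactly what pushes the degree from $\tilde O(1/\sqrt{\epsilon})$ to $\tilde O(1/\epsilon)$ before squaring, hence $\tilde O(1/\epsilon^2)$ after squaring to form $q_\epsilon$.
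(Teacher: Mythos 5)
Your overall template (build an even polynomial, square it to secure (1) and (4), force flatness on $[-\e,\e]$ and smallness out to a cutoff, and let a Chebyshev-type growth bound plus the Gaussian tail handle everything beyond) is the same as the paper's, which uses Jackson's theorem plus an amplifying polynomial where you use a Chebyshev soft step; that difference is cosmetic. The genuine gap is your calibration of the cutoff. You take $R = \Theta(\sqrt{\log(1/\e)})$, justified only by $\Pr_{X \sim \cN(0,1)}(|X|>R) \leq \e^2$, but the tail term is not $\Pr(|X|>R)$ times anything bounded: beyond $R$ the only control you have is the growth bound $(2|x|/R)^{2d}$, and $\E_{X\sim\cN(0,1)}(2X/R)^{2d} = (2/R)^{2d}(2d-1)!! \approx (c\,d/R^2)^{d}$, which is astronomically large once $R^2 = O(\log 1/\e) \ll d = \poly(1/\e)$ (already at $|x| \approx 2R$ your polynomial has size roughly $2^{2d}$ while the Gaussian density there is only $\poly(\e)$). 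Your own stated criterion ``$\ll \e$ once $R \gtrsim \sqrt{d}$'' contradicts your choice of $R$, since $\sqrt{d} = \tilde{\Theta}(1/\e) \gg \sqrt{\log(1/\e)}$. The cutoff must be taken at $R = \tilde{\Theta}(1/\e)$ — the paper uses $L = (\log 1/\e)^C/\e$ — and then Bernstein/Markov forces degree $\gtrsim R/\e = \tilde{\Omega}(1/\e^2)$ to transition at scale $\e$ near the origin while staying bounded on $[-R,R]$; it is the interplay of these two constraints, not the squaring step, that makes $d = \tilde{\Theta}(1/\e^2)$, $R = \tilde{\Theta}(1/\e)$ the self-consistent choice.

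Relatedly, your degree bookkeeping is off: squaring $p_\e$ doubles the degree, it does not square it, so ``degree $\tilde{O}(1/\e)$ before squaring, hence $\tilde{O}(1/\e^2)$ after'' is not a valid account of where $1/\e^2$ comes from; as written you are implicitly claiming a degree-$\tilde{O}(1/\e)$ certificate, which your construction cannot deliver once the tail is handled correctly. (A smaller inconsistency: to keep the polynomial even you work in the variable $x^2$, but then the origin sits at distance $\Theta(\e^2/R^2)$, not $\Theta(\e/R)$, outside the image of $[2\e,R]$, so the Chebyshev step already costs degree $\tilde{O}(R/\e)$ rather than $\tilde{O}(\sqrt{R/\e})$.) The repair is exactly the paper's parameter choice: transition width $\delta = \e/L$ with $L = (\log 1/\e)^{C}/\e$, a degree-$O(1/\delta)$ approximant (Jackson or Chebyshev), amplification by a degree-$O(\log 1/\e)$ polynomial, symmetrization and squaring; the tail contribution is then bounded by $e^{-L^2/10}\,(\log(1/\e)/\e^2)^{O(\log(1/\e)/\e^2)} = O(\e)$, which is precisely where the choice $L = \tilde{\Theta}(1/\e)$ is needed.
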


\section{Proofs of Variance Bounds on Pseudoclusters}
\label[section]{sec:variance-bounds-proofs}

With the tools from \cref{sec:tools} in hand, we are ready to prove the lemmas from \cref{sec:sep-lemmas}.


\subsection{Variance of Linear Functions}

\begin{proof}[Proof of \cref{lem:fake_var_upperbound}]
\cref{lem:q-main} implies there exist a constant $C > 0$ and an even square polynomial $q_{\tau}$ of degree $s = (\log 1/\tau)^C \cdot 1/\tau^2$ satisfying
\begin{align}\label{eqn:fake_anticoncentration}
\proves{2s}^{\iprod{v, \Sigma(w) v} } \iprod{v, \Sigma(w) v}^{2s} \E_{x \sim  \cN(0, 2\iprod{v, \Sigma(w) v})} \left [q_{\tau}\Paren{\frac{x}{2^{1/2}\iprod{v, \Sigma(w) v}^{1/2}}} \right]^2 \leq O(\tau^2) \cdot \iprod{v, \Sigma(w) v}^{2s}  \;.
\end{align}
Note that because $q$ is even with degree at most $s$, both sides of \eqref{eqn:fake_anticoncentration} are in fact polynomials in $\iprod{v,\Sigma(w)v}$, despite the appearance of $\iprod{v,\Sigma(w)v}^{1/2}$ in the denominator.
To prove it, we have used that the polynomial inequality $\sigma^{2s} (\E_{x \sim \cN(0,\sigma^2)} q_\tau(x/\sigma))^2 \leq O(\tau^2) \sigma^{2s}$ in the single variable $\sigma$ holds for every $\sigma \in \R$, so the difference of the RHS and LHS is a sum of squares, by \cref{fact:univariate}.

Since $q_{\tau}$ is a square polynomial, we can apply \cref{lem:emp-pop} to $\iprod{v, \Sigma(w) v}^{s} q_\tau \Paren{\frac{\iprod{X'_i - X'_j, v}}{\sqrt{2\iprod{v, \Sigma(w)v}}}}$.
We thus get
\begin{align*}
\cA &\proves{O(s)} \Paren{\frac{k^2}{n^2} \sum_{i,j=1}^n w_i w_j \iprod{v, \Sigma(w) v}^{s} q_\tau \Paren{\frac{\iprod{X'_i - X'_j, v}}{\sqrt{2\iprod{v, \Sigma(w)v}}}}}^2\\
&\leq 2 \Paren{\frac{k^2}{n^2} \sum_{i,j=1}^n w_i w_j \iprod{v, \Sigma(w) v}^{s} q_\tau \Paren{\frac{\iprod{X'_i - X'_j, v}}{\sqrt{2\iprod{v, \Sigma(w)v}}}} - \E_{g \sim \cN(0,2\iprod{v, \Sigma(w)v})} \iprod{v, \Sigma(w) v}^{s} q_{\tau}\Paren{\frac{g}{\sqrt{2\iprod{v, \Sigma(w)v}}}}}^2 \\
&+ 2\Paren{ \E_{g \sim \cN(0,2\iprod{v, \Sigma(w)v})} \iprod{v, \Sigma(w) v}^{s} q_{\tau}\Paren{\frac{g}{\sqrt{2\iprod{v, \Sigma(w)v}}}}}^2  \\
& \leq (2 + \delta s^{O(s)}) \cdot \Paren{ \E_{g \sim \cN(0,2\iprod{v, \Sigma(w)v})} \iprod{v, \Sigma(w) v}^{s} q_{\tau}\Paren{\frac{g}{\sqrt{2\iprod{v, \Sigma(w)v}}}}}^2 \;.
\end{align*}
Combining this with \eqref{eqn:fake_anticoncentration}, we obtain
\begin{align} \label{eqn:q_upper}
\cA \proves{O(s)} \iprod{v, \Sigma(w) v}^{2s} \cdot \Paren {\frac{k^2}{n^2} \sum_{i,j \in [n]} w_i w_j q_\tau \Paren{\frac{\iprod{X'_i - X'_j, v}}{\sqrt{2 \iprod{v, \Sigma(w)v}}}}}^2  \leq (2 + \delta s^{O(s)}) \cdot O(\tau^2) \cdot \iprod{v, \Sigma(w) v}^{2s} \;.
\end{align}
Now, observe that 
\[ 
\proves{2s}^x q_{\tau}(x) + \Paren{\frac{x}{\tau}}^{2s} \geq 1	
\]
by \cref{fact:univariate},
since $q_\tau(x) + (x/\tau)^{2s} \geq 1$ for all $x \in \R$ (by \cref{lem:q-main}).
Substituting $x = \Paren{\frac{\iprod{X'_i - X'_j, v}}{\sqrt{2\iprod{v, \Sigma(w)v}}}}$,
\[ 
\proves{2s}^{\Paren{\frac{\iprod{X'_i - X'_j, v}}{\sqrt{2\iprod{v, \Sigma(w)v}}}}}  q_{\tau}\Paren{\frac{\iprod{X'_i - X'_j, v}}{\sqrt{2\iprod{v, \Sigma(w)v}}}} + \Paren{\frac{\iprod{X'_i - X'_j, v}}{\tau \sqrt{2\iprod{v, \Sigma(w)v}}}}^{2s} \geq 1\;.
\]
Clearing the denominators in the SoS proof of the above inequality by multiplying throughout with $\tau^{2s} (2\iprod{v, \Sigma(w)v})^s$, we get
\begin{equation}
\proves{O(s)}^{\iprod{X_i'-X_j', v}, \iprod{v, \Sigma(w)v}} \tau^{2s}\iprod{v, \Sigma(w)v}^s q_\tau \Paren{\frac{\iprod{X'_i - X'_j, v}}{\sqrt{\iprod{v, \Sigma(w)v}}}} +  \iprod{X_i'-X_j',v}^{2s} \geq \tau^{2s} \iprod{v, \Sigma(w)v}^s \mper
\end{equation} 
Averaging the above over $i, j \in S_a$, we see that
\begin{align*}
&\proves{O(s)} \frac{k^2 \tau^{2s}}{n^2} \cdot \sum_{i, j \in S_a} w_i w_j \Paren{ \iprod{v, \Sigma(w)v}^s q_\tau \Paren{\frac{\iprod{X'_i - X'_j, v}}{\sqrt{\iprod{v, \Sigma(w)v}}}} +  \iprod{X_i'-X_j',v}^{2s}} 
\geq (k/n)^2 \alpha_a(w)^2 \cdot \tau^{2s} \iprod{v, \Sigma(w)v}^s \mper
\end{align*}
Squaring both sides and using $\proves{2} (a+b)^2 \leq 2 a^2 + 2 b^2$ to further upper bound the LHS and substituting Equation~\ref{eqn:q_upper}, we get
\begin{align*}
\cA &\proves{O(s)} (2 + \delta s^{O(s)}) \tau^{4s+2} \cdot \iprod{v, \Sigma(w) v}^{2s}+ \frac{2 k^4 \tau^{4s}}{n^4} \cdot\Paren{\sum_{i, j \in S_a} w_i w_j \iprod{X_i'-X_j',v}^{2s}}^2 \\
&\geq (k/n)^4 \alpha_a(w)^4 \tau^{4s} \iprod{v, \Sigma(w)v}^{2s} \;.
\end{align*}
We can divide by $\tau^{4s}$, use the assumption $\delta \leq s^{-O(s)}$, square and rearrange to get
\begin{align*}
\cA &\proves{O(s)} \frac{k^8}{n^8} \cdot\Paren{\sum_{i, j \in S_a} w_i w_j \iprod{X_i'-X_j',v}^{2s}}^4 \geq \Paren{\Omega ( \alpha_a(w)/n)^8 -  O(\tau^4)} \cdot \iprod{v, \Sigma(w)v}^{4s} \mper
\end{align*}
To decorrupt this, we apply \cref{lem:mom_decorruption_diff} with both $a, b$ in the lemma set to $a$, yielding
\begin{align*}
\cA &\proves{O(s)}  \Paren{\frac{k^2}{n^2}\sum_{i, j \in S_a} w_i w_j \iprod{Y_i-Y_j,v}^{2s}}^4 + \eps^2 (2s)^{O(s)} k^{10} \cdot \Paren{  \iprod{v, \Sigma_a v}^{4s} + \iprod{v, \Sigma(w) v}^{4s}} \\
& \geq \Paren{\Omega ( \alpha_a(w)/n)^8 -  O(\tau^4)} \cdot \iprod{v, \Sigma(w)v}^{4s} \mper
\end{align*}
Using $\cA \proves{O(1)} w_iw_j \leq 1$ and $\epsilon \leq (2s)^{-O(s)} k^{-20}$ and \cref{lem:emp-pop-2},
\begin{align*}
\cA &\proves{O(s)} (2s)^{O(s)} \cdot \iprod{v, \Sigma_a v}^{4s}  \geq \Brac{ \Omega(\alpha_a(w)/n)^8 - O(\tau^4) - \e^2 k^{10} (2s)^{O(s)} } \cdot \iprod{v, \Sigma(w)v}^{4s} \mper
\end{align*}
which is what we wanted to show.
\end{proof}

\begin{proof}[Proof of \cref{lem:fake_var_lowerbound}]
We will prove \cref{eq:fake_var_lb_1} and then describe how the proof can be easily modified to obtain \cref{eq:fake_var_lb_2}.
	Note that $2\iprod{v, \Sigma(w)v} = \Paren{\frac{k}{n}}^{2} \sum_{i, j \leq n} w_i w_j \iprod{X_i'-X_j', v}^2$. Since all the terms  in the sum are nonnegative,
	\begin{align*}
	\cA \proves{O(1)} \Brac{\Paren{\frac{k}{n}}^{2} \sum_{i, j \leq n} w_i w_j \iprod{X_i'-X_j', v}^2}^4 \geq \Brac{ \Paren{\frac{k}{n}}^{2} \sum_{(i, j) \in S_a^{2}} w_i w_j \iprod{X_i'-X_j', v}^2 }^4\mper
        \end{align*}
 By \cref{lem:mom_decorruption_diff}, 
         \begin{align*}
\cA \proves{O(1)} \Brac{\Paren{\frac{k}{n}}^{2} \sum_{(i, j) \in S_a^2 } w_i w_j \iprod{X_i'-X_j', v}^2}^4 & \geq \frac 1 2 \Brac{\Paren{\frac{k}{n}}^{2} \sum_{(i, j) \in S_a^2} w_i w_j \iprod{Y_i-Y_j, v}^2}^4\\
& \qquad - O(\epsilon^2 k^{10}) \cdot (\iprod{v,\Sigma_a v}^4 + \iprod{v,\Sigma(w)v}^4)\mper
         \end{align*}
         Since $Y_1,\ldots,Y_n$ satisfy the deterministic conditions (\cref{def:deterministic}), at least $(1-\xi)(n/k)^2$ elements of $S_a^2$ have $\iprod{Y_i - Y_j,v}^2 \geq \Omega(\xi) \iprod{v,\Sigma_a v}$.
         Restricting attention only to those pairs, we obtain
         \begin{align*}
           \cA \proves{O(1)} \sum_{i,j \in S_a^2} w_i w_j \iprod{Y_i - Y_j,v}^2 \geq (\alpha_a(w)^2 - \xi(n/k)^2) \cdot \Omega(\xi) \cdot \iprod{v,\Sigma_a v}\mper
         \end{align*}
         Putting this together gives
         \begin{align*}
         \cA \proves{O(1)}
\Brac{\Paren{\frac{k}{n}}^{2} \sum_{(i, j) \in S_a^2} w_i w_j \iprod{X_i-X_j, v}^2}^4 & \geq ([\alpha_a(w)/n]^8 - O(\xi)^4) \cdot \Omega(\xi)^4 \cdot \iprod{v,\Sigma_a v}^4 \\
& - O(\epsilon^2 k^{10}) (\iprod{v, \Sigma_a v}^4 + \iprod{v, \Sigma(w) v}^4)\mper
        \end{align*}
        Using $\epsilon \ll k^{-10}$, we can rearrange to obtain
         \begin{align*}
         \cA \proves{O(1)}
\Brac{\Paren{\frac{k}{n}}^{2} \sum_{(i, j) \in S_a^2} w_i w_j \iprod{X_i'-X_j', v}^2}^4 & \geq ([\alpha_a(w)/n]^8 - O(\xi)^4) \cdot \Omega(\xi)^4 \cdot \iprod{v,\Sigma_a v}^4 \\
& - O(\epsilon^2 k^{10}) (\iprod{v, \Sigma_a v}^4) \;,
        \end{align*}
        which rearranges to what we wanted to show.

        To obtain \cref{eq:fake_var_lb_2}, the above argument can be modified as follows.
        First, instead of restricting to $i,j \in S_a^2$, we restrict to $i,j \in S_a \times S_b$.
        Second, we use the deterministic conditions \cref{def:deterministic} to obtain $(1-\xi)(n/k)^2$ pairs $i,j \in S_a \times S_b$ which satisfy $|\iprod{Y_i - Y_j,v}| \geq |\iprod{\mu_a - \mu_b,v}| - \sqrt{\log(1/\xi)} (\sqrt{\iprod{v, \Sigma_a v}} + \sqrt{\iprod{v, \Sigma_b v}})$.
\end{proof}


\subsection{Variance of Quadratic Functions}
To prove \cref{lem:fake_var_upperbound_quadratic} we will need the following claims.

\begin{claim} \label[claim]{claim:w-to-good}
In the same notation as \cref{lem:fake_var_upperbound_quadratic}, for a $d \times d$ matrix of indeterminates $B$, we have
	\begin{align*}
	\cA \proves{O(1)}^B \Paren{ 1-O\Paren{\frac{\alpha_{\overline{S}}(w)}{n}}} \cdot (\Tr\Paren{ B^T \Sigma(w) B})^2  \leq  O(1) \cdot (\Tr(B^T \Sigma_S B))^2 \;.
	\end{align*}
\end{claim}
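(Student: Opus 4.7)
The plan is to decompose $\Tr(B^T \Sigma(w) B) = \tfrac{k^2}{2n^2}\sum_{i,j}w_iw_j\|B^T(X_i'-X_j')\|^2$ into a ``good'' part ($i,j \in T(S) := \bigcup_{a \in S} S_a$) and a ``bad'' part (remaining pairs), bound each separately inside SoS, and combine. Write $V := \Tr(B^T\Sigma(w)B)$ and $V = V_{\text{good}} + V_{\text{bad}}$; each summand $w_iw_j\|B^T(X_i'-X_j')\|^2$ is a non-negative SoS polynomial (using $w_i^2 = w_i$), so the split is legal inside the proof system. The target inequality rearranges to $V^2 \leq O(1)\cdot\Tr(B^T\Sigma_S B)^2 + O(\alpha_{\overline S}(w)/n)\cdot V^2$, which I will obtain via the SoS-valid inequality $V^2 \leq 2V_{\text{good}}^2 + 2V_{\text{bad}}^2$ together with separate bounds $V_{\text{good}}^2 \leq O(1)\Tr(B^T\Sigma_S B)^2$ and $V_{\text{bad}}^2 \leq O(\alpha_{\overline S}(w)/n)V^2$.

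For the bad part, I apply SoS Cauchy-Schwarz in the form $(\sum_{(i,j)\in\text{bad}}w_iw_jU_{ij})^2 \leq (\sum_{(i,j)\in\text{bad}}w_iw_j)(\sum_{i,j}w_iw_jU_{ij}^2)$, where $U_{ij} := \|B^T(X_i'-X_j')\|^2$; this step uses $(w_iw_j)^2 = w_iw_j$ to write the inner quantities as exact squares. The first factor telescopes as $(\sum_{i,j}w_iw_j) - (\sum_{i,j\in T(S)}w_iw_j) = (n/k)^2 - \alpha_S(w)^2 \leq 2(n/k)\alpha_{\overline S}(w)$, valid in SoS since $\alpha_{\overline S}(w)^2$ is a square. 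For the second factor, applying \cref{lem:emp-pop} to the square polynomial $\|B^T y\|^4$, combined with SoS hypercontractivity (\cref{lemma:hypercontractivity-sos-polys}) for $\cN(0,2\Sigma(w))$, yields $(k/n)^2\sum_{i,j}w_iw_jU_{ij}^2 \leq O(1)\cdot\Tr(B^T\Sigma(w)B)^2 = O(V^2)$. Combining these two bounds and absorbing polynomial factors in $k$ into the $O(\cdot)$ gives $V_{\text{bad}}^2 \leq O(\alpha_{\overline S}(w)/n)\cdot V^2$.

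For the good part, observe that up to the $k^2/2$ prefactor, $V_{\text{good}}$ equals $\Tr(B^T\Sigma_S'B)$ in the notation of part (3) of \cref{lem:mom_decorruption_diff}, with $A := BB^T$. That part gives $(\Tr(B^T\Sigma_S'B) - \Tr(B^T\hat\Sigma_SB))^2 \leq O(\e^2 k^2)(\Tr(B^T\hat\Sigma_SB)^2 + \Tr(B^T\Sigma_S'B)^2)$, where $\hat\Sigma_S := \tfrac1{n^2}\sum_{i,j\in T(S)}(Y_i-Y_j)(Y_i-Y_j)^T$. For $\e \ll 1/k$ this rearranges in SoS to $\Tr(B^T\Sigma_S'B) \leq O(1)\Tr(B^T\hat\Sigma_SB)$. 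Next, the deterministic condition $G_a(BB^T)$ applied to each pair of true clusters $S_a,S_b$ with $a,b\in S$ yields $\Tr(B^T\hat\Sigma_SB) \leq O(1)\Tr(B^T\Sigma_SB)$ with the $\Sigma_S$ of the claim statement, the constant absorbing the scale factor $(|S|/k)^2 \leq 1$. Squaring, $V_{\text{good}}^2 \leq O(1)\Tr(B^T\Sigma_SB)^2$.

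Combining via $V^2 \leq 2V_{\text{good}}^2 + 2V_{\text{bad}}^2$ and rearranging yields the claim. The main obstacle is keeping the SoS degree constant throughout: the bad-part estimate is handled in a single squaring-plus-Cauchy-Schwarz step rather than a first-order bound on $V_{\text{bad}}$, and the hypercontractivity application is what lets us convert the apparent fourth-moment expression $\sum w_iw_jU_{ij}^2$ back into a (squared) second-moment quantity. A minor technical point is checking that \cref{lem:mom_decorruption_diff} part (3) admits a constant-degree SoS proof; this can be read off from its derivation, which follows the same Cauchy-Schwarz pattern as parts (1)-(2).
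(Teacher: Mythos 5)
Your proof is correct and follows essentially the same route as the paper's: the same split of $\Tr(B^T\Sigma(w)B)$ into pairs inside versus outside $T(S)$, the same Cauchy--Schwarz step bounding the bad pairs by $O(\alpha_{\overline S}(w)/n)$ times a fourth-moment quantity controlled via \cref{lem:emp-pop} and \cref{lemma:hypercontractivity-sos-polys}, and the same use of \cref{lem:mom_decorruption_diff} part (3) plus empirical-to-population moment bounds to convert the good part into $\Tr(B^T\Sigma_S B)$. The only cosmetic difference is that you split $V$ before squaring rather than expanding the square of the full sum, which is the same inequality.
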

\begin{proof}
	Let $R := \{ i \mid i \in S_a, ~a \in S\}$. Let $\Sigma'_S := \sum_{a,b \in S} \Paren{\sum_{i,j \in S_a \times S_b} w_i w_j (X'_i-X'_j)^{\otimes 2}} = \sum_{i,j \in R}  (X'_i - X'_j)^{\otimes 2}$. An application of $(a+b)^2 \leq 2 \cdot a^2 + 2 \cdot b^2$ and the Cauchy Schwartz inequality 
	give us the following
	\begin{align*}
	\cA \proves{O(1)} \Tr\Paren{ B^T \Sigma(w) B}^2 &= \Paren{ \frac{1}{n^2} \cdot \sum_{i,j \leq n} w_i w_j \Tr( B^T (X'_i - X'_j) (X'_i - X'_j)^T B)}^2 \\
	&= \Paren{  \frac{1}{n^2} \cdot \sum_{i,j \leq n } w_i w_j \Tr( B^T (X'_i - X'_j) (X'_i - X'_j)^T B) \Paren{ \mathbbm{1}_{(i,j) \in R^2} + \mathbbm{1}_{(i,j) \in \overline{ R^2} }}}^2\\
	&\leq 2 \cdot (\Tr(B^T \Sigma'_{S} B))^2 \\
	&+ 2 \cdot \Paren{  \frac{1}{n^2} \cdot \sum_{(i,j) \in [n]^2} w_i w_j \mathbbm{1}_{(i,j) \in \overline{R^2} } } \Paren{  \frac{1}{n^2} \cdot \sum_{(i,j) \in [n]^2} w_i w_j [\Tr(B^T (X'_i - X'_j) (X'_i - X'_j)^T B)]^2} \;.
	\end{align*}
	Note that $\Tr(B^T (X'_i - X'_j) (X'_i - X'_j)^T B)$ is a sum of square polynomials in $(X'_i - X'_j)$, by \cref{lemma:hypercontractivity-sos-polys} and \cref{lem:emp-pop} $ \frac{1}{n^2} \cdot \sum_{(i,j) \in [n]^2} w_i w_j \Tr(B^T (X'_i - X'_j) (X'_i - X'_j)^T B)^2 \leq O(1) \cdot \Tr(B^T \Sigma(w) B)^2$. Additionally, observe that $\Paren{ \sum_{(i,j) \in [n]^2} w_i w_j \mathbbm{1}_{(i,j) \in \overline{R^2} }} = 2 \alpha_{S}(w) \alpha_{\overline{S}}(w) + \alpha_{\overline{S}}(w)^2 \leq n \cdot O(\alpha_{\overline{S}}(w))$. 
Substituting these back into the above inequality gives us 
	\begin{align*}
	\cA \proves{} \Tr\Paren{ B^T \Sigma(w) B}^2 \leq 2 \cdot (\Tr(B^T \Sigma'_{S} B))^2 + O(1) \cdot  \frac{ \alpha_{\overline{S}}(w)}{n} \cdot (\Tr(B^T \Sigma(w) B))^2 \;.
	\end{align*}
	Rearranging, we get 
	\begin{align*}
	\cA \proves{} (1-O(\alpha_{\overline{S}}(w))/n)	\Tr\Paren{ B^T \Sigma(w) B}^2  \leq  2 \cdot (\Tr(B^T \Sigma'_S B))^2 \;.
	\end{align*}
	Applying \cref{lem:mom_decorruption_diff,lem:emp-pop}, we get 
%
%
	\begin{align*} 
	(\Tr(B^T \Sigma_S' B))^2 &\leq 2 (\Tr(B^T \Sigma_S B))^2 + 2(\Tr(B^T \Sigma'_S B) - \Tr(B^T \Sigma_S B))^2 \\
	&\leq O( k^2 \eps^2) (\Tr(B^T \Sigma'_S B))^2 + O(1+ k^2 \eps^2) (\Tr(B^T \Sigma_S B))^2 \;.
	\end{align*}
	 Substituting this back and using the fact that $\eps \leq 0.1 \cdot k^{-10}$, we get 
	\begin{align*}
	\cA \proves{} (1-O(\alpha_{\overline{S}}(w))/n)	\Tr\Paren{ B^T \Sigma(w) B}^2  \leq O(1) \cdot (\Tr(B^T \Sigma_S B))^2 \;.
	\end{align*}
This completes the proof of the claim.	
\end{proof}

\begin{proof}[Proof of \cref{lem:fake_var_upperbound_quadratic}]
	Note that if $X,Y \sim \cN(\mu(w), \Sigma(w))$ then $Z=X-Y$ follows the distribution $N(0, 2\Sigma(w))$. 
	We can thus write
	\begin{align*} 
	\Var_{X,Y \sim \cN(\mu(w), \Sigma(w))}[p(X-Y)] &= \Var_{Z \sim \cN(0, 2\Sigma(w))}[\iprod{Z, A Z}]\\
	&= \E_{Z\sim \cN(0, 2\Sigma(w))}[\iprod{ ZZ^T, A}^2] - 4 \iprod{ \Sigma(w), A}^2\\
	&= 4 \Paren{\Tr\Paren{ \Sigma(w) A}}^2 + 8 \Tr\Paren{ \Sigma(w) A}^2- 4 \Paren{\Tr\Paren{ \Sigma(w) A}}^2\\
	&=  8 \Tr\Paren{ \Sigma(w) A \Sigma(w) A} \;.
	\end{align*}
	Recall that 
	$$\cA_{\text{matrices}} = \{ (\Sigma^{1/2})^2 = \Sigma(w), (\Sigma^{-1/2} \Sigma^{1/2})^2 = \Sigma^{-1/2} \Sigma^{1/2} \} \cup \{ \Sigma^{-1/2} \Sigma^{1/2} w_i \Paren{X_i' - \mu(w)}  = w_i \Paren{X_i' - \mu(w)} \}_{i \in [n]} \subset \cA \;.$$
	This implies 
	\begin{align*} 
	\Var_{X,Y \sim \cN(\mu(w), \Sigma(w))}[P(X-Y)] &= 8 \Tr\Paren{ \Sigma(w) A \Sigma(w) A}\\
	&= 8 \Tr\Paren{ \Sigma^{1/2} A \Sigma(w) A \Sigma^{1/2}} \;.
	\end{align*}	

 Repeated application of \cref{claim:w-to-good} with $B := A \Sigma^{1/2}$ and $B := A \Sigma_S^{1/2}$ completes the proof.
\end{proof}

\begin{proof}[Proof of \cref{lem:fake_var_lowerbound_quadratic}]
We can write
	\begin{align*}
	\Paren{ \Var_{x, y \sim \cN(\mu(w), \Sigma(w))}[p(x-y)] }^4 &= \Paren{ \frac{k^4}{n^4} \sum_{i,j,r,l \leq n} w_i w_j w_r w_l \Paren{ p(X'_i - X'_j) - p(X'_r - X'_l)}^2 }^4 \\
	&\geq \Paren{ \frac{k^4}{n^4} \sum_{(i,j,r,l) \in S_a^2 \times S_b^2} w_i w_j w_r w_l \Paren{ p(X'_i - X'_j) - p(X'_r - X'_l)}^2 }^4 \\
	&\geq \Paren { \frac{k^4}{n^4} \sum_{(i,j, r, l) \in S_a^2 \times S_b^2} w_i w_j w_r w_l \Paren{ p(Y_i - Y_j) - p(Y_r - Y_l)}^2 }^4 \\
	&\qquad - m_{a,b} \;,
	\end{align*}
where the final step is an application of \cref{lem:mom_decorruption_diff} and 
$$m_{a,b} = O(\eps^2 \cdot k^{20}) \cdot  \Paren{  \|\Sigma_a^{1/2} A \Sigma_a^{1/2}\|_F^8 +  \|\Sigma_b^{1/2} A \Sigma_b^{1/2} \|_F^8 + \|\Sigma^{1/2} A \Sigma^{1/2}\|_F^8 + \iprod{\Sigma_a - \Sigma_b, A}^8} \;.$$
The deterministic conditions imply that for 
$$E_a(A) := \{ (Y_i, Y_j) \in S_a^2 \mid p(Y_i - Y_j) = \iprod{\Sigma_a, A} \pm \log(1/\xi) \cdot \| \Sigma_a^{1/2} A \Sigma_a^{1/2}\|_F \} \;,$$ 
$|E_a| = (1-\xi) n^2$, similarly for $E_b(A)$ and $Y_r, Y_l$. Hence, using $x^2 - 2y^2 < 2(x-y)^2$, we see that
\begin{align*}
&\mathbbm{1}_{(Y_i, Y_j) \in E_a} \mathbbm{1}_{(Y_r, Y_l) \in E_b} \Paren{ p(Y_i - Y_j) - p(Y_r - Y_l)}^2 \\
&= \mathbbm{1}_{(Y_i, Y_j) \in E_a} \mathbbm{1}_{(Y_r, Y_l) \in E_b} \Paren{ \Paren{ \Sigma_a - \Sigma_b, A} \pm \log(1/\xi)\cdot \Paren{ \|\Sigma_a^{1/2}A \Sigma_a^{1/2}\|_F + \|\Sigma_b^{1/2}A \Sigma_b^{1/2}\|_F } }^2 \\
&\geq \frac{1}{2} \cdot \mathbbm{1}_{(Y_i, Y_j) \in E_a} \mathbbm{1}_{(Y_r, Y_l) \in E_b} \Paren{ \Paren{ \Sigma_a - \Sigma_b, A}^2 - 4\cdot \log(1/\xi)^2 \cdot \Paren{ \|\Sigma_a^{1/2}A \Sigma_a^{1/2}\|^2_F + \|\Sigma_b^{1/2}A \Sigma_b^{1/2}\|^2_F } } \;.
\end{align*}
Hence, in the case $(Y_i, Y_j) \in S_a^2$ and $(Y_r, Y_l) \in S_b^2$ we get
\begin{align*}
&\Paren{ \frac{k^4}{n^4} \sum_{(i,j, k, l) \in S_a^2 \times S_b^2} w_i w_j w_r w_l \Paren{ p(Y_i - Y_j) - p(Y_r - Y_l)}^2 }^4 \\
&\geq \Paren{ \frac{k^4}{n^4} \sum_{(i,j, k, l) \in S_a^2 \times S_b^2} w_i w_j w_r w_l \mathbbm{1}_{(Y_i, Y_j) \in E_a} \mathbbm{1}_{(Y_r, Y_l) \in E_b} \cdot (p(Y_i - Y_j) - p(Y_r - Y_l))^2 }^4 \\
& \geq \Paren{ \frac{k^4}{n^4} \sum_{(i,j, k, l) \in S_a^2 \times S_b^2} w_i w_j w_r w_l \mathbbm{1}_{(Y_i, Y_j) \in E_a} \mathbbm{1}_{(Y_r, Y_l) \in E_b}}^4 \cdot \min_{(Y_i,Y_j) \in E_a, (Y_r, Y_l) \in E_b} (p(Y_i - Y_j) - p(Y_r - Y_l))^8\\
&\geq \Paren{ (k/n)^4 \alpha_a(w)^4 - O(\xi) }^4 \cdot \Paren{ (k/n)^4 \alpha_b(w)^4 - O(\xi) }^4 \\
&\qquad \cdot \Paren{ \Paren{ \Sigma_a - \Sigma_b, A}^8 - O(1) \cdot \log(1/\xi)^8 \cdot \Paren{ \|\Sigma_a^{1/2}A \Sigma_a^{1/2}\|^8_F + \|\Sigma_b^{1/2}A \Sigma_b^{1/2}\|^8_F } }\\
&\geq \Paren{ (k/n)^{32} \alpha_a(w)^{16} \alpha_b(w)^{16} - O(\xi^4) } \\
&\qquad \cdot \Paren{ \Paren{ \Sigma_a - \Sigma_b, A}^8 - O(1) \cdot \log(1/\xi)^8 \cdot \Paren{ \|\Sigma_a^{1/2}A \Sigma_a^{1/2}\|^8_F + \|\Sigma_b^{1/2}A \Sigma_b^{1/2}\|^8_F } } \;.
\end{align*}
In the final few inequalities we use $ x^2 - 2y^2 < 2(x-y)^2$ repeatedly. Putting everything together, we see 
 \begin{align*}
 \Paren{ \Var_{x, y \sim \cN(\mu(w), \Sigma(w))}[p(x-y)] }^4 &\geq \Paren{ (k/n)^{32} \alpha_a(w)^{16} \alpha_b(w)^{16} - O(\xi^4) - O(\eps^2 \cdot k^{20})  }\iprod{\Sigma_a - \Sigma_b, A}^8 \\
 &\qquad- O(1) \cdot \Paren{ \eps^2 k^{20} + \log(1/\xi)^8 } \cdot  \Paren{ \|\Sigma_a^{1/2} A \Sigma_a^{1/2}\|_F^8 +  \|\Sigma_b^{1/2} A \Sigma_b^{1/2} \|_F^8 } \\
 &\qquad -O(\eps^2 k^{20}) \cdot \|\Sigma^{1/2} A \Sigma^{1/2}\|_F^8 \;.
 \end{align*}
Rearranging this, and using the fact that $\|\Sigma^{1/2} A \Sigma^{1/2}\|_F^8  = \Paren{ \Var_{x, y \sim \cN(\mu(w), \Sigma(w))}[p(x-y)] }^4$, completes the proof of our lemma.
\end{proof}


\section{From Small Overlap to Large Parameter Distance}
\label[section]{sec:tv-param}

In this section we prove \cref{lem:TV_param}.
Much of the argument is implicit in \cite{DKS17-sq}. Here we make it explicit.

\begin{proof}[Proof of \cref{lem:TV_param}]
If there is a unit vector $v$ such that $|\iprod{v,\mu_P - \mu_Q}| > 0.1 \sqrt{\log(1/\epsilon)} \cdot \sqrt{\iprod{v,(\Sigma_P + \Sigma_Q)v}}$, 
then case (1) holds and we are done.
Suppose no such $v$ exists.

Note that the total variation distance between distributions on $\R^d$ is unaffected by affine transformations.
Letting $P' = \cN(\Sigma_P^{-1/2} \mu_P,I)$ and $Q' = \cN(\Sigma_P^{-1/2} \mu_Q, \Sigma_P^{-1/2} \Sigma_Q \Sigma_P^{-1/2})$, 
we may assume $\dtv(P',Q') \geq 1 - \epsilon$.
Let $A = \Sigma_P^{-1/2} \Sigma_Q \Sigma_P^{-1/2}$.
Let $H^2(P, Q)$ denote the squared Hellinger distance between distributions $P$ and $Q$ and use the following formula for the squared Hellinger distance between two multivariate normal distributions
  	\[ H^2(\mathcal N(\mu_1, \Sigma_1), \mathcal N(\mu_2, \Sigma_2)) = 1 - \frac{ \det (\sum_1)^{1/4} \det (\sum_2) ^{1/4}} { \det \left( \frac{\sum_1 + \sum_2}{2}\right)^{1/2} }
  	\exp\left\{-\frac{1}{8}(\mu_1 - \mu_2)^T
  	\left(\frac{\sum_1 + \sum_2}{2}\right)^{-1}
  	(\mu_1 - \mu_2)
  	\right\} \;, \]
  	and so
  	\[ H^2(P', Q') = 1 - \frac{  \det (A) ^{1/4}} { \det \left( \frac{I + A}{2}\right)^{1/2} }
  	\exp\left\{-\frac{1}{8}(\mu_P - \mu_Q)^T \Sigma_P^{-1/2}
  	\left(\frac{I + A }{2}\right)^{-1}
  	\Sigma_P^{-1/2} (\mu_P - \mu_Q) \right\}. \]
  	Since $\sqrt{H^2(P', Q')(2-H^2(P',Q'))} \geq \dtv(P', Q') \geq 1 - \eps$, we have $H^2(P',Q') \geq 1 - \sqrt{2\eps}$ and hence
  	\begin{equation}\label{eqn:hell_upbound}
  	\frac{\det(A)^{1/4}} {\det\left(\frac{I+ A}{2}\right)^{1/2}} \leq \sqrt{2 \epsilon} \cdot\exp\left\{\frac{1}{8}(\mu_P - \mu_Q)^T \Sigma_P^{-1/2} \left(\frac{I+A}{2}\right)^{-1} \Sigma_P^{-1/2} (\mu_P - \mu_Q) \right\} \;.
  	\end{equation}
Since we are assuming condition (1) does not hold, if we let $v = \Sigma_P^{-1/2} \left(\frac{I + A}{2}\right)^{-1} \Sigma_P^{-1/2} (\mu_P - \mu_Q)$, then
  	\begin{align*}
  	\iprod{ \mu_P - \mu_Q,  v}&  \leq  \sqrt{ \log(1/\epsilon) } \cdot \sqrt{ \iprod{v,(\Sigma_P + \Sigma_Q) v} }\\
        & \leq \sqrt{ \log(1/\epsilon) } \cdot \sqrt{\iprod{\mu_P - \mu_Q, v}} \;,
  	\end{align*}
        using the definition of $v$.
  	And so, $\iprod{ \mu_P - \mu_Q,  v} \leq \log 1/\epsilon$.
        Plugging this back into Equation~\eqref{eqn:hell_upbound}, we get
  	\begin{equation*}
  	\frac{\det(A)^{1/4}} {\det\left(\frac{I+ A}{2}\right)^{1/2}} \leq \sqrt{2\epsilon} \cdot (1/\epsilon)^{1/8} \leq \sqrt 2 \cdot \e^{3/8} \mper
  	\end{equation*}
  	Let the eigenvalues of $A$ be given by $\lambda_1,\ldots,\lambda_d$.
        Expanding $\det A = \prod_{i=1}^d \lambda_i$ and $\det (I+A)/2 = 2^{-d} \prod_{i=1}^d(1+\lambda_i)$ and taking logs, then using an argument identical to \cite{DKS17-sq}, Lemma B.4, we get
  	\begin{equation}\label{eqn:hell_lb} 
          \sum_{i=1}^d \min(|\log \lambda_i|, (\log \lambda_i)^2) \geq \Omega(\log(1/\epsilon))\mper
        \end{equation}
  	Hence, when condition (1) does not hold, Equation~\eqref{eqn:hell_lb} holds.
    If condition (2) also fails, $\lambda_i \in [1/\sqrt{\log(1/\epsilon)},\sqrt{\log(1/\epsilon)}]$.
    Under these circumstances, we want to show that case (3) holds.
    By \cref{clm:tv-params-1},
    \[
    \| I - A \|_F^2 \geq \Omega\Paren{\frac{\log(1/\epsilon)}{\log \log(1/\epsilon)}}\mcom
    \]
    completing the proof.
\end{proof}

\begin{claim}\label[claim]{clm:tv-params-1}
    For all $a \geq e$ and all $x \in [1/a,a]$,
    \[
    \min(|\log x|,(\log x)^2) \leq O(\log a) \cdot (1-x)^2 \;.
    \]
\end{claim}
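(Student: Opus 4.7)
The plan is to handle this by a straightforward case analysis on whether $|\log x|$ is larger or smaller than $1$, which is the threshold at which $\min(|\log x|,(\log x)^2)$ switches between its two branches.

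First I would handle the case $|\log x| \leq 1$, i.e.\ $x \in [1/e, e]$, where $\min(|\log x|,(\log x)^2) = (\log x)^2$. On this compact interval the function $x \mapsto \log(x)/(x-1)$, extended continuously by the value $1$ at $x=1$, is bounded by some absolute constant $c_0$ (explicitly, $c_0 = e/(e-1)$, attained at $x = 1/e$). Hence $(\log x)^2 \leq c_0^2 (1-x)^2 \leq c_0^2 \cdot \log a \cdot (1-x)^2$, using $\log a \geq 1$ since $a \geq e$.

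Next I would handle the case $|\log x| > 1$, where $\min(|\log x|,(\log x)^2) = |\log x|$. This splits into $x > e$ and $x < 1/e$. When $x \in (e, a]$, we have $|\log x| = \log x \leq \log a$ and $(1-x)^2 = (x-1)^2 > (e-1)^2 > 1$, so $|\log x| \leq \log a \leq \log a \cdot (1-x)^2$. When $x \in [1/a, 1/e)$, we have $|\log x| = -\log x \leq \log a$ and $(1-x)^2 > (1 - 1/e)^2$, so $|\log x| \leq (1-1/e)^{-2} \cdot \log a \cdot (1-x)^2$.

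Combining the three cases gives the desired bound with constant $\max\{c_0^2, 1, (1-1/e)^{-2}\} = O(1)$, which is even slightly stronger than the $O(\log a)$ claimed in the statement (the $\log a$ factor is only needed to unify the bounds across the regime $|\log x| \leq 1$, where $\log a$ can in principle be large). No step is really the main obstacle here; the argument is essentially a bounded-ratio calculation on a compact interval together with a trivial bound when $|1-x|$ is bounded away from zero.
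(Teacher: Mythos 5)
Your proof is correct and follows essentially the same route as the paper's: split into $x\in[1/e,e]$ (where the minimum is $(\log x)^2$ and the ratio to $(1-x)^2$ is maximized at $x=1/e$) and the two outer intervals (where $|\log x|\le\log a$ and $(1-x)^2$ is bounded below by an absolute constant). The only cosmetic difference is that on $(e,a]$ the paper uses $\log x\le x-1\le(x-1)^2$ to get an absolute constant there, whereas your chain passes through $\log a$ — so your closing parenthetical about where the $\log a$ factor is actually needed is slightly off, but the claimed bound is fully established either way.
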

\begin{proof}
    To see this, first observe that $\min(|\log x|,(\log x)^2) = (\log x)^2$ if $x \in [1/e,e]$ and otherwise $\min(|\log x|, (\log x)^2) = |\log x|$.

    For $x \in [1/e,e]$, the maximum value of $(\log x)^2 / (1-x)^2$ is $1/(1/e-1)^2$ when $x = 1/e$.
    So we are done for $x \in [1/e,e]$.
    
    Next we must show $|\log x| \leq O(\log a) (1-x)^2$ for $x \in [1/a,1/e] \cup [e,a]$.
    Consider the case $x \in [e,a]$.
    Then we have $0 \leq \log x = \log(1 + (x-1)) \leq (x-1) \leq (x-1)^2$, using $x-1 \geq 0$.
    The last case to address is $x \in [1/a,1/e]$.
    In this case, $(1-x)^2 \geq (1-1/e)^2$ and $|\log x| \leq \log a$, which completes the proof.
\end{proof}

\section{Reduction to Constant-Accuracy Clustering and Proof of Main Theorem}
\label[section]{sec:constant-accuracy-and-main-theorem}

In this section, we show that a list of candidate rough clusterings suffices to obtain
our desired robust learning algorithm. We start by showing (Proposition~\ref{prop:const-acc})
that if we had a single rough clustering, this would suffice. Unfortunately, our clustering algorithm 
does not provide us with a single set of clustered points. Instead, it gives us a large list of
hypothesis clusterings with the high probability guarantee that at least one of them is correct. 
It turns out that using this weaker guarantee also suffices for our purposes.



Let the overlap of two distributions $P$ and $Q$ be $L(P,Q) = \int \min \{ dp,dq \}$. 
Note that this is $1-\dtv(P,Q)$. Let $h(P,Q) = -\log(L(P,Q))$. We will need the following lemma.
\begin{lemma}\label{largeOverlapLem}
Let $A$ and $B$ be two Gaussians with $h(A,B) = O(1)$. If $D \in \{ A, B\}$, then
\[ \Pr_{x \sim D} \left[1/\eps \geq  \frac{A(x)}{B(x)} \geq \eps \right] \geq 1- \poly(\eps) \;. \]
\end{lemma}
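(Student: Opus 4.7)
The plan is to show that the log-likelihood ratio $\ell(x) := \log(A(x)/B(x))$ is a degree-$2$ polynomial whose value under either $A$ or $B$ concentrates on an $O(1)$-window, so $|\ell(x)|$ rarely exceeds $\log(1/\eps)$. By the $A \leftrightarrow B$ symmetry, it suffices to treat $D = A$.

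First I would translate the overlap bound $h(A,B) = O(1)$ into parameter bounds. Since $L(A,B) \geq e^{-h(A,B)} = \Omega(1)$, we have $\dtv(A,B) \leq 1 - \Omega(1)$, and by the standard inequality $H^2 \leq 2\dtv$ the squared Hellinger distance is bounded away from $2$, i.e.\ $\int \sqrt{AB} = \Omega(1)$. Plugging into the closed form for the Hellinger affinity between two Gaussians used in the proof of \cref{lem:TV_param}, taking logs, and splitting into the two terms, I would conclude both
\[
(\mu_A - \mu_B)^{\top}\!\left(\frac{\Sigma_A + \Sigma_B}{2}\right)^{-1}\!(\mu_A - \mu_B) = O(1),
\qquad
\sum_{i} \log\!\left(\frac{1 + \alpha_i}{2\sqrt{\alpha_i}}\right) = O(1),
\]
where $\alpha_1,\dots,\alpha_d$ are the eigenvalues of $\Delta := \Sigma_B^{-1/2}\Sigma_A\Sigma_B^{-1/2}$. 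Because each summand in the second inequality is nonnegative and blows up as $\alpha_i \to 0$ or $\alpha_i \to \infty$, every $\alpha_i$ is confined to a bounded interval $[c_1,c_2]$; moreover a Taylor expansion around $\alpha = 1$ gives $\log((1+\alpha)/(2\sqrt{\alpha})) \gtrsim (\alpha - 1)^2$ on any bounded range, so $\|\Delta - I\|_F^2 = O(1)$. Using $\Sigma_A \asymp \Sigma_B$ (from the bounded eigenvalues of $\Delta$) in the first inequality also gives $\|\Sigma_B^{-1/2}(\mu_A - \mu_B)\|^2 = O(1)$.

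Next I would compute the mean and variance of $\ell$ under $A$. Writing $x = \mu_A + \Sigma_A^{1/2} z$ with $z \sim \cN(0,I)$ and letting $M := \Sigma_A^{1/2}\Sigma_B^{-1}\Sigma_A^{1/2}$ (which has the same eigenvalues $\alpha_i$ as $\Delta$) and $v := \Sigma_A^{1/2}\Sigma_B^{-1}(\mu_A - \mu_B)$, a direct expansion yields
\[
\ell(x) - \E_A[\ell(x)] = \tfrac{1}{2}\bigl(z^{\top} M z - \tr M\bigr) - \tfrac{1}{2}\bigl(\|z\|^2 - d\bigr) + z^{\top} v.
\]
Diagonalizing $M$, the variance is $\tfrac{1}{2}\sum_i (\alpha_i - 1)^2 + \|v\|^2 = \tfrac{1}{2}\|\Delta - I\|_F^2 + \|v\|^2 = O(1)$ by the previous step, and $\E_A[\ell] = \mathrm{KL}(A\|B) = \tfrac{1}{2}\bigl(\|v\|^2 + \sum_i(\alpha_i - 1 - \log \alpha_i)\bigr) = O(1)$ by the same bounds.

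Finally I would apply Hanson--Wright to the centered quadratic-plus-linear form above: its quadratic part has operator norm $\max_i|\alpha_i - 1|/2 = O(1)$ and Frobenius norm $O(1)$, and its linear part has $\ell_2$ norm $O(1)$, so
\[
\Pr_A\!\left[\,|\ell(x) - \E_A \ell| > t\,\right] \leq 2\exp\!\bigl(-c \min(t,\, t^2)\bigr)
\]
for some universal $c > 0$. Taking $t = \tfrac{1}{2}\log(1/\eps)$ (and absorbing the $O(1)$ mean) gives $\Pr_A[|\ell(x)| > \log(1/\eps)] \leq \eps^{\Omega(1)} = \poly(\eps)$, which is exactly the claimed statement.

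The main obstacle is the first step: extracting \emph{both} a uniform bound on the eigenvalues of $\Delta$ \emph{and} a dimension-independent $\|\Delta - I\|_F^2 = O(1)$ from a single scalar Hellinger bound. Once those parameter bounds are in hand, the remaining variance computation and Hanson--Wright application are routine.
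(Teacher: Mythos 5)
Your proof is correct and follows essentially the same route as the paper: write $\log(A(x)/B(x))$ as a degree-$2$ polynomial, show it has $O(1)$ mean and variance under $A$ (which the paper merely asserts from the large overlap and you derive explicitly from the Gaussian Hellinger-affinity formula), and conclude by Gaussian concentration for degree-$2$ forms (Hanson--Wright). The only nit is that your displayed identity for $\mathrm{KL}(A\|B)$ should contain $(\mu_A-\mu_B)^\top\Sigma_B^{-1}(\mu_A-\mu_B)$ rather than $\|v\|^2$, but both quantities are $O(1)$ under your parameter bounds, so the argument is unaffected.
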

\begin{proof}
Without loss of generality, let $A$ be $\N(0, I)$ and take the probability measure with respect to $A$, i.e., $D = A$. 
The ratio $B(x)/A(x) = \exp(q(x))$ for some quadratic polynomial $q$. 
Since the overlap of $A$ and $B$ is large, $q$ must have constant $L_2$ norm with respect to $A$. 
The result then follows from Gaussian concentration for degree-$2$ polynomials.
\end{proof}

We will need a triangle inequality using ideas from~\cite{DKS17-sq}.

\begin{lemma}\label{triangleLem}
If $A,B,C$ are Gaussian distributions with $h(A,B)=O(1)$, then
\begin{equation}\label{eqn:hell_triangle}
h(A,C) = O(1 + h(B,C)) \;.
\end{equation}
\end{lemma}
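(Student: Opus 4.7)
The plan is to show that comparing $\min(A,C)$ and $\min(B,C)$ pointwise on a high-probability region, where the likelihood ratio $A/B$ is bounded, yields a lower bound on $L(A,C)$ in terms of $L(B,C)$. This is essentially the kind of ``robustified triangle inequality'' one gets for overlap by trading a multiplicative factor for a probability loss.

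First I would invoke Lemma~\ref{largeOverlapLem} with the pair $(A,B)$, which applies because $h(A,B)=O(1)$. For a parameter $\epsilon \in (0,1)$ to be chosen later, define
\[
E_\epsilon = \Set{x \in \R^d \suchthat \epsilon \leq A(x)/B(x) \leq 1/\epsilon}\mper
\]
By Lemma~\ref{largeOverlapLem}, both $\Pr_A[E_\epsilon^c]$ and $\Pr_B[E_\epsilon^c]$ are at most $\poly(\epsilon)$.

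Next, I would establish the pointwise inequality $\min(A(x),C(x)) \geq \epsilon \cdot \min(B(x),C(x))$ for every $x \in E_\epsilon$. Split into two cases: if $C(x) \leq A(x)$, then $\min(A,C)=C(x) \geq \min(B,C)$; if $C(x) > A(x)$, then $\min(A,C) = A(x) \geq \epsilon B(x) \geq \epsilon \min(B,C)$. Integrating over $E_\epsilon$ and using that $\int_{E_\epsilon^c} \min(B,C)\,dx \leq \int_{E_\epsilon^c} B\,dx \leq \poly(\epsilon)$, we get
\[
L(A,C) \geq \int_{E_\epsilon} \min(A,C)\,dx \geq \epsilon\bigl(L(B,C) - \poly(\epsilon)\bigr)\mper
\]

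Finally, I would choose $\epsilon$ to balance the two terms: pick $\epsilon$ so that $\poly(\epsilon) \leq L(B,C)/2$, which requires $\epsilon = \Omega(L(B,C)^{c})$ for some absolute constant $c>0$ determined by the polynomial in Lemma~\ref{largeOverlapLem}. This yields $L(A,C) \geq \Omega(L(B,C)^{c+1})$, and taking $-\log$ gives $h(A,C) \leq O(1) + O(h(B,C))$, as desired. The one subtlety worth double-checking is that the ``$\poly(\epsilon)$'' in Lemma~\ref{largeOverlapLem} has a uniform exponent depending only on the $O(1)$ bound on $h(A,B)$, so that the choice of $\epsilon$ in terms of $L(B,C)$ is valid regardless of whether $h(B,C)$ is small or large. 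The main (minor) obstacle is just this bookkeeping --- the mathematical content reduces to the clean pointwise comparison $\min(A,C) \geq \epsilon \min(B,C)$ on $E_\epsilon$.
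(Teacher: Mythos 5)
Your argument is correct, but it takes a genuinely different route from the paper. You work at the level of densities: you use Lemma~\ref{largeOverlapLem} to restrict to the region $E_\epsilon$ where $\epsilon \le A(x)/B(x) \le 1/\epsilon$, prove the pointwise bound $\min(A,C) \ge \epsilon \min(B,C)$ there, and then tune $\epsilon$ as a power of $L(B,C)$ to get $L(A,C) \ge \Omega(L(B,C)^{O(1)})$, which after taking $-\log$ gives $h(A,C) \le O(1) + O(h(B,C))$; the bookkeeping point you flag is fine, since the exponent in the $\poly(\epsilon)$ of Lemma~\ref{largeOverlapLem} depends only on the $O(1)$ bound on $h(A,B)$, and so the constant multiplying $h(B,C)$ is a fixed function of that bound. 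The paper instead argues in parameter space, invoking results of~\cite{DKS17-sq}: it splits $h = h_\mu + h_\Sigma$, uses the quasi-triangle inequality for $h_\Sigma$ (Proposition B.3 there) and the explicit infimum formula for $h_\mu$ (Proposition B.6), together with $\Sigma_A = \Theta(\Sigma_B)$ forced by $h(A,B)=O(1)$. Your approach is more elementary and more general --- it never uses Gaussianity of $C$, only of the pair $(A,B)$ through Lemma~\ref{largeOverlapLem}, and it is self-contained given that lemma --- at the cost of a likelihood-ratio/threshold-tuning argument and a multiplicative constant in front of $h(B,C)$ coming from the concentration exponent. The paper's route keeps explicit control of how means and covariances transfer from $B$ to $A$, which matches the style of its other structural lemmas, but leans on the external propositions from~\cite{DKS17-sq}. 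Either proof suffices for the stated bound.
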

\begin{proof}
From~\cite{DKS17-sq} it follows that for Gaussians $G$ and $H$ it holds
$h(G,H) = h_\Sigma(G,H)+h_\mu(G,H)$, 
where $h_\Sigma(G,H) = h(G',H')$ and $G'$, $H'$ are the mean $0$ translates of $G$ and $H$, respectively. 
By Proposition B.6 of that paper,
$$
h_\mu(G,H) = \inf_x ((x-\mu_G)\Sigma_G^{-1}(x-\mu_G)+(x-\mu_H)\Sigma_H^{-1}(x-\mu_H))/4 \;.
$$
From Proposition B.3 of that paper we have that
$$
h_\Sigma(A,C) = O(h_\Sigma(A,B)+h_\Sigma(B,C)) \;.
$$
Therefore, it suffices to prove that
$$
h_\mu(A,C) = O(1 + h_\mu(B,C)) \;.
$$
Note that we have
$$
h_\mu(B,C) = ((x-\mu_B)\Sigma_B^{-1}(x-\mu_B)+(x-\mu_C)\Sigma_C^{-1}(x-\mu_C))/4 \;,
$$
for some appropriately chosen $x$. Using that same value of $x$, we have that
$$
h_\mu(A,C) \leq ((x-\mu_A)\Sigma_A^{-1}(x-\mu_A)+(x-\mu_C)\Sigma_C^{-1}(x-\mu_C))/4 \;.
$$
The latter term is $O(h_\mu(B,C))$, while the former is $E=(x-\mu_A)\Sigma_A^{-1}(x-\mu_A)$. 
Since $h(A,B)=O(1)$, it is clear that $\Sigma_A=\Theta(\Sigma_B)$, 
and hence 
$$E=O((x-\mu_A)\Sigma_B^{-1}(x-\mu_A)) = O((x-\mu_B)\Sigma_B^{-1}(x-\mu_B)+(\mu_A-\mu_B)\Sigma_B^{-1}(\mu_A-\mu_B)) \;,$$
which is clearly $O(h_\mu(B,C)+1)$.
\end{proof}

\begin{proposition}\label[proposition]{prop:const-acc}
Let $X=\sum_{i=1}^k w_i G_i$ be a $k$-mixture of Gaussians with $h(G_i,G_j)$ at least a sufficiently large constant multiple of $\log(k/\eps)$ for all $i\neq j$. Let $X'$ be an $\eps$-corrupted version of $X$ and $N$ a sufficiently large polynomial in $d k/\eps$. Let $S$ be a set of $N$ random samples from $X'$. 
Let $T_1,\ldots,T_k\subset S$ be sets of samples so that, for some sufficiently small constant $c>0$, 
if $S_i$ is the set of samples in $S$ that were drawn from the $i$-th component of $X$, 
then $|T_i\cap S_i| \geq (1-c)\min(|T_i|,|S_i|)$ for all $i \in [k]$. 
There exists an algorithm that given $S$, the $T_i$'s and an additional set of $N$ independent 
samples from $X'$, returns a set of weights $u_i$ such that $\sum_{i=1}^k |u_i - w_i|  = O(\eps)$
and a list of Gaussians $H_1,\ldots,H_k$ 
so that $\sum_{i=1}^k w_i \dtv(H_i,G_i) \leq O(\eps\log(1/\eps))$ for all $i \in [k]$.
\end{proposition}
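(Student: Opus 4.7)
The plan is a three-step bootstrapping scheme: (i) convert the rough clusters $T_i$ into constant-accurate parameter estimates $(H_i^{(0)}, u_i^{(0)})$, (ii) use these estimates together with a likelihood-ratio test to relabel a fresh batch of samples into clusters $C_i$ that are $O(\epsilon)$-corrupted, and (iii) run a known robust Gaussian learner on each $C_i$ to obtain the final $H_i$ with TV error $O(\epsilon \log(1/\epsilon))$ against $G_i$.

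For step (i), because $|T_i \cap S_i| \geq (1-c)\min(|T_i|,|S_i|)$, any $T_i$ corresponding to a component with $w_i = \Omega(\epsilon)$ is an $O(c + \epsilon/w_i)$-corrupted sample of $G_i$, so applying an off-the-shelf robust Gaussian estimator (for instance, the one from \cite{DKKLMS16}) yields $H_i^{(0)}$ with $\dtv(H_i^{(0)}, G_i) \leq 1/3$, and hence $h(H_i^{(0)}, G_i) = O(1)$. Components with $w_i = O(\epsilon)$ contribute at most $w_i \leq O(\epsilon)$ to $\sum_i w_i \dtv(H_i, G_i)$ regardless of the output, so for those we can supply an arbitrary default estimate.

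For step (ii), I would combine \cref{triangleLem} (applied with the roles of $A$ and $B$ reversed, to get a lower bound on $h(H_i^{(0)}, G_j)$) with $h(G_i, G_j) \gg \log(k/\epsilon)$ to conclude $h(H_i^{(0)}, G_j) \geq C' \log(k/\epsilon)$ for $j \neq i$ and any constant $C'$ of our choosing, so $L(H_i^{(0)}, G_j) \leq (\epsilon/k)^{C'}$. A direct argument from the definition of $L$ then shows $\Pr_{x \sim G_j}[H_i^{(0)}(x) \geq (\epsilon/k)^{C'/2} G_j(x)] \leq (\epsilon/k)^{C'/2}$, while \cref{largeOverlapLem} applied to the pair $(H_j^{(0)}, G_j)$ gives $H_j^{(0)}(x) \geq \poly(\epsilon) \cdot G_j(x)$ with probability $1 - \poly(\epsilon)$ under $x \sim G_j$. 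For $C'$ large enough, these bounds (together with the fact that the $u_i^{(0)}$ are within a constant factor of the $w_i$) imply that the maximum a posteriori assignment $\arg\max_i u_i^{(0)} H_i^{(0)}(x)$ equals $j$ except on a $\poly(\epsilon)$-fraction of samples from $G_j$. Hence each resulting cluster $C_j$ is an $O(\epsilon)$-corruption of samples from $G_j$, and the empirical weights $u_j = |C_j|/N$ automatically satisfy $\sum_j |u_j - w_j| = O(\epsilon)$.

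For step (iii), feeding each $C_j$ to the same robust Gaussian estimator yields $\dtv(H_j, G_j) \leq O(\epsilon \log(1/\epsilon))$, and the weighted sum gives the desired final bound. The main obstacle I anticipate is step (ii): the likelihood-ratio analysis must be carried out carefully because the initial estimates are only constant-accurate in TV, and we need the large Hellinger separation of the true components to drive the re-clustering error all the way down to $O(\epsilon)$. The key conceptual point is that, thanks to \cref{triangleLem}, a constant-accurate substitute for $G_i$ preserves the exponentially small overlap with $G_j$ up to the loss of a constant factor in the exponent, which is exactly what makes the likelihood-ratio test succeed with the required accuracy.
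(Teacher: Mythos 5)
Your three-step scheme is essentially the paper's own proof of \cref{prop:const-acc}: robustly learn constant-accuracy Gaussians from the rough clusters $T_i$ via \cite{DKKLMS16}, combine \cref{triangleLem} with \cref{largeOverlapLem} to show that a density-based reassignment of the fresh samples misclassifies only a $\poly(\eps/k)$ fraction of the samples from each component, and then re-run the robust learner on the refined clusters; the weight estimation by empirical fractions is also the same (whether you maximize $u_i^{(0)}H_i^{(0)}(x)$ or, as the paper does, $\tilde H_i(x)$, is immaterial once the separation constant is large).

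There is, however, a concrete accounting error in the hand-off from step (ii) to step (iii). The claim that ``each resulting cluster $C_j$ is an $O(\eps)$-corruption of samples from $G_j$'' fails when some weights are small: the adversary may place all $\eps N$ corrupted points so that they are assigned to the cluster of a component with weight, say, $w_j = \Theta(\sqrt{\eps})$ (or $\Theta(\eps)$), making the corruption fraction of $C_j$ of order $\eps/w_j$, which can be $\sqrt{\eps}$ or even a constant; then the per-component bound $\dtv(H_j,G_j) \leq O(\eps\log(1/\eps))$ asserted in step (iii) is simply false. The proposition only claims a weighted bound, and that is how the paper closes the argument: it defines $\eps_j$ as the fraction of misassigned/corrupted points in cluster $j$, notes $\sum_j w_j\eps_j = O(\eps)$, applies the \cite{DKKLMS16} guarantee per cluster to get $\dtv(H_j,G_j) = O(\eps_j\log(1/\eps_j))$, and then uses Jensen's inequality for the concave function $x\log(1/x)$ to conclude $\sum_j w_j\eps_j\log(1/\eps_j) = O(\eps\log(1/\eps))$. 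You need this (or an equivalent weighted accounting) to finish. A smaller slip of the same kind: discarding components with $w_i = O(\eps)$ can cost $k\cdot O(\eps)$ in the weighted sum; the cutoff should be $\eps/k$, as in the paper, so that the total discarded mass is $O(\eps)$ when $k$ is not treated as a constant.
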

\begin{proof}
First, we note that we can assume that $w_i\geq \eps/k$ for all $i \in [k]$. 
The clusters with smaller weight can be identified with high probability 
by the size of the corresponding set $T_i$. 
The contribution to our sum of $w_i\dtv(H_i,G_i)$ from such clusters is $O(\eps)$, 
and treating all samples from such clusters as an additional source of error 
keeps the error rate at $O(\eps)$. 
We will assume this throughout the rest of the argument.

For each remaining cluster $T_i$, we note that $T_i$ has $1-c$ overlap 
with a large set of random samples from $G_i$. 
This means that it can be thought of as a set of independent samples from $G_i$ with a $c$-fraction of them adversarially corrupted. 
Applying the algorithm from~\cite{DKKLMS16} for robustly learning a single Gaussian to each component cluster, 
we can learn Gaussians $\tilde H_i$ so that, $\dtv(G_i,\tilde H_i) < 1/10$, and thus $h(G_i,\tilde H_i)=O(1)$.

Note that, by repeated applications of Lemma \ref{triangleLem}, this implies that 
for $i\neq j$ that $h(G_i, \tilde H_j)$ is at least a large multiple of $\log(k/\eps)$. 
This in turn implies that $\int \min(g_i(x),\tilde h_j(x))dx$ is a low-degree polynomial in $\eps/k$, 
where $g_i(x)$ and $\tilde h_j(x)$ are the probability density functions of $G_i$ and $\tilde H_j$, respectively. 
This on the other hand is equal to
$$
\E_{x\sim G_i}[ \min(1,\tilde H_j(x)/G_i(x))] \;.
$$
This means that except with probability $(\eps/k)$ we have that $\tilde H_j(x)/G_i(x)$ is at most 
a small polynomial in $(\eps/k)$.

Given these $\tilde H_i$, we cluster the remaining $N$ samples in the following way. 
We associate each sample $x$ to the cluster $i$ for which the probability density function of $\tilde H_i(x)$ is largest. 
We claim that among the samples drawn from the component $G_i$, all but an $O(\eps/k)$-fraction of them end up in cluster $i$. 
This is because since $h(G_i,\tilde H_i)=O(1)$, Lemma \ref{largeOverlapLem} implies that for all but an $\eps/k$ fraction 
of these samples we have that $\tilde H_i(x)/G_i(x) > \poly(\eps/k)$. However, for all but an $\eps/k$-fraction 
of these samples we have that all other $\tilde H_j(x)/G_i(x)$ are less than a smaller polynomial in $\eps/k$.

Therefore, all but an $\eps/k$ fraction of samples from $G_i$ are assigned to the $i$-th cluster. 
Hence, among all $N$ samples, all but an $\eps$-fraction of them are non-corrupted 
samples that are assigned to the correct cluster. For each $i$, let $\eps_i$ be the fraction of samples 
assigned to the $i$-th cluster coming from other components 
or from errors plus the fraction of samples from the $i$-th component that were corrupted. 
It is easy to see that $\sum_{i=1}^k w_i\eps_i = O(\eps)$. Finally, running the robust estimation algorithm 
of \cite{DKKLMS16}, we can compute estimates $H_i$ so that $\dtv(H_i,G_i)=O(\eps_i \log(1/\eps_i))$ with high probability. 
Thus, the final error is $\sum w_i O(\eps_i \log(1/\eps_i))$, which is $O(\eps \log(1/\eps))$ 
by Jensen's inequality applied to the concave function $x\log(1/x)$.

Finally, we note that learning the desired hypothesis weights $u_i$ so that $\sum |u_i-w_i| = O(\eps)$ is quite simple. 
It suffices to take sufficiently many random samples from $X'$ and define the $u_i$'s based on 
what fraction of these samples lie within each bin. This completes the proof.
\end{proof}

The above proposition would have been sufficient if our clustering algorithm returned
a single rough clustering. 
We can use Proposition~\ref{prop:const-acc} with an appropriate hypothesis testing routine 
(see, e.g.,~\cite{DDS12stoclong, DDS15})
to handle the fact that our clustering algorithm produces a list of candidate clusterings one of which 
is accurate.

\begin{corollary} \label{cor:list}
Letting X and X' be as in Proposition~\ref{prop:const-acc} with $\min w_i$ at
least a sufficiently large multiple of $\eps \log(1/\eps)$, a set of $S$
samples from $X'$, and given a collection $\mathcal{H}$ of $k$-tuples of sets of
samples $T_1,\ldots,T_k$ so that at least one such $k$-tuple satisfies the
conditions in Proposition~\ref{prop:const-acc}, there exists an algorithm that given
$\mathcal{H}$, and $\poly(\log(|\mathcal{H}|)kn/\eps)$ additional samples from
$X'$, returns a list of weights $u_1, \ldots, u_k$ and Gaussians $H_1,\ldots,H_k$ so that 
$\sum_i |u_i - w_i| = O(\eps)$ and $\sum_i w_i \dtv(H_i, G_i) = O(\eps \log(1/\eps))$. The algorithm runs in time
polynomial in the number of samples and $|\mathcal{H}|$.
\end{corollary}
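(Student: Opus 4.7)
The plan is to apply Proposition~\ref{prop:const-acc} to each candidate clustering in $\mathcal{H}$ to produce a list of at most $|\mathcal{H}|$ candidate mixture hypotheses, and then invoke a black-box hypothesis selection subroutine to identify one that is $O(\eps \log 1/\eps)$-close to the true mixture $X$ in total variation distance. Concretely, for each $(T_1,\ldots,T_k) \in \mathcal{H}$, I would run the algorithm from Proposition~\ref{prop:const-acc} (on a shared set of $N$ additional samples from $X'$) to obtain weights $u_1^{(j)},\ldots,u_k^{(j)}$ and Gaussians $H_1^{(j)},\ldots,H_k^{(j)}$. Let $\hat F_j = \sum_i u_i^{(j)} H_i^{(j)}$ denote the resulting mixture. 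For the (at least one) index $j^\star$ such that the corresponding tuple in $\mathcal{H}$ satisfies the hypothesis of Proposition~\ref{prop:const-acc}, the triangle inequality together with the conclusions $\sum_i |u_i^{(j^\star)} - w_i| = O(\eps)$ and $\sum_i w_i \dtv(H_i^{(j^\star)}, G_i) = O(\eps \log 1/\eps)$ yields $\dtv(\hat F_{j^\star}, X) = O(\eps \log 1/\eps)$.

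Next, I would feed the list $\{\hat F_j\}_{j \leq |\mathcal{H}|}$ into a standard tournament-style hypothesis selection procedure (for instance, the Scheff\'e-based selector from~\cite{DDS12stoclong,DDS15}), using $\poly(\log|\mathcal{H}|,k,d,1/\eps)$ fresh samples drawn from $X'$. The guarantee of this routine is that, with high probability, the selected hypothesis $\hat F$ satisfies $\dtv(\hat F, X) \leq C \cdot \min_j \dtv(\hat F_j, X) + O(\eps)$ for some absolute constant $C$; the additive $O(\eps)$ term absorbs the $\eps$-corruption between $X$ and $X'$, since the empirical distribution on the fresh samples is $O(\eps)$-close in total variation to $X$. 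Combining this with the bound on $\hat F_{j^\star}$ gives $\dtv(\hat F, X) = O(\eps \log 1/\eps)$, from which the per-component bound $\sum_i w_i \dtv(H_i, G_i) = O(\eps \log 1/\eps)$ and the weight bound $\sum_i |u_i - w_i| = O(\eps)$ can be read off (after matching up components, which is unambiguous under the large TV-separation assumption).

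The main obstacle I anticipate is verifying that the Scheff\'e tournament works in our setting: the primitive it requires is the ability, for any pair of candidate hypotheses $\hat F_j, \hat F_{j'}$, to compute (or at least determine the sign of) $\hat F_j(x) - \hat F_{j'}(x)$ at each sample point, so as to evaluate the Scheff\'e test statistic. Since $\hat F_j$ is a fully specified $k$-mixture of Gaussians with known parameters, this is a direct density evaluation and hence poses no computational difficulty. A minor technical issue is matching the components of $\hat F$ to the true components of $X$ so as to state the per-component TV bound rather than just the mixture-level bound; this is resolved because the pairwise TV separations $\dtv(G_i,G_j) \geq 1 - f(w_{\min})$ force any mixture close to $X$ in TV distance to admit a unique matching of components within distance $O(\eps \log 1/\eps)$, which can be recovered by a greedy assignment. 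The overall running time is $\poly(|\mathcal{H}|) \cdot T_{\ref{prop:const-acc}} + \poly(|\mathcal{H}|, \log|\mathcal{H}|, k, d, 1/\eps)$, as claimed.
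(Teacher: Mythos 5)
Your first two steps (run Proposition~\ref{prop:const-acc} on every tuple in $\mathcal{H}$, then select a hypothesis mixture by a Scheff\'e-type tournament against the corrupted samples) are exactly what the paper does, and the analysis of the tournament winner $\hat F$ at the \emph{mixture} level is fine. The gap is in the last sentence of your second paragraph: the per-component bounds cannot simply be ``read off'' from $\dtv(\hat F, X) = O(\eps\log(1/\eps))$. The tournament does not select the good index $j^\star$; it only certifies mixture-level closeness, and the winner may come from a bad clustering. Converting mixture closeness into a component-wise correspondence is a genuine parameter-identifiability statement for mixtures of separated Gaussians (the paper first discards hypotheses whose components have non-negligible pairwise overlap and then invokes Theorem B.3 of \cite{DKS17-sq}, which needs \emph{both} mixtures to have separated components). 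Even after that step, the quantitative conclusion is only $\dtv(u_iH_i, w_iG_i) = O(\eps\log(1/\eps))$, which, since $w_i$ may be as small as a constant multiple of $\eps\log(1/\eps)$, yields only $\dtv(H_i,G_i) < 1/2$ (i.e., $h(H_i,G_i)=O(1)$), not the claimed $\sum_i w_i\dtv(H_i,G_i)=O(\eps\log(1/\eps))$; likewise the weight error one can extract this way is $O(k\,\eps\log(1/\eps))$, not the stated $O(\eps)$. So your greedy-matching argument, even if made rigorous, lands at constant per-component accuracy, not the accuracy in the corollary.

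What is missing is a refinement stage after the tournament: use the constant-accuracy winner to \emph{re-cluster} fresh samples by assigning each point to the component $H_i$ of largest density. Because $h(H_i,G_i)=O(1)$ while $h(G_i,G_j)$ is a large multiple of $\log(k/\eps)$ for $i\neq j$ (so, via Lemma~\ref{triangleLem} and Lemma~\ref{largeOverlapLem}, cross-component likelihood ratios are polynomially small in $\eps/k$ except on an $\eps/k$-fraction of mass), all but an $O(\eps)$-fraction of the fresh samples are assigned to the correct cluster. Feeding these clusters back into Proposition~\ref{prop:const-acc} (or directly into the robust single-Gaussian estimator of \cite{DKKLMS16}, plus empirical frequencies for the weights) is what produces the final guarantees $\sum_i w_i\dtv(H_i,G_i)=O(\eps\log(1/\eps))$ and $\sum_i|u_i-w_i|=O(\eps)$. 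Without this re-clustering and re-estimation step your argument does not reach the stated error bounds.
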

\begin{proof}
For each clustering in $\mathcal{H}$, we can apply the algorithm
of Proposition~\ref{prop:const-acc} to learn a distribution $X_i = \sum u_i H_i$. 
We note that for any clustering with $T_i$ having $(1-c)$-agreement with the
samples from $G_i$ for each $i$, by Proposition~\ref{prop:const-acc}, we have that
$\dtv(X_i,X) = O(\eps \log(1/\eps))$ and thus by the triangle inequality
$\dtv(X_i,X') = O(\eps \log(1/\eps))$. For technical reasons that will
become apparent later, we throw away any hypothesis $X_i$ for which the
components do not have pairwise overlap bounded by a sufficiently
large power of $(\eps/k)$. We note that at least one close hypothesis
should remain. From there, some of the additional samples can be used
to run a tournament to find one of the $X_i$ (call it $X_0$) so that
$\dtv(X_0,X') \leq 3 \min_i \dtv(X_i,X')+\eps = O(\eps \log(1/\eps))$.

If our goal was merely to learn $X$ to small total variation distance,
we would be done. Since we want to do parameter estimation, further work
is required. We note that both $X$ and $X_0$ are mixtures of Gaussians
whose clusters have pairwise overlap bounded by a large power of
$(\eps/k)$, and furthermore that $\dtv(X,X_0) = O(\eps \log(1/\eps))$. 
By Theorem B.3 of~\cite{DKS17-sq}, this implies that for some appropriate permutation
of the clusters we have that $\dtv(u_i H_i, w_i G_i) = O(\eps \log(1/\eps))$, 
which implies that $\dtv(H_i, G_i) < 1/2$, or $h(H_i,G_i) = O(1)$. Given this, 
we can divide our additional samples of $X$ into clusters based on which $H_i$ 
provides them with the highest probability density value. 
As argued in the proof of Proposition~\ref{prop:const-acc} 
all but an $\eps$-fraction of these samples are correctly clustered. 
Feeding these clusters back into Proposition~\ref{prop:const-acc}  
(or just using a robust learning algorithm on them directly) gives our result.
\end{proof}

\begin{proof}[Proof of \cref{thm:main-informal}]
In the case of uniform mixtures, our main theorem follows immediately by putting together \cref{lem:const_acc} and Corollary~\ref{cor:list}.
See \cref{sec:arbit-weights} for the case of nonuniform mixtures.
\end{proof}

\section*{Acknowledgements}
We thank Pravesh Kothari for helpful conversations in the early phases of this project.
We also thank the Simons Institute and especially the program on Foundations of Data Science for hosting 
some of the authors while much of this work was completed.
We are grateful to Ainesh Bakshi and Pravesh Kothari for sharing a draft of \cite{BK20a} with us, which helped us to identify the issue discussed in \cref{sec:concurrent}.


  \phantomsection
  \addcontentsline{toc}{section}{References}
  \bibliographystyle{amsalpha}
  \bibliography{allrefs}

  \appendix

\section*{APPENDIX}


\section{Omitted Proofs from Section~\ref{sec:rough-clustering}} \label[section]{sec:omitted}

\begin{proof}[Proof of \cref{lem:rounding}]
  The algorithm \textsc{split} iterates the following subroutine to build the list of subsets $R_1,\ldots,R_m$.
  \begin{enumerate}
    \item Given current list of subsets $R_1,\ldots,R_t$, let $R(w) = \sum_{i \notin R_1 \cup \ldots \cup R_t} w_i$.
    \item Find $\pE$ which maximizes $\pE R(w)$ over all $\pE$ in $\cC$.
    \item Choose a random $i \sim [n]$ with probability $p_i = \pE w_i / \sum_{j \leq n} \pE w_j$.
    \item Create the set $R_{t+1} \subseteq [n]$ by including each $j \in [n]$ independently with probability $\pE w_i w_j / \pE w_i$.
  \end{enumerate}
  
  Let us first analyze the expected size of each $R_t$.
  At any iteration $t$ of \textsc{split}, we have
  \[
  \E |R_t| = \E_{i} \sum_{j \in [n]} \E \Ind(j \in R_t \text{ given that row $i$ is chosen}) = \frac{\sum_{i \leq n} \pE w_i w_j}{\sum_{i \leq n} w_i} = \frac{(n/k)^2}{n/k} = \frac n {k}
  \]
  where we have used that $\pE$ satisfies $\sum_{i \leq n} w_i = n/k$.

  Next let us analyze the expected fraction of $R_t$ which lies in one of $S$ or $T$.
  Consider first the case that \textsc{split} chooses $i \in S'$.
  Then
  \[
  \E \Brac{ \Abs{R_t \cap T'} \, | \, i \in S'} = \frac 1 { \sum_{i \in S'} \pE w_i} \sum_{i \in S', j \in T'} \pE w_i w_j \leq \frac 1 { \sum_{i \in S'} \pE w_i} \cdot \delta n^2
  \]
  by hypothesis.
  It follows that
  \[
  \E \min( |R_t \cap S'|, |R_t \cap T'| ) \leq \delta n^2\mper
  \]

  Finally, let $R'_t = \cup_{s \leq t} R_t$ be the indices chosen by the algorithm in some iteration up to $t$.
  How many new indices are chosen in iteration $t+1$?
  That is, we want to bound $|R'_{t+1} \setminus R'_t|$
  For this, we need a lower bound on $\pE \sum_{i \in [n], j \notin R'_t} w_i w_j$.

  By hypothesis, the uniform distribution on (indicator vectors for) the sets $S_1,\ldots,S_k$ is in $\cC$.
  Note that
  \[
  \E_{a \sim [k]} |S_a| |S_a \cap R'_t| = \frac n k \cdot \E_{a \sim [k]} |S_a \cap R'_t| = \frac n {k^2} |R'_t|\mper
  \]
  So, we may assume that the $\pE$ found by \textsc{split} at iteration $t+1$ has $\pE \sum_{i \in [n], j \notin R'_t} w_i w_j \geq \frac {n (n - |R'_t|)}{k^2}$.

  By the same analysis as before,
  \[
  \E |R'_{t+1} \setminus R'_t| \geq \frac {n - |R'_t|}{k}\mper
  \]
  Thus, taking $O(k \log 1/\eta)$ iterations,
  \[
  \E |R'_t| \geq n (1-\eta)\mper
  \]
\end{proof}

\begin{proof}[Proof of \cref{cor:TV_param}]
Let $\eta = k^{-(C+2)}/2$.
We start with the case that there is a hyperplane separating the mixture.

\textbf{Case 1:}
  Suppose there is a direction $v$ such that $a,b \in [k]$ which maximize $\iprod{\mu_a - \mu_b, v}^2$ in fact satisfy $\iprod{\mu_a - \mu_b, v}^2 \geq \max_{p \in [k]} \iprod{v, \Sigma_p v} / \eta$.
  Note that
   \[
  \iprod{v, \Sigma v} = \frac 1 {k^2} \sum_{a,b \leq k} \iprod{\mu_a - \mu_b, v}^2 + \frac 1 k \sum_{a \leq k} \iprod{v, \Sigma_a v}\mper
  \]
  In this case, $\frac 1 k \sum_{a \leq k} \iprod{v, \Sigma_a v} \leq \eta \iprod{\mu_a - \mu_b, v}^2$, and hence we have that $\iprod{\mu_a - \mu_b, v}^2 \geq \iprod{v, \Sigma v}/2$.
  Furthermore, by averaging there is a $t \in \R$ such that every $p \in [k]$ has $|\iprod{v, \mu_p} - t| \geq |\iprod{\mu_a - \mu_b, v}|/k \geq (\eta^{1/2} k)^{-1} \max_p \iprod{v,\Sigma_p v}^{1/2}$.
 Let $S = \{ a \in [k] \, : \, \iprod{v, \mu_p} \leq t \}$ and $T = [k] \setminus S$.
 Then for all $a \in S, b \in T$, we have
 \[
 \iprod{\mu_a  - \mu_b, v}^2 \geq \max \left \{ \frac{\iprod{v, \Sigma_a v} + \iprod{v, \Sigma_b v}}{2 \eta k^2}, \frac{\iprod{v, \Sigma v}}{ k^2 } \right \} \geq \left \{ \frac{\iprod{v, \Sigma_a v} + \iprod{v, \Sigma_b v}}{k^C}, \frac{\iprod{v, \Sigma v}}{ k^2 } \right \}
 \]

 \textbf{Case 2:} Suppose there is a direction $v \in \R^d$ such that if we assume WLOG that $\iprod{v, \Sigma_1 v} \geq \ldots \geq \iprod{v, \Sigma_k v}$ then there is an index $i > 1$ such that $\iprod{v, \Sigma_i v} \leq k^{-(C')^i} \iprod{v, \Sigma v}$.
 Then if we let $S = \{j \in [k] \, : \, j < i \}$ and $T = [k] \setminus S$, we have that for all $a \in S$ and $b \in T$ that
 \[
 \frac{\iprod{v, \Sigma_b v}}{\iprod{v, \Sigma v}} \leq k^{-(C')^i} \leq (k^{-(C')^{i-1}})^{C'} \leq \Paren{\frac{\iprod{v, \Sigma_a v}}{\iprod{v, \Sigma v}}}^{C'}\mcom
 \]
 and, furthermore, that $\iprod{v, \Sigma_b v} \leq k^{-C'} \iprod{v, \Sigma_a v}$.

  \textbf{Case 3:} Now suppose that cases 1 and 2 fail.
  By the failure of case 2, in every direction $v$, if we order $\iprod{v, \Sigma_1 v} \geq \ldots \geq \iprod{v, \Sigma_k v}$, then we have both $\iprod{v, \Sigma_1 v} \leq k \iprod{v, \Sigma v}$ and $\iprod{v, \Sigma_k v} \geq k^{-(C')^k} \iprod{v, \Sigma v}$.
  So $\iprod{v, \Sigma_1 v} / \iprod{v, \Sigma_k v} \leq k^{(C')^k + 1}$.
  Thus, no pair of Gaussians in the mixture is $\epsilon$-variance separated (see \cref{lem:TV_param}) for $\epsilon = 2^{-k^{6(C')^k + 6}}$.

  By the failure of case $1$, in every direction $v$ the maximal mean separation $\iprod{\mu_a - \mu_b, v}^2$ is bounded by
  \[
  \iprod{\mu_a - \mu_b,v}^2 \leq k^{C+1} \iprod{v, \Sigma_1 v} \leq k^{C+1} \cdot k \cdot \iprod{v, \Sigma v} \leq k \cdot k^{C+1} \cdot k^{(C')^k} \iprod{v, \Sigma_k v}\mper
  \]
  So, no pair of Gaussians in the mixture can be $\epsilon$-mean separated, for $\epsilon = 2^{- k^{6(C')^k + 6C + 6}}$.
  
  Let $\epsilon = 2^{-k^{6(C')^k + 6C + 6}}$.
  Since we have assumed that every pair of Gaussians in the mixture has overlap at most $= 2^{- k^{10(C')^k + 10C + 10}}$, all pairs must be $\epsilon$-covariance separated.

  Since every pair fails to be $\e$-variance separated, every pair $a,b \in [k]$ has $\Omega(\log \e)^{-1/6} \Sigma_a \preceq \Sigma_b \preceq O(\log 1/\e)^{1/6} \Sigma_a$.
  And, by \cref{lem:TV_param}, every pair $a \neq b$ has $\|I - \Sigma_a^{-1/2} \Sigma_b \Sigma_a^{-1/2}\|_F^2 \geq \Omega( \log 1/\e / \log \log 1/\e)$.
  Furthermore, we claim that $\Sigma \preceq O(\log 1/\e) \Sigma_a$.
  To see this, note that for any $v \in \R^d$,
  \begin{align*}
    \iprod{v, \Sigma v}
    & = \frac 1 {k^2} \sum_{p,q \leq k} \iprod{\mu_p - \mu_q, v}^2 + \frac 1 k \sum_{p \leq k} \iprod{v, \Sigma_p v}\\
    & \leq \frac 1 {k^2} \sum_{p,q \leq k} O(\log 1/\e)^{1/6} (\iprod{v, \Sigma_p v} + \iprod{v, \Sigma_q v}) + \frac 1 k \sum_{p \leq k} \iprod{v, \Sigma_p v} \\
    & \leq O(\log 1/\e)^{1/3} \iprod{v, \Sigma_a v}
  \end{align*}
  where we have used twice that case 1 fails.
  It follows that $\|\Sigma^{1/2} \Sigma_a^{-1/2}\|_2^2 \leq O(\log 1/\e)^{1/3}$, and $\|\Sigma_b^{-1/2} \Sigma_a^{1/2}\|_2^2 \leq O(\log 1/\e)^{1/3}$.

  Now we can simply compute that
  \[
  \|\Sigma_a^{1/2} A_{ab} \Sigma_a^{1/2} \|_F = \|I - \Sigma_a^{-1/2} \Sigma_b \Sigma_a^{-1/2} \|^2 \leq O(1) \cdot \frac{\|I - \Sigma_a^{-1/2} \Sigma_b \Sigma_a^{-1/2}\|_F}{ \log 1/\e / \log \log 1/\e}
  \]
  and
  \[
  \|\Sigma^{1/2} A_{ab} \Sigma^{1/2}\|_F \leq \|\Sigma_{1/2} \Sigma_a^{-1/2}\|_2^2 \|I - \Sigma_a^{-1/2} \Sigma_b \Sigma_a^{-1/2} \|_F \leq O\Paren{\frac{\|I - \Sigma_a^{-1/2} \Sigma_b \Sigma_a^{-1/2}\|_F^2}{(\log 1/\e)^{1/3}}}
  \]
  and similarly for $\|\Sigma_b^{1/2} A_{ab} \Sigma_b^{1/2}\|$.
  \end{proof}

\begin{proof}[Proof of Lemma~\ref{lem:concentration}]
An application of Hoeffding's inequality and a union bound implies that in $n$ samples, each $S_a$, for $a \in [k]$, 
has $((1 \pm \delta)/k) n$ samples with probability $1- O(k)\exp(-O(1) \cdot (\delta/k)^2 \cdot n)$. We will henceforth assume 
that this event occurs. 
	
We now bound from above the probability of each of the following events defined in~\cref{def:deterministic} 
over $X_1, \dots, X_{|S_a|}$ drawn from $G_a$. 
\begin{enumerate}
\item $E_1(a) := \exists v \in \R^d.~\Pr_{X \sim \{X_1, \dots, X_{|S_a|} \} } [\iprod{X - \mu_a, v}^2 \geq \log(1/\xi) \cdot \iprod{v, \Sigma_a v}] \geq (1-\xi)$. 

\item $E_2(a) := \exists v \in \R^d.~\Pr_{X, Y \sim \{X_1 \dots, X_{|S_a|} \}}[\iprod{X- Y, v}^2 \leq \xi^2 \iprod{v, \Sigma_a v} ] \geq (1-\xi)$. 

\item $E_3(a) := \exists A \in \R^{d \times d}.~\Pr_{X, Y \sim \{X_1 \dots, X_{|S_a|} \}}[| \iprod{(X - Y)^{\otimes 2}, A} - 2\iprod{\Sigma_a, A} | \geq O(1) \log(1/\xi) \cdot \| \Sigma_a A\|_F] \geq (1-\xi)$. 

\item $E_4(a) :=$ for all even $s \leq t$ for all $i \in [k]$,
	\[
	\Norm{\frac 1 {|S_a|} \sum_{i \in S_a} [\overline{\Sigma}_a^{-1/2} (X_i - \overline{\mu}_a)]^{\tensor s} - M_s}^2 \leq d^{-2t} \delta \;,\] 
	where $M_s = \E_{g \sim \cN(0,\Id)} g^{\tensor s}$ is the $s$-th moment tensor of the standard Gaussian and
	\[
	\overline{\mu}_a = \frac 1 {|S_a|} \sum_{i \in S_i} X_i \, , \text{ and } \overline{\Sigma}_a = \frac 1 {|S_a|} \sum_{i \in S_a} (X_i - \overline{\mu}_i)(X_i - \overline{\mu}_a)^\top \;.
	\]
\end{enumerate}

Events $E_1, E_2, E_3$ occur with vanishing probability by standard arguments on uniform convergence of empirical CDFs to population CDFs.
(To obtain uniform convergence across all $v \in \R^d, A \in \R^{d \times d}$, it suffices to observe that the VC dimension of halfspaces in $\R^m$ is at most $\poly(m)$.)

Getting bounds on the probability of $E_4(a)$ is more complicated. 
Recall that $X$ is drawn from $\cN(\mu_i, \Sigma_i)$. Let $Y \sim \cN(0, I)$. 
Then $X$ can be simulated by drawing $\Sigma_i^{1/2}Y + \mu_i$.  
Since $\overline{\Sigma}_i^{-1/2} (X - \overline{\mu}_i) = \overline{\Sigma}_i^{-1/2} \Sigma_i^{1/2} Y +  \overline{\Sigma}_i^{-1/2} (\mu_i - \overline{\mu}_i)$. The question hence reduces to showing 
\[
\Norm{\frac 1 {|S_i|} \sum_{j \in S_i} [\overline{\Sigma}_i^{-1/2} \Sigma_i^{1/2} Y_j +  \overline{\Sigma}_i^{-1/2} (\mu_i - \overline{\mu}_i)]^{\tensor s} - M_s}_F^2 \leq d^{-2t} \delta \;, 
\] 
where $Y_j \sim \cN(0, I)$. We will make use of the following concentration inequality (see, e.g.,~\cite{ODonnell:BFA}): 
\begin{lemma}\label[lemma]{lem:hypercontractive-concentration}
Any degree-$d$ polynomial $f(A_1, \dots, A_n)$ of independent centered Gaussian random variables $A_1, \dots, A_n$ satisfies
\[ \Pr \Brac{ \left|f(A) - \mathbb{E}[f(A)] \right| > \tau } \leq O(1) \cdot  e^{-\left(\frac{\tau^2}{R \textbf{Var}(f(A))}\right)^{1/d}} \;,\] 
where $R$ is a universal constant. 
\end{lemma}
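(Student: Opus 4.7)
The plan is to derive this concentration bound from the standard Gaussian hypercontractive moment inequality via a Markov-plus-optimize-$q$ argument. By replacing $f$ with $f - \mathbb{E}[f(A)]$, I may assume $\mathbb{E}[f(A)] = 0$ and consequently $\mathbf{Var}(f(A)) = \|f\|_2^2$. The target is then an upper bound on $\Pr[|f(A)| > \tau]$ in terms of $\|f\|_2$ and $d$.

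First I would invoke the Gaussian hypercontractive inequality (Bonami–Nelson–Beckner), which states that for every degree-$d$ polynomial $f$ in independent standard Gaussians and every $q \geq 2$,
\[
\|f\|_q \;\leq\; (q-1)^{d/2}\, \|f\|_2 \mper
\]
This is the same inequality underlying \cref{fact:hypercontractivity} used elsewhere in the paper, so it can be cited without redevelopment. Applying Markov's inequality to $|f|^q$ gives
\[
\Pr[\,|f(A)| > \tau\,] \;\leq\; \frac{\mathbb{E}|f(A)|^q}{\tau^q} \;\leq\; \left(\frac{(q-1)^{d/2}\,\|f\|_2}{\tau}\right)^{q} \mper
\]

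Next I would optimize over $q$. Setting $q - 1 = \bigl(\tau / (e \|f\|_2)\bigr)^{2/d}$ makes the base of the exponent equal $e^{-1}$, so the right-hand side becomes $e^{-q} \leq e \cdot \exp\!\left(-\bigl(\tau^2 / (e^2 \mathbf{Var}(f))\bigr)^{1/d}\right)$. For this choice of $q$ to satisfy $q \geq 2$ we need $\tau \geq e \|f\|_2$; in the complementary regime $\tau < e \|f\|_2$ the claimed bound is vacuous up to adjusting the leading constant, since $\exp(-(\tau^2/(R\mathbf{Var}(f)))^{1/d}) \geq e^{-(e^2/R)^{1/d}}$, which is $\Omega(1)$ for $R \geq e^2$. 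Combining the two regimes yields the inequality with $R = e^2$ and a universal constant in front.

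I do not expect any substantial obstacle here: the only nontrivial input is Gaussian hypercontractivity, after which the argument is a routine Markov-and-optimize computation. The one technical caveat is the boundary case $\tau \lesssim \|f\|_2$, which I would handle by choosing the absolute constant $R$ slightly larger than $e^2$ so that the stated bound remains true (trivially) when the optimization does not select a valid $q$.
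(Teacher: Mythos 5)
Your argument is correct: reduce to standard Gaussians by rescaling, center $f$, apply Gaussian hypercontractivity $\|f\|_q \le (q-1)^{d/2}\|f\|_2$, Markov on $|f|^q$, and optimize $q$, handling the regime $\tau \lesssim \|f\|_2$ by enlarging the constant $R$ so the bound is vacuously true there. The paper does not prove this lemma at all—it cites it (to O'Donnell's book)—and your derivation is precisely the standard proof underlying that citation, so there is nothing further to compare.
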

We start by showing that the singular values of $\overline{\Sigma_a}^{-1/2} \Sigma_a^{1/2}$ are close to $1$, 
and $\norm{\overline{\Sigma}_a(\mu-a - \overline{\mu}_a)}$ is close to $0$. 
	
For the second claim, note that $ \mu_a - \overline{\mu}_a =  \frac{1}{|S_a|} \sum_{j \in S_a} (X_j - \mu_a) $. 
Applying \cref{lem:hypercontractive-concentration} with $f_l(X_1, \dots, X_n) := \frac{1}{|S_a|} \sum_{j \in S_a} (X_j - \mu_a)_l$ gives us 
\[ \Pr \Brac{ \left|(\mu_a - \overline{\mu}_a)_l \right| > \tau (\Sigma_a)_{l,l}^{1/2} } \leq O(1) \cdot  e^{-\left(O(1)  \cdot  (n/k) \cdot \tau^2 \right)} \;.\]
A union bound over each index then gives 
\[ 
\Pr \Brac{ \| \mu_a - \overline{\mu}_a\|_2^2 > \tau^2 \sum_l (\Sigma_a)_{l,l} }\leq O(1) \cdot \sum_l e^{-\left(O(1)  \cdot  (n/k) \cdot \tau^2\right)} \;. 
\]
Now observe that for a unit vector $v$, we have 
\[\| \Sigma_a^{1/2} v\|_2^2 = \sum_{i,j} v_i v_j (\Sigma_a)_{i,j} \leq\sum_{i,j} v_i v_j \sqrt{ (\Sigma_a)_{i,i} \cdot (\Sigma_a)_{j,j}} \leq \Paren{ \sum_i v_i \sqrt{(\Sigma_a)_{i,i}} }^2 \leq \|v\|_2^2 \sum_i (\Sigma_a)_{i,i} \;.
\] 
Scaling $\tau$ again by $1/(\log(d))$, we see for some of unit norm vector $v$,  
\[ 
\Pr \Brac{ \| \mu_a - \overline{\mu}_a\|_2^2 > O(\tau^2) \| \Sigma_a^{1/2} v \|_2^2 } \leq O(1) \cdot  \sum_l e^{-\left(O(1)  \cdot  (n/k) \cdot \tau^2/\log(d)\right)} \leq O(1) \cdot \exp(-O(n/k) \tau^2) \;. 
\]
Thus, with probability $1-\exp(-O(n/k) \tau^2)$, we have 
\[ 
\| \mu_a - \overline{\mu}_a\|_2^2 \leq  \| \Sigma^{1/2}_a v \|_2^2 \;, 
\] 
for some $v$ with norm at most $\tau$. Since $\Sigma_a$ is full rank, we can assume that 
$\mu_a - \overline{\mu}_a = \Sigma_a^{1/2} v$, for some $v$ whose norm is bounded by $\tau$.

For the first claim, observe that by standard matrix concentration results (see, e.g.,~\cite{vershynin2010introduction}) 
we have that $\widehat{\Sigma}_a := \frac{1}{|S_a|} \sum_{j \in S_a} (X_j - \mu_a)^{\otimes 2}$ satisfies 
$(1-\tau) \Sigma_a\preccurlyeq \widehat{\Sigma}_a \preccurlyeq (1+\tau) \Sigma_a$ with probability 
$1-\exp(-O(1) \cdot (n/k) \cdot \tau^2/d)$. 
However, in our case, we have $\overline{\Sigma}_a = \frac 1 {|S_a|} \sum_{i \in S_a} (X_i - \overline{\mu}_a)(X_i - \overline{\mu}_a)^\top$. 
Adding and subtracting $\mu_a$, a small calculation gives us
\begin{align*}
\overline{\Sigma}_a &= \frac 1 {|S_a|} \sum_{i \in S_a} (X_i - \mu_a + \mu_a - \overline{\mu}_a)^{\otimes 2} \\
&= \frac 1 {|S_a|} \sum_{i \in S_a} (X_i - \mu_a)^{\otimes 2} + 3 (\mu_a - \overline{\mu}_a)^{\otimes 2} \;.
\end{align*} 
Hence, we see $\overline{\Sigma}_a = \widehat{\Sigma}_a \pm 3 (\mu_a - \overline{\mu}_a)^{\otimes 2}$. 
This means for any $v$, and some $u$ having norm $\leq \tau$
\begin{align*}
	\iprod{ v, \overline{\Sigma}_a v} &= \iprod{ v, \widehat{\Sigma}_a v} + 3 \iprod{v, \mu_a - \overline{\mu}_a}^2 \\
	& = \iprod{ v, \widehat{\Sigma}_a v} + 3 \iprod{v,\Sigma_a^{1/2} u}^2 \\
	& = \iprod{ v, \widehat{\Sigma}_a v} + 3 \iprod{\Sigma_a^{1/2} v, u}^2 \\
	& \leq  \| \Sigma_a^{1/2} v \|_2^2  + 3 \| \Sigma_a^{1/2}v\|^2_2 \| u\|_2^2 \\
	&\leq  \| \Sigma_a^{1/2} v \|_2^2  \Paren{1 + 3 \tau} = \iprod{v, \Sigma_a v} (1+O(\tau)) \;.
\end{align*}
A similar argument holds to give us a lower bound, ultimately showing that 
\begin{align}\label{eqn:covar_concentration}
(1-\tau) \Sigma_a\preccurlyeq
\overline{\Sigma}_a \preccurlyeq (1+\tau) \Sigma_a \;.
\end{align}
This in turn implies that the singular values of $\overline{\Sigma}_a^{-1/2} \Sigma_a \overline{\Sigma}_a^{-1/2}$ are $1 \pm \tau$, 
and so the singular values of $\overline{\Sigma}_a^{-1/2} \Sigma_a^{1/2}$ are $1 \pm \sqrt{\tau}$. 
Hence, we get 
\[ 
\norm{\overline{\Sigma}_a^{-1/2} (\mu_a - \overline{\mu}_a)}^2 = \norm{\overline{\Sigma}_a^{-1/2} \Sigma_a^{1/2} v}^2 \leq O(\tau^2) \;.
\]

Finally, we see that $\overline{\Sigma}_a^{-1/2} \Sigma_a^{1/2} Y_j +  \overline{\Sigma}_a^{-1/2} (\mu_a - \overline{\mu}_a) = Y_j \pm \tau u$ 
for some unit vector $u$, and so 
\[
[\overline{\Sigma}_a^{-1/2} \Sigma_a^{1/2} Y_j +  \overline{\Sigma}_a^{-1/2} (\mu_a - \overline{\mu}_a)]^{\tensor s} = 
[Y_j \pm O(\tau) u]^{\tensor s} = Y_j^{\otimes s} \pm D_j \;, 
\]
where $\| D_j \|_F^2 \leq \exp(O(s)) \cdot \tau \cdot \|Y_j\|_2^{2(s-1)}$. Therefore,
\begin{align*}
\Norm{\frac 1 {|S_a|} \sum_{j \in S_a} [\overline{\Sigma}_a^{-1/2} (X_j - \overline{\mu}_a)]^{\tensor s} - \frac{1}{|S_a|} \sum_{j \in S_a} Y_j^{\otimes s}}_F^2 
&\leq \exp(O(s)) \cdot \tau \cdot \frac{1}{|S_a|} \sum_{j \in S_a} \|Y_j\|_2^{2(s-1)} \\ 
&\leq \exp(O(s \log(s))) \cdot \tau  \;,
	\end{align*}
where the final inequality follows from the fact that the $2(s-1)$-th moment of a Gaussian random variable is at most $\exp(O(s \log(s)))$, and from concentration similar to what we show below.
	
We now bound the Frobenius norm difference between the empirical and population expectations of $Y_j^{\otimes s}$. 
Applying \cref{lem:hypercontractive-concentration} entrywise, we see for entry $ T \in [d]^s$ we get
\[ 
\Pr\Brac{ \left( \Paren{\frac{1}{|S_a|} \sum_{j \in S_a} Y_j^{\otimes s}}_T - \E[Y^{\otimes s}]_T \right)^2 \geq \tau^2 } \leq O(1) \cdot  e^{-\left( O(n/k) \cdot  \frac{\tau^2}{ \exp(O(s))  }\right)^{1/s}}  \;. \] 
A union bound over all entries gives us 
\[ 
\Pr\Brac{ \sum_{T \in [d]^s} \left( \Paren{\frac{1}{|S_a|} \sum_{j \in S_a} Y_j^{\otimes s}}_T - \E[Y^{\otimes s}]_T \right)^2 \geq d^s \tau^2 } \leq O(1) \cdot  e^{-\left( O(n/k) \cdot  \frac{\tau^2}{ \exp(O(s))  }\right)^{1/s}} \;. \]
Finally, rescaling $\tau$, a union bound over all entries gives us 
\[ \Pr\Brac{ \sum_{T \in [d]^s} \left( \Paren{\frac{1}{|S_a|} \sum_{j \in S_a} Y_j^{\otimes s}}_T - \E[Y^{\otimes s}]_T \right)^2 \geq d^{-2t}\delta } \leq O(1) \cdot  e^{-\left( O(n/k) \cdot  \frac{\delta^2}{ \exp(O(s))  d^{4t}}\right)^{1/t}} \;. \]
Hence, overall we see the probability of $E_j(a)$ for any fixed $a \in [k]$ is 
$1 - O(k) \cdot \exp(-O(1) \cdot \frac{n}{k} \cdot \frac{\delta^2}{\exp(O(t)) d^{4t}})^{1/t} - \frac{O(1)k}{n}$. 
If $n \geq \poly(k) \cdot \log(1/\gamma)^{O(t)} \cdot  \frac{(2d)^{O(t)}}{\delta^2}$, 
then the deterministic conditions hold with probability $1-\gamma$. 
This completes the proof.	
\end{proof}

\section{Omitted Proofs from Section~\ref{sec:tools}} \label[section]{sec:proofs-tools-section}

\subsection{Proof of Lemmas~\ref{lem:emp-pop} and~\ref{lem:emp-pop-2}}

We give the proof of \cref{lem:emp-pop}.
The proof of \cref{lem:emp-pop-2} is almost identical.

\begin{proof}[Proof of \cref{lem:emp-pop}]
  We start with the first statement, \cref{eq:emp-to-pop-1}.
  By the axioms of $\cA$, we know that for every homogeneous degree-$s$ polynomial $q$,
  \begin{equation}\label{eq:emp-to-pop-2}
 \cA \proves{2t} \Paren{\Iprod{\frac 2 n \sum_{i \in [n]} w_i [\Sigma^{-1/2} (X_i' -\mu(w))]^{\tensor s}, q} - \E_{g \sim \cN(0,\Id)} q(g)}^2 \leq \delta d^{-2t} \cdot \sum_{\alpha \subseteq [n], |\alpha| \leq s} \widehat{q}(\alpha)^2\mper
  \end{equation}
  (By abuse of notation, $q$ also denotes the $s$-tensor of $q$'s coefficients.)

  Set $q(X) = p(\Sigma^{1/2}X + \mu(w))$ (which is not necessarily homogeneous), so that
  \[
  \cA \proves{2t} \E_{g \sim \cN(0,\Id)} q(g) = \E_{g \sim \cN(\mu(w),\Sigma(w))} p(g)
  \]
  and
  \[
  \cA \proves{2t} w_i p(X_i') = w_i q(\Sigma^{-1/2}(X_i' - \mu(w)))
  \]
  (here we used the axioms $\Sigma^{1/2} \Sigma^{-1/2} w_i (X_i' - \mu(w)) = w_i (X_i' - \mu(w))$).
  So it is sufficient to bound
  \begin{equation*}
  \Paren{\Iprod{\frac 2 n \sum_{i \in [n]} w_i [\Sigma^{-1/2} (X_i' -\mu(w))]^{\tensor s}, q} - \E_{g \sim \cN(0,\Id)} q(g)}^2\mper
  \end{equation*}

  Split $q$ into homogeneous parts (with respect to the $x$ variables) $q = q_0 + q_1 + \ldots + q_{t/2}$.
  Then by Cauchy-Schwarz it will be enough to bound each of
\begin{equation*}
  \Paren{\Iprod{\frac 2 n \sum_{i \in [n]} w_i [\Sigma^{-1/2} (X_i' -\mu(w))]^{\tensor s}, q_j} - \E_{g \sim \cN(0,\Id)} q_j(g)}^2
  \end{equation*}
  separately, losing a factor of $2^t$.
  In each case, by the moment axioms of $\cA$,
  \begin{equation}\label{eq:emp-to-pop-2}
 \cA \proves{2t} \Paren{\Iprod{\frac 2 n \sum_{i \in [n]} w_i [\Sigma^{-1/2} (X_i' -\mu(w))]^{\tensor s}, q_j} - \E_{g \sim \cN(0,\Id)} q_j(g)}^2 \leq \delta d^{-2t} \cdot \sum_{\alpha \subseteq [n], |\alpha| = j} \widehat{q}(\alpha)^2\mper
  \end{equation}
  Now we are done by observing that $\sum_{\alpha \subseteq [n], |\alpha| \leq t/2} \widehat{q}(\alpha)^2 \preceq d^{2t} t^{O(t)} \E_{g \sim \cN(0,\Id)} q(g)^2 = \E_{g \sim \cN(\mu(w),\Sigma(w))} p(g)^2$.
  \Snote{TODO: expand justification here; this uses equivalence up to $O_t(1)$ factors of monomial and Hermite bases.}
  \SKnote{NOTE: Perhaps remove $t^t$ as well}

  Next we prove \cref{eq:emp-to-pop-diff-1} as a corollary of \cref{eq:emp-to-pop-1}.
	For each $i \in [n]$, we apply ~\cref{eq:emp-to-pop-1} to $p(X_i' - X_j')$. For each $i$, we get 
	\begin{equation*}
	\cA \proves{O(t)} \Paren{\frac k n \sum_{j=1}^n w_j p(X_i' - X_j')
		- \E_{g \sim \cN(\mu(w),\Sigma(w))} p(X_i'-g)}^2 \leq  t^{O(t)} \cdot \delta \cdot \Paren{ \E_{g \sim \cN(\mu(w),\Sigma(w))} p(X_i'-g)^2 }\mper
	\end{equation*}
	Similarly, applying \cref{eq:emp-to-pop-1} once again to $\E_{g \sim \cN(\mu(w),\Sigma(w))} p(X_i'-g)$, we get 
	\begin{align*}
	&\cA \proves{O(t)} \Paren{\frac k n \sum_{i=1}^n w_i \E_{g \sim \cN(\mu(w),\Sigma(w))} p(X_i'-g)
	- \E_{g', g \sim \cN(\mu(w),\Sigma(w))} p(g'-g)}^2 \\
	&\leq  (2t)^{O(t)} \cdot \delta  \cdot  \E_{g' \sim \cN(\mu(w),\Sigma(w))} \Paren{\E_{g \sim \cN(\mu(w),\Sigma(w))} p(g'-g)}^2 \\
&\leq  (2t)^{O(t)} \cdot \delta  \cdot  \E_{g' \sim \cN(\mu(w),\Sigma(w))} \Paren{\E_{g \sim \cN(\mu(w),\Sigma(w))} p(g'-g)^2}	 \mcom
\end{align*}
where the final inequality follows from applying Cauchy-Schwarz to the inner expectation.
To finish the proof of \eqref{eq:emp-to-pop-diff-1}, we use the triangle inequality by adding and subtracting $\frac k n \sum_{i=1}^n w_i \E_{g \sim \cN(\mu(w),\Sigma(w))} p(X_i'-g)$ and using the inequalities above. By a calculation, it can be verified that
\begin{align*} 
\cA &\proves{O(t)} \Paren{\Paren{\frac k n}^2 \sum_{i, j=1}^n w_i w_j p(X_i' - X_j')	- \E_{g \sim \cN(0,2\Sigma(w))} p(g)}^2 \\
&\leq (2t)^{O(t)} \cdot \delta^2  \cdot \frac k n \sum_{i=1}^n w_i \E_{g \sim \cN(\mu(w),\Sigma(w))} p^2(X_i'-g) +  \E_{g', g \sim \cN(\mu(w),\Sigma(w))} p(g'-g)^2  \\
& = (2t)^{O(t)} \cdot \delta^2 \cdot \E_{g \sim \cN(0,2\Sigma(w))} p(g)^2 \mper
\end{align*} 
\SKnote{More justificantion -- make the triangle inequality and hypercontractivity argument clearer.}
Finally, if $p$ is a square, we can bound the RHS from above by
\begin{align*}
(2t)^{O(t)} \cdot \delta^2 \cdot \Paren{\E_{g \sim \cN(0,2\Sigma(w))} p(g)}^4
\end{align*}
via \cref{fact:hypercontractivity}. This completes the proof of \cref{lem:emp-pop}. 
\end{proof}

\subsection{Proof of \cref{lem:mom_decorruption_diff}}

We turn now to the proof of \cref{lem:mom_decorruption_diff}, which will use three claims, all proved below.

The first claim establishes bounds on the difference between the empirical distribution of a polynomial on the uncorrupted samples $X_1,\ldots,X_n$ and the ghost samples $X_1',\ldots,X_n'$.
For technical convenience, the claim allows for polynomials in multiple samples (the case $u > 1$ below).

\begin{claim}[Glorified Cauchy-Schwarz]
\label[claim]{clm:mom_decorruption_diff_1}
Let $f: (\R^{d})^u \rightarrow \R$ be a symbolic polynomial of degree-$t/4$.
For all $u \in \N$ and $S_1,\ldots,S_u \subseteq [n]$ each having size $n/k$, we have 
\begin{align*}
\cA \proves{O(t)} & \Paren{\frac{k^u}{n^u} \sum_{j_1 \in S_1,\ldots,j_u \in S_u} (\Pi_{i \leq u} w_{j_i}) (f(Y_{j_1},\ldots,Y_{j_u}) - f(X'_{j_1},\ldots,X'_{j_u}))}^2 \\
& \leq  O_u(k^{2u}) \cdot \eps^2 \cdot \frac{k^u}{n^u} \sum_{j_1 \in S_1,\ldots,j_u \in S_u} (\Pi_{i \leq u} w_{j_i}) f(Y_{j_1},\ldots,Y_{j_u})^2 \\
& + O_u(k^{2u}) \cdot \eps^2 \cdot \frac{k^u}{n^u} \sum_{j_1 \in S_1,\ldots,j_u \in S_u} (\Pi_{i \leq u} w_{j_i}) f(X'_{j_1},\ldots,X'_{j_u})^2  \mper
\end{align*}
\end{claim}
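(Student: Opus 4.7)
Plan: This is a Cauchy--Schwarz style SoS inequality whose $\epsilon^2$ factor comes from the sparsity of the disagreement between $Y_j$ and $X'_j$. First I would build a symbolic disagreement indicator. The corruption axiom $z_j(X_j - X'_j) = 0$ together with the $\epsilon$-corruption assumption $|\{j : X_j \neq Y_j\}| \leq \epsilon n$ implies that whenever $z_j = 1$ and $X_j = Y_j$ we have $X'_j = Y_j$. Writing $c_j := \mathbf{1}[X_j = Y_j]$ for the (data-dependent) $\{0,1\}$ constant and $\chi_j := z_j c_j$, a short symbolic calculation gives $\chi_j(Y_j - X'_j) = c_j\cdot z_j(Y_j - X_j) + c_j\cdot z_j(X_j - X'_j) = 0$ in SoS, hence $\cA \proves{O(t)} \prod_{i \in [u]} \chi_{j_i}\cdot (f(Y_{\vec j}) - f(X'_{\vec j})) = 0$. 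Consequently $f_Y - f_{X'} = B_{\vec j}(f_Y - f_{X'})$ in SoS, where $B_{\vec j} := 1 - \prod_i \chi_{j_i}$ is idempotent ($B_{\vec j}^2 = B_{\vec j}$) thanks to $\chi_j^2 = \chi_j$.

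Next I would apply the SoS Cauchy--Schwarz identity $\bigl(\sum c_{\vec j} a_{\vec j} b_{\vec j}\bigr)^2 \leq \bigl(\sum c_{\vec j} a_{\vec j}^2\bigr)\bigl(\sum c_{\vec j} b_{\vec j}^2\bigr)$, valid whenever $c_{\vec j}$ is a sum of squares --- which $\prod_i w_{j_i}$ is via $w_i^2 = w_i$ --- taking $a_{\vec j} = B_{\vec j}$ and $b_{\vec j} = f(Y_{\vec j}) - f(X'_{\vec j})$. Combined with $B_{\vec j}^2 = B_{\vec j}$ and $(f_Y - f_{X'})^2 \leq 2(f_Y^2 + f_{X'}^2)$, this yields
\[
\mathrm{LHS}^2 \leq 2\Paren{\frac{k^u}{n^u}}^{\!2}\Paren{\sum_{\vec j}\prod_i w_{j_i}\, B_{\vec j}}\Paren{\sum_{\vec j}\prod_i w_{j_i}(f_Y^2 + f_{X'}^2)}.
\]
It remains to estimate the ``disagreement mass'' $\sum_{\vec j}\prod_i w_{j_i} B_{\vec j}$. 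An SoS union bound $B_{\vec j} \leq \sum_i(1-\chi_{j_i})$ together with factoring reduces this to $u(n/k)^{u-1}\max_i \sum_{j \in S_i} w_j(1-\chi_j)$; the corrupted contribution to $1-\chi_j$ is controlled by $|\{j : X_j \neq Y_j\}| \leq \epsilon n$, and the $z_j = 0$ contribution by the axiom $\sum z_j = (1-\epsilon)(n/k)$. A single Cauchy--Schwarz gives only a linear-in-$\epsilon$ bound; to reach the quadratic-in-$\epsilon$ bound claimed, I would apply Cauchy--Schwarz a second time to the disagreement-mass integrand (using idempotence $B_{\vec j} = B_{\vec j}^2$ to split it), extracting another factor of disagreement mass at the cost of one extra $k^u$. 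Two applications together produce the announced $O_u(k^{2u})\epsilon^2$ scaling.

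Main obstacle: the only nonroutine step is the SoS estimate $\sum_{j \in S_i} w_j(1-\chi_j) \preceq O(\epsilon n/k)$; the naive bound $\leq n/k$ loses the crucial $\epsilon$. Squeezing it out requires carefully combining the cardinality axioms $\sum w_j = n/k$ and $\sum z_j = (1-\epsilon)(n/k)$ with the Booleanity $w_j^2 = w_j$, $z_j^2 = z_j$ and the $\epsilon$-corruption assumption --- morally, an SoS argument that in any feasible solution the $z$-indicator is contained inside the $w$-indicator up to $\epsilon n$ slack. All remaining ingredients are routine SoS bookkeeping (idempotence of $\{0,1\}$ symbols, the union bound $1 - \prod_i \chi_{j_i} \leq \sum_i(1-\chi_{j_i})$, the SoS Cauchy--Schwarz identity, and elementary product-sum factorization).
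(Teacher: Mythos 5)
Your overall route is the paper's: the paper forms the combined indicator $z'(T)=\prod_{i\in T} z_i\mathbbm{1}_{Y_i=X_i}$ (your $\prod_i\chi_{j_i}$), shows $\cA\proves{} z'(T)\bigl(f(Y_T)-f(X'_T)\bigr)=0$ by exactly your telescoping argument, rewrites the sum with the factor $1-z'(T)$, applies one SoS Cauchy--Schwarz using idempotence and $w_i^2=w_i$, and bounds the disagreement mass $\sum_T(1-z'(T))$ by the counting constraints. One remark on the obstacle you flag at the end: it is not really an obstacle. You do not need $\sum_{j\in S_i}w_j(1-\chi_j)\preceq O(\eps n/k)$, nor any SoS containment of the $z$-indicator inside the $w$-indicator; it suffices to drop $w_j\le 1$ and use the global constraints on $\sum_j z_j$ and on $|\{j: X_j\neq Y_j\}|$ to get $\sum_j(1-z_j\mathbbm{1}_{X_j=Y_j})\le 2\eps n$, which is precisely what the paper does (at the cost only of $k$-factors).

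The genuine gap is the step you gesture at for upgrading from $\eps$ to $\eps^2$. The ``second Cauchy--Schwarz via $B_{\vec j}=B_{\vec j}^2$'' does not extract a second factor of disagreement mass: writing $\Delta_T=f(Y_T)-f(X'_T)$, idempotence gives $\bigl(\sum_T\pi_w B_T\Delta_T\bigr)^2\le\bigl(\sum_T\pi_w B_T\bigr)\bigl(\sum_T\pi_w B_T\Delta_T^2\bigr)$, and the only move that keeps squares of $f$ on the right is $B_T\Delta_T^2\le\Delta_T^2$ --- which reproduces the single-application bound, linear in $\eps$. Getting a second $\eps$ out of $\sum_T\pi_w B_T\Delta_T^2$ forces a H\"older step against $\sum_T\pi_w\Delta_T^4$, i.e.\ fourth powers of $f$ on the right-hand side, which is not the claimed form. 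Indeed, from the corruption and Booleanity axioms alone (which is all your argument uses), the $\eps^2$-with-$f^2$ inequality is not even sound: take $u=1$, $f=\iprod{v,\cdot}$, data with $f(Y_j)=0$ for all $j$, an $\eps n$-subset of $S_1$ corrupted to points with $f$-value $D$, $X'=X$, and $w$ the indicator of $S_1$; the left side is $\Theta(k^2\eps^2D^2)$ while the right side is $\Theta(k^3\eps^3D^2)$, so for $\eps\ll 1/k$ no iteration of Cauchy--Schwarz (or any SoS proof from those axioms) can yield it. For what it is worth, the paper's own proof shares this soft spot --- its displayed chain only establishes a bound linear in $\eps$ before asserting the $(2u\eps)^2k^{2u}$ factor --- and linear-in-$\eps$ is all the downstream lemmas actually need under their hypothesis $\eps\le s^{-O(s)}k^{-20}$. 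So your write-up becomes correct, and matches what the paper really proves, if you either settle for the linear-in-$\eps$ bound or restate the claim with fourth moments of $f$ on the right; the second-Cauchy--Schwarz plan as described would fail.
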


The second claim bounds the terms on the right-hand side of the conclusion of \cref{clm:mom_decorruption_diff_1} in the case that $f(x,y) := \iprod{x-y,v}^{s}$ for some $v \in \R^d$.

\begin{claim}
\label[claim]{clm:mom_decorruption_diff_2}
  For all $v \in \R^d$, $a,b \in [k]$, and $s \leq t/10$,
	\begin{align}
\cA &\proves{t} \Paren{\frac{k^2}{n^2} \sum_{i \in S_a,j \in S_b} w_i w_j \iprod{Y_i - Y_j, v}^{2s}}^2 \leq k^4 \cdot (2s)^{O(s)} \cdot \Paren{\iprod{v, \Sigma_a v}^{2s} + \iprod{v, \Sigma_b v}^{2s} + \iprod{ \mu_a - \mu_b, v}^{4s}} \mper \label{eq:mom_decorruption_1} \\
\cA & \proves{t} \Paren{\frac{k^2}{n^2} \sum_{i \in S_a,j \in S_b} w_i w_j \iprod{X_i' - X_j', v}^{2s}}^2 \leq k^4 \cdot (2s)^{O(s)} \cdot \iprod{v, \Sigma(w) v}^{2s}\mper \label{eq:mom_decorruption_2}
\end{align}
\end{claim}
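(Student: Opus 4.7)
The plan is to prove the two parts of \cref{clm:mom_decorruption_diff_2} separately, in both cases reducing to the empirical-to-population moment bounds in \cref{lem:emp-pop} and \cref{lem:emp-pop-2} and the standard Gaussian moment identity $\E_{g \sim \cN(0, \Sigma)} \iprod{g, v}^{2s} = (2s-1)!! \cdot \iprod{v, \Sigma v}^s$.

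For \cref{eq:mom_decorruption_2}, I want to bound a sum in the indeterminates $w_i$ and $X_i'$. First I would observe that $w_i w_j \iprod{X_i' - X_j', v}^{2s}$ is manifestly nonneg in SoS (it equals $(\Sigma^{1/2} \Sigma^{-1/2})$-agnostic products of squares: $w_i = w_i^2$, $w_j = w_j^2$, and $\iprod{X_i' - X_j', v}^{2s} = (\iprod{X_i' - X_j', v}^{s})^2$), so term-by-term the restricted sum is dominated by the full sum, giving $\cA \proves{O(s)} \tfrac{k^2}{n^2} \sum_{i \in S_a, j \in S_b} w_i w_j \iprod{X_i'-X_j', v}^{2s} \le \tfrac{k^2}{n^2} \sum_{i, j \in [n]} w_i w_j \iprod{X_i'-X_j', v}^{2s}$. (This is where the slack $k^4$ on the RHS of the claim comes in.) I would then apply \cref{eq:emp-to-pop-diff-2} with $p(z) = \iprod{z, v}^{2s}$, which is a square, to bound the squared difference between the full sum and $\E_{g \sim \cN(0, 2\Sigma(w))} \iprod{g, v}^{2s}$, and use the implication $(A-B)^2 \le C B^2 \Rightarrow A^2 \le (2 + 2C) B^2$ (a standard sum-of-squares manipulation) to pass to a bound on $A^2$ itself. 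The SoS-provable Gaussian moment identity then gives $\E_{g \sim \cN(0, 2\Sigma(w))} \iprod{g, v}^{2s} = (2s-1)!! \cdot 2^s \cdot \iprod{v, \Sigma(w) v}^s = (2s)^{O(s)} \iprod{v, \Sigma(w) v}^s$, and squaring yields the claim (with ample slack).

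For \cref{eq:mom_decorruption_1}, the $Y_i$'s are fixed real samples, so the only indeterminates are the $w_i$. From $w_i^2 = w_i$ one derives $\cA \proves{2} 1 - w_i w_j = (1 - w_i w_j)^2$, hence $\cA \proves{2} w_i w_j \le 1$; multiplying through by the nonneg scalar $\iprod{Y_i - Y_j, v}^{2s}$ yields $\cA \proves{O(s)} \tfrac{k^2}{n^2} \sum_{i \in S_a, j \in S_b} w_i w_j \iprod{Y_i-Y_j, v}^{2s} \le \tfrac{k^2}{n^2} \sum_{i \in S_a, j \in S_b} \iprod{Y_i-Y_j, v}^{2s}$. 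The right-hand side is a scalar, so only a pointwise inequality on the $Y_i$'s is needed from here on. Writing $\iprod{Y_i - Y_j, v} = \iprod{Y_i - \mu_a, v} - \iprod{Y_j - \mu_b, v} + \iprod{\mu_a - \mu_b, v}$ and applying the power-mean inequality $(x - y + z)^{2s} \le 3^{2s-1}(x^{2s} + y^{2s} + z^{2s})$, the double average decouples into three single-cluster terms. \cref{lem:emp-pop-2}, applied to $p(z) = \iprod{z - \mu_a, v}^{2s}$ (similarly for $\mu_b$), gives $\tfrac{k}{n} \sum_{i \in S_a} \iprod{Y_i - \mu_a, v}^{2s} \le (2s)^{O(s)} \iprod{v, \Sigma_a v}^s$ for $\delta$ sufficiently small, again via $(A-B)^2 \le C B^2 \Rightarrow A^2 \le O(1) B^2$. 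Squaring the sum of the three bounds and applying $(A + B + C)^2 \le 3(A^2 + B^2 + C^2)$ produces the $\iprod{v, \Sigma_a v}^{2s} + \iprod{v, \Sigma_b v}^{2s} + \iprod{\mu_a - \mu_b, v}^{4s}$ structure on the right-hand side.

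There is no real obstacle here beyond careful bookkeeping: the two parts are essentially applications of the previously established empirical-to-population transfer lemmas, combined with an elementary three-term split of $Y_i - Y_j$ and the subset-to-full-sum bound which accounts for the $k^4$ slack. The only detail that requires a moment's care is confirming that the one-step implication $(A-B)^2 \le C B^2 \Rightarrow A^2 \le (2+2C) B^2$ is an SoS proof of the same degree (which it is, being just a rearrangement of $A^2 \le 2B^2 + 2(A-B)^2$) so that the SoS proofs from \cref{lem:emp-pop} and \cref{lem:emp-pop-2} can be used without having to take a square root.
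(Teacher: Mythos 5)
Your proposal is correct and follows essentially the same route as the paper's proof: bound the $X'$-sum by the full sum and invoke \cref{lem:emp-pop} (the square case), and handle the $Y$-sum via the three-term split $\iprod{Y_i-Y_j,v} = \iprod{Y_i-\mu_a,v} - \iprod{Y_j-\mu_b,v} + \iprod{\mu_a-\mu_b,v}$ together with \cref{lem:emp-pop-2}. The only differences are cosmetic: you discard the weights immediately via $w_iw_j \le 1$ (the paper keeps $\alpha_a(w),\alpha_b(w)$ and bounds them by $n$ at the end, which is where its $k^4$ slack arises) and you evaluate the Gaussian moments by the exact Wick identity where the paper cites \cref{fact:hypercontractivity}.
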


The last claim bounds the right-hand side of \cref{clm:mom_decorruption_diff_1} in the case that $f(x,y,z,w) = \Paren{ \iprod{x-y, A(x-y)} - \iprod{z-w, A(z-w)} }^2$, for some $A \in \R^{d \times d}$.

\begin{claim}
\label[claim]{clm:mom_decorruption_diff_3}
  For all $A \in \R^{d \times d}$, $a,b \in [k]$, and $s \leq t/10$, if $g(x,y,v,z) = \Paren{ \iprod{x-y, A(x-y)} - \iprod{v-z, A(v-z)} }^2$, we have that
\begin{align}
\cA &\proves{t} \Paren{\frac{k^4}{n^4} \sum_{(i,j) \in S_a,(r,l) \in S_b} w_i w_j w_r w_l g(Y_i,Y_j,Y_r,Y_l)^2 }^2\\
&\leq  k^8 \cdot O\Paren{  \|\Sigma_a^{1/2} A \Sigma_a^{1/2}\|_F^8  + \|\Sigma_b^{1/2} A \Sigma_b^{1/2}\|_F^8 + \iprod{\Sigma_a - \Sigma_b, A}^8 }. \label{eq:mom_decorruption_3} \\
\cA & \proves{t} \Paren{\frac{k^4}{n^4} \sum_{(i,j) \in S_a,(r,l) \in S_b} w_i w_j w_r w_l g(X_i',X_j',X_r',X_l')^2 }^2 \leq k^8 \cdot O(\|\Sigma^{1/2} A \Sigma^{1/2}\|_F^8). \label{eq:mom_decorruption_4}
\end{align}
\end{claim}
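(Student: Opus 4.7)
My plan is to prove both \eqref{eq:mom_decorruption_3} and \eqref{eq:mom_decorruption_4} by the same template: decompose the polynomial $g^2$ into mean-centered pieces via an SoS inequality of the form $(a \pm b \pm c)^4 \leq O(1)(a^4 + b^4 + c^4)$, apply the empirical-to-population lemmas (\cref{lem:emp-pop-2} for the $Y$-case and \cref{lem:emp-pop} for the $X'$-case) to the resulting empirical fourth moments, and finish with Gaussian hypercontractivity (\cref{fact:hypercontractivity}). The $k^8$ factors in both bounds appear to be substantial slack; the actual bounds I expect to obtain do not need them.

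For \eqref{eq:mom_decorruption_3}, I introduce the centered polynomials $\alpha(i,j) := \iprod{Y_i - Y_j, A(Y_i - Y_j)} - 2\iprod{\Sigma_a,A}$ and $\beta(r,l) := \iprod{Y_r - Y_l, A(Y_r-Y_l)} - 2\iprod{\Sigma_b, A}$, together with the constant $\gamma := 2\iprod{\Sigma_a - \Sigma_b, A}$. Then $g(Y_i, Y_j, Y_r, Y_l) = (\alpha(i,j) - \beta(r,l) + \gamma)^2$, so the scalar SoS inequality $(a - b + c)^4 \leq 27(a^4 + b^4 + c^4)$ gives $g^2 \leq 27(\alpha^4 + \beta^4 + \gamma^4)$. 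After summing, the $\gamma^4$ piece factors as $\gamma^4 \cdot \bigl((k^2/n^2)\sum_{(i,j)\in S_a^2} w_i w_j\bigr)\cdot \bigl((k^2/n^2)\sum_{(r,l) \in S_b^2} w_rw_l\bigr) \leq \gamma^4 = 16\iprod{\Sigma_a - \Sigma_b, A}^4$, each inner factor being at most $1$ by $\sum_i w_i = n/k$. For the $\alpha^4$ piece, the $(r,l)$-sum factors out similarly, leaving $(k^2/n^2)\sum_{(i,j) \in S_a^2} \alpha(i,j)^4$ after absorbing the $w_iw_j \leq 1$ factor. Since $\alpha^4 = (\alpha^2)^2$ is a square polynomial in $Y_i - Y_j$, I invoke the ``square-$p$'' version of \cref{lem:emp-pop-2} to conclude $\bigl((k^2/n^2)\sum_{(i,j)\in S_a^2}\alpha^4 - \E_{g\sim\cN(0,2\Sigma_a)}\alpha(g)^4\bigr)^2 \leq \delta(2t)^{O(t)}\bigl(\E\alpha(g)^4\bigr)^2$, whence $\bigl((k^2/n^2)\sum\alpha^4\bigr)^2 \leq O(1)\,\bigl(\E\alpha(g)^4\bigr)^2$ using $x^2 \leq 2(x-y)^2 + 2y^2$. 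Hypercontractivity for the degree-$2$ polynomial $\alpha$ via \cref{fact:hypercontractivity} then gives $\E \alpha(g)^4 \leq O(1)\,(\E\alpha(g)^2)^2 = O(\|\Sigma_a^{1/2} A \Sigma_a^{1/2}\|_F^4)$; the analogous analysis applies to $\beta$ and $\Sigma_b$. Combining the three pieces and squaring the whole inequality, using the SoS fact that $p \leq q$ with $p, q$ sums of squares implies $p^2 \leq q^2$ (via $(q-p)(q+p) \geq 0$), yields \eqref{eq:mom_decorruption_3}.

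For \eqref{eq:mom_decorruption_4} I follow the same template, recentering with respect to $\Sigma(w)$ and dropping the constant $\gamma$ (it vanishes because both quadratic forms in $g$ recentre by the same $2\iprod{\Sigma(w), A}$). A small adjustment is needed because \cref{lem:emp-pop} applies to sums over all of $[n]$: I first enlarge the restricted sum $(k^4/n^4)\sum_{(i,j)\in S_a^2,(r,l)\in S_b^2}(\cdots)$ to the full sum $(k^4/n^4)\sum_{(i,j,r,l)\in[n]^4}(\cdots)$, which is valid since every summand is a nonnegative combination of SoS polynomials (using $w_i \geq 0$, which follows from $w_i^2 = w_i$). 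The remainder of the argument proceeds as before, ending with $\E_{g\sim\cN(0, 2\Sigma(w))} \alpha'(g)^4 \leq O(\|\Sigma(w)^{1/2} A \Sigma(w)^{1/2}\|_F^4)$, where $\alpha'(i,j) = \iprod{X_i' - X_j', A(X_i' - X_j')} - 2\iprod{\Sigma(w), A}$. I expect the main technical subtlety in both cases to be the interaction between the multiplicative-error guarantee of \cref{lem:emp-pop} and \cref{lem:emp-pop-2} and the outer squaring: one must apply the empirical-to-population comparison to the square polynomial $\alpha^4$ rather than to $\alpha^2$, so that the additive error is controlled by $(\E\alpha^4)^2$ and then tamed by hypercontractivity; doing it in the other order would leave a $\sqrt{\delta}$ factor that does not compose cleanly inside SoS.
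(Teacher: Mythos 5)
Your proposal is correct and follows essentially the same route as the paper's proof: the paper likewise recenters the two quadratic forms by the population quantities (writing $g^2 = \iprod{(Y_i-Y_j)^{\otimes 2}-\Sigma_a + (\Sigma_a-\Sigma_b) + \Sigma_b-(Y_r-Y_l)^{\otimes 2},A}^4$ and applying the fourth-power triangle inequality), factors out the cross-sums, invokes \cref{lem:emp-pop-2} (resp.\ \cref{lem:emp-pop} after enlarging to the full sum, for the $X'$ case) together with \cref{fact:hypercontractivity}, and finishes with the Gaussian variance identity for quadratic forms. Your centering by $2\iprod{\Sigma_a,A}$ and your explicit care about applying the empirical-to-population comparison to the square $\alpha^4$ are consistent with (indeed slightly more careful than) what the paper does implicitly.
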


\begin{proof}[Proof of \cref{lem:mom_decorruption_diff}]
  Items (1) and (2) follow immediately from \cref{clm:mom_decorruption_diff_1,clm:mom_decorruption_diff_2,clm:mom_decorruption_diff_3}.
  For item (3), we use a similar argument as in the proof of \cref{clm:mom_decorruption_diff_1},
  presented here for completeness.
  Following the same logic as in \cref{clm:mom_decorruption_diff_1}, we can write 
\begin{align*}
& \Paren{ \Tr(B^T \Sigma'_S B) - \Tr(B^T \Sigma_S B) }^2 \\
&= \Brac{ \frac{1}{n^2} \sum_i w_i w_j \Paren{ \Tr(B^T (X_i' - X_j')(X_i'-X_j')^T B) -\Tr(B^T (Y_i - Y_j)(Y_i-Y_j)^T B) }  }^2\\
&\leq \frac{O(\eps k^2)}{n^2}\Brac{\sum_i w_i w_j ( \|B(X_i' -X_j')\|^4 + \|B(Y_i - Y_j)\|^4) } \mper
\end{align*}
Applying \cref{lemma:hypercontractivity-sos-polys} and \cref{lem:emp-pop,lem:emp-pop-2} finishes the argument.
\end{proof}

Now we turn to the proofs of \cref{clm:mom_decorruption_diff_1,clm:mom_decorruption_diff_2,clm:mom_decorruption_diff_3}.

\begin{proof}[Proof of \cref{clm:mom_decorruption_diff_1}]
For $T \in S_1 \times \ldots \times S_u$, let $Y_T = (Y_{i_1},\ldots,Y_{i_u})$, where $T = (i_1,\ldots,i_u)$ and similarly for $X_T$.
Let $z'(T) = \prod_{i \in T}z_i \mathbbm{1}_{Y_i = X_i}$ where $z_i$ are the indeterminates from $\cA$.
Then $\cA \proves{t} z'(T) (f(Y_T) - f(X'_T)) = 0$, since
	\begin{align*} 
	& z'(T) \Paren{ f(Y_T) - f(X'_T) }\\
        & \qquad = z'(T) \Paren{f(Y_T) - f(X_T) + f(X_T) - f(X'_T) } \\
	& \qquad = \prod_{i \in T}z_i\mathbbm{1}_{Y_i = X_i} \Paren{f(Y_T) - f(X_T) } + \prod_{i \in T}z_i \mathbbm{1}_{Y_i = X_i} \Paren{ f(X_T) - f(X'_T) }\\
	& \qquad =0\mcom
	\end{align*}
	where the final inequality follows from  $\Pi_{i \in T}\mathbbm{1}_{Y_i = X_i}(f(X_T) - f(Y_T)) = 0$ and the axioms $\cA_{\text{corruptions}}$.
        Hence,
        \begin{align*}
        \cA \proves{O(t)} & \Paren{\frac{k^u}{n^u} \sum_{T \in S_1 \times \ldots \times S_u} (\prod_{\ell \in T} w_\ell) (f(Y_T) - f(X'_T))}^2 \\
        & = \Paren{\frac{k^u}{n^u} \sum_{T \in S_1 \times \ldots \times S_u} (1-z'(T)) (\prod_{\ell \in T} w_\ell) (f(Y_T) - f(X'_T))}^2\mper
        \end{align*}
        By Cauchy-Schwarz and the axioms $w_i^2 = w_i$, we obtain
        \begin{align}
        \cA \proves{O(t)} & \Paren{\frac{k^u}{n^u} \sum_{T \in S_1 \times \ldots \times S_u} (\prod_{\ell \in T} w_\ell) (f(Y_T) - f(X'_T))}^2 \\
        & \leq 8 \Paren{\frac{k^u}{n^u}\sum_{T \in S_1 \times \ldots \times S_u} (1-z'(T))^2} \Paren{\frac{k^u}{n^u} \sum_{T \in S_1 \times \ldots \times S_u} (\prod_{\ell \in T} w_\ell) (f(Y_T)^2 + f(X'_T)^2)}\mper \label{eq:decorrupt-1}
        \end{align}
	Since $\cA \proves{2} \sum_i z_i = (1-\eps) n$ and $\sum_{i} \mathbbm{1}_{X_i = Y_i} \geq (1-\eps) n$, we see that
        \[
        \cA \proves{2} \sum_{i} ( 1- z_i \mathbbm{1}_{X_i = Y_i}) = \sum_i z_i (1- \mathbbm{1}_{X_i = Y_i}) + (1-z_i) \leq \sum_i (1- \mathbbm{1}_{X_i = Y_i}) + \sum_i (1-z_i) \leq 2\eps n\mper
        \]
        Rearranging, we see $\cA \proves{2} \sum z_i \mathbbm{1}_{X_i = Y_i} \geq (1-2\eps) n$.
        Repeating this for $S_1,\ldots,S_u$, we get
	\begin{align*}
	\cA \proves{t} \sum_{T \in S_1 \times \ldots \times S_u} z'(T) = \sum_{T \in S_1 \times \ldots \times S_u} \prod_{i \in T}z_i\mathbbm{1}_{Y_i = X_i} = \prod_{i \leq u} \Paren{ \sum_{j \in S_i} z_j \mathbbm{1}_{Y_j = X_j} } \geq (1/k-2\eps)^u n^u
	\end{align*}
	and so 
	\begin{align*}
	\sum_{T \in S_1 \times \ldots \times S_u} 1-z'(T) \leq (2u\eps) n^u \mper
	\end{align*}
	
        Combining with \eqref{eq:decorrupt-1}, we obtain
\begin{align*}
        \cA \proves{O(t)} & \Paren{\frac{k^u}{n^u} \sum_{T \in S_1 \times \ldots \times S_u} (\prod_{\ell \in T} w_\ell) (f(Y_T) - f(X'_T))}^2 \\
        & \leq  8 k^{2u} (2u\eps)^2 \Paren{\frac{k^u}{n^u} \sum_{T \in S_1 \times \ldots \times S_u} (\prod_{\ell \in T} w_\ell) (f(Y_T)^2 + f(X'_T)^2)}\mper
        \end{align*}
\end{proof}

\begin{proof}[Proof of \cref{clm:mom_decorruption_diff_2}]
We start with \eqref{eq:mom_decorruption_1}.
Observe that $\proves{}(a+b+c)^{2s} < \exp( O(s)) \cdot \Paren{a^{2s} + b^{2s} + c^{2s}}$ implies that
	\begin{align*}
	\iprod{Y_i - Y_j, v}^{2s} \leq 2^{O(s)} \cdot (\iprod{Y_i-\mu_a, v}^{2s} + \iprod{\mu_a- \mu_b, v}^{2s} + \iprod{Y_j- \mu_b, v}^{2s}) \; .
	\end{align*}
         So, letting $f(x) = \iprod{x,v}^s$, we have
	\begin{align*}
	\cA \proves{t} &\Paren{ \frac{k^2}{n^2} \sum_{i \in S_a, j \in S_b} w_i w_j f^2(Y_i- Y_j) }^2 \\
	&\leq \frac{k^4\exp(O(s))}{n^4} \cdot \Paren{ \alpha_a(w) \sum_{i \in S_b} w_i f^2(Y_i- \mu_b) + \alpha_b(w) \sum_{i \in S_a} w_i f^2(Y_i- \mu_a)  +  \alpha_a(w) \alpha_b(w) f^2(\mu_a- \mu_b) }^2\\
	&\leq \exp(O(s)) \cdot \Paren{  \frac{k^2 \alpha_a(w)^2}{n^2} \Paren{ \frac{k}{n}  \sum_{i \in S_b}   f^2(Y_i- \mu_b)}^2 +  \frac{k^2\alpha_b(w)^2}{n^2} \Paren{ \frac{k}{n} \sum_{i \in S_a}   f^2(Y_i- \mu_a)}^2 } \\
	&\qquad  +  \exp(O(s)) \cdot \frac{k^2\alpha_a(w)^2}{n^2} \frac{k^2\alpha_b(w)^2}{n^2} f^4(\mu_a- \mu_b) \; .
	\end{align*}
	Since $f$ is a degree-$\leq t/4$ polynomial, we use \cref{lem:emp-pop-2,fact:hypercontractivity}. Since $\delta \leq 0.1$ and $\alpha_a(w)^2,\alpha_b(w)^2 \leq n^2$, we get 
	\begin{align*}
	\cA \proves{t} &\Paren{ \frac{k^2}{n^2} \sum_{i \in S_a, j \in S_b} w_i w_j f^2(Y_i - Y_j) }^2 \\
	&\leq (2s)^{O(s)} \cdot \Paren{ \E_{x \sim \cN(0, 2\Sigma_a)}[f(x)^2]^2 +  \E_{x \sim \cN(0, 2\Sigma_b)}[f(x)^2]^2 +  f^4(\mu_a - \mu_b) } \; . 
	\end{align*}
	Since $f(x)^2 = \iprod{x, v}^{2s}$, we see
	\begin{align*}
	 \E_{x \sim \cN(0, 2\Sigma_a)}[f(x)^2]^2 \leq s^{O(s)} \cdot \iprod{v, \Sigma v}^{2s}
	\end{align*}
        and similarly for $\E_{x \sim \cN(0, 2\Sigma_b)}[f(x)^2]^2$, which finishes the proof of \eqref{eq:mom_decorruption_1}.

        Moving on to \eqref{eq:mom_decorruption_2}, the reasoning is simpler.
        First of all, since $\iprod{X_i' - X_j',v}^{2s}$ is a square,
        \[
        \cA \proves{t} \Paren{\frac{k^2}{n^2} \sum_{i \in S_a,j \in S_b} w_i w_j \iprod{X_i' - X_j', v}^{2s}}^2
        \leq \Paren{\frac{k^2}{n^2} \sum_{i,j \in [n] } w_i w_j \iprod{X_i' - X_j', v}^{2s}}^2\mper
        \]
        Now using \cref{lem:emp-pop}, we can bound the right-hand side above as
        \[
        \cA \proves{t} \Paren{\frac{k^2}{n^2} \sum_{i,j \in [n] } w_i w_j \iprod{X_i' - X_j', v}^{2s}}^2 \leq k^4 \cdot (2s)^{O(s)} \cdot \Paren{\E_{g \sim \cN(0,2\Sigma(w))} \iprod{g,v}^{2s}}^2\mper
        \]
        \cref{fact:hypercontractivity} concludes the proof of \eqref{eq:mom_decorruption_2}.
        \end{proof}
       
\begin{proof}[Proof of \cref{clm:mom_decorruption_diff_3}]
        The argument is similar to the proof of \cref{clm:mom_decorruption_diff_2}.
        We start with \eqref{eq:mom_decorruption_3}.
	\begin{align*}
	 \cA & \proves{t} g(Y_i,Y_j,Y_r,Y_l)^2 = \iprod{(Y_i - Y_j)^{\otimes 2} -(Y_r - Y_l)^{\otimes 2}, A}^{4} \\
         & = \iprod{(Y_i-Y_j)^{\otimes 2} - \Sigma_a + \Sigma_a - \Sigma_b + \Sigma_b - (Y_r - Y_l)^{\otimes 2}, A}^4 \\
	 &\leq O(\iprod{(Y_i - Y_j)^{\otimes 2}-\Sigma_a, A}^{4} + \iprod{\Sigma_a - \Sigma_b, A}^{4} + \iprod{\Sigma_b-(Y_r - Y_l)^{\otimes 2}, A}^{4})\mper
	\end{align*}
        Proceeding as in the proof of \cref{clm:mom_decorruption_diff_2},
	\begin{align*}
	\cA \proves{t} &\Paren{ \frac{k^4}{n^4} \sum_{(i,j) \in S_a^2, (r,l) \in S_b^2} w_i w_j w_r w_l g(Y_i,Y_j,Y_r,Y_l)^2 }^2 \\
	&\leq O\Paren{\frac{k^8}{n^8}} \cdot \Paren{ \alpha_a(w)^2 \sum_{(r,l) \in S_b^2} w_r w_l \iprod{(Y_r - Y_l)^{\otimes 2} - \Sigma_b, A}^4 }^2 \\
	&\qquad  + O\Paren{\frac{k^8}{n^8}} \cdot \Paren{ \alpha_b(w)^2 \sum_{(i,j) \in S_a^2} w_iw_j \iprod{(Y_i - Y_j)^{\otimes 2} - \Sigma_a, A}^4  }^2 \\
	&\qquad +  O\Paren{\frac{k^8}{n^8}} \cdot \alpha_a(w)^4 \alpha_b(w)^4 \cdot \iprod{\Sigma_a - \Sigma_b, A}^8 \; . \\
	\end{align*}
        To finish the proof of \eqref{eq:mom_decorruption_3}, we just need to bound $\Paren{\frac{k^2}{n^2} \sum_{(i,j) \in S_a^2} w_i w_j \iprod{(Y_i - Y_j)^{\otimes 2} - \Sigma_a, A}^4}^2$. The analogous argument for the $S_b$ term above will be identical.
        By \cref{lem:emp-pop-2,fact:hypercontractivity},
        \begin{align*}
        \cA \proves{O(1)} & \Paren{\frac{k^2}{n^2} \sum_{(i,j) \in S_a^2} w_i w_j \iprod{(Y_i - Y_j)^{\otimes 2} - \Sigma_a, A}^4}^2\\
        & \leq O \Paren{ \E_{g \sim \cN(0,2\Sigma_a)}\Paren{ \iprod{gg^\top,A} - \E_{g \sim \cN(0,2 \Sigma_a)} \iprod{gg^\top, A}}^2 }^4 \\
        & = O \Paren{ \|\Sigma_a^{1/2} A \Sigma_a^{1/2} \|_F^8 }\mcom
        \end{align*}
        where the last step follows from the identity $\E_{x \sim \cN(0,I)} (\iprod{x,Bx} - \E_{x \sim \cN(0,I)} \iprod{x,Bx})^2 = \|A\|_F^2$ for any matrix $B$.

        Finally, \eqref{eq:mom_decorruption_4} follows from the same reasoning as in the proof of \eqref{eq:mom_decorruption_2} from \cref{clm:mom_decorruption_diff_2}, again using the identity in variables $\Sigma, \Sigma^{1/2}$ that $\E_{g \sim \cN(0,\Sigma)} (\iprod{gg^\top,B} - \E_{g \sim \cN(0,\Sigma)} \iprod{gg^\top,B})^2 = \|\Sigma^{1/2} B \Sigma^{1/2}\|_F$ if $(\Sigma^{1/2})^2 = \Sigma$.
\end{proof}

Lastly, we can prove \cref{lem:mixture_var} using the same ideas:

\begin{proof}[Proof of \cref{lem:mixture_var}]
  We can expand $2\iprod{v, \Sigma(w) v} = \sum_{i,j \in [n]} w_i w_j \iprod{X_i' - X_j',v}^2$.
  Then the lemma follows by the same reasoning as in \cref{lem:mom_decorruption_diff}, case (1).
\end{proof}

\subsection{Certifiable Anti-concentration: Proof of \cref{lem:q-main}} \label[section]{sec:anticoncentration}

In this section, we prove \cref{lem:q-main}.
As in \cite{raghavendra2020list,karmalkar2019list}, we prove the lemma using tools from approximation theory.
For completeness we provide a proof from basic tools in approximation theory, following ideas from \cite{DGJ+:10}.

The starting point is the following elementary fact from approximation theory.

\begin{theorem}[Jackson's Theorem] \label[theorem]{thm:jacksons}
For every bounded continuous function $f \, : \, [-1,1] \rightarrow \R$ and every integer $\ell \geq 1$, 
there is a degree-$\ell$ polynomial $J$ such that
  \[
  \sup_{x \in [-1,1]} |J(x) - f(x)| \leq 6 \cdot \sup_{x,y \in [-1,1]} \{ |f(x) - f(y)| \, : \, |x-y| \leq \delta \}\mper
  \]
\end{theorem}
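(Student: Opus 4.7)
The plan is to prove Jackson's Theorem by the classical route: reduce polynomial approximation on $[-1,1]$ to trigonometric approximation of periodic functions via the substitution $x=\cos\theta$, and then construct the approximant by convolving against a carefully chosen nonnegative trigonometric kernel (the Jackson kernel). Throughout, I will assume that the $\delta$ in the statement is $\delta = 1/\ell$ (the standard Jackson choice), since no definition is given in the excerpt.

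First, I would set $g(\theta) := f(\cos\theta)$, which is continuous, even, and $2\pi$-periodic. Because $|\cos\theta-\cos\phi|\leq|\theta-\phi|$, the modulus of continuity transfers: $\omega_g(\eta)\leq\omega_f(\eta)$ for all $\eta>0$. This converts the problem into approximating $g$ uniformly by an even trigonometric polynomial of degree $\leq\ell$, because any such approximant $\widetilde J(\theta)$ that is even can be written as $J(\cos\theta)$ for a polynomial $J$ of degree $\leq\ell$ (by expressing $\cos(k\theta)$ as $T_k(\cos\theta)$, where $T_k$ is the degree-$k$ Chebyshev polynomial of the first kind).

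Second, I would construct the Jackson kernel. Set $m=\lceil\ell/2\rceil$ and define
\[
K(\theta)=c_m\left(\frac{\sin(m\theta/2)}{\sin(\theta/2)}\right)^{4},
\]
where $c_m$ is the normalization chosen so that $\frac{1}{2\pi}\int_{-\pi}^{\pi}K(\theta)\,d\theta=1$. Standard trigonometric identities show that $K$ is a nonnegative, even trigonometric polynomial of degree at most $2(m-1)\leq\ell$. The fundamental estimate I need is the second-moment bound
\[
\frac{1}{2\pi}\int_{-\pi}^{\pi}\theta^{2}K(\theta)\,d\theta \;\leq\; \frac{C}{m^{2}} \;\leq\; \frac{4C}{\ell^{2}},
\]
for some absolute constant $C$, which follows from an elementary estimate using $|\sin(\theta/2)|\geq|\theta|/\pi$ on $[-\pi,\pi]$ together with the explicit value of $c_m\sim m^{-3}$.

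Third, I would define the approximant as the convolution
\[
\widetilde J(\theta)=\frac{1}{2\pi}\int_{-\pi}^{\pi}g(\theta-\phi)K(\phi)\,d\phi,
\]
which is an even trigonometric polynomial of degree $\leq\ell$ (evenness because both $g$ and $K$ are even), and hence yields a polynomial $J$ of degree $\leq\ell$ with $J(\cos\theta)=\widetilde J(\theta)$. The uniform error is bounded by
\[
|\widetilde J(\theta)-g(\theta)|\;\leq\;\frac{1}{2\pi}\int_{-\pi}^{\pi}|g(\theta-\phi)-g(\theta)|\,K(\phi)\,d\phi.
\]
Using the standard monotonicity of the modulus of continuity $\omega_g(t)\leq(1+t/\delta)\,\omega_g(\delta)$, combined with Cauchy–Schwarz and the second-moment estimate on $K$, this bound becomes at most $\bigl(1+O(1/(\delta\ell))\bigr)\omega_g(\delta)$. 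Picking $\delta=1/\ell$ and tracking constants yields $\sup_x|J(x)-f(x)|\leq 6\,\omega_f(1/\ell)$.

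The main obstacle will be pinning down the absolute constant $6$ in the error estimate: this requires a careful numerical bound on $c_m$ and on $\int\theta^{2}K(\theta)\,d\theta$, together with an optimal split of the integral into $|\phi|\leq\delta$ and $|\phi|>\delta$, rather than the softer big-$O$ estimate sketched above. Everything else is routine once the kernel's nonnegativity and degree are established.
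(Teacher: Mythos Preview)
Your proof proposal is the standard classical argument for Jackson's theorem via the Jackson kernel, and the sketch is correct (modulo the careful constant-tracking you already flag as the main remaining work). However, there is nothing to compare against: the paper does not prove this statement at all. Jackson's theorem is introduced there as ``an elementary fact from approximation theory'' and invoked as a black box in the construction of the anti-concentration polynomial $q_\epsilon$ (\cref{lem:q-main}). So your write-up would be supplying a proof where the paper simply cites the result.

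One minor remark: you correctly note that the $\delta$ in the displayed bound is undefined in the excerpt and take $\delta=1/\ell$. This is consistent with the paper's own usage, where it applies the theorem with degree $\ell=100/\delta$ to a piecewise-linear function of Lipschitz constant $1/\delta$.
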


We will also need the following amplifying polynomial, as in \cite{DGJ+:10} -- the following fact follows from the Chernoff bound.

\begin{fact}[See \cite{DGJ+:10}, Claim 4.3]
  \label[fact]{fact:amplify}
  Let $k \geq 0$ be an even integer and let
  \[
  A_k(u) = \sum_{j \geq k/2} {k \choose j} \Paren{\frac{1+u}{2}}^j \Paren{\frac{1-u}{2}}^{k-j} \, .
  \]
  The degree-$k$ polynomial $A_k$ satisfies
  \begin{enumerate}
    \item if $u \in [3/5,1]$ then $A_k(u) \in [1-e^{-k/6}, 1]$, and
    \item if $u \in [-1,-3/5]$ then $A_k(u) \in [0,e^{-k/6}]$,
    \item if $u \in [-1,1]$ then $A_k(u) \in [0,1]$.
  \end{enumerate}
\end{fact}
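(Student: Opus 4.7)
The plan is to interpret $A_k(u)$ probabilistically and then invoke Hoeffding's inequality. Specifically, for $u \in [-1,1]$ set $p = (1+u)/2 \in [0,1]$ and let $X_1,\ldots,X_k$ be i.i.d.\ Bernoulli random variables with $\Pr[X_i = 1] = p$. Then by the definition of the binomial distribution,
\[
A_k(u) \;=\; \sum_{j \geq k/2} \binom{k}{j} p^j (1-p)^{k-j} \;=\; \Pr\!\left[\sum_{i=1}^k X_i \geq k/2\right].
\]
This is the key identity driving the proof.

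From this interpretation, property (3) is immediate: $A_k(u)$ is a probability, so it lies in $[0,1]$ for every $u \in [-1,1]$.

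For (1) and (2), I would invoke Hoeffding's inequality, $\Pr[\,|\sum_i X_i - kp| \geq t\,] \leq 2 e^{-2 t^2 / k}$, with $t = k\,|p - 1/2|$. In case (1), $u \in [3/5,1]$ forces $p \geq 4/5$, so $|p - 1/2| \geq 3/10$ and the mean $kp$ lies above $k/2$; Hoeffding's lower-tail bound then gives
\[
1 - A_k(u) \;=\; \Pr\!\left[\sum_i X_i < k/2\right] \;\leq\; \exp\!\bigl(-2 (3k/10)^2 / k\bigr) \;=\; e^{-9k/50} \;\leq\; e^{-k/6},
\]
using $9/50 > 1/6$. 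Case (2) is symmetric: $u \in [-1,-3/5]$ gives $p \leq 1/5$, the mean $kp$ lies below $k/2$, and Hoeffding's upper-tail bound yields the same exponent $e^{-9k/50} \leq e^{-k/6}$ for $A_k(u)$ itself.

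There is no real obstacle here; the argument is essentially a one-line Chernoff/Hoeffding computation once the probabilistic identity is written down. The only thing to double-check is that the slack in the exponent ($9/50$ versus $1/6$) is indeed favorable, which it is, so no sharper concentration inequality is needed.
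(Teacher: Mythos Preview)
Your proof is correct and is exactly the argument the paper has in mind: the paper states only that the fact ``follows from the Chernoff bound,'' and your probabilistic interpretation of $A_k(u)$ as a binomial tail together with Hoeffding's inequality (a Chernoff-type bound) carries this out in full, with the exponent check $9/50 > 1/6$ giving the needed slack.
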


We also need a bound on the rate of growth of univariate polynomials.

\begin{fact}[\cite{rivlin1974chebyshev}]
\label[fact]{fact:growth-bound}
  Let $a(t)$ be a polynomial of degree at most $d$ such that $|a(t)| \leq 1$ for all $t \in [-1,1]$.
  Then for all $t \geq 1$, $|a(t)| \leq |2t|^d$.
\end{fact}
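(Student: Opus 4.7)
The plan is to prove the stated growth bound via the classical extremal property of Chebyshev polynomials of the first kind. Recall that $T_d(x) = \cos(d\arccos x)$ for $|x| \le 1$, and that $T_d$ is characterized among degree-$d$ polynomials bounded by $1$ on $[-1,1]$ as the one of largest magnitude outside that interval.

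First I would invoke (or prove by a standard interpolation argument) the following Chebyshev extremal inequality: if $a$ has degree at most $d$ and $\sup_{t \in [-1,1]} |a(t)| \le 1$, then $|a(t)| \le |T_d(t)|$ for every $|t| \ge 1$. The proof sketch is by contradiction: if $|a(t_0)| > |T_d(t_0)|$ for some $t_0 > 1$, then the polynomial $b = T_d - \lambda \cdot a$, where $\lambda = T_d(t_0)/a(t_0) \in (-1,1)$, has degree at most $d$ and vanishes at $t_0$; moreover, since $T_d$ attains values $\pm 1$ at $d+1$ alternating extremal points $\xi_0 > \xi_1 > \cdots > \xi_d$ in $[-1,1]$ and $|\lambda a(\xi_i)| < 1$, the function $b$ changes sign between each consecutive pair of extrema, giving it $d$ distinct zeros in $[-1,1]$. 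Together with the zero at $t_0$ this yields $d+1$ zeros for a polynomial of degree at most $d$, forcing $b \equiv 0$ and hence $a = T_d/\lambda$, contradicting $|a| \le 1$ on $[-1,1]$.

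Second, I would bound $|T_d(t)|$ for $t \ge 1$ by $(2t)^d$. Using the closed form
\[
T_d(t) \;=\; \tfrac{1}{2}\bigl[(t+\sqrt{t^2-1})^{d} + (t-\sqrt{t^2-1})^{d}\bigr]
\]
valid for $t \ge 1$ (which one checks directly from the recurrence $T_{d+1}(t) = 2t T_d(t) - T_{d-1}(t)$ together with $T_0 = 1$, $T_1 = t$), and observing that $0 \le t - \sqrt{t^2-1} \le 1 \le t + \sqrt{t^2-1} \le 2t$, we obtain $|T_d(t)| \le (t+\sqrt{t^2-1})^{d} \le (2t)^{d}$. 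For $t \le -1$, apply the same argument to $a(-t)$, noting that $T_d$ is even or odd depending on the parity of $d$, so $|T_d(t)| = |T_d(|t|)| \le (2|t|)^d$.

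The main obstacle, and the only nontrivial content, is the Chebyshev extremal lemma in the first step; the rest is a direct computation. Since this lemma is entirely standard (see e.g.\ Rivlin's monograph, which is cited), in the write-up I would likely just quote it and devote the bulk of the argument to the explicit bound $|t \pm \sqrt{t^2-1}| \le 2|t|$ for $|t| \ge 1$ that converts the Chebyshev bound into the stated form $(2t)^d$.
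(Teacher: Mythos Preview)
Your proof is correct. The paper does not prove this fact at all; it simply cites it from Rivlin's monograph on Chebyshev polynomials. Your argument via the Chebyshev extremal inequality followed by the explicit bound $t+\sqrt{t^2-1}\le 2t$ is exactly the standard route, and each step is sound (in particular, the sign-counting argument forcing $b\equiv 0$ is the classical one). Since the statement only asks for $t\ge 1$, your remark about $t\le -1$ is extra but harmless.
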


Now we prove \cref{lem:q-main}.

\begin{proof}[Proof of \cref{lem:q-main}]
  Let $\delta > 0$ and let $f(x) \, : \, [-1,1] \rightarrow [0,1]$ be the following piecewise-linear function:
  \[
  f(x) = \begin{cases} 0, & |x| \geq 2\delta \\
                       (x+2\delta)/\delta, & -2\delta \leq x \leq -\delta \\
                       1, & |x|, \leq \delta\\
                       (-x+2\delta)/\delta, & \delta \leq x \leq 2\delta
          \end{cases} \, .
    \]
  The maximum slope of $0.9(f+0.1)$ is $1/\delta$.
  By \cref{thm:jacksons}, there is a degree-$100/\delta$ polynomial $J(x)$ 
  such that $|J(x) - 0.9(f(x)+0.1)| \leq 0.01$, for all $x \in [-1,1]$.
  Thus,
  \begin{enumerate}
  \item $J(x) \in [0,1]$ for all $x \in [-1,1]$.
  \item $J(x) \leq 0.2$ if $x \in [-1,-2\delta] \cup [2\delta,1]$.
  \item $J(x) \geq 0.8$ if $x \in [-\delta,\delta]$.
  \end{enumerate}

  For $k$ to be chosen later, let $A_k$ be as in \cref{fact:amplify}.
  Then the degree $d = 100k/\delta$ polynomial $A_k(J(x))$ satisfies
  \begin{enumerate}
  \item $A_k(J(x)) \in [0,1]$ for all $x \in [-1,1]$.
  \item $A_k(J(x)) \in [0,e^{-k/6}]$ if $x \in [-1,-2\delta] \cup [2\delta,1]$.
  \item $A_k(J(x)) \in [1,1-e^{-k/6}]$ if $|x| \leq \delta$.
  \item $|A_k(J(x))| \leq |2x|^d$ for all $|x| \geq 1$.
  \end{enumerate}
  We note that the last item follows from \cref{fact:growth-bound}.
  By construction, the same is true for $A_k(J(-x))$.

  For $L \geq 0$ to be chosen later, let
  \[
    p_{\delta,k,L}(x) = \Paren{\frac{A_k(J(x/L)) + A_k(J(-x/L))}{2}}^2\mper
  \]
  Then
  \begin{enumerate}
  \item $p_{\delta,k,L}(x) \in [0,1]$ for all $x \in [-L,L]$.
  \item $p_{\delta,k,L}(x) \in [0,e^{-k/6}]$ if $x \in [-L,-2\delta L] \cup [2\delta L, L]$.
  \item $p_{\delta,k,L}(x) \in [1,1-2e^{-k/6}]$ if $|x| \leq \delta L$.
  \item $|p_{\delta,k,L}(x)| \leq |2x/L|^{2d}$ for $|x| \geq L$.
  \item $p_{\delta,k,L}$ is even.
  \end{enumerate}

  Now we choose parameters in the following order.
  Let $k(\epsilon) = 6 \log(1/\e) + 1$ and let $L(\e) = (\log 1/e)^C / \e$ for $C > 0$ a sufficiently large constant.
  Let $\delta(\e) = \e/L(\e)$.
  Let $q_\e(x) = p_{\delta(\e),k(\e),L(\e)}(x)$.
  Then the following holds:
  \begin{enumerate}
  \item $q_\e(x) \in [0,1]$ for all $x \in [-L,L]$.
  \item $q_\e(x) \in [0,\e]$ if $x \in [-L,-2\e] \cup [2\e, L]$.
  \item $q_{\e}(x) \in [1,1-\e]$ if $|x| \leq \e$.
  \item $|q_{\e}(x)| \leq |2x|^{O((\log(1/\e)/\e^2))}$ for $|x| \geq L$.
  \item $q_{\e}$ is even.
  \end{enumerate}
  It just remains to verify that $\E_{x \sim \cN(0,1)} q_\epsilon(x) \leq O(\epsilon)$.
  We break up the expectation as
  \[
  \E_{x \sim \cN(0,1)} q_\e(x) \cdot (1_{|x| \leq \e} + 1_{|x| \in [\e,2\e]} + 1_{|x| \in [2\e,L]} + 1_{|x| \geq L})\mper
  \]
  The first three terms are all at most $O(\e)$ using our bounds above on $q_\e$.
  For the last term we have
  \[
  \E_{x \sim \cN(0,1)} q_\e(x) 1_{|x| \geq L} \leq \Pr(|x| \geq L)^{1/2} \cdot \Paren{ \E_{x \sim \cN(0,1)} (2x)^{O(\log(1/\e)/\e^2)}}^{1/2} \leq e^{-L^2/10} \cdot (\log(1/\e)/\e^2)^{O(\log(1/\e)/\e^2)}\mper
  \]
  Since $L^2 \geq (1/\e^2) (\log 1/\e)^C$ for sufficiently large $C$, this is at most $O(\e)$.
\end{proof}

\section{Mixtures with Arbitrary Weights}\label[section]{sec:arbit-weights}

In our main exposition, we have described an algorithm which learns a mixture of Gaussians $G_1,\ldots,G_k$ \emph{with uniform weights} from corrupted samples.
We sketch here how this algorithm and its analysis may be adapted to learn a mixture of the form $\sum_{i=1}^k \lambda_i G_i$ with (nonnegative) weights $\lambda_1 + \ldots + \lambda_k = 1$.

\paragraph{General Weights to Uniform Weights}
We first describe a mapping from weighted mixtures to non-weighted (i.e., uniform) mixtures.
Given a weighted mixture with weights $\lambda_1 \leq \lambda_2 \leq \ldots \leq \lambda_k$, we instead consider the un-weighted mixture with $k' \geq k$ components created as follows: for each $i \in [k]$, the unweighted mixture contains $\lambda_i / \lambda_1$ copies of $G_i$, each with weight $\lambda_1 = 1/k'$.
It is not hard to check that the $k'$-mixture described above produces is statistically indistinguishable from the mixture with weights $\lambda_i$.

\paragraph{Rough Clustering Algorithm: Modified Rounding}
Since the final ``re-clustering'' step of our parameter-learning algorithm is straightforward to adapt to general weighted mixtures, we focus on how to adapt the constant-accuracy clustering algorithm of \cref{sec:rough-clustering}.

The algorithm itself is run as written in \cref{sec:rough-clustering}, using parameter $k' = 1/\lambda_1$ as the number of clusters to recover, with just one significant modification.
The algorithm can no longer rely only on size of a set of samples (as in line (2) of \textsc{cluster}) to determine whether a single cluster (rather than a nontrivial submixture) has been found.
Instead, in addition to recursively calling \textsc{split}, we can also treat the entire set of samples $\{X_1,\ldots,X_n\}$ as a candidate cluster, growing the size of our list of candidate clusterings only by some $F(w_{\min})$.

\section{Moment Upper Bounds Alone are Unlikely to Suffice} \label{sec:mom-upperbounds-not-enough}

In our SDP we insist on both upper as well as lower bounds on our moments being satisfied. In this section, we show that it is unlikely to be the case that moment upper bounds alone are sufficient to identify Gaussian sub-clusters. We do this by showing that any single moment constraint is not enough to ensure that a one-dimensional Gaussian mixture is trivial. 

Consider the mixture $M := \alpha \cN((1-\alpha)\Delta, 1) + (1-\alpha)\cN(-\alpha \Delta, 1)$. 
Note that the mean of the mixture is $0$ and the distance between the means is $\Delta$.
Suppose now that we insist that, for some $t$, \[\E_{x \sim M}[x^{2t}] = (2t-1)!! \E_{x \sim M}[x^2]^t.\]
Observe now that as $\Delta \rightarrow \infty$ we see that 
$\E_{x \sim M}[x^{2t}] \approx \Paren{ \alpha (1-\alpha)^{2t} + (1-\alpha) \alpha^{2t} } \Delta^{2t}.$ Hence, for small $\alpha$, we have that 
$$\E_{x \sim M}[x^{2t}] \approx \Paren{ \alpha (1-\alpha)^{2t} + (1-\alpha) \alpha^{2t} } \Delta^{2t} \gg (2t-1)!! \cdot \alpha^t(1-\alpha)^t \Delta^{2t} \approx (2t-1)!! \E_{x \sim M}[x^2]^t \;.$$ 
If $\alpha = 1/2$, then 
$$\E_{x \sim M}[x^{2t}] \approx \frac{1}{2^{2t+1}} \Delta^{2t} \ll (2t-1)!! \cdot \frac{1}{2^{2t}} \Delta^{2t} \approx (2t-1)!! \E_{x \sim M}[x^2]^t \;.$$ 
Hence, there is some $\alpha$ between $0$ and $1/2$ that satisfies the equation. 

When $t = 2$, it can be seen that $\alpha \approx 0.211$ is a solution as $\Delta \rightarrow \infty$. 

This implies that \emph{any single moment} matching that of a gaussian does not suffice, however an application of Holder then implies almost tight moments bounds for all moments smaller than $t$, i.e., $1, \dots, t-1$. This strongly suggests that it is not sufficient to have just moment upper bounds to uniquely identify a Gaussian. 

When $t=2$ and under the additional condition that $\E_{x \sim M}[x^3] = 0$, $\alpha$ is indeed forced to be either $1$ or $0$. Hence insisting on even one more moment has appropriate upper \emph{as well as lower} bounds seems to ensure that the mixture is trivial.

\end{document}